
\documentclass[10pt,conference]{IEEEtran}
%


%

 \usepackage[frozencache,cachedir=.]{minted} 

%
\ifCLASSOPTIONcompsoc
  \usepackage[nocompress]{cite}
\else
  \usepackage{cite}
\fi
%

\usepackage[colorlinks]{hyperref}

%
\ifCLASSINFOpdf
\else
\fi
%
%

%
%
\usepackage{amsmath}
\usepackage{amsfonts}
\usepackage{amssymb}
\usepackage{stmaryrd}
\usepackage{stackrel}
\usepackage[sort, numbers]{natbib}

\hyphenation{op-tical net-works semi-conduc-tor}

\usepackage{todonotes}
\usepackage{graphicx}
\usepackage{amsthm}
\usepackage{mathpartir}
 \usepackage{float}
 \usepackage[all]{xy}

\newcommand{\posreal}{\RR_{>0}}
\newcommand{\negreal}{\RR_{<0}}
\newcommand{\forget}[1]{\lfloor #1 \rfloor}
\newcommand{\score}{\colforsyntax{score}}
\newcommand{\sample}{\colforsyntax{sample}}
\newcommand{\bigstep}{\Downarrow}
\newcommand{\true}{\colforsyntax{true}}
\newcommand{\false}{\colforsyntax{false}}
\newcommand\reals{\colforsyntax{real}}

\newcommand{\pullbackcorner}[1][dr]{\save*!/#1-1.2pc/#1:(-1,1)@^{|-}\restore}
\newcommand{\BRel}{\mathbf{BRel}}
\newcommand{\lcstruct}[0]{\lambda_c(\Sigma)}
\newcommand{\lifted}[0]{\stackrel{\cdot}{\to}}
\newcommand{\Gl}[0]{\mathbf{Gl}}
\newcommand{\ter}[0]{t}
\newcommand{\wcpocat}{\mathbf{\omega CPO}}

\newcommand{\wpap}[0]{\ensuremath{\omega}\text{PAP}}
\newcommand{\sem}[1]{\llbracket #1\rrbracket}
\newcommand{\defeq}{\stackrel {\mathrm{def}}=}
\newcommand{\lift}[0]{\mathbf{L}}

\newcommand{\RR}{\mathbb{R}}
\newcommand{\BB}{\mathbb{B}}

\newcommand{\NN}{\mathbb{N}}

\newcommand{\pap}[0]{PAP}

\newcommand\To{\to}
\newcommand{\var}[0]{x}
\newcommand\tPair[2]{\langle #1, #2\rangle}

\newcommand\tTuple[1]{\langle #1\rangle}
\newcommand\fun[1]{\lambda #1.}

\newcommand\pMatch[5][\,]{\llet\,#1\tPair{#3}{#4}\,=\,#2\,\iin \,#5}
\newcommand\tMatch[4][\,]{\llet\,#1\tTuple{#3}\,=\,#2\,\iin \, #4}

\newcommand\cnst{\underline{c}}

\newcommand{\conc}[1]{|#1|}
\newcommand{\wcpo}[0]{\ensuremath{\omega}cpo}

\newcommand{\site}[0]{(\catC,\topo)}
\newcommand{\concsh}[0]{Conc(\catC,\topo)}
\newcommand{\plots}[1]{\mathcal{P}_{#1}}
\newcommand{\monos}[0]{\mathcal{M}}
\newcommand{\mono}[0]{\rightarrowtail}
\newcommand{\bind}[0]{\mathop{\gg\mkern-10mu\scalebox{1}[1]{=}}}

\newtheorem{theorem}{Theorem}[section]
\newtheorem{corollary}[theorem]{Corollary}
\newtheorem{lemma}[theorem]{Lemma}
\newtheorem{proposition}[theorem]{Proposition}
\newtheorem{definition}{Definition}[section]
\newtheorem{example}{Example}[section]

\newtheorem{remark}{Remark}[section]

\newcommand{\ad}[0]{\mathcal{D}}

\newcommand{\syntaxcolor}[0]{\color{blue!70!black}}

\newcommand{\colforsyntax}[1]{\mathbf{\syntaxcolor{#1}}}

\newcommand{\llet}[0]{\colforsyntax{let}}
\newcommand{\iin}[0]{\colforsyntax{in}}
\newcommand{\iif}[0]{\colforsyntax{if}}
\newcommand{\then}[0]{\colforsyntax{then}}
\newcommand{\eelse}[0]{\colforsyntax{else}}
\newcommand{\iifthenelse}[3]{\iif~#1~\then~#2~\eelse~#3}
\newcommand{\return}[0]{\colforsyntax{return}}

\newcommand{\dom}[0]{Dom}

\newcommand{\catC}{\mathbf{C}}

\newcommand{\Set}[0]{\mathbf{Set}}
\newcommand{\Pred}[0]{\mathbf{Pred}}
\newcommand{\topo}[0]{\mathcal{J}}
\newcommand{\Cartsp}{\mathbf{CartSp}}

\newcommand{\hausd}[0]{\mathcal{H}}

\newcommand{\norm}[1]{\lVert #1 \rVert }
\newcommand{\grad}{\nabla} 
\DeclareMathOperator{\diam}{\mathrm{diam}}

\newcommand{\OpenCont}{\mathbf{OpenCont}}
\newcommand{\ini}[0]{0}
\newcommand{\term}[0]{1}

\newcommand{\Sbs}{\mathbf{Sbs}}
\newcommand{\Injpap}{\mathbf{Inj}}
\newcommand{\Shinj}{\mathbf{Sh(Inj)}}
\newcommand{\Subshinj}{\mathbf{Sub(Sh(Inj))}}
\newcommand{\Cont}{\mathbf{Cont}}
\newcommand{\wconcsh}[0]{\omega\concsh}
\newcommand{\submono}[0]{Sub_\monos}
\newcommand{\cinf}[0]{c_\infty}
\newcommand{\sitewmono}[0]{(\catC,\topo, \monos)}
\newcommand{\cpap}[0]{c\pap}
\newcommand{\monopap}[0]{\monos_{\pap}}
\newcommand{\iso}[0]{\cong}
\newcommand{\topopap}[0]{\topo_{\pap}}
\newcommand{\fib}[3]{#1:\mathbb{#2}\to\mathbb{#3}}
\newcommand{\mbb}{\mathbb{B}}
\newcommand{\mba}{\mathbb{A}}

\newcommand{\mbe}{\mathbb{E}}
\newcommand{\mbw}{\mathbb{W}}

\makeatletter
\newcommand\@TyAlph[1]{%
\ifcase #1\or \tau\or \sigma\or \rho\else \@ctrerr \fi%
}
\newcommand\ty[1][1]{{\@TyAlph{#1}}}

\newcommand\tvar[1][1]{{\@TyVarAlph{#1}}}
\newcommand\@TyVarAlph[1]{%
\ifcase #1\or \alpha\or \beta\or \gamma\else \@ctrerr \fi%
}

\newcommand\@VarAlph[1]{%
\ifcase #1\or x\or y\or z\or u\or v\or w\else \@ctrerr \fi%
}

\newcommand\trm[1][1]{{\@TermAlph{#1}}}
\newcommand\@TermAlph[1]{%
\ifcase #1\or t\or s\or r\else \@ctrerr \fi%
}

\newcommand\syncat[1]{\mspace{-25mu}\synname{#1}}
\newcommand\synname[1]{\quad\text{#1}}

\newenvironment{syntax}[1][]{%
\(%
  \begin{array}[t]{#1l@{\quad\!\!}*3{l@{}}@{\,}l}
}{
\end{array}
\)%
}

\newcommand\gdefinedby{::=}
\newcommand\gor{\mathrel{\lvert}}

\begin{document}
%
\title{$\omega$PAP Spaces: Reasoning Denotationally About Higher-Order, Recursive Probabilistic and Differentiable Programs}
%
%
%
%

 \author{Mathieu~Huot$^*$,~
        Alexander~K.~Lew$^*$,~
         Vikash~K.~Mansinghka,~%
         and~Sam~Staton 
}
\IEEEtitleabstractindextext{%
\begin{abstract}
We introduce a new setting, the category of $\omega$PAP spaces, for reasoning denotationally about expressive differentiable and probabilistic programming languages. Our semantics is \textit{general} enough to assign meanings to 
most practical probabilistic and differentiable programs, including
those that use general recursion, higher-order functions, discontinuous primitives,
and discrete and continuous sampling. But crucially, 
it is also \textit{specific} enough to \textit{exclude} many 
pathological denotations, enabling us to establish new results
about differentiable and probabilistic programs.
In the differentiable setting, we prove general correctness theorems for automatic differentiation and its use within gradient descent. In the probabilistic setting, we establish the almost-everywhere differentiability of probabilistic programs' trace density functions, and the existence of convenient base measures for density computation in Monte Carlo inference. In some cases these results were previously known, but required detailed proofs of an operational flavor; by contrast, all our proofs work directly with programs' denotations.
\end{abstract}

}
\IEEEoverridecommandlockouts \IEEEpubid{\makebox[\columnwidth]{979-8-3503-3587-3/23/\$31.00~ \copyright2023 IEEE \hfill} \hspace{\columnsep}\makebox[\columnwidth]{ }}

\maketitle

\IEEEdisplaynontitleabstractindextext

%
\IEEEpeerreviewmaketitle


%
%
%
%
 \renewcommand*{\thefootnote}{\fnsymbol{footnote}}
\footnotetext[1]{ Equal contribution}
\renewcommand*{\thefootnote}{\arabic{footnote}}
\setcounter{footnote}{0}
\section{Introduction}\label{sec:intro}

This paper introduces a new setting, the category of $\omega$PAP spaces, for reasoning denotationally about expressive probabilistic and differentiable programs, and demonstrates its utility in several applications. The $\omega$PAP spaces are built using the same categorical machinery~\citep{matache2022concrete} that underlies the $\omega$-quasi-Borel spaces~\cite{vakar2019domain} and the $\omega$-diffeological-spaces~\cite{huot2020correctness,vakar2020denotational}, two other recent proposals for understanding higher-order, recursive probabilistic and differentiable programs. 
The key difference is that instead of taking the \textit{measurable} maps (as in $\omega$Qbs) or the \textit{smooth} maps (as in $\omega$Diff) as the primitives in our development, we instead use the functions that are \textit{piecewise analytic under analytic partition}, or PAP~\cite{lee2020correctness}. 

Whereas the smooth functions \textit{exclude} many primitives used in practice (e.g., $< \, : \reals \to \reals \to \mathbb{B}$), and the measurable functions \textit{admit} many pathological examples from analysis (the Cantor function, the Weierstrass function, space-filling curves), the PAP functions manage to exhibit very few pathologies while still including nearly all primitives exposed by today's differentiable and probabilistic programming languages. As a result, our semantics can interpret most differentiable and probabilistic programs that arise in practice, \textit{and} can be used to provide short denotational proofs of many interesting properties. As evidence for this claim, we use our semantics to establish new results (and give new, simplified proofs of old results) about the correctness of automatic differentiation, the convergence of automatic-differentiation gradient descent, and the supports and densities of recursive probabilistic programs. 

\subsection{A ``Just-Specific-Enough'' Semantic Model}

In denotational accounts of deterministic and probabilistic programming languages, deterministic programs typically denote \textit{functions} and probabilistic programs typically denote \textit{measure kernels}. But what kinds of functions and kernels? 

The more specific our answer to that question, the more we can hope to prove\textemdash purely denotationally\textemdash about our languages. For example, consider the question of \textit{commutativity}: is it the case that, whenever $x$ does not occur free in $s$, $$\sem{\llet~x=t~\iin~\llet~y=s~\iin~u} = \sem{\llet~y=s~\iin~\llet~x=t~\iin~u}?$$
In the probabilistic setting, this amounts to asking whether iterated integrals can always be reordered. If our semantics interprets programs as \textit{arbitrary} measure kernels, then there is no obvious answer, because Fubini's theorem, which justifies the interchange of iterated integrals, does not apply unconditionally. But by interpreting an expressive probabilistic language using only the \textit{s-finite kernels}, \citet{staton2017commutative} was able to prove commutativity denotationally, using the specialization of Fubini's theorem to the s-finite case.

Unfortunately, in the quest for semantic models with convenient properties, we may end up excluding programs that programmers might write in practice. For example, in~\cite{huot2020correctness}, all programs are interpreted as \textit{smooth} functions, enabling an elegant semantic account of why automatic 
differentiation algorithms work (even in the presence of higher-order functions). But to achieve this nice theory, all non-smooth
\textit{primitives} (including, e.g., the function $\lambda x. \lambda y. x < y$) were excluded from the language under consideration. As a result, the theory has little to say about the behavior of automatic differentiation on non-smooth programs.

In this work, our aim is to find a ``just specific enough'' semantic model of expressive probabilistic and differentiable programming languages: specific enough to establish interesting properties via denotational reasoning, but general enough to include nearly all programs of practical interest. Like the recently introduced \textit{quasi-Borel predomains}~\cite{vakar2019domain}, our model covers most probabilistic and differentiable programs that can be expressed in today's popular languages: it supports general recursion, higher-order functions, discrete and continuous sampling, and a broad class of primitive functions that includes nearly all the mathematical operations exposed by comprehensive libraries for scientific computing and machine learning (such as \texttt{numpy} and \texttt{scipy}). But crucially, it also \textit{excludes} many pathological functions and kernels that cannot be directly implemented in practice, such as the characteristic function of the Cantor set (whose construction involves an infinite limit that programs cannot compute in finite time). As a result, it is often possible to establish that a desirable property holds of \textit{all} $\omega$PAP maps between two spaces, and to then conclude that it holds of all programs of the appropriate type, without reasoning inductively about their construction or their operational semantics.

The starting point in our search for a semantic model is~\citet{lee2020correctness}'s notion of  \textit{piecewise analyticity under analytic partition} (or PAP). The PAP functions are a particularly well-behaved subset of the functions between Euclidean spaces, and \cite{lee2020correctness} makes a compelling argument that they are ``specific enough'' in the sense we describe above. For example, even though they are not necessarily differentiable, PAP functions admit a generalized notion of derivative, as well as a generalized chain rule that can be used to give a denotational justification of automatic differentiation. These nice properties do not hold for slight generalizations of PAP, e.g. the class of almost-everywhere differentiable functions.

Nevertheless, a reasonable worry is that PAP functions may be \textit{too} narrow a semantic domain: do all programs arising in practice really denote PAP functions, or does the definition assume properties that not all programs enjoy? \citet{lee2020correctness} establish that in a first-order language with PAP primitives and conditionals, all programs denote PAP functions. But real-world programs use many additional features, such as higher-order functions, general recursion, and probabilistic effects. In the presence of these features, do terms of first-order type still denote PAP functions? Consider the program in Fig.~\ref{fig:cantor_func}, for example: it uses recursion to define a partial function that diverges on the $\frac{1}{3}$-Cantor set, and halts on its complement. But we know that the (total) \textit{indicator function} for the $\frac{1}{3}$-Cantor set is not PAP. This may seem like reason to doubt that~\cite{lee2020correctness}'s proposal can be cleanly extended to the general-recursive setting.

\begin{figure}
    \centering
    \includegraphics[width=0.23\textwidth]{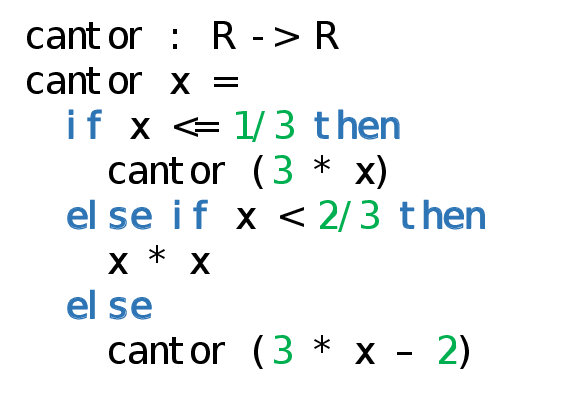}
    \includegraphics[width=0.23\textwidth]{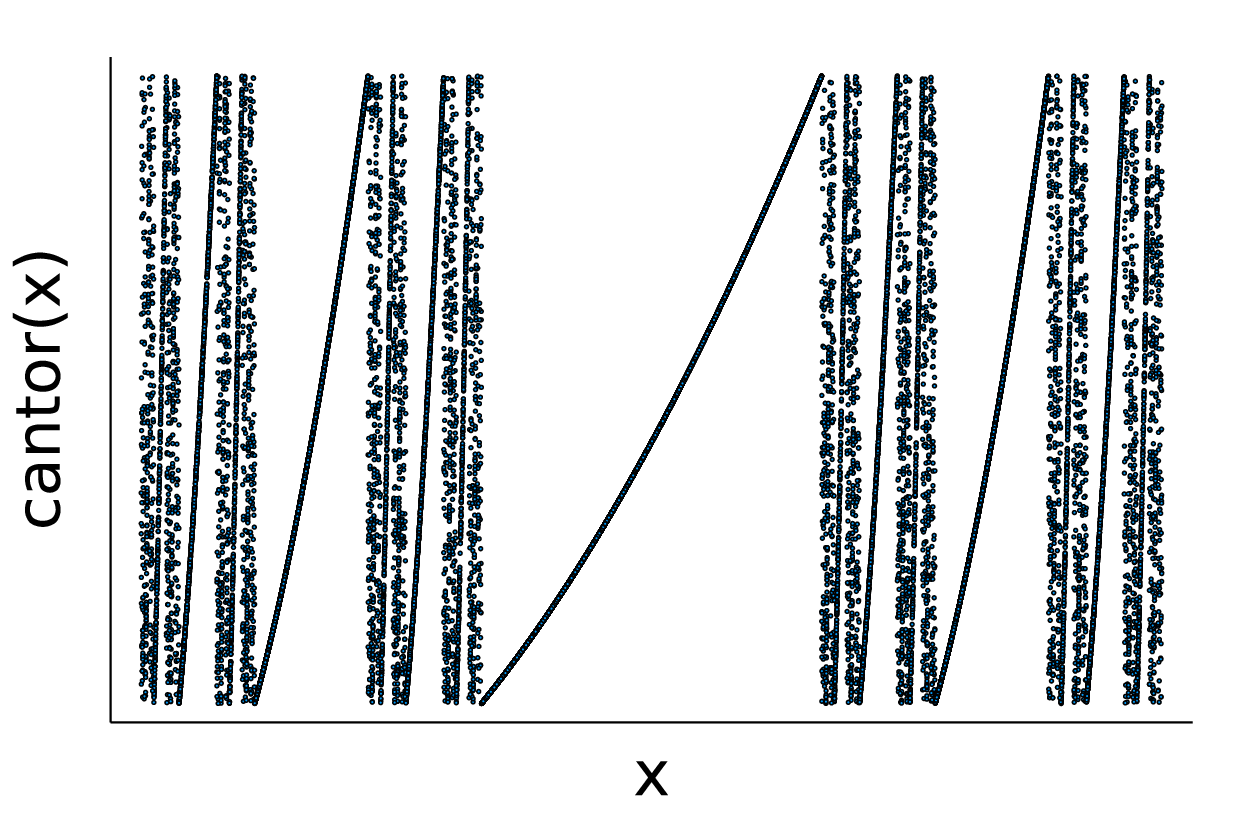}
    \caption{A program that diverges on the $\frac{1}{3}$-Cantor set, but halts elsewhere in $(0, 1)$. The characteristic function of the $\frac{1}{3}$-Cantor set is not PAP, so a naively defined interpretation of ``partial PAP functions into $\RR$'' as PAP functions into $\RR + 1$ would not suffice to interpret a language with general recursion. Our definition of partial $\omega$PAP maps in Section~\ref{sec:semantics} resolves the issue.
    }
    \label{fig:cantor_func}
\end{figure} 

Fortunately, however, it can: in Section~\ref{sec:semantics}, we extend the definition of PAP to cover partial and higher-order functions. The resulting class of functions, the \textit{$\omega$PAP maps}, suffice to interpret all recursive, higher-order programs in an expressive call-by-value PCF, with a type of real numbers and PAP primitives (Theorem~\ref{thm:wpap-sound-adequate}). The $\omega$PAP category also supports a strong monad of measures (Definition~\ref{defn:monad-measure}), enabling an interpretation of continuous sampling and soft conditioning, features common in modern probabilistic languages.

To be clear, existing semantic frameworks can interpret similarly expressive languages (e.g., \cite{vakar2019domain}). Our value proposition is that the $\omega$PAP domain \textit{usefully excludes} certain pathological denotations that other settings allow. After defining our semantics in Section~\ref{sec:semantics}, we devote the remainder of the paper to presenting evidence for this claim: new results (and new proofs of old results) that follow cleanly from denotational reasoning about $\omega$PAP spaces and maps.


\subsection{Differentiable Programming and Automatic Differentiation}

Our first application of our semantics is to the study of \textit{automatic differentiation} (AD), a family of techniques for mechanically computing derivatives of user-defined functions. It has found widespread use in machine learning and related disciplines, where practitioners are often interested in using gradient-based optimization algorithms, such as gradient descent, to fit model parameters to data. Languages with support for automatic differentiation are often called ``differentiable programming languages,'' even if it may be the case that some programs do not denote everywhere-differentiable functions. 

When applied to straight-line programs built from differentiable primitives (without $\iif$~statements, loops, and other control flow), AD has a straightforward justification using the chain rule. But in the presence of more expressive programming constructs, AD is harder to reason about: some programs encode partial or non-differentiable functions, and even when programs \textit{are} differentiable, AD can fail to compute their true derivatives at some inputs.


Our $\omega$PAP semantics gives a clean account of the class of partial functions that recursive, higher-order programs built from ``AD-friendly'' primitives can express. In Section~\ref{sec:differentiable}, we use this characterization to establish guarantees about AD's behavior when applied to such programs:

\begin{itemize}
\item We define a generalized notion of derivative, based on \cite{lee2020correctness}'s \textit{intensional derivatives}, and establish that every first-order $\omega$PAP map is intensionally differentiable.

\item We adapt~\citet{huot2020correctness}'s denotational correctness proof of AD to prove that AD correctly computes intensional derivatives of all programs with first-order type, even if they use recursion and discontinuous primitives (Theorem~\ref{thm:fwd-cor-full-wpap}). This result is stronger than \citet{mazza2021automatic}'s, which follows as a straightforward corollary. But the main benefit over their development is the simplicity of our proof, which mirrors~\citet{huot2020correctness}'s quite closely despite the greatly increased complexity of our language.

\item Using this characterization of what AD computes, we further prove the novel result that intensional gradient descent (which optimizes a \textit{differentiable} function but using gradients computed by AD) converges with probability 1, if initialized with a randomized initial location \textit{and} a randomized learning rate (Theorem~\ref{thm:sgd-ok-full}). This implies that gradient descent can be safely applied to recursive programs with conditionals so long as they denote differentiable functions, even if AD (which can disagree with the true derivative at some inputs) is used to compute gradients.  Interestingly, if either the initial location \textit{or} the learning rate is \textit{not} randomized, a.s.-convergence is not guaranteed, and we present counterexamples of  PyTorch programs that denote differentiable functions but cause PyTorch's gradient descent implementation to diverge (Propositions~\ref{prop:counterexample} \&~\ref{prop:ad-fail-forall-eps}).
\end{itemize}

Taken together, these results give a characterization of the behavior of AD in real systems like PyTorch and Tensorflow, even when they are applied to non-differentiable or recursive programs.


\subsection{Probabilistic Programming}

\textit{Probabilistic programming languages} extend deterministic languages with support for \textit{random sampling} and \textit{soft conditioning}, making them an expressive representation both for probability distributions and for general (unnormalized) measures. Many probabilistic programs are intended to model some aspect of the world, and many questions of interest can be posed as questions about the expectations of certain functions under (normalized versions of) these models. Probabilistic programming languages often come with tools for automatically running \textit{inference algorithms} that estimate such posterior expectations. But the machinery for inference may make assumptions about the programs the user has written\textemdash for example, that the program has a density with respect to a well-behaved reference measure, or that its density is differentiable. Our next application of our semantics is to the problem of verifying that such properties hold for any program in an expressive probabilistic language.

In Section~\ref{sec:probabilistic}, we extend the core language from Section~\ref{sec:semantics} to 
include the $\sample$~and $\score$ commands. This requires changing our monad of effects, which in Section~\ref{sec:semantics} covered only divergence, to also interpret randomness and conditioning. 
We consider two different strong monads on $\omega$PAP: one interprets probabilistic programs as s-finite measures (following~\cite{vakar2019domain}), and the other as weighted sampling procedures (following~\cite{scibior2017denotational,mak2021densities}). In each case, we can prove interesting properties of all probabilistic programs by working just with the class of $\omega$PAP maps they denote:

\begin{itemize}
    \item Interpreting programs as weighted samplers, we prove that almost-surely terminating probabilistic programs have almost-everywhere differentiable weight functions (Theorem~\ref{thm:aediff}). This has been previously shown by~\citet{mak2021densities}, but our proof is substantially simpler, with no need for \cite{mak2021densities}'s stochastic symbolic execution technique. Our theorem is also slightly stronger, covering some programs that do not almost surely halt, and ruling out some pathological a.e.-differentiable weight functions.
    
    \item Interpreting probabilistic programs as $\omega$PAP \textit{measures}, we prove that all programs of type $\RR^n$ denote distributions supported on a countable union of smooth submanifolds of $\RR^n$, and thus admit convenient densities with respect to a particular class of reference measures (Theorem~\ref{cor:main}). 
\end{itemize}

Besides being interesting in their own right, these results have consequences for the practical design and implementation of probabilistic programming systems.

The first result is relevant to the application of gradient-based inference algorithms, which often require (at least) almost-everywhere differentiability to be sound. 

The second result is relevant for the automatic computation of \textit{densities} or \textit{Radon-Nikodym derivatives} of probabilistic programs, key ingredients in higher-level algorithms such as importance sampling and MCMC. To see why, suppose a user has constructed two closed probabilistic programs of type $\RR^n$, $p$ and $q$, and wishes to use $q$ as an importance sampling proposal for $p$. If $\sem{p}$ is absolutely continuous with respect to $\sem{q}$ ($\sem{p} \ll \sem{q}$), such an importance sampler does exist, but implementing it requires computing importance weights: at a point $x \sim \sem{q}$, we must compute $\frac{d\sem{p}}{d\sem{q}}(x)$. If $\sem{p}$ and $\sem{q}$ were both absolutely continuous with respect to the Lebesgue measure on $\mathbb{R}^n$, we could compute probability density functions $\rho_p$ and $\rho_q$ for $\sem{p}$ and $\sem{q}$ respectively, and then compute the importance weight as the ratio of these densities. But not all programs denote measures that \textit{have} densities with respect to the Lebesgue measure.\footnote{Consider, e.g., the program that samples $u \sim \textit{Unif}(0, 1)$, and returns $(u, u) \in \mathbb{R}^2$. Because it is supported only on a 1-dimensional line segment within the plane, it has no density function with respect to the Lebesgue measure in the plane. That is, there is no nonnegative function $\rho$ such that the probability of an event $E \subseteq \RR^2$ is equal to $\iint_{\RR^2} 1_E(x, y) \cdot \rho(x, y) dx dy$.} Our result gives us an alternative to the Lebesgue measure: we can compute the density of $\sem{p}$ (and $\sem{q}$) with respect to the \textit{Hausdorff measures} over the manifolds on which they are supported. (Indeed, computing densities with respect to Hausdorff measures has previously been proposed by~\citet{radul2021base}; our result is that, unlike the Lebesgue base measure, this choice does not \textit{exclude} any possible programs a user might write, because \textit{every} program is absolutely continuous with respect to such a base measure.)

\subsection{Summary}
\begin{itemize}
\item We present the category of $\omega$PAP spaces (Section~\ref{sec:semantics}), a new setting suitable for a denotational treatment of higher-order, recursive differentiable or probabilistic programming languages. It supports sound and adequate semantics of a call-by-value PCF with a type of real numbers and a very expressive class of primitives.

\item To demonstrate that our semantics enables denotational reasoning about a broad class of interesting program properties, we present new results (and new, much-simplified proofs of old results) about differentiable (Section~\ref{sec:differentiable}) and probabilistic (Section~\ref{sec:probabilistic}) programs.
\end{itemize}
\section{\wpap{} Semantics}
\label{sec:semantics}

This section develops the $\wpap{}$ spaces, and shows how to use them to interpret a higher-order, recursive language with conditionals and discontinuous primitives. Our starting point is~\citep{lee2020correctness}'s definition of \textit{piecewise analyticity under analytic partition} (PAP), a property of some functions defined on Euclidean spaces. The $\wpap{}$ construction significantly extends~\citet{lee2020correctness}'s definition to cover \textit{partial} and \textit{higher-order} functions between Euclidean or other spaces. The $\wpap{}$ setting is general enough to interpret almost all primitives encountered in practice, but restricted enough to allow precise denotational reasoning.


\subsection{\wpap{} Spaces}


We first recall a standard definition from analysis:

\begin{definition}[analytic function]
If the Taylor series of a smooth function $f : U \to V$ (for $U \subseteq \RR^n, V \subseteq \RR^m$) converges pointwise to $f$ in a neighborhood around $x$, we say $f$ is \textit{analytic} at $x$. An \textit{analytic function} is a function that is analytic at every point in its domain.
\end{definition}

When a subset of $\RR^n$ can be carved out by finitely many analytic inequalities, we call it an \textit{analytic set}~\citep{lee2020correctness}.

\begin{definition}[analytic set~\citep{lee2020correctness}]
\label{defn:analytic-set}
We call a set $A \subseteq \mathbb{R}^n$ an \textit{analytic set} if there exists a finite collection $\{g_i\}_{i \in I}$ of analytic functions into $\mathbb{R}$, with open domain $U \subseteq \RR^n$, such that 
\[A = \{x \in U \mid \forall i \in I. g_i(x) \leq 0\}.\]
\end{definition}
\noindent(This definition is simpler than in~\cite{lee2020correctness}, but is equivalent.)

\citet{lee2020correctness}'s key definition was the class of \textit{PAP functions}, which are piecewise analytic:

\begin{definition}[PAP function~\citep{lee2020correctness}]For $U \subseteq \RR^n$ and $V \subseteq \RR^m$, we call $f : U \to V$ \textit{piecewise analytic under analytic partition (PAP)} if there is a countable family $\{(A_i, f_i)\}_{i \in I}$ such that:
\begin{enumerate}
    \item the sets $A_i$ are analytic and form a partition of $U$;
    \item each ${f_i : U_i \to \mathbb{R}^m}$ is an analytic function defined on an open domain $U_i\supseteq A_i$;
    \item when $x \in A_i$, $f_i(x) = f(x)$.
\end{enumerate}
\end{definition}
 


In our development, PAP functions will play, at first order, the role  that smooth functions play in the construction of diffeological spaces~\citep{baez2011convenient}, and that measurable functions play in the construction of quasi-Borel spaces~\citep{heunen2017convenient}. The change to PAP is essential: many primitives exposed by languages like TensorFlow and PyTorch fail to be smooth at some inputs, but virtually none fail to be PAP. And while PAP functions are all measurable, they helpfully exclude many pathological measurable functions, making it possible to prove stronger results through denotational reasoning. 

The development that follows generalizes PAP functions to cover higher-order and partial maps, using recently developed categorical techniques~\citep{vakar2019domain,vakar2020denotational,matache2022concrete} (see Appx.~\ref{sec:sound-adequate}).\footnote{Full paper with appendices is available at \url{https://arxiv.org/abs/2302.10636}.} First, we introduce a term for the valid domains of PAP functions.

\begin{definition}[c-analytic set] A set $A \subseteq \mathbb{R}^n$ is \textit{c-analytic} if the inclusion $A \hookrightarrow \mathbb{R}^n$ is PAP.\footnote{Equivalently, $A$ is c-analytic if and only if it is equal to a countable union of (possibly overlapping) analytic subsets of $\RR^n$ (Cor.~\ref{cor:disjointness}).}
\end{definition}

\begin{definition}[\wpap{} space]
An \wpap{} space is a triple $(\conc{X}, \plots{X}, \leq_X)$, where $(\conc{X},\leq)$ is an $\wcpo{}$ and $\plots{X}$ is a family of sets of functions, called plots in $X$. For each c-analytic set $A$, we have
$\plots{X}^A \subseteq \conc{X}^A$. Plots have to satisfy the following closure conditions:
\begin{itemize}
    \item Every map from $\mathbb{R}^0$ to $\conc{X}$ is a plot in $X$.
\item If $\phi \in \conc{X}^A$ is a plot in $X$ and $f : A' \to A$ is a PAP function, then $\phi \circ f$ is a plot in $X$.
\item Suppose the c-analytic sets $A_j \subseteq A$ form a countable partition of the c-analytic set $A$, with inclusions $i_j : A_j \to A$. If $\phi \circ {i_j}$ is a plot in $X$ for every $j$, then $\phi$ is a plot in $X$.
    \item Whenever $(\alpha_i)_i$ is an $\omega$-chain in $\mathcal{P}^A_X$ under the pointwise order, $(\bigvee_i \alpha_i)(x) := \bigvee_i (\alpha_i(x))$ defines a plot in $\plots{X}^A $. 
\end{itemize}
\label{def:wpap}
\end{definition}

\begin{definition}[$\wpap{}$ map]
An \wpap{} map $f : X \to Y$ is a Scott-continuous function between the underlying $\wcpo{}$s $(\conc{X},\leq_X)$ and $(\conc{Y},\leq_Y)$, such that if $\phi\in\plots{X}^A$, $f\circ \phi \in\plots{Y}^A$. 
\end{definition}


We now look at several examples of $\wpap{}$-spaces. Our first example establishes that PAP functions between Euclidean spaces are a special case of $\wpap{}$ maps.

\begin{example}
    Any subset $A \subseteq \RR^n$ can be given the structure of an \wpap{}-space $(A,\plots{A},=)$ where for every c-analytic set $C$, the plots $\plots{A}^C$ are given by
    \[\plots{A}^C:=\{f:C\to A~\mid~f\text{ is a PAP function}\}.\]
\end{example}





Next, we construct products, coproducts, and exponential $\wpap{}$ spaces, which will be useful for interpreting the tuples, sums, and function types of our language.

\begin{example}[Products]
Given $\omega\text{PAP}$ spaces $X$ and $Y$, define $X \times Y$ to be their product as $\wcpo{}$'s, with plots $\langle f, g\rangle \in \mathcal{P}^A_{X \times Y}$ whenever $f \in \mathcal{P}^A_X$ and $g \in \mathcal{P}^A_Y$.

More generally, given  $\omega\text{PAP}$ spaces $X_i$ for $i\in I$, we can define $\prod_{i\in I}X_i$ by their product as $\wcpo{}$s, with plots $f\in \plots{\prod_{i\in I}X_i}^A$ whenever each projection $\pi_i\circ f\in\plots{X_i}^A$.
\end{example}

\begin{example}[Coproducts]\label{example:coproducts}
Let $I$ be a countable index set, and $X_i$ a $\omega\text{PAP}$ space for each $i \in I$. Then $\bigsqcup_i X_i$ is again an $\omega\text{PAP}$ space, with carrier set $\bigsqcup_i |X_i|$, and the partial order inherited from each $X_i$ (elements of different spaces $X_i$ and $X_j$ are not comparable). A function $f : A \to \bigsqcup_i |X_i|$ is a plot if there exists a countable partition $\{A_j\}_j$ of $A$ into c-analytic sets, such that for each $j$, there is a plot $f_j : A_j \to X_{k_j}$, where $k_j \in I$, such that on $A_j$, $f(x) = \mathbf{in}_{k_j} f_j(x)$, i.e., if the preimage of each $X_i$ under $f$ is a c-analytic set for each $i$.
\end{example}

\begin{example}[Exponentials]
Let $X$ and $Y$ be $\omega \text{PAP}$ spaces. Then $X\Rightarrow Y$ is again a $\wpap{}$ space, with carrier set $\wpap{}(X, Y)$, the set of $\wpap{}$-morphisms from $X$ to $Y$. $f\leq_{X\Rightarrow Y} g$ iff for all $x$, $f(x)\leq_Y g(x)$. A function $f : A \to \conc{X\Rightarrow Y}$ is a plot if $\mathbf{uncurry}\, f : A \times X \to Y$ is an $\omega\text{PAP}$ morphism.
\end{example}



\subsection{Core Calculus}

We present our core calculus in Figure~\ref{fig:wpap-language}.
It is a simple call-by-value variant of PCF (e.g. as in \citet{abramsky1998call}), with ground types $\BB$ for Booleans and $\reals$ for real numbers. It is parameterized by a set of primitives $f$ representing PAP functions, including analytic functions such as $+,*,\exp,\tan$ and comparison operators such as $>,=$.
The language also includes constants $\cnst$ for each real $c$ and $\true,\false$ for Booleans.  The typing rules and operational semantics are standard, and given in Appendices~\ref{sub:type-system} and~\ref{sub:op-semantics}.
We will use $\reals^n$ as sugar for the $n$-fold product $\reals\times\ldots\times\reals$. We denote by $\star$ the unique inhabitant of the type $1$.

\begin{figure}[H]
    \fbox{
      \parbox{.47\textwidth}{
\noindent\input{figures/types_wpap}

\noindent\input{figures/terms_wpap}
      }}
\caption{Types and terms of our language}
\label{fig:wpap-language}
\end{figure}

\subsection{Denotational semantics}
\label{sub:denot-semantics}

We give a denotational semantics to our language using $\wpap{}$ spaces and maps. 
A program $\Gamma \vdash \trm:\ty$ is interpreted as an $\wpap{}$ map $\sem{\trm}:\sem{\Gamma}\to \lift \sem{\ty}$, for a suitable partiality monad $\lift$ that we now define.

For an \wpap{} space $Y$, we define the space $\lift Y$ as follows.
\begin{itemize}
    \item The underlying set is $\conc{\lift Y}:=\conc{Y}\sqcup \{\bot\}$.
    \item The order structure is given by $\forall x\in\conc{Y}, \bot\leq_{L Y}x$ and $\forall x,x'\in \conc{Y}, x\leq_{L Y}x'$ iff $x\leq_Y x'$.
    \item Plots are given by 
    \begin{align*}
        \plots{\lift Y}^A &:= \{g:A\to\conc{Y}\sqcup \{\bot\}~|~\exists B\subseteq A~\text{c-analytic};\\
&\qquad\quad g^{-1}(\conc{Y})=B~and~g|_{B}\in\plots{Y}^B\}
    \end{align*}
\end{itemize}

The intuition for $\plots{\lift Y}^A$ is that the \textit{partial} $\wpap{}$ maps into $Y$ are defined on c-analytic sets of inputs, but the region on which they are \textit{not} defined need not satisfy any special property. This is how we account for the example from Fig.~\ref{fig:cantor_func}; although it is undefined on the $\frac{1}{3}$-Cantor set (which is not c-analytic), its domain (the complement of the $\frac{1}{3}$-Cantor set) \textit{is} c-analytic. 

We also have (natural) $\wpap{}$-morphisms, that we can write using lambda-calculus notation as 
\begin{itemize}
\item $\mathbf{return}_X:X\to \lift X$ given by $\lambda x.\mathbf{inl}~x$
\item $\bind_{X, Y}:\lift X \times (X\Rightarrow \lift Y)\to \lift Y$ given by $\lambda (x, g).\mathbf{match} \, x \, \mathbf{with} \, \mathbf{inr}\, \bot \mapsto \mathbf{inr}\, \bot, \, \mathbf{inl}\, x \mapsto g(x)$
\end{itemize}
where $\mathbf{inl},\mathbf{inr}$ are respectively left and right injections (as set functions, they are not $\wpap{}$-morphisms).

$\lift$ extends to \wpap{} maps $f:X\to Y$ by setting $\lift f : \lift X \to \lift Y$ to equal $\mathbf{inl} \circ f$ on all inputs $\mathbf{inl}~x$ and setting $\lift f(\mathbf{inr}~\bot)=\mathbf{inr}~\bot$. 
This definition is inspired by similar constructs present in \citep{vakar2019domain,vakar2020denotational}.
Overall, $(\lift,\mathbf{return},\bind)$ forms a (commutative) monad.

\begin{proposition}
    $(\lift,\mathbf{return},\bind)$ is a (strong, commutative) monad on the category of $\wpap{}$-spaces and morphisms.
\end{proposition}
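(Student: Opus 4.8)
The plan is to verify, in order, (i) that $\lift Y$ is a genuine $\wpap{}$ space, (ii) that $\mathbf{return}$, $\bind$, and the evident strength are genuine $\wpap{}$ maps, and then (iii) that the monad laws, the strength coherence conditions, and commutativity hold. The key simplification throughout is that a $\wpap{}$ map is completely determined by its action on underlying sets, so once the maps involved are known to be well defined, every equational law reduces to an equality of set-theoretic functions --- which is exactly the corresponding law for the ordinary partiality (``maybe'') monad, and therefore holds by a routine pointwise calculation.

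First I would check that $(\conc{\lift Y}, \leq_{\lift Y})$ is an $\wcpo{}$: $\bot$ is least, and any $\omega$-chain is either constantly $\bot$ or eventually enters $\conc{Y}$, where it becomes a chain whose supremum exists by the $\wcpo{}$ structure of $Y$. I would then verify the four plot-closure conditions of Definition~\ref{def:wpap} for $\plots{\lift Y}$. The only nonroutine ingredient is that c-analytic sets are closed under the operations involved --- preimages under PAP maps and countable unions --- so that the c-analytic ``domain of definedness'' $B$ demanded by the definition of $\plots{\lift Y}^A$ is preserved under precomposition with PAP maps and under gluing along countable c-analytic partitions. Closure under $\omega$-chains follows because the pointwise supremum of a chain of partial plots is defined on the (c-analytic) union of their domains.

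Next I would confirm that each structure map is a $\wpap{}$ morphism, i.e. Scott-continuous and plot-preserving. For $\mathbf{return}_X = \lambda x.\, \mathbf{inl}\, x$ this is immediate. The substantive case is $\bind$: given a plot $\langle \psi, h\rangle \in \plots{\lift X \times (X \Rightarrow \lift Y)}^A$, the component $\psi$ is defined exactly on a c-analytic set $B \subseteq A$ and restricts there to a plot into $X$, while $h$ being a plot of the exponential means $\mathbf{uncurry}\, h : A \times X \to \lift Y$ is a $\wpap{}$ map. The composite $\bind \circ \langle \psi, h\rangle$ equals $\bot$ off $B$ and equals $\mathbf{uncurry}\, h$ precomposed with $\langle i_B, \psi|_B\rangle$ on $B$; its domain of definedness is thus the preimage of $\conc{Y}$ under a $\wpap{}$ map restricted to a c-analytic set, hence c-analytic, and on that region it is a plot of $Y$. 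This shows $\bind \circ \langle \psi, h\rangle \in \plots{\lift Y}^A$. I would treat the functorial action $\lift f$ and the strength $\mathbf{st}_{X,Y}$ (sending $(x,\bot) \mapsto \bot$ and $(x, \mathbf{inl}\, y) \mapsto \mathbf{inl}(x,y)$) by the same argument, which in these cases is simpler because no pullback along a second plot is involved.

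Finally, with all maps established as morphisms, the unit laws, associativity, the strength diagrams, and commutativity are equalities of $\wpap{}$ maps, so they may be checked on underlying elements. On elements every diagram collapses to the familiar behaviour of the partiality monad: the two sides agree on $\mathbf{inl}$ inputs and both yield $\bot$ as soon as any $\bot$ appears, so in particular the two iterated-bind composites witnessing commutativity both send $(\mathbf{inl}\, x, \mathbf{inl}\, y)$ to $\mathbf{inl}(x,y)$ and everything else to $\bot$. The main obstacle is therefore concentrated entirely in step (ii), namely showing that $\bind$ preserves plots, since that is the one place where c-analyticity of the domain of definedness must be propagated through a pullback along a second plot; everything else is either a standard $\wcpo{}$ check or a pointwise verification inherited directly from the set-level maybe monad.
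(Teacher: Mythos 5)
Your proof is essentially correct, but it takes a genuinely different route from the paper's. The paper does not verify the monad structure by hand: it shows that $\wpap{}$ is a category of $\omega$-concrete sheaves on the concrete site $\cpap{}$ (Proposition~\ref{prop:cpap-concrete-site}), exhibits the inclusions of c-analytic subsets as an admissible class of monos $\monopap$ (Proposition~\ref{prop:admmonospap}), and then invokes the general result of \citet{matache2022concrete} that the lifting monad associated to an admissible class of monos is a strong monad (Proposition~\ref{prop:lift-is-monad}). Your direct verification localizes the same technical content --- closure of c-analytic sets under PAP preimages (Corollary~\ref{cor:preimage-c-analytic}) and under countable unions --- inside the plot-closure and $\bind$ checks, rather than inside the admissibility axioms; it is more self-contained and makes commutativity explicit (the cited general result only gives strength, and the paper leaves the pointwise commutativity check implicit), while the paper's route buys soundness and adequacy of the whole semantics from the same verification. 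One step of yours deserves more care: for closure of $\plots{\lift Y}^A$ under $\omega$-chains, it is not enough that the union $B=\bigcup_i B_i$ of the domains is c-analytic; you must also show the supremum restricted to $B$ is a plot of $Y$, and since complements and set-differences of c-analytic sets need not be c-analytic you cannot simply partition $B$ by $B_i\setminus\bigcup_{j<i}B_j$. You need the disjointification argument of Corollary~\ref{cor:disjointness}, which produces a countable c-analytic partition of $B$ \emph{refining} the chain $(B_i)$, so that on each piece the chain is eventually a chain of plots of $Y$; this is precisely the content of axiom~6 of admissibility in the paper's route, and it is the one point where your sketch is thinner than the argument actually requires.
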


\noindent We can now interpret types and terms of our language.
The semantics of types and terms is given in Fig.~\ref{fig:wpap-semantics-terms}. 
A context $\Gamma=(\var_1\colon\ty_1,\dots ,\var_n\colon \ty_n)$ is interpreted as the product space
$\sem \Gamma\defeq \prod_{i=1}^n\sem{\ty_i}$. 
Substitution is given by the usual Kleisli composition of effectful programs~\citep{moggi1991notions}.


\begin{figure}[H]
    \fbox{
      \parbox{.48\textwidth}{
      \[
\begin{array}{lcl}
    \sem 1 & \defeq & (\{\star\},\plots{1},=)  \\
    \sem \reals & \defeq & (\RR,\plots{\RR},=)  \\
    \sem{\BB}  & \defeq & (1+1,\plots{1+1},=) \\
    \sem{\tPair{\ty_1}{\ty_2}} & \defeq &\ \sem{\ty_1}\times\sem{\ty_2}\\
    \sem{\ty_1\To\ty_2} & \defeq & \sem{\ty_1}\Rightarrow \lift\sem{\ty_2} 
\end{array}
\]
\[
\setlength\arraycolsep{2pt}
\begin{array}{lcl} 
    \sem{x}(\rho)  & \defeq & \mathbf{return}~\rho(x) \\
    \sem{\star}(\rho)  & \defeq & \mathbf{return}~\star \\
\sem{\underline{c}}(\rho)  & \defeq & \mathbf{return}~c \\
   \sem{f( \trm_1,\dots,\trm_n )}(\rho) & \defeq &
\sem{\trm_1}(\rho)\bind~ (\lambda \var_1.\sem{\trm_2}(\rho)\\
&&\quad\bind \ldots \bind ~(\lambda x_n. \\
&&\quad f(\var_1,\ldots,\var_n)))
\\
\sem{\tPair {\trm_1}{ \trm_2}}(\rho) & \defeq &
\sem{\trm_1}(\rho) \bind~(\lambda x. \sem{\trm_2}(\rho) \bind\\
&&\quad(\lambda y. \mathbf{return}~(x,y)))
\\
\sem{\fun{\var :\ty}{\trm}}(\rho) & \defeq &
\mathbf{return}~\lambda v.\sem {\trm}(\rho[x\mapsto v])
\\ 
\llbracket \llet ~\tPair{\var_1}{\var_2} = \trm_1 \iin~ & \defeq & 
\sem{\trm_1}(\rho) \bind (\lambda v. \sem{\trm_2}(\rho[x_1 \mapsto \pi_1 v,\\
\quad \trm_2 \rrbracket && \quad\quad\quad\quad\quad\quad\quad\quad\quad x_2 \mapsto \pi_2 v])) \\
\sem{\trm_1\,\trm_2}(\rho) & \defeq &
\sem{\trm_1}(\rho) \bind (\lambda f. \sem{\trm_2}(\rho) \bind~f) \\
\llbracket\iif~\trm_1 \then ~\trm_2  & \defeq & \sem{\trm_1}(\rho) \bind \lambda b.\mathbf{if}~b ~\mathbf{then}~\sem{\trm_2}(\rho)\\
 \quad \eelse~ \trm_3 \rrbracket(\rho) && \quad \mathbf{else}~\sem{\trm_3}(\rho) \\
\sem{\mu f.\lambda x.\trm}(\rho) & \defeq & \mathbf{return}~\bigvee_{i \in \mathbb{N}} f_i
\end{array}\]

\quad where $f_0 = \lambda x. \mathbf{inr}~\bot, f_{i+1} = \lambda v. \sem{\trm}(\rho[f \mapsto f_{i}, x \mapsto v])$
      }}
\caption{Denotational semantics in $\wpap{}$. Blue and bold indicate syntactic and mathematical keywords, respectively.}
\vspace{-4mm}
\label{fig:wpap-semantics-terms}
\end{figure}

Our denotational semantics in $\wpap{}$ is sound and adequate w.r.t. our (standard) operational semantics (Appendix~\ref{sub:op-semantics}).

\begin{theorem}
\label{thm:wpap-sound-adequate}
The denotational semantics of our language is sound and adequate.
\end{theorem}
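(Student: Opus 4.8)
The plan is to establish the two halves separately: soundness by a routine induction on evaluation, and adequacy by a logical-relations argument tailored to the \wcpo{}-structure underlying \wpap{} spaces. Throughout, I may assume that every semantic operation used in Fig.~\ref{fig:wpap-semantics-terms} (the monad structure $\mathbf{return}, \bind$, products, exponentials, the fixpoint sup) is a genuine \wpap{} map and is Scott-continuous, which is exactly what the Examples and the monad Proposition above provide; this guarantees that the relations and limits I construct stay inside the right category.

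\textbf{Soundness.} I read soundness as: if $t \bigstep v$ then $\sem{t} = \sem{v}$ as maps into $\lift\sem{\ty}$. The one preliminary is a substitution lemma — for a value $v$ with $\sem{v}(\rho) = \mathbf{return}~d$, one has $\sem{t[v/x]}(\rho) = \sem{t}(\rho[x\mapsto d])$ — proved by induction on $t$, using that substitution is modeled by Kleisli composition (as noted after Fig.~\ref{fig:wpap-semantics-terms}). Soundness then follows by induction on the derivation of $t \bigstep v$: each evaluation rule corresponds to a monad law or a $\beta$-style equation valid in any strong monad, so base and congruence cases are immediate. The only delicate case is $\mu f.\lambda x.t$, where I use that $\sem{\mu f.\lambda x.t}(\rho) = \mathbf{return}~\bigvee_i f_i$ is a fixpoint of its defining functional: since $\bind$ and the other clauses are Scott-continuous, the $\omega$-sup commutes with them and the denotation satisfies the expected unfolding equation, which matches the operational unrolling rule.

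\textbf{Adequacy.} Adequacy is the claim that for a closed ground-type program $\vdash t : \reals$, $\sem{t} = \mathbf{return}~r$ iff $t \bigstep \underline{r}$, and $\sem{t} = \bot$ iff $t$ diverges. I would introduce a type-indexed logical relation $\mathcal{R}_\tau$ between elements of $\conc{\sem{\ty}}$ and closed terms, lifted to $\mathcal{R}_{\lift\tau}$ by relating $\bot$ to every diverging term and $\mathbf{return}~d$ to any $t$ with $t \bigstep v$ and $d\,\mathcal{R}_\tau\,v$. At base types the relation records operational agreement; at products it is pointwise; at function types it is the usual logical clause sending related arguments to related results. The invariant I must preserve is \emph{admissibility}: each relation contains $\bot$ and is closed under suprema of $\omega$-chains in $\conc{\sem{\ty}}$ — this is precisely what the \wcpo{}-structure of \wpap{} spaces supplies. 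I then prove the fundamental lemma: every derivable $\Gamma \vdash t : \tau$ is self-related, i.e.\ substituting related values for the free variables yields related denotations and terms, by induction on the typing derivation.

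\textbf{Main obstacle.} The crux is the recursion case of the fundamental lemma. There $\sem{\mu f.\lambda x.t}$ is the sup of the approximants $f_i$, so by admissibility it suffices to show each $f_i$ is related; this follows by an inner induction on $i$, with $f_0 = \lambda x.\mathbf{inr}~\bot$ related because $\bot$ is related to everything, and the step $f_{i+1} = \lambda v.\sem{t}(\rho[f\mapsto f_i, x\mapsto v])$ handled by the outer hypothesis on $t$ together with the operational unfolding of the syntactic recursive function. Here admissibility (from the \wcpo{}-structure) and Scott-continuity of the semantic clauses are exactly what let the sup pass through the relation. Instantiating the fundamental lemma at a closed ground-type program then forces $\sem{t}$ and the operational behavior of $t$ to agree, yielding both directions of the adequacy iff; soundness supplies the remaining direction where needed.
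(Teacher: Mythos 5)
Your proof is a legitimate strategy, but it takes a genuinely different route from the paper. The paper does not prove soundness and adequacy directly at all: its entire proof consists of exhibiting $\wpap{}$ as a category of $\omega$-concrete sheaves on a concrete site (the category $\cpap$ of c-analytic sets and PAP maps, with coverings given by countable c-analytic partitions) equipped with an admissible class of monos $\monopap$, and then invoking the general soundness-and-adequacy theorem of \citet{matache2022concrete} for PCF$_v$ (Theorem~\ref{thm:wconshsound}). The real mathematical content of the paper's argument is therefore the verification of the site and admissible-mono axioms, which hinges on $\wpap{}$-specific facts such as Proposition~\ref{prop:c-analytic-disjoint} (every c-analytic set admits a countable partition into pairwise disjoint analytic sets) and the closure of c-analytic sets under PAP preimages. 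Your direct approach --- soundness by induction on evaluation with a substitution lemma, adequacy by an admissible type-indexed logical relation between denotations and closed terms, with the recursion case discharged by an inner induction on the approximants $f_i$ --- is the standard argument that the general theorem itself encapsulates, and it would go through. What it buys is self-containedness and no dependence on external machinery; what it costs is that your opening blanket assumption (that all the semantic operations are genuine Scott-continuous $\wpap{}$ maps, that the category is a CCC with a strong lifting monad, that the fixpoint sups exist in the right homsets) is precisely where the paper's hard, domain-specific work lives, so your proof displaces rather than avoids that work. One local slip to fix: you first define the lifted relation by ``relating $\bot$ to every diverging term,'' but for the fundamental lemma's recursion case you need $\bot$ related to \emph{every} term (as you yourself use when discharging $f_0$); the standard clause is $d \mathrel{\mathcal{R}_{\lift\tau}} t$ iff $d \neq \bot$ implies $t \bigstep v$ with $d \mathrel{\mathcal{R}_\tau} v$, with the converse direction of adequacy supplied by soundness.
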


\begin{proof}[Proof sketch]
We show that $\wpap{}$ spaces and morphisms are a category of $\omega$-concrete sheaves on a concrete site, as defined in \citet{matache2022concrete}, and use their development. 
More details are given in Appendix~\ref{sec:sound-adequate}.
\end{proof}


\section{Reasoning about Differentiable Programs}
\label{sec:differentiable}

Our first application of our semantics is to the problem of characterizing the behavior of \textit{automatic differentiation} on the terms of our language. Automatic differentiation is a program transformation intended to convert programs representing real-valued functions into programs representing their derivatives. When all programs denote smooth functions, this informal specification can easily be made precise; for example, Figure~\ref{fig:ad-correct-smooth} illustrates the {correctness property} that~\citet{huot2020correctness} prove for forward-mode AD in a higher-order language with smooth primitives. Their proof is elegant and compositional, relying on logical relations defined over the \textit{denotations} of open terms. 

In our setting, not all programs denote smooth functions, and even when they do, standard AD algorithms can return incorrect results, in ways that depend on exactly how the smooth function in question was expressed as a program. For example,~\citet{mazza2021automatic} define a program $\texttt{SillyId} = \lambda x : \RR. \iif~x = 0~\then~0~\eelse~x$, that denotes the identity function (with derivative $\lambda x. 1$), but for which AD incorrectly computes a derivative of 0 when $x = 0$. Because two extensionally equal programs can have \textit{different} derivatives as computed by AD, it may seem unclear how any proof strategy based on Figure~\ref{fig:ad-correct-smooth}, relating the syntactic operation of AD to the semantic operation of derivative-taking, could apply. Indeed, in showing their correctness result for AD,  \citet{mazza2021automatic} had to develop new operational techniques for reasoning about derivatives of \textit{traces} of recursive, branching programs. 

Our $\omega$PAP semantics lets us take a different route, illustrated in Figure~\ref{fig:ad-correct-pap}. Like~\citet{huot2020correctness}, we work directly with the denotations of terms, which in our setting, are $\wpap{}$ maps. But we do not attempt to assign a unique \textit{derivative} to each $\wpap{}$ function; instead, we define a \textit{relation} on $\wpap{}$ maps that characterizes when one is an \textit{intensional derivative}~\citep{lee2020correctness} of another. Our correctness proof, which follows exactly the structure of~\citet{huot2020correctness}'s, then establishes that AD produces a program denoting \textit{an} (not \textit{the}) intensional derivative of the original program's denotation (Theorem~\ref{thm:fwd-cor-full-wpap}).

All $\omega$PAP maps have intensional derivatives, so there is no need to restrict the correctness result to only the differentiable functions. But when an $\omega$PAP map \textit{is} differentiable in the standard sense, its intensional derivatives agree almost everywhere with its true derivative. Thus, reasoning denotationally, we are able to recover~\citet{mazza2021automatic}'s result: AD is almost everywhere correct. Interestingly, this almost-everywhere correctness result is not sufficient to prove the convergence of randomly initialized gradient descent, even for programs with differentiable denotations (Proposition~\ref{prop:counterexample}).  


\begin{figure}[tb]  
\[
	\xymatrix@C+2mm{
          *+[F]{\txt{Program}}
          \ar[d]_-{\txt{\footnotesize denotational\\\footnotesize semantics}} \ar[rr]^{\txt{\footnotesize automatic\\\footnotesize differentiation}}
		 && *+[F]{\txt{Program}} \ar[d]^-{\txt{\footnotesize denotational\\\footnotesize semantics}}\\
		 *+[F]{\txt{Smooth\\function}}\ar[rr]_{\txt{\footnotesize mathematical\\\footnotesize differentiation}}
		 && *+[F]{\txt{Smooth\\function}}
	}
  \] \caption{Common approach to the correctness of AD, e.g. \citep{huot2020correctness}.
  \label{fig:ad-correct-smooth}}
  \end{figure}

\begin{figure}[tb]
\[
	\xymatrix@C+2mm{
          *+[F]{\txt{Program}}
          \ar[d]_-{\txt{\footnotesize denotational\\\footnotesize semantics}} \ar[rr]^{\txt{\footnotesize automatic\\\footnotesize differentiation}}
		 && *+[F]{\txt{Program}} \ar[d]^-{\txt{\footnotesize denotational\\\footnotesize semantics}}\\
		 *+[F]{\txt{$\omega$PAP\\function}}\ar@{<~>}[rr]_{\txt{\footnotesize intensional\\\footnotesize differentiation}}
		 && *+[F]{\txt{$\omega$PAP\\function}}
	}
  \]
  \caption{Our approach to the correctness of AD.}
  \label{fig:ad-correct-pap}
\end{figure}
  

\subsection{Correctness of Automatic Differentiation}

Terms in our language denote $\wpap{}$ maps $X \to \lift Y$, and so to speak of \textit{correct AD} in our language, we need some notion of derivative that applies to such maps.  We  begin by recalling~\citet{lee2020correctness}'s notion of \textit{intensional derivative}, then lift it to our setting:

\begin{definition}[intensional derivative~\citep{lee2020correctness}]
Let $f : U \to V$ be an $\wpap{}$ map, with c-analytic domain $U \subseteq \RR$ and $V \subseteq \RR$ (seen as $\wpap{}$ spaces). Then an $\wpap{}$ map $g : U \to \RR$ is an \textit{intensional derivative} of $f$ if there exists a countable family $\{(A_i, f_i)\}_{i \in I}$ such that:
\begin{itemize}
\item the sets $A_i$ are analytic and form a partition of $U$;
\item the functions $f_i : U_i \to \RR$ are analytic with open domain $U_i \supseteq A_i$; and
\item for $x \in A_i$, $f(x) = f_i(x)$ and $g(x) = \frac{d}{dx} f_i(x)$. 
\end{itemize}
\end{definition}
\begin{definition}[lifted intensional derivative]\label{def:intensional-deriv-lifted}
Let $f : U \to \lift V$ be an $\wpap{}$ map, with c-analytic domain $U \subseteq \RR$ and $V \subseteq \RR$ (seen as $\wpap{}$ spaces). Then an $\wpap{}$ map $g : U \to \lift \RR$ is a lifted \textit{intensional derivative} of $f$ if it is defined on exactly the inputs that $f$ is, and restricted to this common domain, $g$ is an intensional derivative of $f$.
\end{definition}
\noindent (These definitions can be straightforwardly extended to the multivariate case, where $U \subseteq \RR^n$ and $V \subseteq \RR^m$, and our full correctness theorem does cover such functions. But here we demonstrate the reasoning principles in the simpler univariate setting. Of course, a program of type $\reals \to \reals$ may still be defined in terms of multivariate primitives.)

We now establish that AD, applied to terms with a free parameter $\theta : \reals \vdash t : \reals$, computes these lifted intensional derivatives. It does so in two steps:
\begin{itemize}
\item First, we apply a macro $\ad$ (defined in Fig.~\ref{fig:ad_macro_wpap}) to $t$, yielding a term  $\theta : \reals \times \reals \vdash \ad({t}) : \reals \times \reals$. This new term operates not on a single real, but on a \textit{dual number}, which intuitively stores both a value and its (intensional) derivative with respect to an external parameter. 

\item Then, $\theta : \reals \vdash \llet~\theta=\langle\theta, 1\rangle~\iin~\llet~\langle x,\frac{dx}{d\theta}\rangle= \ad(t) ~\iin~\frac{dx}{d\theta}$ is output as the lifted intensional derivative of $\sem{t}$. 
\end{itemize}

\begin{figure}
\fbox{
  \parbox{.46\textwidth}{
  \[
   \setlength\arraycolsep{1.5pt}
  \begin{array}{lcllcl}
     \ad(\reals) & \defeq & \reals \times \reals & \ad(\tau_1 \times \tau_2) & \defeq & \ad(\ty_1) \times \ad(\ty_2) \\
       \ad(\tau) & \defeq & \tau ~(\tau = 1, \mathbb{B})& \ad(\tau_1 \rightarrow \tau_2) & \defeq & \ad(\ty_1) \to \ad(\ty_2) \\
  \end{array}\]
\[
\begin{array}{lcl}
\ad(\var) & \defeq &  \var\\
 \ad(\star) & \defeq &  \star\\
 \ad(\cnst:\reals) & \defeq &  \tPair{\cnst}{0}\\
 \ad(\cnst:\BB) & \defeq &  \cnst\\
   \ad(f(\trm_1,\ldots,\trm_n)) & \defeq & f_\ad(\ad(\trm_1),\ldots,\ad(\trm_n)) \\
\ad(\fun \var    \trm) & \defeq &  \fun\var{\ad(\trm)}\\ 
\ad(\trm\, s ) & \defeq &  
\ad(\trm)\,\ad(s)\\
\ad(\tPair{\trm_1}{\trm_2}) & \defeq &  \tPair{\ad(\trm_1)}{\ad(\trm_2)} \\
\ad({\tMatch{\trm}{\var_1,\var_2}{s}}) & \defeq & 
\tMatch{\ad(\trm)\\&&}{\var_1,\var_2}{\ad(s)} \\
\ad(\mu f.\trm) & \defeq &  \mu f.\ad(\trm) \\
\ad(\iifthenelse{\trm_1}{\trm_2}{\trm_3}) & \defeq & \iifthenelse{\ad(\trm_1)}{\ad(\trm_2)\\&&}{\ad(\trm_3)}
\end{array}
\]
}}
\caption{AD macro for our higher-order recursive language}
\label{fig:ad_macro_wpap}
\end{figure}

The key hurdle in showing the correctness of this AD procedure is establishing that the macro $\ad$ is correctly propagating intensional dual numbers. To show this compositionally, and in a way that exploits our denotational framework, we first define for each type $\tau$ a \textit{relation} $V_\tau$ between pairs of $\omega$PAP \textit{plots} in $\sem{\tau}$ and in $\sem{\ad({\tau})}$. It captures more precisely the properties that the dual numbers flowing through our program should have.

\begin{definition}[correct dual-number intensional derivative at $A \subseteq \RR$]
Let $\tau$ be a type and let $f : A \to \sem{\tau}$, for some c-analytic subset $A \subseteq \RR$. Then we say $f' : A \to \sem{\ad(\tau)}$ is a \textit{correct dual-number intensional derivative} of $f$ if $(f, f') \in V_\tau(A)$, where $V_\tau(A)$ is defined inductively:
\begin{itemize}
\item $V_{\reals}(A) = \{(f, f') \mid f'(x) = (f(x), g(x))$ for some intensional derivative $g$ of $f\}$.
\item $V_\mathbb{\tau}(A) = \{(f, f') \mid f = f'\}$ for $\tau \in \{1, \mathbb{B}\}$.
\item $V_{\tau_1 \times \tau_2}(A) = \{(f, f') \mid (\pi_i \circ f, \pi_i \circ f') \in V_{\tau_i}(A)$ for $i = 1, 2\}$.
\item $V_{\tau_1 \to \tau_2}(A) = \{(f, f') \mid \forall (g, g') \in V_{\tau_1}(A),$ $(\lambda x. f(x)(g(x)), \lambda x. f'(x)(g'(x))) \in V_{\tau_2}(A)\}$. 
\end{itemize}
\end{definition}
Readers may recognize $V$ as defining a (semantic) logical relation; its definition at product and function types is completely standard, so only at ground types did we make any real choices. For partial functions, we need to extend the definition:
\begin{definition}[correct lifted dual-number intensional derivative at $A \subseteq \RR$]
Let $\tau$ be a type and let $f : A \to \lift\sem{\tau}$, for some c-analytic subset $A \subseteq \RR$. Then we say $f' : A \to \lift \sem{\ad(\tau)}$ is a \textit{correct lifted dual-number intensional derivative} of $f$ if $f$ and $f'$ are defined (i.e., not $\bot$) on the same domain and, restricted to this common domain, $f'$ is a correct dual-number intensional derivative of $f$.
\end{definition}
We are now in a position to state more precisely what it would mean for the $\ad$ macro to correctly propagate dual numbers. In particular, the macro transforms terms $\Gamma \vdash t : \tau$ into their \textit{correct dual-number translations}  $\ad(\Gamma) \vdash \ad(t) : \ad(\tau)$, the specification for which is given below.

\begin{definition}[correct dual-number translation]\label{def:correct-translation}
Let $\tau_1$ and $\tau_2$ be types, and let $f : \sem{\tau_1} \to \lift \sem{\tau_2}$. Then $f_{\ad} : \sem{\ad(\tau_1)} \to \lift\sem{\ad(\tau_2)}$ is a correct dual-number translation of $f$ if, for all c-analytic $A$, and all pairs $(g, g') \in V_{\tau_1}(A)$, $f_D \circ g'$ is a correct lifted dual-number intensional derivative of $f \circ g$.
\end{definition}

We assume that each primitive $f : \tau_1 \to \tau_2$ in our language is equipped with a \textit{built-in} correct dual-number translation $f_{\ad}$, which the definition of $\ad$ in Figure~\ref{fig:ad_macro_wpap} uses. Because the primitives  are translated correctly, so are whole programs:

\begin{lemma}[Correctness of $\ad$ macro (limited)]\label{lem:fundamental}
For any term $\Gamma \vdash t : \tau$, $\sem{\ad(t)} : \sem{\ad(\Gamma)} \to \sem{\ad(\tau)}$ is a correct dual-number translation of $\sem{t}$. 
\end{lemma}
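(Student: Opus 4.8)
The plan is to prove a \emph{fundamental lemma} for the logical relation $V$ by induction on the typing derivation of $\Gamma \vdash t : \tau$, after first extending $V$ from types to contexts. For a context $\Gamma = (x_1 : \tau_1, \ldots, x_n : \tau_n)$ and c-analytic $A \subseteq \RR$, define $V_\Gamma(A)$ to consist of those pairs $(g, g')$, with $g : A \to \sem{\Gamma}$ and $g' : A \to \sem{\ad(\Gamma)}$, such that $(\pi_i \circ g, \pi_i \circ g') \in V_{\tau_i}(A)$ for every $i$; this is just $V$ at the product space $\sem{\Gamma}$. Writing $V^{\lift}_\tau(A)$ for the lifted relation of Definition~\ref{def:intensional-deriv-lifted} (pairs of partial maps defined on a common c-analytic domain and lying in $V_\tau$ there), the lemma unfolds, via Definition~\ref{def:correct-translation} with $\tau_1 = \Gamma$, to exactly the statement that for all c-analytic $A$ and all $(g, g') \in V_\Gamma(A)$ one has $(\sem{t} \circ g, \sem{\ad(t)} \circ g') \in V^{\lift}_\tau(A)$.

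Before the induction I would isolate three closure properties of $V$ that the term cases rely on. First, \textbf{monadic compatibility}: if $(f, f') \in V_\tau(A)$ then $(\return \circ f, \return \circ f') \in V^{\lift}_\tau(A)$, and $\bind$ sends related computations and related continuations to related computations, tracking the c-analytic domain of definedness correctly through the $\lift$ monad. Because every clause of the semantics in Figure~\ref{fig:wpap-semantics-terms} is assembled from $\return$ and $\bind$, and $\ad$ (Figure~\ref{fig:ad_macro_wpap}) commutes with this structure syntactically, these two facts discharge the cases for variables, constants, pairing, pair elimination, application and $\lambda$-abstraction essentially by unfolding: the variable and constant cases use that $\lambda a.\, c$ has intensional derivative $\lambda a.\, 0$ and that $V_\BB$ is equality, while abstraction and application are the standard arrow-type manipulations. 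The primitive case $f(t_1,\ldots,t_n)$ combines the assumed correctness of each built-in $f_\ad$ with monadic compatibility. Second, \textbf{locality}: if $\{A_j\}_j$ is a c-analytic partition of $A$ and $(f, f') \in V_\tau(A_j)$ for each $j$, then $(f, f') \in V_\tau(A)$. This handles the conditional $\iif\,t_1\,\then\,t_2\,\eelse\,t_3$: since $\ad$ preserves the guard and $V_\BB$ is equality, $t_1$ and $\ad(t_1)$ branch identically, splitting the domain of definedness into its c-analytic ``then'' and ``else'' regions, on each of which the induction hypotheses for $t_2$ and $t_3$ apply, after which locality reassembles the pieces.

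The case I expect to be the crux is \textbf{recursion}, $\mu f.\lambda x.\, t$, whose denotation is $\return\,\bigvee_i f_i$ with $f_0 = \lambda x.\,\mathbf{inr}\,\bot$ and $f_{i+1} = \lambda v.\,\sem{t}(\rho[f\mapsto f_i, x\mapsto v])$, and correspondingly for $\ad(\mu f.\lambda x.\,t) = \mu f.\lambda x.\,\ad(t)$. Handling this requires a third closure property, \textbf{admissibility}: each $V_\tau(A)$ (and $V^{\lift}_\tau(A)$) contains the everywhere-undefined pair $(\lambda a.\,\mathbf{inr}\,\bot,\lambda a.\,\mathbf{inr}\,\bot)$ and is closed under suprema of pointwise $\omega$-chains of related pairs, invoking the $\omega$-chain closure condition on plots from Definition~\ref{def:wpap}. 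Granting admissibility, one shows by an inner induction on $i$ (using the outer induction hypothesis for the body $t$ in the extended context) that each approximant pair $(f_i, f_i')$ is related at the function type, and then passes to the supremum; closure at function and product types reduces by evaluation-at-related-arguments and componentwise arguments to closure at the ground types. The genuine difficulty is admissibility at $\reals$. Here PAP-ness of the limit $\bigvee_i f_i$ on its (c-analytic) domain of definedness comes \emph{for free} from the plot-closure conditions, so the real content is showing that the pointwise supremum of the derivative components is an intensional derivative of the limit: one must match the limiting derivative to a single analytic partition witnessing the limit. I expect this to need a careful argument in which the identity theorem for analytic functions forces the derivative of any witnessing analytic piece of the limit to agree with the derivatives supplied by the approximants wherever their analytic pieces overlap. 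This analytic, PAP-specific step is the one most likely to require a self-contained proof rather than routine logical-relations bookkeeping.
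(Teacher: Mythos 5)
Your proof is correct and isolates essentially the same verification obligations as the paper's, but it takes a more elementary route. The paper does not carry out the term induction by hand: it invokes the general fibrations-for-logical-relations machinery of \citet{katsumata2013relating} (Theorems~\ref{thm:logrel} and~\ref{thm:logrelwiter}, instantiated in Appendix~\ref{sec:correctness-ad} via a fibration over $\wpap{}\times\wpap{}$ built from sheaves on a site of PAP injections). There, the variable, abstraction, application, pairing and conditional cases---your ``monadic compatibility'' and the arrow-type bookkeeping---are discharged once and for all by the Cartesian-closed and monadic structure, and what remains is exactly to check that the ground-type relations form subsheaves (your ``locality'', plus closure under restriction, which you use implicitly when restricting $(g,g')$ to the then/else regions but do not list among your three properties), that $\return$ and each primitive lift, and that the computation relations are admissible. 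Your hand-rolled induction buys self-containedness at the cost of re-proving the compositional cases; the paper's route buys brevity at the cost of importing the fibrational setup. Your three closure properties line up with the paper's three bullet points (piecewise gluing, restriction, lub-closure).

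One correction on the step you flag as the crux: admissibility at $\reals$ does not need the identity theorem for analytic functions. Because the order on $\sem{\reals}$ (and on $\sem{\reals\times\reals}$) is flat, a chain $f_n \leq f_{n+1}$ in $A \to \lift\RR$ can only grow its domain: $\dom(f_n) \subseteq \dom(f_{n+1})$ and $f_{n+1} = f_n$ on $\dom(f_n)$, and likewise for the derivative components $f'_n$. Hence $\bigvee_n f_n$ restricted to $B_n := \dom(f_n)\setminus\bigcup_{m<n}\dom(f_m)$ is literally $f_n$, the $B_n$ are c-analytic and partition the limit's domain, and a single countable analytic partition witnessing Definition~\ref{def:intensional-deriv-lifted} for the pair of lubs is obtained by refining each approximant's witnessing partition against the $B_n$ and taking the union. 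This is the ``explicit construction of representations of the lubs as piecewise functions'' the paper alludes to, mirroring the proof that $\lift\RR$ is an $\wpap{}$ space; distinct analytic pieces never overlap, so no analytic-continuation argument is required. Your sketch of that step would be a detour, not a failure, so the overall argument stands.
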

\begin{proof}[Proof Sketch]
This is the Fundamental Lemma of our logical relations proof, and can be obtained using general machinery~\citep{katsumata2013relating}, after verifying several properties of our particular setting:
\begin{itemize}
\item $V_{\reals}(A)$ is closed under \textit{piecewise gluing}: if $\{A_i\}_{i \in I}$ is a countable partition of $A$, and $(f|_{A_i}, f'|_{A_i}) \in V_{\reals}(A_i)$ for each $i \in I$, then $(f, f') \in V_{\reals}(A)$. This follows from our definition of intensional derivative.
\item $V_{\reals}(A)$ is closed under \textit{restriction}: if $(f, f') \in V_{\reals}(A)$, and $A' \subseteq A$ is c-analytic, then $(f|_{A'}, f'|_{A'}) \in V_{\reals}(A')$. This also follows easily from our definition.
\item The lifted dual-number intensional derivatives are closed under \textit{least upper bounds}: if $\{(f_n, f'_n)\}_{n \in \mathbb{N}}$ is a sequence with each $f'_n : A \to \lift\sem{\RR \times \RR}$ a correct lifted dual-number intensional derivative of $f : A \to \lift \RR$, and with $f_n \leq f_{n+1}$ and $f'_n \leq f'_{n+1}$ for each $n \in \mathbb{N}$, then $\bigvee_{n \in \mathbb{N}} f'_n$ is a correct lifted dual-number intensional derivative of $\bigvee_{n \in \mathbb{N}} f_n$. This is more involved, but its proof mirrors the one that establishes $\lift \RR$ as an $\wpap{}$ space in the first place: we explicitly construct representations of the lubs in question as piecewise functions, satisfying Definition~\ref{def:intensional-deriv-lifted}.
\end{itemize}
\end{proof}

The overall correctness theorem follows easily:
\begin{theorem}[Correctness of AD Algorithm (limited)]
  \label{thm:fwd-cor-basic-wpap}
  For any term $\theta : \reals \vdash \trm : \reals$, the $\wpap{}$ map
  $\lambda \theta. \sem {\llet~\langle x,\frac{dx}{d\theta}\rangle 
 = \ad(t)~\iin~\frac{dx}{d\theta}}((\theta, 1))$ is a lifted intensional derivative of $\sem{\trm}$.
\end{theorem}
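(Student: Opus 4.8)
The plan is to instantiate the Fundamental Lemma (Lemma~\ref{lem:fundamental}) at a single carefully chosen pair of plots, and then unwind the ground-type definitions. Since $\theta : \reals \vdash t : \reals$, its denotation is an $\wpap{}$ map $\sem{t} : \RR \to \lift\RR$, and Lemma~\ref{lem:fundamental} tells us that $\sem{\ad(t)} : \RR\times\RR \to \lift(\RR\times\RR)$ is a correct dual-number translation of $\sem{t}$ in the sense of Definition~\ref{def:correct-translation}.

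First I would take $A = \RR$, $g = \mathrm{id}_\RR : \RR \to \sem{\reals}$, and $g' : \RR \to \sem{\ad(\reals)} = \RR\times\RR$ given by $g'(\theta) = (\theta, 1)$. To apply the definition of correct dual-number translation I must verify that $(g, g') \in V_\reals(\RR)$, which is immediate: $g'(\theta) = (g(\theta), 1)$, and the constant function $\lambda\theta.\,1$ is an intensional derivative of the identity (use the trivial partition $\{\RR\}$ with analytic representative $\mathrm{id}$, whose ordinary derivative is $1$). By Definition~\ref{def:correct-translation}, $\sem{\ad(t)} \circ g'$ is therefore a correct lifted dual-number intensional derivative of $\sem{t} \circ g = \sem{t}$.

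Next I would unwind what this says at the ground type $\reals$. By the definition of correct lifted dual-number intensional derivative, $\sem{\ad(t)} \circ g'$ is defined on exactly the same inputs as $\sem{t}$, and restricted to this common domain the pair lies in $V_\reals$, which by definition supplies an intensional derivative $h$ of $\sem{t}$ with $(\sem{\ad(t)} \circ g')(\theta) = (\sem{t}(\theta), h(\theta))$. It then remains to identify the map in the theorem statement with the strict second projection of $\sem{\ad(t)}\circ g'$. Reading off Figure~\ref{fig:wpap-semantics-terms}, the term $\case~\ad(t)~\of~\langle x, \frac{dx}{d\theta}\rangle \to \frac{dx}{d\theta}$ under $\theta \mapsto (\theta, 1)$ denotes $\sem{\ad(t)}((\theta,1)) \bind (\lambda (x,d).\,\return~d)$; since $\bind$ is strict in its first argument, this map is defined exactly where $\sem{\ad(t)} \circ g'$ is, namely on $\mathrm{dom}(\sem{t})$, and equals $h(\theta)$ there. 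Hence the theorem's map is defined on exactly the inputs $\sem{t}$ is and, restricted to that domain, equals the intensional derivative $h$, so by Definition~\ref{def:intensional-deriv-lifted} it is a lifted intensional derivative of $\sem{t}$.

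The substance of the argument is already discharged by Lemma~\ref{lem:fundamental}; the only genuine obstacles here are bookkeeping ones. The hard part will be tracking partiality cleanly: I must ensure the monadic $\bind$ in the case expression neither shrinks nor enlarges the domain (strictness gives precisely this), so that the ``defined on the same inputs'' clause of Definition~\ref{def:intensional-deriv-lifted} holds exactly rather than merely almost everywhere. The only other points needing a line of justification are the identity $\sem{t} \circ \mathrm{id} = \sem{t}$ and the membership $(\mathrm{id}, \lambda\theta.(\theta,1)) \in V_\reals(\RR)$, which together license feeding the dual number $\langle\theta, 1\rangle$ as the initial seed.
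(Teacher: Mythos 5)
Your proof is correct and follows essentially the same route as the paper's: both instantiate Lemma~\ref{lem:fundamental} at the seed pair $(\lambda\theta.\theta,\ \lambda\theta.(\theta,1))$, unwind the ground-type clause of $V_\reals$ to extract an intensional derivative on the common domain, and identify the theorem's map with the second projection via the strictness of $\bind$ in the $\case$ semantics. Your explicit verification that $(\mathrm{id}, \lambda\theta.(\theta,1)) \in V_\reals(\RR)$ is a small detail the paper leaves implicit, but otherwise the arguments coincide.
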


\begin{proof}[Proof]
We apply Lemma~\ref{lem:fundamental}, setting $(g, g')$ from Definition~\ref{def:correct-translation} to $(\lambda \theta. \theta, \lambda \theta. (\theta, 1))$. This implies that $(\sem{t}, \lambda \theta. \sem{\ad(t)}(\theta, 1))$ are defined on the same domain, and on that domain $A$, their restrictions are in $V_{\reals}(A)$. This implies that on that domain, $\lambda \theta. \pi_2(\sem{\ad(t)}((\theta, 1)))$ is an intensional derivative of $\sem{t}$, from which the result follows by unfolding the definition of $\sem{\cdot}$ on $\llet$ expressions.
\end{proof}

This result, which we achieved via denotational reasoning, immediately implies the result of~\citet{mazza2021automatic}:

\begin{corollary}
If $\theta : \reals \vdash t : \reals$ denotes a total differentiable function, then the set of input points on which the AD algorithm fails has measure zero.\footnote{\citet{mazza2021automatic} prove a slightly stronger result, and we can also show something slightly stronger; see Prop.~\ref{propn:failure-set-quasivariety}.}
\end{corollary}
\begin{proof}
Because $\sem{t}$ is total, it is PAP and AD computes an intensional derivative of $\sem{t}$. Intensional derivatives are almost everywhere equal to true derivatives~\citep{lee2020correctness}.
\end{proof}


Using more complex logical predicates, we can prove correctness for multivariate functions (see Appx.~\ref{sec:correctness-ad}). In the multivariate setting, forward-mode AD typically computes \textit{Jacobian-vector products}; for us, it computes lifted \textit{intensional} Jacobian-vector products.
\begin{definition}[lifted intensional Jacobian]
Let $f : U \to \lift V$ be an $\wpap{}$ map, with $U \subseteq \RR^n$ c-analytic and $V \subseteq \RR^m$. An $\wpap{}$ map $g : U \to \lift \RR^{m \times n}$ is an \textit{intensional Jacobian} of $f$ if it is defined (i.e., not $\bot$) exactly when $f$ is, there exists a countable analytic partition $\{A_i\}_{i \in I}$ of its domain, and there exist analytic functions $f_i : U_i \to \RR^m$ with open domains $U_i \supseteq A_i$ such that when $x \in A_i$, $f(x) = f_i(x)$ and $g(x) = Jf_i(x)$, where $J$ is the Jacobian operator.
\end{definition}

\begin{theorem}[Correctness of AD (full)]
  \label{thm:fwd-cor-full-wpap}
  For any term $x_1: \reals, \dots, x_n: \reals\vdash\trm: \reals^m$, and any vector $\mathbf{v} \in \RR^n$, the $\wpap{}$ map 
  $\lambda \mathbf{\theta}. \textbf{let } \mathbf{x} = \sem {\ad(\trm)}((\theta_1, v_1), \dots, (\theta_n, v_n)) \textbf{ in } (\pi_2 x_1, \dots, \pi_2 x_m)$ is equal to $g(\mathbf{\theta}) \cdot \mathbf{v}$ for some intensional Jacobian $g$ of $\sem{t}$.
\end{theorem}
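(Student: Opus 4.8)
The plan is to mirror the univariate argument behind Theorem~\ref{thm:fwd-cor-basic-wpap} essentially verbatim, replacing the logical relation $V$ by a multivariate version indexed by c-analytic domains $A \subseteq \RR^n$ and carrying the fixed tangent direction $\mathbf{v}$. Concretely, I would reread every clause of $V_\tau(A)$ with $A \subseteq \RR^n$ in place of $A \subseteq \RR$, copy the product and function clauses (and the lifted, domain-matching version) unchanged, and redefine the only ground clause that makes a real choice by tracking the directional derivative along $\mathbf{v}$:
$$V^{\mathbf v}_{\reals}(A) = \{(f,f') \mid f'(x) = (f(x), g(x)) \text{ with } g \text{ an intensional directional derivative of } f \text{ along } \mathbf v\},$$
where $g$ is an intensional directional derivative of $f$ along $\mathbf v$ iff there is a countable analytic partition $\{A_i\}$ of $A$ and analytic extensions $f_i : U_i \to \RR$ (with $U_i \supseteq A_i$ open) such that $f = f_i$ and $g = \nabla f_i \cdot \mathbf v$ on each $A_i$.

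First I would re-establish the Fundamental Lemma (Lemma~\ref{lem:fundamental}) for $V^{\mathbf v}$. This is purely a matter of rechecking the three closure properties of the ground relation that feed the logical-relations argument: closure under piecewise gluing, under restriction, and under least upper bounds. Each holds for directional derivatives for exactly the reason it held for ordinary derivatives, since the defining data are again countable analytic partitions with analytic extensions and $\nabla(\cdot)\cdot\mathbf v$ is linear and local; the proofs transcribe with no new ideas. The only genuinely new input is that each primitive's built-in translation $f_\ad$ must send $V^{\mathbf v}_{\tau_1}$ into $V^{\mathbf v}_{\tau_2}$, which is precisely the statement that $f_\ad$ propagates Jacobian--vector products, i.e.\ the intensional chain rule specialized to the single tangent $\mathbf v$, assumed built in as in Definition~\ref{def:correct-translation}.

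Next I would instantiate the lemma at the domain $A = \RR^n$ with the initial plot $g_0 = \lambda\boldsymbol\theta.\boldsymbol\theta$ and $g_0' = \lambda\boldsymbol\theta.((\theta_1, v_1),\dots,(\theta_n,v_n))$. Checking $(g_0, g_0') \in V^{\mathbf v}_{\reals^n}(\RR^n)$ reduces componentwise to the observation that the $i$-th coordinate projection $\pi_i$ is analytic with $\nabla \pi_i \cdot \mathbf v = e_i \cdot \mathbf v = v_i$, which is exactly the tangent that $g_0'$ supplies. The Fundamental Lemma then yields that $\sem{\ad(\trm)} \circ g_0'$ is a correct lifted dual-number intensional derivative of $\sem{\trm}$: the two maps are defined on a common c-analytic domain, and there each output component $j$ satisfies $\pi_2(x_j) = g_j$ for some intensional directional derivative $g_j$ of $\pi_j \circ \sem{\trm}$ along $\mathbf v$.

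The main obstacle, and the final step, is to assemble these $m$ per-component directional derivatives---each furnished with its own analytic partition and analytic extensions---into a single intensional Jacobian $g$ in the sense of the definition preceding the theorem, i.e.\ one countable analytic partition with one family of vector-valued analytic extensions. I would take the common refinement of the $m$ partitions: intersections of analytic sets are analytic (concatenate the defining inequalities) and a countable product of countable partitions is countable, so the refinement $\{A_k\}$ is again a countable analytic partition. On each $A_k$ the chosen scalar extensions $f_k^1,\dots,f_k^m$ assemble into one analytic map $f_k = (f_k^1,\dots,f_k^m) : U_k \to \RR^m$ on $U_k = \bigcap_j U_k^j$, and setting $g(x) = Jf_k(x)$ on $A_k$ gives an intensional Jacobian of $\sem{\trm}$. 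Finally $Jf_k \cdot \mathbf v = (\nabla f_k^1 \cdot \mathbf v,\dots,\nabla f_k^m\cdot\mathbf v)$ equals the computed tangent vector on each $A_k$, so the displayed map is $g(\boldsymbol\theta)\cdot\mathbf v$; unfolding $\sem{\cdot}$ on the $\llet$/projection wrapper exactly as in Theorem~\ref{thm:fwd-cor-basic-wpap} completes the proof.
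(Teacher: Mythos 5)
Your argument is correct for the theorem as stated, but it takes a genuinely different route from the paper's. The paper (Appendix~\ref{sec:correctness-ad}) never fixes the tangent $\mathbf{v}$: its logical relation at $\reals$ pairs $f : A \to \RR$ with a countable family $(g_i)_{i\in\NN}$ of $i$-th \emph{partial} dual-number intensional representations, one per coordinate direction $hot_i$, realized semantically as the bundle $\prod_{i\in\NN}(\RR\times\RR)$ inside a fibration for logical relations built from sheaves on the site $\Injpap$ of c-analytic sets and PAP injections; the AD translation is interpreted as countably many copies of $\sem{\ad(\trm)}$, the fundamental lemma is obtained from the general machinery of Theorem~\ref{thm:logrelwiter}, and Lemma~\ref{lem:partial-deriv-to-full-ok} reassembles the per-coordinate partial representations into a single dual-number intensional representation of $\sem{\trm}$ --- i.e.\ one partition and one family of analytic extensions valid for \emph{every} tangent simultaneously. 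You instead run one directional-derivative logical relation per fixed $\mathbf{v}$ and take a finite common refinement of per-output-component partitions at the end, so the intensional Jacobian you produce may depend on $\mathbf{v}$. Since the theorem quantifies ``for any $\mathbf{v}$ \dots\ for some $g$,'' this weaker conclusion is exactly what is asked, and your argument is considerably lighter (no sheaf/fibration apparatus, no countable product of copies of the AD semantics). What you give up is the uniform, $\mathbf{v}$-independent representation the paper actually establishes (its corollary that $\sem{\ad(\trm)}$ is a dual-number intensional representation of $\sem{\trm}$), and you must restate the built-in-primitive assumption in its multivariate directional form --- that each $f_\ad$ propagates intensional Jacobian--vector products --- which the paper discharges once and for all via Lemma~\ref{lem:primitive-lift}; both reduce to the same intensional chain rule, so this is a matter of bookkeeping rather than a gap.
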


\subsection{Optimization of Differentiable Programs with AD-computed Derivatives}

Our characterization of AD's behavior, proven in the previous section, can be used to establish the correctness of algorithms that use the results of AD. In optimization, for example, a very common technique for minimizing a differentiable function $f:\RR^d\to\RR$ is gradient descent. Starting at an initial point $x^0 \in \RR^d$, for a set step size $\epsilon > 0$, we iterate the following update:

\begin{equation}
\label{eqn:gd-step}
    x^{t+1} := x^t-\epsilon \grad f(x^t).
\end{equation}

If the step size $\epsilon$ is too large, we cannot hope to converge on an answer. To ensure that a small-enough step size can be found, we need to know that $f$'s gradient changes slowly:

\begin{definition}
     A differentiable function $f:\RR^d\to\RR$ is $L$-smooth if for all $x,y\in\RR^d$
     \[\norm{\grad f(x)-\grad f(y)}\leq L\norm{x-y}. \]
\end{definition}

When $f$ is $L$-smooth, it is well known that $\epsilon$ can be chosen small enough for gradient descent to converge (albeit not necessarily to a global minimum):

\begin{theorem}[e.g. \citep{polyak1987introduction}]
\label{thm:gd-conv}
   Let $f$ be $L$-smooth, bounded below, and $0<\epsilon<\frac{2}{L}$. Then, for any initial $x^0\in\RR^d$, following the method from Equation~\ref{eqn:gd-step}, the gradient of $f$ tends to 0:
   \[\lim_{t\to\infty}\grad f(x^t) =0\]
   and the function $f$ monotonically decreases on values of the sequence $\{x^t\}_t$, that is for all $t\in\NN$, $f(x^{t+1})\leq f(x^t)$.
\end{theorem}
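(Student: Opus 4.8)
The plan is to reduce both claims to a single \emph{descent lemma} quantifying how much one gradient step decreases $f$. First I would establish the standard quadratic upper bound enjoyed by every $L$-smooth function: for all $x,y\in\RR^d$,
\[
f(y) \leq f(x) + \grad f(x)\cdot(y-x) + \frac{L}{2}\norm{y-x}^2.
\]
This follows by writing $f(y)-f(x) = \int_0^1 \grad f\bigl(x + s(y-x)\bigr)\cdot(y-x)\,ds$ via the fundamental theorem of calculus, subtracting off $\grad f(x)\cdot(y-x)$, and bounding the resulting integrand using the $L$-Lipschitz property of $\grad f$ together with Cauchy--Schwarz.

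Next I would instantiate this bound along the gradient-descent trajectory, taking $x = x^t$ and $y = x^{t+1} = x^t - \epsilon\,\grad f(x^t)$. Substituting $y - x = -\epsilon\,\grad f(x^t)$ collapses the right-hand side to
\[
f(x^{t+1}) \leq f(x^t) - \epsilon\Bigl(1 - \tfrac{L\epsilon}{2}\Bigr)\norm{\grad f(x^t)}^2.
\]
Because $0 < \epsilon < \frac{2}{L}$, the constant $c := \epsilon\bigl(1 - \tfrac{L\epsilon}{2}\bigr)$ is strictly positive, so the inequality $f(x^{t+1}) \leq f(x^t) - c\,\norm{\grad f(x^t)}^2$ immediately yields the monotone-decrease claim, since $\norm{\grad f(x^t)}^2 \geq 0$.

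Finally, for convergence of the gradients to zero, I would telescope the per-step bound: summing from $t=0$ to $T$ gives
\[
c\sum_{t=0}^{T}\norm{\grad f(x^t)}^2 \;\leq\; f(x^0) - f(x^{T+1}) \;\leq\; f(x^0) - \inf_{x} f(x),
\]
where the last inequality uses that $f$ is bounded below. Letting $T\to\infty$, the partial sums stay bounded, so the nonnegative series $\sum_t \norm{\grad f(x^t)}^2$ converges; hence its terms tend to zero, i.e.\ $\lim_{t\to\infty}\grad f(x^t) = 0$.

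The only non-routine step is the descent lemma itself; everything downstream is elementary algebra and a telescoping sum. The main subtlety there is justifying the integral representation of $f(y)-f(x)$, which requires $f$ to be differentiable with a (Lipschitz-)continuous gradient along the segment from $x$ to $y$ --- precisely what $L$-smoothness supplies --- so that the fundamental theorem of calculus applies.
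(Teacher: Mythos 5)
Your proof is correct: the descent lemma $f(x^{t+1}) \leq f(x^t) - \epsilon\bigl(1 - \tfrac{L\epsilon}{2}\bigr)\norm{\grad f(x^t)}^2$, the positivity of the coefficient for $0<\epsilon<\tfrac{2}{L}$, and the telescoping argument against the lower bound on $f$ together give both the monotone decrease and $\grad f(x^t)\to 0$. The paper does not prove this theorem itself but cites it to \citet{polyak1987introduction}, and your argument is exactly the standard one found there, so there is nothing to reconcile.
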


However, this theorem assumes that the true gradient of $f$ is used to make each update. When gradients are obtained by AD, this may not be the case, even if $f$ is differentiable: we proved only that AD computes \textit{intensional derivatives}, which may disagree with true derivatives on a measure-zero set of inputs. 

Intuitively, it may seem as though we can avoid the measure-zero set of inputs on which AD may be incorrect by \textit{randomly} selecting an initial point $x^0$. Indeed, similar arguments have been made informally in the literature~\citep{mazza2021automatic}. But this turns out not to be the case:

\begin{proposition}\label{prop:counterexample}
    Let $\epsilon > 0$. There exists a program $P$ such that $\sem{P}:\RR\to\RR$ 
    satisfies the condition of Theorem~\ref{thm:gd-conv}, and yet, for all $x^0 \in \RR$, the gradient descent method diverges to  $-\infty$, when the gradients are computed with AD.
\end{proposition}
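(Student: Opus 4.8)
The plan is to exhibit, for the given $\epsilon$, an explicit program $P$ whose denotation $f=\sem{P}$ is a genuinely differentiable, $L$-smooth, bounded-below function with $L<2/\epsilon$ (so that $f$ and $\epsilon$ meet the hypotheses of Theorem~\ref{thm:gd-conv}, and true gradient descent is guaranteed to behave well), yet for which AD-gradient descent runs off to $-\infty$ from every starting point. The key idea exploits the fact that $\epsilon$ is \emph{fixed}: I choose $f$ so that, on most of its domain, it is locally a parabola of curvature exactly $1/\epsilon$. For such a parabola one gradient-descent step $x\mapsto x-\epsilon\grad f(x)$ is exactly a Newton step and lands \emph{precisely} at the vertex. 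If I place these vertices on the integer lattice $\ZZ$ and make each an honest local minimum, then a single true step maps almost any point exactly onto $\ZZ$. This is what makes randomizing $x^0$ useless: although the set on which AD errs has measure zero, it is the \emph{image} of a full-measure set under one descent step, so a randomly chosen $x^0$ is mapped onto it immediately.

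Concretely, I take $f$ periodic of period $1$: on a neighbourhood of each integer it equals the parabola $\tfrac{1}{2\epsilon}s(x)^2$ of curvature $1/\epsilon$, where $s(x)=x-\mathrm{round}(x)$ is the (PAP) signed distance to the nearest integer, and near each half-integer the consecutive basins are joined by a small concave $C^1$ cap of curvature magnitude at most $1/\epsilon$. Then $f$ is differentiable with $f'$ globally $(1/\epsilon)$-Lipschitz (so $L=1/\epsilon<2/\epsilon$) and bounded below with $\inf f=0$ attained at every integer, so $f$ and $\epsilon$ satisfy the hypotheses of Theorem~\ref{thm:gd-conv}. I realize $f$ by a program $P$ that branches, \`a la $\texttt{SillyId}$, on whether $s(x)=0$: the ``else'' branch returns the (capped) smooth formula for $f$, while the single-point ``then'' branch returns $\tfrac1\epsilon s(x)$, whose value at an integer is $0$ but whose analytic representative there has slope $1/\epsilon$. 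As a function $\sem{P}=f$ exactly, so $f$ is differentiable with $f'(n)=0$ at every integer $n$; but $\ad$, taking the ``then'' branch at integers, reports the intensional derivative $1/\epsilon$ there, agreeing with $f'$ off the null set $\ZZ$ (as it must, by the almost-everywhere-correctness corollary) and disagreeing on $\ZZ$.

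It then remains to run the two dynamics. Under true gradients, from any $x^0$ lying in a parabolic basin one step lands exactly at the basin's integer $n$ (the Newton step on a curvature-$1/\epsilon$ parabola), where $\grad f(n)=0$, so true gradient descent halts at the minimizer $n$, exactly the behavior promised by Theorem~\ref{thm:gd-conv}. Under AD the first step likewise lands at an integer $n$, but AD reports $\grad f(n)=1/\epsilon$, so the update is $n-\epsilon\cdot\tfrac1\epsilon=n-1$, again an integer; iterating gives the orbit $n\to n-1\to n-2\to\cdots\to-\infty$, whatever $x^0$ was. The measure-zero set of starting points lying in the caps is handled by checking that one further step carries them into a parabolic basin and thence onto $\ZZ$. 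I expect the main obstacle to be purely the analytic bookkeeping of the second paragraph: simultaneously gluing the curvature-$1/\epsilon$ basins into one globally differentiable, bounded-below PAP function whose slope stays $(<2/\epsilon)$-Lipschitz, while arranging the conditional so that AD errs \emph{exactly} on the lattice and the AD-kick is exactly one period. Once the function is in hand, the divergence argument is immediate.
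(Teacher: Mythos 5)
Your core mechanism is the same one the paper uses: choose $f$ with curvature exactly $1/\epsilon$ so that the step $x\mapsto x-\epsilon\grad f(x)$ is a Newton step landing \emph{exactly} on the measure-zero set where AD errs, and arrange the intensional derivative there to be exactly $1/\epsilon$ so that each subsequent AD step translates the iterate by $-1$ onto the next bad point. The paper realizes this with the single global parabola $x\mapsto x^2/(2\epsilon)$, planting the AD errors at all non-positive integers via the recursion $g(x,n)\to g(x+1,n+1)$ with base case $x=0$; every $x^0$ reaches $0$ (or a non-positive integer) in one step and then decrements forever. You instead build a $1$-periodic function with a parabolic basin at each integer and concave caps in between. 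That is a legitimate alternative, but it is more delicate and, as described, it has a genuine gap.

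The gap: any differentiable, non-constant, $1$-periodic function has local maxima, which in your construction sit at the half-integers (the centers of the caps). There the true gradient is $0$, and since your program takes the smooth cap branch at those points, AD also returns $0$. Hence for $x^0=n+\tfrac12$ the iteration is stuck at $x^0$ forever and does not diverge to $-\infty$, contradicting the ``for all $x^0\in\RR$'' in the statement. (This is exactly the failure mode the paper's non-periodic parabola avoids: $x^2/(2\epsilon)$ has no critical point other than its minimum, so every starting point is funneled onto the bad set.) You could patch this by also branching at half-integers and returning a representative whose analytic piece has nonzero slope there, but that is an additional construction you have not supplied. A second, smaller inaccuracy: the caps are not ``a measure-zero set.'' The $L$-smoothness constraint $|f''|\le 1/\epsilon$ forces the derivative to descend from $+\tfrac{1}{\epsilon}(\tfrac12-\delta)$ to $-\tfrac{1}{\epsilon}(\tfrac12-\delta)$ over an interval of length $2\delta$, which requires $\delta\ge\tfrac14$; the caps therefore occupy at least half of each period. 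Your escape argument (the cap dynamics doubles the distance to the cap's center, so non-center points eventually enter a basin) does rescue those starting points, but the centers themselves remain the unresolved obstruction above.
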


\begin{proof}[Proof sketch]
   Let $P$ be the closed program defined by the following code, where $lr$ is our chosen learning rate $\epsilon$.
   
    \begin{minted}[fontsize=\small]{python}  
    # Recursive helper
    def g(x, n):
        if x > 0:
            return ((x - n)*(x - n))/(lr * 2)
        if x == 0:
            return x / lr + n*n / (2*lr)
        else:
            return g(x+1, n+1)
    
    def P(x): 
      return g(x,0)        
    \end{minted}
    One can show that $\sem{P}=x \mapsto \frac{x^2}{2 \epsilon}$ and satisfies the conditions of Theorem~\ref{thm:gd-conv}, yet gradient descent will diverge to $-\infty$ for all initial points $x^0$. 
    The full explanation is given in Appendix~\ref{sub:failure-ad}.
\end{proof}

The proof applies to an idealized setting where gradient descent is run with real numbers, rather than floating-point numbers, but in our experiments, PyTorch did diverge on this program.
Similar counterexamples can be constructed for some variants of the gradient descent algorithm, e.g. when the learning rate is changing according to a schedule.

Likewise, it is easy to derail the algorithm when the learning rate $\epsilon$ is random but the initial point $x^0$ is fixed.

\begin{proposition}
\label{prop:ad-fail-forall-eps}
    Let $x^0$ be a fixed initial point. Then
    there exists a program $P$ such that $\sem{P}:\RR\to\RR$ 
    satisfies the conditions of Theorem~\ref{thm:gd-conv}, and $\grad \sem{P}(x^0) \neq 0$, and yet,  for all $\epsilon>0$, the gradient descent method yields $x^t = x^0$ for all $t$, when the gradients are computed with AD.
\end{proposition}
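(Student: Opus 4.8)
The plan is to relocate the \texttt{SillyId} phenomenon to the fixed point $x^0$: I build a program whose denotation is a genuinely differentiable, $L$-smooth, bounded-below function with nonzero slope at $x^0$, but which branches to a \emph{constant} exactly at $x^0$, so that AD reports a derivative of $0$ there. Concretely, fix any $a \neq x^0$, set $g(x) = (x-a)^2$ and $c \defeq g(x^0) = (x^0-a)^2$, and take
\[ P \defeq \fun{x}\, \iifthenelse{x = \underline{x^0}}{\underline c}{(x - \underline a)*(x - \underline a)}. \]
No recursion is needed, in contrast to Proposition~\ref{prop:counterexample}; the whole effect comes from the single-point branch.

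First I would check that $\sem{P}$ is the function $g$. The two branches agree at $x^0$ (the else-branch evaluates to $(x^0-a)^2 = c$ there), so the conditional denotes $g$ on all of $\RR$; in particular $\sem{P}$ is total and smooth. The function $g$ is bounded below by $0$, is $L$-smooth with $L = 2$ (since $g'' \equiv 2$), and has $\sem{P}'(x^0) = g'(x^0) = 2(x^0 - a) \neq 0$. Thus $\sem{P}$ satisfies every hypothesis of Theorem~\ref{thm:gd-conv} while having a nonzero true derivative at $x^0$, as the proposition demands.

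Next I would compute what AD returns at $x^0$. By the definition of $\ad$ in Figure~\ref{fig:ad_macro_wpap}, $\ad$ commutes with the conditional and acts as the identity on its Boolean guard, so on input $x^0$ (where the guard $x = \underline{x^0}$ holds) the dual-number computation takes the then-branch, and $\ad(\underline c) = \langle \underline c, \underline 0\rangle$ has derivative component $0$. By Theorem~\ref{thm:fwd-cor-basic-wpap}, $\lambda\theta.\,\pi_2(\sem{\ad(P)}(\theta,1))$ is a lifted intensional derivative of $\sem{P}$, and the analytic partition witnessing it separates the singleton $\{x^0\}$ (on which AD uses the constant branch, of derivative $0$) from its complement (on which it uses $g$). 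Hence AD reports derivative $0$ precisely at $x^0$, and the gradient-descent recurrence gives $x^1 = x^0 - \epsilon\cdot 0 = x^0$; by induction $x^t = x^0$ for all $t$, for every $\epsilon > 0$, establishing the claim.

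The only real care needed is in the two places where the conditional meets the semantics: ensuring the branches coincide at $x^0$ so that $\sem{P}$ is the \emph{differentiable} $g$ (rather than a function with a removable discontinuity), and confirming that the denotational AD value at $x^0$ is exactly $0$. Both are immediate from the framework of Section~\ref{sec:differentiable}: the conditional contributes a singleton cell to an analytic partition, and on that cell the built-in translation of a constant has derivative $0$. I do not expect a genuine analytic obstacle — the disagreement between the intensional and the true derivative is confined to the measure-zero set $\{x^0\}$, but because that set is exactly the \emph{fixed} initialization, no choice of step size can escape it, which is precisely the content of the proposition.
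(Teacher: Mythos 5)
Your proposal is correct and follows exactly the route the paper intends: the paper's proof is the one-line instruction ``choose any $x^0$ for which the intensional derivative is $0$ but the true derivative is non-zero,'' and your program $\iifthenelse{x = \underline{x^0}}{\underline c}{(x-\underline a)*(x-\underline a)}$ is precisely such a witness, with all the supporting checks (branches agreeing at $x^0$, $L$-smoothness, the singleton $\{x^0\}$ being an analytic cell on which AD differentiates the constant branch) carried out correctly. You have simply filled in the concrete construction that the paper leaves implicit.
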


\begin{proof}
 Simply choose $P(x) = \iifthenelse{x = x^0}{(x^0)^2}{x^2}$, for which AD will give the intentional derivative 0 at $x^0$.
\end{proof}

The counterexamples constructed for the proofs above are tied to a specific learning rate or to a specific initialization. It is therefore reasonable to think that by randomizing \textit{both} quantities, one might be able to almost surely avoid such counter-examples. Thankfully, this \textit{is} the case.

\begin{theorem}[Convergence of GD with intensional derivatives]
\label{thm:sgd-ok-full}
     Let $f$ be PAP, bounded below, $L$-smooth.
     Let $(\epsilon,x_0)$ be drawn randomly from a continuous probability distribution supported on $(0,\frac{2}{L})\times \RR^d$. Then when gradient descent computes gradients using AD, the true gradient tends to 0 with probability 1:
   \[\lim_{t\to\infty}\grad f(x^t) =0.\]
   Furthermore, with probability 1, the function $f(x)$ monotonically decreases: $f(x^{t+1})\leq f(x^t)$.
\end{theorem}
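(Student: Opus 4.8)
Write $G_\epsilon(x) := x - \epsilon\, \grad f(x)$ for the \emph{true} gradient step and $\tilde G_\epsilon(x) := x - \epsilon\, g(x)$ for the AD step, where $g$ is the intensional derivative that AD computes. By the results of Section~\ref{sec:differentiable} (following \citet{lee2020correctness}), $g$ agrees with $\grad f$ on the interior of every analytic piece of $f$; since $f$ is PAP and differentiable, the ``bad set'' $B := \{x \in \RR^d \mid g(x) \neq \grad f(x)\}$ is contained in the union of the lower-dimensional boundaries of the analytic pieces, and so is Lebesgue-null. The plan is to show that, with probability $1$ over $(\epsilon, x^0)$, \emph{no} iterate $x^t$ ever lands in $B$. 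Granting this, AD returns the true gradient at every step, so the AD-driven sequence $\{x^t\}$ coincides with the true gradient-descent sequence generated by $G_\epsilon$ with step size $\epsilon \in (0, \tfrac{2}{L})$; Theorem~\ref{thm:gd-conv} then yields both $\grad f(x^t) \to 0$ and the monotone decrease of $f$ along the iterates, almost surely.

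\textbf{Reduction to a null-set statement.} To prove the orbit avoids $B$, let $\tau$ be the first time $t$ with $x^t \in B$. On the event $\{\tau = t\}$ the first $t$ steps use the true gradient, so $x^t = G_\epsilon^{\,t}(x^0)$; hence
\[ \{(\epsilon, x^0) \mid \exists t,\ x^t \in B\} \ \subseteq\ \bigcup_{t \in \NN} N_t, \qquad N_t := \{(\epsilon, x^0) \mid G_\epsilon^{\,t}(x^0) \in B\}. \]
Since the law of $(\epsilon, x^0)$ is a continuous (absolutely continuous) distribution, it suffices to prove that each $N_t \subseteq (0,\tfrac2L)\times\RR^d$ is Lebesgue-null; the case $t=0$ is immediate, as $B$ is null and $x^0$ has a density.

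\textbf{Nullity of $N_t$ via slicing in $\epsilon$.} I would prove $N_t$ null by Tonelli, showing its $x^0$-slice is null for all but countably many $\epsilon$. Partition $\RR^d$ into countably many c-analytic cells according to which analytic piece each of $x^0,\dots,x^{t-1}$ falls into; on the interior of a cell, $G_\epsilon^{\,t}$ is a single composite of analytic maps, with Jacobian determinant $\prod_{s=0}^{t-1}\det\!\big(I-\epsilon\,\grad^2 f_{i_s}(x^s)\big)$, where $\grad^2 f_{i_s}$ is the Hessian of the relevant analytic piece. Each factor is analytic in $x^0$ and vanishes identically only when $1/\epsilon$ is a persistent eigenvalue of some piece's Hessian, which occurs for at most $d$ values of $\epsilon$ per piece, hence for only countably many $\epsilon$ overall. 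Declaring these the ``bad'' $\epsilon$ (a countable, null set), for every good $\epsilon$ the determinant is a nonzero analytic function on each cell, so $G_\epsilon^{\,t}$ is a local diffeomorphism off a null analytic subset; being locally bi-Lipschitz there, it pulls back the null set $B$ to a null set. Summing over the countably many cells and null critical loci shows the $x^0$-slice is null, and Tonelli gives $N_t$ null.

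\textbf{Main obstacle.} The crux --- and the only place both sources of randomness are essential --- is this non-degeneracy step, i.e. controlling the Jacobian determinant of $G_\epsilon^{\,t}$. Randomizing $x^0$ alone fails because a fixed bad $x^0$ may already lie in $B$ (cf.\ Proposition~\ref{prop:ad-fail-forall-eps}); randomizing $\epsilon$ alone fails because a fixed ``resonant'' $\epsilon$ can make $G_\epsilon$ collapse an entire region onto $B$ --- exactly the rank-deficiency $\det(I-\epsilon\,\grad^2 f)\equiv 0$ exploited in Proposition~\ref{prop:counterexample}. Randomizing both confines the pathology to the countable set of resonant step sizes together with, for each good step size, a lower-dimensional analytic critical locus, all of which are null. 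The remaining ingredients --- the first-hitting-time decomposition, the countable union over $t$, and the Tonelli bookkeeping --- are routine.
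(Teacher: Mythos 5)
Your proposal is correct and follows essentially the same route as the paper's proof: both reduce the problem to showing that the true-gradient orbit almost surely avoids the Lebesgue-null set where the intensional and true derivatives disagree, and both get this from the observation that $I-\epsilon\,\grad^2 f$ is singular only when $1/\epsilon$ is an eigenvalue of the Hessian, which for each point (or analytic piece) happens for at most $d$ values of $\epsilon$, so that randomizing both $\epsilon$ and $x^0$ makes the update a local diffeomorphism off a null set and hence null-set-preserving under preimages. The only substantive difference is bookkeeping: the paper works with the joint map $F(\epsilon,x)=(\epsilon,\,x-\epsilon\grad f(x))$ and iterates the \emph{single-step} pullback ($F^{-(t+1)}(A)=F^{-t}(F^{-1}(A))$), which quietly sidesteps the one fragile point in your version --- for factors $s\geq 1$ of your composite Jacobian, "vanishes identically only at a resonant $\epsilon$" needs the extra inductive remark that the earlier iterates map the cell onto a set with nonempty interior, so you should adopt the step-by-step induction there.
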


The proof is given in Appendix~\ref{sub:a.s-conv-gd}.




\section{Reasoning about Probabilistic Programs}
\label{sec:probabilistic}

PAP functions were originally developed in~\citet{lee2020correctness} to reason about automatic differentiation, so perhaps it is not surprising that their generalization to $\wpap{}$ was useful, in the previous section, for characterizing AD's behavior. In this section, we show that the $\omega$PAP semantics \textit{also} makes it nice to reason about \textit{probabilistic} programs: by excluding pathological deterministic primitives, we also prevent many pathologies from arising in the probabilistic setting, enabling clean denotational arguments for several nice properties.

To reason about probabilistic programs, we must extend our core calculus (Figure~\ref{fig:wpap-language}) with the standard effectful operations exposed by probabilistic programming languages (PPLs):
$\sample$ samples a real from the uniform distribution on $(0,1)$, and  $\score~r$ reweights program traces (to implement soft constraints). 
Their typing rules are given in Figure~\ref{fig:type_system_proba}.

\begin{figure}
\centering



\begin{tabular}{c}
     $\Gamma\vdash e:\reals$  \\ \hline
     $\Gamma\vdash \score~e: 1$ 
\end{tabular}
\qquad
\begin{tabular}{c}
      \\ \hline
     $\Gamma \vdash \sample : \reals$ 
\end{tabular}

\caption{Type system of the probabilistic part of the language}
\label{fig:type_system_proba}
\end{figure}

The language needs no new types, but the semantics change: instead of interpreting our language with the monad $\lift$, which models divergence, we must use a new monad that also tracks probabilistic computation. 
In Section~\ref{sub:trace-semantics}, we introduce a monad of \textit{weighted samplers}~\citep{scibior2017denotational}, which can informally be seen as deterministic functions from a source of randomness to output values. We use this monad to reason about the deterministic \textit{weight functions} that some probabilistic programming systems use during inference, presenting a new (much simplified) proof of~\citet{mak2021densities}'s result that almost-surely-terminating probabilistic programs have almost-everywhere differentiable weight functions. 
In Section~\ref{sub:base-measure}, we introduce a commutative monad of measures, which can be seen as a smaller version of~\citet{vakar2019domain}'s monad of measures, excluding those measures which can't be defined using $\wpap{}$ deterministic primitives. We use this smaller monad to prove that all definable measures in our language are supported on countable unions of smooth manifolds. This implies that they have densities with respect to a particular well-behaved class of base measures, and we close by briefly discussing the implications for PPL designers.



\subsection{Almost-Everywhere Differentiability of Weight Functions}
\label{sub:trace-semantics}

First, we define a monad $\mathbf{S}$ of \textit{weighted samplers}, that interprets programs as partial $\wpap{}$ maps from a space $\Omega$ of random seeds to 
a space of weighted values. Intuitively, the measure represented (or ``targeted'') by a sampler of weighted pairs $(x, w)$ is the one that assigns measure $\mathbb{E}[w \cdot 1_A(x)]$ to each measurable set $A$. But this semantics does \textit{not} validate many important program equivalences, like commutativity, intuitively because it exposes ``implementation details'' of weighted samplers, that can differ even for samplers targeting the same measure. This is intentional: our first goal is to prove a property of these ``implementation details,'' one that inference algorithms automated by PPLs might rely on.

First, we fix our source of randomness $\Omega$ to be the lists of reals, $\sqcup_{i \in \mathbb{N}} \RR^i$ (which is an $\wpap{}$ space, see Example~\ref{example:coproducts}). We then define our monad:

\begin{definition}[monad of weighted samplers]
Let $\mathbf{S}$ be the monad $(\mathbf{S}, \mathbf{return}^\mathbf{S}, \bind^\mathbf{S})$, defined as follows:
\begin{itemize}
\item $\mathbf{S}X := \Omega \Rightarrow \lift ((X + 1) \times [0, \infty) \times \Omega)$
\item $\mathbf{return}^\mathbf{S}_X := \lambda x. \lambda r. \mathbf{inl} (\mathbf{inl}~x, 1, r)$
\item $\bind^\mathbf{S}_{X,Y} := \lambda m. \lambda k. \lambda r. m(r) \bind^{\mathbf{L}} (\lambda (x_{?}, w, r'). \mathbf{match}~x_{?}$ $\mathbf{with}~\{\mathbf{inr}~\_ \to \mathbf{inl}~(x_{?}, 0 , r') \mid \mathbf{inl}~x \to k(x)(r') \bind^{\lift} \lambda (y_?, v, r''). \mathbf{inl}~(y_{?}, w\cdot v, r'')\}$.
\end{itemize}
\end{definition}

\begin{figure}[H]
    \fbox{
      \parbox{.48\textwidth}{
\[
\setlength\arraycolsep{3pt}
\begin{array}{lcl}
    \sem{\sample}(\rho)  & \defeq & \lambda r. \mathbf{match}~r~\mathbf{with} \{\\
    &&\quad() \to \mathbf{inl}~(\mathbf{inr}~\star, 0, r) \\
    &&\quad \mid (r_1, r_{2:n}) \to \mathbf{inl}~(\mathbf{inl}~r_1, 1, r_{2:n})\}\\
    \sem{\score~t}(\rho)  & \defeq & \sem{t}(\rho) \bind (\lambda w. \lambda r. \mathbf{inl} (\mathbf{inl}~\star, 0 \vee w, r)) \\
\sem{\mu f.\lambda x. \trm}(\rho) & \defeq & \mathbf{return}~\bigvee_{i \in \mathbb{N}} f_i
\end{array}\]

where $f_0 = \lambda x. \lambda r. \mathbf{inr}~\bot, f_{i+1} = \lambda v. \sem{\trm}(\rho[f \mapsto f_{i}, x \mapsto v])$
      }}
\caption{Updated semantics for new monad $\mathbf{S}$ of effects}
\label{fig:semantics-prob-trace}
\end{figure}

This monad interprets a probabilistic program as a partial function mapping lists of real-valued random samples, called traces, to: (1) the output values they possibly induce in $X$ (or $\mathbf{inr}~\star$ if the input trace does not contain enough samples to complete execution of the program), (2) a \textit{weight} in $[0, \infty)$, and (3) a remainder of the trace, containing any samples not yet used. The $\mathbf{return}$ of the monad consumes no randomness, and successfully returns its argument with a weight of $1$. The $\bind$ of the monad runs its continuation on the random numbers remaining in the trace after its first argument has executed, and multiplies the weights from both parts of the computation. Figure~\ref{fig:semantics-prob-trace} gives the semantics of $\sample$ and $\score$, as well as updated semantics for $\mu f. t$ (the only change is that the bottom element, from which least fixed points are calculated, is now the bottom element of $\mathbf{S}X$ rather than $\lift X$). The $\sample$ command fails to produce a value when given an empty trace, but otherwise consumes the first value in the trace and returns it. The $\score$ command consumes no randomness, and uses its argument (if non-negative) as the weight.

One way to use a weighted sampler is to run it, continually providing longer and longer lists of uniform random numbers until a value from  $X$ (and an associated weight) is generated. But many probabilistic programming systems also expose more sophisticated inference algorithms. \textit{Gradient-based} methods, such as Hamiltonian Monte Carlo or Langevin ascent, attempt to find executions with high weights by performing hill-climbing optimization or MCMC over the space of traces, guided by a program's \textit{weight function}.

\begin{definition}[weight function]
For any $p \in |\mathbf{S}~X|$, define its \textit{weight function} $w_p : \sqcup_{i \in \mathbb{N}} [0, 1]^i \to [0, \infty)$, as follows:

\begin{itemize}
\item When $p(r) = \mathbf{inr}~\bot$, $w_p(r) = 0$.
\item When $p(r) = \mathbf{inl}~(x, v, r')$ for $r'$ non-empty, $w_p(r) = 0$.
\item When $p(r) = \mathbf{inl}~(x, v, ())$, $w_p(r) = v$.
\end{itemize}
\end{definition}

The weight function can be viewed as a density with respect to a particular base measure on traces; scaling the base measure by this density yields an unnormalized measure, and the normalized version of this measure, which is a probability distribution placing high mass on traces that lead to high weights, is called the \textit{posterior} over traces. Inference algorithms like HMC are designed to generate samples approximately distributed according to this posterior.

Using our definition of the weight function, we can recover~\citet{mak2021densities}'s result about the almost-everywhere differentiability of weight functions, a convenient property when reasoning about gradient-based inference:

\begin{theorem}\label{thm:aediff}
Probabilistic programs that almost surely halt have almost-everywhere differentiable weight functions.
\end{theorem}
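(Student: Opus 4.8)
The plan is to analyze the weight function on each finite component $[0,1]^i$ of its domain $\sqcup_{i\in\NN}[0,1]^i$ separately, reducing almost-everywhere differentiability to the known fact~\citep{lee2020correctness} that PAP functions are differentiable outside a Lebesgue-null set. Fix $i$ and observe that $[0,1]^i$ is a c-analytic subset of $\RR^i$, so the inclusion $\iota_i:[0,1]^i\to\Omega$ into the $i$-th summand of $\Omega=\sqcup_j\RR^j$ is a plot. Since $p\in|\mathbf{S}X|$ is an $\wpap{}$ map into $\lift((X+1)\times[0,\infty)\times\Omega)$, the composite $p\circ\iota_i$ is a plot in that lifted space. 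By the definition of $\plots{\lift Y}$, there is a c-analytic set $B_i\subseteq[0,1]^i$ on which $p$ is defined (i.e.\ not $\mathbf{inr}\,\bot$) and such that the corestriction of $p\circ\iota_i$ to $B_i$ is a plot in $(X+1)\times[0,\infty)\times\Omega$; write $F_i=[0,1]^i\setminus B_i$ for the divergence set.

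Next I would read off the relevant components of this plot. By the product structure, the weight projection $W:B_i\to[0,\infty)$ and the leftover-trace projection $R:B_i\to\Omega$ are both plots, so $W$ is in particular a PAP function. By the coproduct structure (Example~\ref{example:coproducts}), the preimage $C_i:=R^{-1}(\RR^0)$ of the empty-list summand — the traces consumed exactly, with empty leftover — is c-analytic, and so is $B_i\setminus C_i=\bigcup_{j\geq 1}R^{-1}(\RR^j)$. Inspecting the definition of $w_p$, on $[0,1]^i$ it equals $W$ on $C_i$ and $0$ on $[0,1]^i\setminus C_i$ (divergence and non-empty leftover both yield weight $0$; traces that exhaust the supply of randomness lie in $C_i$ but carry weight $0$ there). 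Gluing the analytic partition witnessing that $W$ is PAP over $C_i$ with the trivial partition on which $w_p=0$ over $B_i\setminus C_i$ then exhibits $w_p|_{B_i}$ as a PAP function on the c-analytic domain $B_i$.

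The main obstacle is the divergence set $F_i$, which need not be c-analytic (precisely the Cantor-set phenomenon of Fig.~\ref{fig:cantor_func}) and near which $w_p$ may be discontinuous. This is where the almost-sure halting hypothesis enters: it forces $F_i$ to be Lebesgue-null in $[0,1]^i$, since a positive-measure set of diverging traces would contradict termination with probability one. Two elementary facts then finish the argument. First, the topological boundary of an analytic set is contained in the zero-sets of its defining analytic functions and is therefore Lebesgue-null; consequently, for any c-analytic $B=\bigcup_k A_k$ one has $B\setminus\mathrm{int}(B)\subseteq\bigcup_k\partial A_k$ null, so $\mathrm{int}(B_i)$ has full measure in $B_i$ (and hence in $[0,1]^i$). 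Second, a PAP function agrees with an analytic function on a neighborhood of every interior point of each partition piece, so its set of non-differentiability lies in the null union of those pieces' boundaries. Combining, the points at which $w_p|_{[0,1]^i}$ can fail to be differentiable are contained in $F_i\cup(B_i\setminus\mathrm{int}(B_i))\cup(\text{partition boundaries inside }\mathrm{int}(B_i))$, a Lebesgue-null set. As this holds for every $i$, $w_p$ is differentiable almost everywhere on $\sqcup_i[0,1]^i$.

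It is worth noting why the hypothesis cannot simply be dropped: if $F_i$ were allowed positive measure (say a fat Cantor set of diverging traces accumulating against a region where $W>0$), then $w_p$ would be genuinely non-differentiable on a positive-measure set. At the same time, the argument only used that $F_i$ is null, so — as the surrounding discussion observes — almost-sure halting can be relaxed to any condition that still controls the measure of the divergence region, which is the source of the claimed slight strengthening over~\citet{mak2021densities}.
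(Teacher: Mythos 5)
Your proof is correct and follows essentially the same route as the paper's: almost-sure halting makes the divergence set null, the weight function restricted to the ($c$-analytic) domain is $\omega$PAP and hence PAP on each $[0,1]^i$, and PAP functions are differentiable outside a Lebesgue-null set. The paper states this in three sentences and cites \citet{lee2020correctness} for the last step; you have merely (and correctly) filled in the details it elides, namely the gluing argument showing $w_p$ is PAP on its domain and the boundary/zero-set argument for a.e.\ differentiability of PAP functions.
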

\begin{proof}
Let $\vdash t : X$ be a program. Almost-sure termination implies that for almost all inputs $r \in \sqcup_{i \in \mathbb{N}} [0,1]^i$, $\sem{t}(r) \neq \mathbf{inr}~\bot$. Restricted to the intersection of $\sqcup_{i \in \mathbb{N}} [0,1]^i$ with $\sem{t}$'s domain, its weight function $w_{\sem{t}}$ is $\wpap{}$,\footnote{Under our semantics, $\sem{t}$ is an $\wpap{}$ morphism from $1$ (the empty context) to $\mathbf{S}\sem{\tau}$ for any closed term $t$. Elements of $\mathbf{S}\sem{\tau}$ are themselves $\wpap{}$ morphisms from $\Omega$ to $\mathbf{L}((\sem{\tau}+1) \times \RR_{\geq 0} \times \Omega)$ — so $t$ denotes such a map. Now consider post-composing this map with the $\wpap{}$ map sending $\bot$ to $\bot$, and $(x, v, r)$ to $v$ when $r$ is empty and $0$ otherwise. The result is an $\wpap{}$ map from $\Omega$ to $\mathbf{L} \RR_{\geq 0}$. We next precompose this map with the inclusion $\sqcup_i [0,1]^i \to \Omega$, to obtain an $\wpap{}$ map from $\bigsqcup_i [0,1]^i$ to $\mathbf{L}\RR_{\geq 0}$, equal to the weight function for $\sem{t}$ everywhere on its domain.} 
and thus almost-everywhere differentiable. Almost all of almost all of  $\sqcup_{i \in \mathbb{N}} [0,1]^i$ is almost all of $\sqcup_{i \in \mathbb{N}} [0,1]^i$, concluding the proof.
\end{proof}

We note that the proof only uses the fact that for almost all $r \in \sqcup_{i \in \mathbb{N}} [0,1]^i$, $\sem{t}(r) \neq \mathbf{inr}~\bot$\textemdash which is strictly weaker than almost-sure termination. Almost-sure termination can fail even though a program halts on any finite trace of random numbers. Some probabilistic context-free grammars, for example, do not always surely halt, but as they execute, they consume an unbounded amount of randomness, so that when provided a finite trace, they will eventually halt with an error. Our proof shows that such programs, although they do not almost surely halt, do have almost-everywhere differentiable weight functions. As such, our result is slightly stronger than that of~\citet{mak2021densities}, and followed immediately from the interpretation of probabilistic programs in the $\wpap{}$ category. 

\subsection{Existence of Convenient Base Measures for Monte Carlo}
\label{sub:base-measure}



We now wish to prove a result not about the intensional properties of probabilistic programs (like their weight functions), but the extensional properties of the measures that programs denote. To do so, we construct a final monad in which to interpret our language: a commutative monad of measures, similar to that of~\citet{vakar2019domain}.

\begin{definition}[$\wpap{}$ space of measure weights]
Define $\mathbb{W}$ to be the $\wpap{}$ space with:
\begin{itemize}
\item $|\mathbb{W}| = \RR_{\geq 0} \cup \{\infty\}$, the extended non-negative reals;
\item $w \leq_\mathbb{W} v$ when $v = \infty$ or when $w < v < \infty$.
\item $\phi \in \plots{\mathbb{W}}^{A}$ whenever $\phi : A \to \mathbb{W}$ is measurable.
\end{itemize}
\end{definition}
Note that although many maps \textit{into} $\mathbb{W}$ exist, very few maps exist \textit{out of} $\mathbb{W}$, because its $\wpap{}$ diffeology is so permissive: any measurable function is a plot.

Following~\citet{scibior2017denotational}, measures can be seen as equivalence classes of samplers that behave the same way \textit{as integrators}. Given a weighted sampler $p \in |\mathbf{S} X|$, we can define the \textit{integrator} it represents:

\begin{definition}[integrator associated to a sampler]
\label{defn:integrator_sampler}
We define $\wpap{}$ maps $\mathbf{Int}_X : \mathbf{S} X \to (X \Rightarrow \mathbb{W}) \Rightarrow \mathbb{W}$ sending samplers to their associated \textit{integrators}: $\mathbf{Int}_X~p = \lambda f. \int_{\text{Dom}(\alpha_p)} \Lambda_{\Omega}(dr) f(\alpha_p(r))$, where
\begin{itemize}
\item $\Lambda_\Omega$ is a Lebesgue base measure over $|\Omega|$, assigning to each measurable subset $A \subseteq \Omega$ the value $\sum_{i \in \mathbb{N}} \Lambda^i(\{\mathbf{x} \mid (i, \mathbf{x}) \in A\})$ (where $\Lambda^i$ is the Lebesgue measure on $\RR^i$).
\item $\alpha_p : \Omega \to \lift X$ is an $\wpap{}$ map that on input $r$, returns $\mathbf{inr}~\bot$ if any of the following hold: $r$ is empty; $r_1 < 0$; $r_{2:|r|} \not\in [0,1]^{|r|-1}$; $p(r_{2:|r|})$ returns $\mathbf{inr} \bot$; or $p(r_{2:|r|})$ returns $\mathbf{inl}~(x,v,r')$ and either $r'$ is non-empty, $x = \mathbf{inr}~\star$, or $v < r_1$. Otherwise, $p(r_{2:|r|})$ must return $\mathbf{inl}~(\mathbf{inl}~x, \_, \_)$, and $\alpha_p$ also returns $\mathbf{inl}~x$.
\item $\text{Dom}(\alpha_p)$ is the domain of the partial map $\alpha_p$.
\end{itemize}
\end{definition}

Then, following~\citet{vakar2019domain}, we take the measures to be those integrators that lie in the image of $\mathbf{Int}$:
\begin{definition}[monad of measures]
\label{defn:monad-measure}
The monad $\mathbf{M}$ of measures is defined by:
\begin{itemize}
\item $\mathbf{M}X$ is the \textit{image} of $\mathbf{S}X$ under $\mathbf{Int}_X$. It is a subobject of $(X \Rightarrow W) \Rightarrow W$ and inherits its order.
\item $\mathbf{return}^\mathbf{M}_X := \lambda x. \lambda f. f(x)$.
\item $\bind^\mathbf{M}_{X,Y} := \lambda m. \lambda k. \lambda f. m(\lambda x. k(x)(f))$.
\end{itemize}
\end{definition}

\begin{figure}[H]
    \fbox{
      \parbox{.48\textwidth}{
\[
\setlength\arraycolsep{3pt}
\begin{array}{lcl}
    \sem{\sample}(\rho)  & \defeq & \lambda f. \int_{[0,1]} f(x) dx\\
    \sem{\score~t}(\rho)  & \defeq & \lambda f. \sem{t}(\rho)(\lambda w. (0 \vee w) \cdot f(\star))\\
\sem{\mu f. \lambda x. \trm}(\rho) & \defeq & \mathbf{return}~\bigvee_{i \in \mathbb{N}} f_i
\end{array}\]

\quad where $f_0 = \lambda x. \lambda f. 0, f_{i+1} = \lambda v. \sem{\trm}(\rho[f \mapsto f_{i}, x \mapsto v])$
      }}
\caption{Updated semantics for new monad $\mathbf{M}$ of effects}
\label{fig:semantics-prob-meas}
\end{figure}

This monad is a submonad of the continuation monad~\citep{vakar2019domain}, and so the $\mathbf{return}$ and $\bind$ are inherited. The $\sample$ command is interpreted as the uniform measure on $[0, 1]$, and the $\score~t$ command as a measure on a one-point space, with a certain total mass given by $t$. Our semantics under $\mathbf{M}$ (Fig.~\ref{fig:semantics-prob-meas}) is related to the one from the previous section (Fig.~\ref{fig:semantics-prob-trace}) in that for any closed term $\vdash t : X$, $\mathbf{Int}_X(\sem{t}_{\mathbf{S}}) = \sem{t}_\mathbf{M}$. Using this, we can establish this lemma about the definable measures:

\begin{lemma}
\label{thm:trace-semantics-is-pap}
 Let $\vdash \trm : \reals^n$. 
 Then, there exists a partial \wpap{} function $f_{\trm} : \Omega \rightharpoonup \RR^n$ such that $\sem{\trm} = {f_{\trm}}_* \Lambda_\Omega$.
\end{lemma}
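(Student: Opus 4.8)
The plan is to read off $f_t$ directly from the machinery already set up for the integrator $\mathbf{Int}$, rather than to argue by induction on the syntax of $t$. Concretely, I first set $p := \sem{t}_{\mathbf{S}} \in |\mathbf{S}(\RR^n)|$, the interpretation of $t$ under the monad of weighted samplers, and I take $f_t := \alpha_p : \Omega \rightharpoonup \RR^n$, the partial map that appears inside the definition of $\mathbf{Int}_{\RR^n}$ when instantiated at $p$. By construction $\alpha_p$ is an $\wpap{}$ map $\Omega \to \lift \RR^n$, which is exactly the same data as a partial $\wpap{}$ function $\Omega \rightharpoonup \RR^n$, so $f_t$ has the required type.

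Next I invoke the relation noted immediately above the lemma: for every closed $\vdash t : X$ we have $\mathbf{Int}_X(\sem{t}_{\mathbf{S}}) = \sem{t}_{\mathbf{M}}$. Specialising to $X = \RR^n$ gives $\sem{t}_{\mathbf{M}} = \mathbf{Int}_{\RR^n}(p)$. Unfolding the definition of $\mathbf{Int}$, for every test function $f \in |\RR^n \Rightarrow \mathbb{W}|$,
$$\sem{t}_{\mathbf{M}}(f) \;=\; \mathbf{Int}_{\RR^n}(p)(f) \;=\; \int_{\mathrm{Dom}(\alpha_p)} \Lambda_\Omega(dr)\, f(\alpha_p(r)).$$

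The core step is to recognise the right-hand side as integration of $f$ against the pushforward measure $(\alpha_p)_*\Lambda_\Omega$. By the defining property of the pushforward along a partial measurable map, $\int_{\mathrm{Dom}(\alpha_p)} f(\alpha_p(r))\,\Lambda_\Omega(dr) = \int_{\RR^n} f\, d\big((\alpha_p)_*\Lambda_\Omega\big)$ for every nonnegative measurable $f$. Since the elements of $\mathbf{M}(\RR^n)$ are integrators, i.e.\ are determined entirely by their action on such test functions, this identity of integrators is an identity of measures, yielding $\sem{t}_{\mathbf{M}} = (\alpha_p)_*\Lambda_\Omega = (f_t)_*\Lambda_\Omega$, as required.

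The only point needing care is the justification that $\alpha_p$ is a genuine partial $\wpap{}$ map whose domain $\mathrm{Dom}(\alpha_p)$ is c-analytic, so that both the integral defining $\mathbf{Int}$ and the pushforward are well defined. This is where I expect the slight friction to lie, but it is essentially already discharged by the construction of $\mathbf{Int}$, whose $\alpha_p$ is asserted to be $\wpap{}$: one need only check that the successive tests cutting out $\alpha_p$'s domain (non-emptiness of $r$, the sign condition $r_1 \geq 0$, membership of the tail in a unit cube, completion of the trace, production of an actual value rather than $\mathbf{inr}~\star$, and the weight comparison $v \geq r_1$) carve out c-analytic regions and compose $\wpap{}$ maps. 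Crucially, no inductive argument over the syntax of $t$ is required: all the real content has been pushed into the semantics of $\mathbf{S}$ and the design of $\mathbf{Int}$, so that the lemma becomes a direct consequence of $\mathbf{Int}_{\RR^n}(\sem{t}_{\mathbf{S}}) = \sem{t}_{\mathbf{M}}$ together with the change-of-variables identity for the pushforward.
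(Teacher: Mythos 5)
Your proposal is correct and matches the paper's intended argument: the paper states the identity $\mathbf{Int}_X(\sem{t}_{\mathbf{S}}) = \sem{t}_{\mathbf{M}}$ immediately before the lemma precisely so that the lemma follows by taking $f_t = \alpha_{\sem{t}_{\mathbf{S}}}$ and unfolding $\mathbf{Int}$ as a pushforward of $\Lambda_\Omega$ restricted to $\mathrm{Dom}(\alpha_p)$, with the substantive work (well-definedness and $\omega$PAP-ness of $\mathbf{Int}$ and $\alpha_p$) already discharged in the construction of the monads. Your closing remark correctly locates the only point needing care in the same place the paper does.
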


That is, any probabilistic program returning reals must arise as a well-behaved transformation of ``input randomness'' represented by $\Omega$. This is not particularly surprising, of course, because a probabilistic program is \textit{defined} by specifying an ($\wpap{}$) transformation of input randomness, but it highlights the way we will use our $\wpap{}$ semantics to establish general properties of the measures denoted by probabilistic programs. 
The proof is given in Appendix~\ref{sub:monad-of-measures}.



In particular, we will characterize the definable measures as \textit{absolutely continuous} with respect to a particular class of base measures. We first recall some relevant definitions from differential geometry, starting with the \textit{smooth manifolds}, a well-studied generalization of Euclidean spaces; common examples include lines, spheres, and tori.

\begin{definition}
     A smooth manifold $M$ is a second-countable Hausdorff topological space together with a smooth atlas: an open cover $\mathcal{U}$ together with homeomorphisms $(\phi_U:U\to \RR^n)_{U\in\mathcal{U}}$ called charts such that $\phi_V\circ \phi_U^{-1}$ is smooth on its domain of definition for all $U,V\in\mathcal{U}$. A function $f:M\to N$ between manifolds is smooth if  $\phi_V\circ f\circ \phi_U^{-1}$ is smooth for all charts $\phi_U,\phi_V$ of $M$ and $N$. 
When there's a global bound $K$ on the local dimensions $n$ of a smooth manifold, it's well known by a theorem of Whitney that the manifold can be embedded as a submanifold of an Euclidean space, that is its topology is given by restricting the standard one from the surrounding Euclidean space.
\end{definition}

There is a natural measure on smooth manifolds that generalizes the Lebesgue measure on Euclidean spaces, and matches our intuitive notions of area and volume on curved spaces.

\begin{definition}
 The $k$-Hausdorff measure on a metric space $(X,d)$ is defined as
 \vspace{-1mm}
{\footnotesize \[H^k(S) := \lim_{\delta \to 0}\inf_{(U_i)_{i \in \mathbb{N}}} \Big\{\sum_{i=1}^\infty\diam(U_i)^k\mid S\subseteq \bigcup_{i=1}^{\infty} U_i, \diam(U_i)<\delta\Big\}\]}

\noindent where $\diam(U):=\sup~\{d(x,y)~\mid~x,y\in U\}$ and $\diam(\emptyset):=0$. 
The Hausdorff dimension is then defined as $\dim_{\text{Haus}}(S):=\inf \{k~\mid~ H^k(S)=0\}$. 
\end{definition}
The Hausdorff measure computes the sizes of sets $S$ in a dimension-dependent way: 
\begin{itemize}
\item The $0$-Hausdorff measure counts the points in $S$. 
\item The $1$-Hausdorff measure sums the lengths of curves within a set. It will be infinite on surfaces, and 0 on sets of isolated points.
\item The $2$-dimensional Hausdorff measure sums the areas of surfaces, and will be infinite on volumes and 0 on curves. 
\item More generally, the $k$-Hausdorff measure will quantify $k$-dimensional volumes.
\end{itemize}

For smooth manifolds, the Hausdorff dimension matches the intuitive notion of dimension, e.g. a sphere in $\RR^3$ has Hausdorff dimension $2$ , a path on such a sphere will have Hausdorff dimension $1$, and an open of $\RR^n$ has dimension $n$.
We now define our class of \textit{s-Hausdorff} measures, which measure the sizes of sets by measuring the $d$-dimensional volumes of their intersections with $d$-dimensional manifolds, for every $d$:

\begin{definition}[s-Hausdorff measure on $\RR^n$]
\label{def:shausdorff}
    An s-Hausdorff measure on $\RR^n$ is a measure $\mu$ that decomposes as
    \[\mu(A)= \sum_{d=0}^n H^d(A\cap M^d)\]
    where each $M^d$ is a countable union $\bigcup_i M^d_i$ of $d$-dimensional smooth manifolds $M^d_i$.
\end{definition}



Our main result in this section is a construction assigning an s-Hausdorff measure $B(\trm)$ to every closed probabilistic program $\vdash \trm : \reals^n$, such that $\sem{\trm}$ has a density w.r.t. $B(\trm)$. 

\begin{theorem}
\label{cor:main}
Any closed probabilistic program $\vdash \trm: \reals^n$ admits
an s-Hausdorff measure $B(\trm)$ on $\RR^n$ such that
\begin{itemize}
    \item $\sem{\trm}$ has a density (possibly infinite) $\rho_{\trm}$ w.r.t $B(\trm)$;
    \item if $\mu$ is another s-Hausdorff measure w.r.t. which $\sem{\trm}$ has a density, then that density is $\sem{\trm}$-a.e. equal to $\rho_{\trm{}}$.
    \item if $t_2$ has the same type and $\sem{\trm_1} \ll \sem{\trm_2}$ (where $\ll$ denotes absolute continuity) then $B(\trm_1) \ll B(\trm_2)$; and
    \item there is a set $A \subseteq \RR^n$ such that $\mathbf{1}_{A}$ is a density of $B(\trm_1)$ with respect to $B(\trm_2)$. 
\end{itemize}
\end{theorem}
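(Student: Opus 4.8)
The plan is to reduce everything to the geometry of the partial $\wpap$ map provided by Lemma~\ref{thm:trace-semantics-is-pap}, and then invoke real-analytic stratification together with the area and coarea formulas of geometric measure theory. By that lemma, $\sem{t} = (f_t)_* \Lambda_\Omega$ for a partial $\wpap$ map $f_t : \Omega \rightharpoonup \RR^n$, where $\Omega = \sqcup_{i\in\NN}\RR^i$. Unfolding the $\wpap$ structure of $f_t$ into its first-order PAP data, the domain of $f_t$ carries a countable partition into analytic sets $\{A_j\}_j$, each $A_j \subseteq \RR^{i_j}$, on which $f_t$ agrees with an analytic map $g_j : U_j \to \RR^n$ defined on an open $U_j \supseteq A_j$. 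Hence $\sem{t} = \sum_j (g_j|_{A_j})_*(\Lambda^{i_j}|_{A_j})$, and it suffices to analyze each term.

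First I would show that the image of each analytic piece lies in a countable union of embedded smooth submanifolds. Stratifying $A_j$ by the rank of $Dg_j$ (using real-analyticity, the constant-rank theorem, and second countability to reduce to countably many charts), each stratum maps onto an immersed, and after further countable subdivision embedded, $d$-dimensional submanifold, where $d$ is the local rank. Collecting these across all $j$ and organizing by dimension yields $M^d = \bigcup_i M^d_i$ with each $M^d_i$ a $d$-dimensional smooth manifold; arranging the strata to be pairwise disjoint, as in an analytic Whitney stratification, avoids cross-dimensional double counting. I then set
\[ B(t)(A) := \sum_{d=0}^n H^d(A \cap M^d), \]
which is an s-Hausdorff measure in the sense of Definition~\ref{def:shausdorff}. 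On each stratum the area/coarea formula expresses $(g_j|_{A_j})_*$ as a measure absolutely continuous with respect to $H^d$ on the image manifold, with density obtained by integrating a Jacobian reciprocal over the fibers; these fiber integrals may diverge, which is exactly the source of the ``possibly infinite'' density. Summing over $j$ and $d$ gives $\sem{t} \ll B(t)$ together with a $[0,\infty]$-valued density $\rho_t$, establishing the first bullet.

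For uniqueness, the key point is that $H^d$ is intrinsic to $\RR^n$ and does not depend on which family of $d$-manifolds one names. Thus if $\mu$ is any s-Hausdorff measure with $\sem{t}\ll\mu$, then on the $d$-dimensional part of $\mathrm{supp}(\sem{t})$ both $B(t)$ and $\mu$ must restrict to $H^d$ (absolute continuity forces $\mu$ to carry all of the $d$-dimensional mass of $\sem t$, and the definition of s-Hausdorff measure leaves no scaling freedom), so $B(t)$ and $\mu$ agree on $\mathrm{supp}(\sem t)$ and the Radon--Nikodym derivatives coincide $\sem{t}$-almost everywhere. The last two bullets rest on making the decomposition depend only on the measure $\sem t$: I would split $\sem{t} = \sum_d \nu_d$ into its canonical $d$-rectifiable parts (each $\nu_d$ absolutely continuous with respect to $H^d$ on a countable union of $d$-manifolds and singular to every other dimension), a splitting determined by $\sem t$ alone, and let the $d$-manifold family of $B(t)$ carry $\nu_d$. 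Then $\sem{t_1}\ll\sem{t_2}$ forces $\nu^{1}_d \ll \nu^{2}_d$ for each $d$, so the support of $B(t_1)$ is contained, up to $H^d$-null sets, in that of $B(t_2)$ in every dimension, giving $B(t_1)\ll B(t_2)$. Since the two base measures equal $H^d$ wherever they overlap, $dB(t_1)/dB(t_2)$ is $B(t_2)$-almost-everywhere the indicator of $\mathrm{supp}(B(t_1))$; it then remains to replace that support, up to a $B(t_2)$-null set, by a genuinely c-analytic set $A$, so that $\mathbf{1}_A$ is an $\wpap$ morphism $\RR^n \to \BB$, which is possible because the strata come from analytic data and densities are determined only almost everywhere.

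The hardest step is the geometric core: proving that the image of the PAP map decomposes as a countable union of disjoint smooth submanifolds and that the pushforward splits cleanly along them with Hausdorff densities. This is where real-analytic stratification and the (co)area formula do the real work; once this structure is in hand, the measure-theoretic bullets follow by fairly standard Radon--Nikodym reasoning. A secondary but genuine subtlety is ensuring the dimensional decomposition is canonical, i.e.\ independent of the chosen PAP presentation of $f_t$, so that absolute continuity transfers in the third bullet, and that the density set $A$ in the fourth bullet can be taken c-analytic.
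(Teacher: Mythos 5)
Your proposal follows essentially the same route as the paper: reduce to the pushforward $(f_{\trm})_*\Lambda_\Omega$ via Lemma~\ref{thm:trace-semantics-is-pap}, decompose the domain into countably many analytic pieces carrying analytic representatives, produce the supporting submanifolds via rank considerations and the constant-rank theorem, and extract the density from the coarea formula; this is exactly the content of Theorems~\ref{thm:manifold_analytic_open}, \ref{thm:pap_pushforward} and \ref{thm:shausdorff} in the appendix.

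Two remarks. First, one local step does not go through as written: you invoke the constant-rank theorem stratum-by-stratum over the full rank stratification, but the locus $\{x \mid \rank Dg_j(x) = d\}$ is open only for the maximal rank on a given connected component; for lower $d$ it is a thin, non-open set on which neither the constant-rank theorem nor the claim that the image is a $d$-dimensional submanifold is available. The paper avoids this by keeping only the maximal-rank locus: since the minors of $Dg_j$ are analytic and the zero set of a nonzero analytic function is Lebesgue-null (Lemma~\ref{lem:analytic-max-rank-dense}), the lower-rank locus is Lebesgue-null in the domain and so contributes nothing to the pushforward (Lemma~\ref{lem:pushforward-negligible}), while its image is $H^k$-null by Sard's theorem; it is then simply discarded, along with the non-open part of each analytic piece (Lemma~\ref{lem:big_open}). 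Your argument is repaired by the same observation, on which you implicitly rely in any case. Second, the appendix only proves the first bullet in detail (Theorem~\ref{thm:shausdorff}); your sketches of the remaining bullets are reasonable and go beyond what the paper writes down, but in the last bullet the c-analyticity of the set $A$ is a genuine gap rather than a formality: images of analytic sets under analytic maps are in general only subanalytic, so ``the strata come from analytic data'' does not by itself yield that $\mathbf{1}_A$ is an $\wpap{}$ morphism $\RR^n \to \BB$.
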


In particular, this means that a sensible design decision for a PPL is to always compute densities of probabilistic programs $p$ with respect to an s-Hausdorff base measure $B(p)$. In order to compute Radon-Nikodym derivatives between multiple programs, the theorem implies we can separately compute their densities and then take the ratio, so long as we also include a ``support check'' (of the sort described by~\citep{radul2021base}).

Finally, the following result shows that the class of s-Hausdorff measures seems appropriate to serve as base measures, as for each s-Hausdorff measure, there is a closed probabilistic program whose posterior distribution has a strictly positive density w.r.t that measure. This means that we cannot `carve out' any mass from the s-Hausdorff measure, and that the underlying supporting manifolds faithfully represent the possible supports of posterior distributions of probabilistic programs.

\begin{theorem}
\label{thm:definability-s-hausdorff}
    Assume that the language has all partial PAP functions as primitives. Then, for every $n\in\NN$ and every s-Hausdorff measure $\mu$, there exists a program $\vdash \trm:\reals^n$ such that $\sem{\trm}$ and $\mu$ are mutually absolutely continuous.
\end{theorem}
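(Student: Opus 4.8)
The plan is to realize $\mu$ as a countable mixture of definable measures, one per manifold in its s-Hausdorff decomposition, exploiting that mutual absolute continuity constrains only null sets (and hence supports), not total mass or normalization. Writing $\mu(A) = \sum_{d=0}^{n}\sum_{i} H^d(A\cap M^d_i)$, a countable sum of measures with strictly positive coefficients is mutually absolutely continuous with the reindexed family $\{H^d|_{M^d_i}\}$, and mutual absolute continuity is preserved under such mixtures. It therefore suffices to (i) build, for a single smooth $d$-manifold $M := M^d_i\subseteq\RR^n$, a closed program whose denotation is a \emph{finite} measure mutually a.c.\ with $H^d|_M$, and then (ii) combine these by recursion. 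Step (ii) is routine: a geometric-style recursive sampler that picks an index $(d,i)$ with strictly positive probability and dispatches to the corresponding sub-program yields, by Lemma~\ref{thm:trace-semantics-is-pap} and Definition~\ref{defn:monad-measure}, exactly the positively-weighted sum of the branch measures.

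\textbf{Single-manifold case via the area formula.} For a fixed smooth $d$-manifold $M$ I would cover it by a countable atlas and, mixing over charts, treat one chart at a time. On a chart we have a parametrization $\phi : U \to \RR^n$ with $U\subseteq\RR^d$ open and $\phi$ a smooth embedding onto $\phi(U)\subseteq M$. Sampling a point of $U$ from a finite measure $\rho$ with strictly positive Lebesgue density (built from $\sample$ and PAP transforms, e.g.\ a reparametrized uniform) and returning $\phi$ of it gives the pushforward $\phi_*\rho$. By the area formula, $H^d(\phi(B)) = \int_B J\phi\, d\mathrm{Leb}$ with $J\phi = \sqrt{\det(D\phi^{\top} D\phi)} > 0$ everywhere (as $\phi$ is an immersion); hence $\phi_*\rho$ has Radon--Nikodym derivative $\bigl(\tfrac{d\rho}{d\mathrm{Leb}}\cdot \tfrac{1}{J\phi}\bigr)\circ\phi^{-1}$ with respect to $H^d|_{\phi(U)}$, which is strictly positive and finite, so $\phi_*\rho$ is mutually a.c.\ with $H^d|_{\phi(U)}$. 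Mixing over charts then recovers $H^d|_M$ up to mutual a.c. The boundary cases are immediate: for $d=0$, $M$ is a countable point set and $H^0|_M$ is matched by a recursive sampler returning the corresponding real constants with positive probability; for $d=n$, $\phi$ is the inclusion of an open c-analytic set and $\phi_*\rho$ is mutually a.c.\ with Lebesgue measure there.

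\textbf{The main obstacle.} The one genuinely delicate point is that the chart maps $\phi$ are only \emph{smooth}, whereas a program can denote pushforwards only through \emph{partial PAP} maps (Lemma~\ref{thm:trace-semantics-is-pap}), and smooth functions need not be PAP: on each piece of its analytic partition a PAP map coincides with an analytic map defined on an open neighborhood, so the image of any positive-$H^d$-measure piece is an analytic patch, and nowhere-analytic smooth behaviour is excluded. Consequently, definability of a measure mutually a.c.\ with $H^d|_M$ forces $M$ to be, \emph{up to an $H^d$-null set, a countable union of images of analytic embeddings of open subsets of $\RR^d$.} I expect proving this analytic-covering lemma to be the main technical work, and it is exactly where the hypothesis that \emph{all} partial PAP functions are available is used: once each chart of $M$ is replaced (up to $H^d$-measure zero) by an analytic, hence PAP, parametrization, the area-formula construction above applies verbatim and each such parametrization is directly a primitive. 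With the covering lemma in hand, the remaining verifications are bookkeeping: $\sem{\trm} \ll \mu$ because the construction never places mass outside $\bigcup_{d,i} M^d_i$, and $\mu \ll \sem{\trm}$ because every patch carries strictly positive mass, giving mutual absolute continuity.
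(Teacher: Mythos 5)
Your overall architecture matches the paper's: decompose $\mu$ into its countably many manifold summands, realize each $H^d|_{M^d_i}$ (up to mutual absolute continuity) as the pushforward of a fully supported measure through a definable map, and recombine with strictly positive weights. The cosmetic differences are harmless. You parametrize each manifold from below by charts $U\subseteq\RR^d\to\RR^n$ and invoke the area formula, whereas the paper projects from above, pushing Lebesgue measure on an $n$-dimensional tubular neighbourhood $U_{i,k}\subseteq\RR^n$ onto $M^d_i$ via a retraction (so the fibers are $(n-d)$-dimensional and the coarea formula does the work); and you recombine by a recursive mixture program, whereas the paper shrinks each chart domain to a bounded open of radius at most $1$, translates these to be pairwise disjoint, and glues the branch maps into a single partial PAP function on the disjoint union. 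Either variant of either step is fine.

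The genuine gap is exactly the one you flagged and then deferred: the ``analytic-covering lemma.'' You correctly observe that programs can only push Lebesgue measure through partial PAP maps, that PAP maps are piecewise analytic, and that the smooth charts of an arbitrary smooth submanifold of $\RR^n$ need not be analytic, so the entire construction hinges on replacing each chart, up to an $H^d$-null set, by an analytic parametrization. But you give no argument for this, and it is the mathematical heart of the theorem; without it your construction only handles analytic submanifolds, not the smooth ones allowed by Definition~\ref{def:shausdorff}. The paper closes this step by invoking the Whitney--Morrey--Grauert theorem (every smooth manifold carries a compatible real-analytic structure) together with the tubular-neighbourhood retraction $p_{i,k}:U_{i,k}\to M^d_i$, arguing that in suitable analytic coordinates $p_{i,k}$ is a linear projection and hence analytic, therefore PAP, therefore available as a primitive. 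To complete your version you would need either that argument or a direct proof that every smooth $d$-submanifold of $\RR^n$ is, up to $H^d$-measure zero, a countable union of images of analytic maps; this is not bookkeeping, and labelling it ``the main technical work'' without supplying it leaves the proof unfinished at precisely the point where the hypothesis about PAP primitives earns its keep.
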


The proofs are given at the end of Appendix~\ref{sub:pushforward-pap}.


\section{Related Work and Discussion}
\label{sec:discussion}

\noindent\textbf{Summary.} We introduced the category of $\wpap{}$ spaces and used it as a denotational model for expressive higher-order recursive effectful languages. 
We first looked at a deterministic language with conditionals and partial PAP functions, which we argued covers almost all differentiable programs that can be implemented in practice. We showed that AD computes correct intentional derivatives in such an expressive setting, extending \citet{lee2020correctness}'s result, and recovering the fact that AD computes derivatives which are correct almost-everywhere. Next, we showed that gradient descent on programs implementing differentiable functions can be soundly used with intentional derivatives, as long as both the learning rate and the initialization and randomized. 
We then looked at applications in probabilistic programming. After introducing a strong monad capturing traces of probabilistic programs, we gave a denotational proof that the trace density function of every probabilistic program is almost everywhere differentiable. 
Finally, we defined a commutative monad of measures on $\wpap{}$, and proved that all programs denote measures with densities with respect to some \textit{s-Hausdorff measures}. As such, we argued that the s-Hausdorff measures form a set of convenient base measures for density computations in Monte Carlo inference, and showed that every closed probabilistic program has a density w.r.t. some s-Hausdorff measure.

Together, these results demonstrate the value of denotational reasoning with a ``just-specific-enough'' model like $\wpap{}$. Results previously established in the literature by careful operational reasoning, such as Theorem~\ref{thm:aediff}, follow almost immediately after setting up the definitions the right way. And we have also shown new theorems, such as Theorem~\ref{cor:main}, by applying methods from analysis to characterize the restricted class of denotations in our semantic domain.
\\\vspace{-3mm}

\noindent\textbf{Semantics of higher-order differentiable programming.} We continue a recent line of work on giving denotational semantics for higher-order differentiable programming in functional languages. Our semantics and logical relations proof builds on insights proposed in the smooth setting in \citet{huot2020correctness}, and takes inspiration from \citet{vakar2020denotational} for the treatment of recursion.
In this work, we considered a simple forward-mode AD translation, but a whole body of work focused their attention on provably correct and efficient reverse-mode AD algorithms on higher-order languages \citep{brunel2019backpropagation,vakar2022chad,huot2022rad,mazza2021automatic,krawiec2022provably}. 
We believe our correctness result could be adapted to a reverse-mode AD algorithm, perhaps following the neat ideas developed in \citet{krawiec2022provably}, \citet{radul2022you}, or  \citet{smeding2023efficient}, but we leave this for future work. 
There are also more synthetic approaches to studying differentiation in a more purely categorical tradition, sometimes called synthetic differential geometry \citep{cockett2014differential}. Some of these approaches are particularly appealing from a theoretical point of view \citep{cockett2019reverse,blute2010convenient}, but their precise relations with AD and its implementations remains to be further studied.
\\\vspace{-3mm}

\noindent\textbf{Semantics of differentiable programming with non-smooth functions.} Our work directly builds on and extends \citet{lee2020correctness}'s setting to a higher-order recursive language. As is shown in \citet{lee2023smoothness}, when moving beyond the differentiable setting, 
there are many seemingly reasonable classes of functions that behave pathologically in subtle ways, usually preventing compositional reasoning. 
\wpap{} spaces combine the advantage of restricting to PAP functions at first-order with expressive power provided by abstract categorical constructions, which conservatively extend the first-order setting. \citet{bolte2020mathematical} investigated a similar problem as \citet{lee2020correctness} on a more restricted first-order language, but proved a convergence of gradient descent result in their setting.
It would be interesting to see if some of the ideas developed in \citet{chizat2018global} could be adapted to prove some convergence of stochastic gradient descent when AD is used on programs denoting PAP functions, thus going beyond the setting of neural networks. 
\\\vspace{-3mm}

\noindent\textbf{Semantics of probabilistic programming.} Our commutative monad of measures takes clear inspiration from \citet{staton2017commutative} and \citet{vakar2019domain}, adding an extra step in the recent search of semantic models for higher-order probabilistic programming \citep{heunen2017convenient,ehrhard2017measurable}.
In particular, our work refines the model of \citet{vakar2019domain} by restricting to PAP functions, instead of merely measurable ones, but keeping its essential good properties for interpreting probabilistic programs. By doing so, our work is closer in spirit to \citet{freer2012computable}'s study of computable properties for higher-order probabilistic languages. 
Our monad for tracking traces of probabilistic programs is similar to what is presented in \citet{lew2019trace}, giving a denotational version of similar work that focus on operational semantics, such as \citet{mak2021densities}. Our result about densities and s-Hausdorff measures has some of its foundations based on the careful study of s-finite measures and kernels from \citet{vakar2018s}.
It would be interesting to see if the work of \citet{lew2023adev} could be extended to a non-differentiable setting with PAP primitives, proving that an AD algorithm on a probabilistic language yields unbiased gradient estimates, perhaps using the estimator derived in \citet{lee2018reparameterization} for non-smooth functions. 
\\\vspace{-3mm}

\noindent\textbf{Disintegration and base measure.}
Our result on base measures for probabilistic programs is related to  the literature on symbolic disintegration \citep{shan2017exact,cho2019disintegration,narayanan2020symbolic}, which has also had to wrestle with the problem of finding base measures with respect to which expressive programs have densities. Our approach builds on the idea presented by~\citet{radul2021base}: we extend the Hausdorff base measures to s-Hausdorff base measures, and prove that they suffice for ensuring all programs have densities.
We leave open the question of characterizing exactly the class of densities on s-Hausdorff measures that arise from probabilistic programs, and the investigation of the closure of these measures under least upper bounds. 

\ifCLASSOPTIONcompsoc
  \section*{Acknowledgments}
  We have benefited from discussing this work with many friends and colleagues, Wonyeol Lee, Faustyna Krawiec, Michele Pagani, Sean Moss, and the Oxford group. 
We are also grateful to anonymous referees for their very helpful feedback.
This material is based on work supported by the NSF Graduate
Research Fellowship under Grant No. 1745302.
Our work is also supported by a Royal Society University Research Fellowship, the ERC BLAST grant, the Air Force Office of Scientific Research (Award No. FA9550–21–1–0038), and the DARPA Machine Common Sense and SAIL-ON projects.
\else
  \section*{Acknowledgment}
  
\fi

\ifCLASSOPTIONcaptionsoff
  \newpage
\fi



%

\bibliographystyle{IEEEtranN} 
\bibliography{refs.bib}

\begin{thebibliography}{46}
\providecommand{\natexlab}[1]{#1}
\providecommand{\url}[1]{#1}
\csname url@samestyle\endcsname
\providecommand{\newblock}{\relax}
\providecommand{\bibinfo}[2]{#2}
\providecommand{\BIBentrySTDinterwordspacing}{\spaceskip=0pt\relax}
\providecommand{\BIBentryALTinterwordstretchfactor}{4}
\providecommand{\BIBentryALTinterwordspacing}{\spaceskip=\fontdimen2\font plus
\BIBentryALTinterwordstretchfactor\fontdimen3\font minus
  \fontdimen4\font\relax}
\providecommand{\BIBforeignlanguage}[2]{{%
\expandafter\ifx\csname l@#1\endcsname\relax
\typeout{** WARNING: IEEEtranN.bst: No hyphenation pattern has been}%
\typeout{** loaded for the language `#1'. Using the pattern for}%
\typeout{** the default language instead.}%
\else
\language=\csname l@#1\endcsname
\fi
#2}}
\providecommand{\BIBdecl}{\relax}
\BIBdecl

\bibitem[Matache et~al.(2022)Matache, Moss, and Staton]{matache2022concrete}
C.~Matache, S.~Moss, and S.~Staton, ``Concrete categories and higher-order
  recursion: With applications including probability, differentiability, and
  full abstraction,'' in \emph{Proceedings of the 37th Annual ACM/IEEE
  Symposium on Logic in Computer Science}, 2022, pp. 1--14.

\bibitem[V{\'a}k{\'a}r et~al.(2019)V{\'a}k{\'a}r, Kammar, and
  Staton]{vakar2019domain}
M.~V{\'a}k{\'a}r, O.~Kammar, and S.~Staton, ``A domain theory for statistical
  probabilistic programming,'' \emph{Proceedings of the ACM on Programming
  Languages}, vol.~3, no. POPL, p.~36, 2019.

\bibitem[Huot et~al.(2020)Huot, Staton, and V{\'a}k{\'a}r]{huot2020correctness}
M.~Huot, S.~Staton, and M.~V{\'a}k{\'a}r, ``Correctness of automatic
  differentiation via diffeologies and categorical gluing.'' in \emph{FoSSaCS},
  2020, pp. 319--338.

\bibitem[V{\'a}k{\'a}r(2020)]{vakar2020denotational}
M.~V{\'a}k{\'a}r, ``Denotational correctness of foward-mode automatic
  differentiation for iteration and recursion,'' \emph{arXiv preprint
  arXiv:2007.05282}, 2020.

\bibitem[Lee et~al.(2020)Lee, Yu, Rival, and Yang]{lee2020correctness}
W.~Lee, H.~Yu, X.~Rival, and H.~Yang, ``On correctness of automatic
  differentiation for non-differentiable functions,'' \emph{Advances in Neural
  Information Processing Systems}, vol.~33, pp. 6719--6730, 2020.

\bibitem[Staton(2017)]{staton2017commutative}
S.~Staton, ``Commutative semantics for probabilistic programming,'' in
  \emph{European Symposium on Programming}.\hskip 1em plus 0.5em minus
  0.4em\relax Springer, 2017, pp. 855--879.

\bibitem[Mazza and Pagani(2021)]{mazza2021automatic}
D.~Mazza and M.~Pagani, ``Automatic differentiation in {PCF},''
  \emph{Proceedings of the ACM on Programming Languages}, vol.~5, no. POPL, pp.
  1--27, 2021.

\bibitem[{\'S}cibior et~al.(2017){\'S}cibior, Kammar, V{\'a}k{\'a}r, Staton,
  Yang, Cai, Ostermann, Moss, Heunen, and Ghahramani]{scibior2017denotational}
A.~{\'S}cibior, O.~Kammar, M.~V{\'a}k{\'a}r, S.~Staton, H.~Yang, Y.~Cai,
  K.~Ostermann, S.~K. Moss, C.~Heunen, and Z.~Ghahramani, ``Denotational
  validation of higher-order {B}ayesian inference,'' \emph{Proceedings of the
  ACM on Programming Languages}, vol.~2, no. POPL, p.~60, 2017.

\bibitem[Mak et~al.(2021)Mak, Ong, Paquet, and Wagner]{mak2021densities}
C.~Mak, C.-H.~L. Ong, H.~Paquet, and D.~Wagner, ``Densities of almost surely
  terminating probabilistic programs are differentiable almost everywhere,''
  \emph{Programming Languages and Systems}, vol. 12648, p. 432, 2021.

\bibitem[Radul and Alexeev(2021)]{radul2021base}
A.~Radul and B.~Alexeev, ``The base measure problem and its solution,'' in
  \emph{International Conference on Artificial Intelligence and
  Statistics}.\hskip 1em plus 0.5em minus 0.4em\relax PMLR, 2021, pp.
  3583--3591.

\bibitem[Baez and Hoffnung(2011)]{baez2011convenient}
J.~Baez and A.~Hoffnung, ``Convenient categories of smooth spaces,''
  \emph{Transactions of the American Mathematical Society}, vol. 363, no.~11,
  pp. 5789--5825, 2011.

\bibitem[Heunen et~al.(2017)Heunen, Kammar, Staton, and
  Yang]{heunen2017convenient}
C.~Heunen, O.~Kammar, S.~Staton, and H.~Yang, ``A convenient category for
  higher-order probability theory,'' in \emph{2017 32nd Annual ACM/IEEE
  Symposium on Logic in Computer Science (LICS)}.\hskip 1em plus 0.5em minus
  0.4em\relax IEEE, 2017, pp. 1--12.

\bibitem[Abramsky and McCusker(1998)]{abramsky1998call}
S.~Abramsky and G.~McCusker, ``Call-by-value games,'' in \emph{Computer Science
  Logic: 11th International Workshop, CSL'97 Annual Conference of the EACSL
  Aarhus, Denmark, August 23--29, 1997 Selected Papers 11}.\hskip 1em plus
  0.5em minus 0.4em\relax Springer, 1998, pp. 1--17.

\bibitem[Moggi(1991)]{moggi1991notions}
E.~Moggi, ``Notions of computation and monads,'' \emph{Information and
  computation}, vol.~93, no.~1, pp. 55--92, 1991.

\bibitem[Katsumata(2013)]{katsumata2013relating}
S.-y. Katsumata, ``{Relating computational effects by TT-lifting},''
  \emph{Information and Computation}, vol. 222, pp. 228--246, 2013.

\bibitem[Polyak(1987)]{polyak1987introduction}
B.~T. Polyak, ``Introduction to optimization. optimization software,''
  \emph{Inc., Publications Division, New York}, vol.~1, p.~32, 1987.

\bibitem[Brunel et~al.(2019)Brunel, Mazza, and
  Pagani]{brunel2019backpropagation}
A.~Brunel, D.~Mazza, and M.~Pagani, ``Backpropagation in the simply typed
  lambda-calculus with linear negation,'' \emph{Proceedings of the ACM on
  Programming Languages}, vol.~4, no. POPL, pp. 1--27, 2019.

\bibitem[V{\'a}k{\'a}r and Smeding(2022)]{vakar2022chad}
M.~V{\'a}k{\'a}r and T.~Smeding, ``{CHAD}: Combinatory homomorphic automatic
  differentiation,'' \emph{ACM Transactions on Programming Languages and
  Systems (TOPLAS)}, vol.~44, no.~3, pp. 1--49, 2022.

\bibitem[Huot and Shaikhha(2022)]{huot2022rad}
\BIBentryALTinterwordspacing
M.~Huot and A.~Shaikhha, ``Denotationally correct, purely functional, efficient
  reverse-mode automatic differentiation,'' 2022. [Online]. Available:
  \url{https://arxiv.org/abs/2212.09801}
\BIBentrySTDinterwordspacing

\bibitem[Krawiec et~al.(2022)Krawiec, Jones, Krishnaswami, Ellis, Eisenberg,
  and Fitzgibbon]{krawiec2022provably}
F.~Krawiec, S.~P. Jones, N.~Krishnaswami, T.~Ellis, R.~A. Eisenberg, and A.~W.
  Fitzgibbon, ``Provably correct, asymptotically efficient, higher-order
  reverse-mode automatic differentiation.'' \emph{Proc. ACM Program. Lang.},
  vol.~6, no. POPL, pp. 1--30, 2022.

\bibitem[Radul et~al.(2023)Radul, Paszke, Frostig, Johnson, and
  Maclaurin]{radul2022you}
A.~Radul, A.~Paszke, R.~Frostig, M.~J. Johnson, and D.~Maclaurin, ``You only
  linearize once: Tangents transpose to gradients,'' \emph{Proceedings of the
  ACM on Programming Languages}, vol.~7, no. POPL, pp. 1246--1274, 2023.

\bibitem[Smeding and V{\'a}k{\'a}r(2023)]{smeding2023efficient}
T.~J. Smeding and M.~I. V{\'a}k{\'a}r, ``Efficient dual-numbers reverse ad via
  well-known program transformations,'' \emph{Proceedings of the ACM on
  Programming Languages}, vol.~7, no. POPL, pp. 1573--1600, 2023.

\bibitem[Cockett and Cruttwell(2014)]{cockett2014differential}
J.~R.~B. Cockett and G.~S. Cruttwell, ``Differential structure, tangent
  structure, and {SDG},'' \emph{Applied Categorical Structures}, vol.~22,
  no.~2, pp. 331--417, 2014.

\bibitem[Cockett et~al.(2019)Cockett, Cruttwell, Gallagher, Lemay, MacAdam,
  Plotkin, and Pronk]{cockett2019reverse}
R.~Cockett, G.~Cruttwell, J.~Gallagher, J.-S.~P. Lemay, B.~MacAdam, G.~Plotkin,
  and D.~Pronk, ``Reverse derivative categories,'' \emph{arXiv preprint
  arXiv:1910.07065}, 2019.

\bibitem[Blute et~al.(2010)Blute, Ehrhard, and Tasson]{blute2010convenient}
R.~Blute, T.~Ehrhard, and C.~Tasson, ``A convenient differential category,''
  \emph{arXiv preprint arXiv:1006.3140}, 2010.

\bibitem[Lee et~al.(2023)Lee, Rival, and Yang]{lee2023smoothness}
W.~Lee, X.~Rival, and H.~Yang, ``Smoothness analysis for probabilistic programs
  with application to optimised variational inference,'' \emph{Proceedings of
  the ACM on Programming Languages}, vol.~7, no. POPL, pp. 335--366, 2023.

\bibitem[Bolte and Pauwels(2020)]{bolte2020mathematical}
J.~Bolte and E.~Pauwels, ``A mathematical model for automatic differentiation
  in machine learning,'' \emph{Advances in Neural Information Processing
  Systems}, vol.~33, pp. 10\,809--10\,819, 2020.

\bibitem[Chizat and Bach(2018)]{chizat2018global}
L.~Chizat and F.~Bach, ``On the global convergence of gradient descent for
  over-parameterized models using optimal transport,'' \emph{Advances in neural
  information processing systems}, vol.~31, 2018.

\bibitem[Ehrhard et~al.(2017)Ehrhard, Pagani, and
  Tasson]{ehrhard2017measurable}
T.~Ehrhard, M.~Pagani, and C.~Tasson, ``Measurable cones and stable, measurable
  functions: a model for probabilistic higher-order programming,''
  \emph{Proceedings of the ACM on Programming Languages}, vol.~2, no. POPL, pp.
  1--28, 2017.

\bibitem[Freer and Roy(2012)]{freer2012computable}
C.~E. Freer and D.~M. Roy, ``Computable de finetti measures,'' \emph{Annals of
  Pure and Applied Logic}, vol. 163, no.~5, pp. 530--546, 2012.

\bibitem[Lew et~al.(2019)Lew, Cusumano-Towner, Sherman, Carbin, and
  Mansinghka]{lew2019trace}
A.~K. Lew, M.~F. Cusumano-Towner, B.~Sherman, M.~Carbin, and V.~K. Mansinghka,
  ``{Trace types and denotational semantics for sound programmable inference in
  probabilistic languages},'' \emph{Proceedings of the ACM on Programming
  Languages}, vol.~4, no. POPL, pp. 1--32, 2019.

\bibitem[V{\'a}k{\'a}r and Ong(2018)]{vakar2018s}
M.~V{\'a}k{\'a}r and L.~Ong, ``On s-finite measures and kernels,'' \emph{arXiv
  preprint arXiv:1810.01837}, 2018.

\bibitem[Lew et~al.(2023)Lew, Huot, Staton, and Mansinghka]{lew2023adev}
A.~K. Lew, M.~Huot, S.~Staton, and V.~K. Mansinghka, ``{ADEV}: Sound automatic
  differentiation of expected values of probabilistic programs,''
  \emph{Proceedings of the ACM on Programming Languages}, vol.~7, no. POPL, pp.
  121--153, 2023.

\bibitem[Lee et~al.(2018)Lee, Yu, and Yang]{lee2018reparameterization}
W.~Lee, H.~Yu, and H.~Yang, ``{Reparameterization gradient for
  non-differentiable models},'' in \emph{Advances in Neural Information
  Processing Systems}, 2018, pp. 5553--5563.

\bibitem[Shan and Ramsey(2017)]{shan2017exact}
C.-c. Shan and N.~Ramsey, ``Exact {B}ayesian inference by symbolic
  disintegration,'' in \emph{Proceedings of the 44th ACM SIGPLAN Symposium on
  Principles of Programming Languages}, 2017, pp. 130--144.

\bibitem[Cho and Jacobs(2019)]{cho2019disintegration}
K.~Cho and B.~Jacobs, ``Disintegration and {B}ayesian inversion via string
  diagrams,'' \emph{Mathematical Structures in Computer Science}, vol.~29,
  no.~7, pp. 938--971, 2019.

\bibitem[Narayanan and Shan(2020)]{narayanan2020symbolic}
P.~Narayanan and C.-c. Shan, ``Symbolic disintegration with a variety of base
  measures,'' \emph{ACM Transactions on Programming Languages and Systems
  (TOPLAS)}, vol.~42, no.~2, pp. 1--60, 2020.

\bibitem[Jacobs(1999)]{jacobs1999categorical}
B.~Jacobs, \emph{Categorical logic and type theory}.\hskip 1em plus 0.5em minus
  0.4em\relax Elsevier, 1999.

\bibitem[Mitchell and Scedrov(1992)]{mitchell1992notes}
J.~C. Mitchell and A.~Scedrov, ``Notes on sconing and relators,'' in
  \emph{International Workshop on Computer Science Logic}.\hskip 1em plus 0.5em
  minus 0.4em\relax Springer, 1992, pp. 352--378.

\bibitem[Katsumata(2005)]{katsumata2005semantic}
S.-y. Katsumata, ``{A semantic formulation of TT-lifting and logical predicates
  for computational metalanguage},'' in \emph{International Workshop on
  Computer Science Logic}.\hskip 1em plus 0.5em minus 0.4em\relax Springer,
  2005, pp. 87--102.

\bibitem[MacLane and Moerdijk(2012)]{maclane2012sheaves}
S.~MacLane and I.~Moerdijk, \emph{Sheaves in geometry and logic: A first
  introduction to topos theory}.\hskip 1em plus 0.5em minus 0.4em\relax
  Springer Science \& Business Media, 2012.

\bibitem[Carboni and Johnstone(1995)]{carboni1995connected}
A.~Carboni and P.~Johnstone, ``Connected limits, familial representability and
  {A}rtin glueing,'' \emph{Mathematical Structures in Computer Science},
  vol.~5, no.~4, pp. 441--459, 1995.

\bibitem[Johnstone et~al.(2007)Johnstone, Lack, and
  Soboci{\'n}ski]{johnstone2007quasitoposes}
P.~T. Johnstone, S.~Lack, and P.~Soboci{\'n}ski, ``Quasitoposes, quasiadhesive
  categories and {A}rtin glueing,'' in \emph{International Conference on
  Algebra and Coalgebra in Computer Science}.\hskip 1em plus 0.5em minus
  0.4em\relax Springer, 2007, pp. 312--326.

\bibitem[Johnstone(2002)]{johnstone2002sketches}
P.~T. Johnstone, \emph{Sketches of an elephant: A topos theory
  compendium}.\hskip 1em plus 0.5em minus 0.4em\relax Oxford University Press,
  2002, vol.~2.

\bibitem[Kammar and McDermott(2018)]{kammar2018factorisation}
O.~Kammar and D.~McDermott, ``Factorisation systems for logical relations and
  monadic lifting in type-and-effect system semantics,'' \emph{Electronic notes
  in theoretical computer science}, vol. 341, pp. 239--260, 2018.

\bibitem[Mityagin(2020)]{mityagin2015zero}
B.~Mityagin, ``{The zero set of a real analytic function},'' \emph{Mathematical
  Notes}, vol. 107, no.~3, pp. 529--530, 2020.

\end{thebibliography}




%








\newpage

\appendices
\section{Language}
\subsection{Type System}
\label{sub:type-system}

The type-system for the deterministic language (Section~\ref{sec:differentiable}) is given in Figure~\ref{fig:type_system}.
 
\begin{figure}[H]
\centering

 \begin{tabular}{c}
      \\\hline
     $\Gamma \vdash x:\tau$ 
\end{tabular}\,$(x:\tau\in\Gamma)$
\quad
\begin{tabular}{c}
$\Gamma,x:\tau_1 \vdash \trm:\tau_2$ \\\hline
$\Gamma \vdash \lambda x:\tau_1.\trm : \tau_1\to\tau_2$
\end{tabular}
\vspace{.2cm}

\begin{tabular}{c}
      \\ \hline
      $\Gamma \vdash c:\tau$
\end{tabular}$(c\in\mathcal{C}_{\tau})$
\quad
\begin{tabular}{c}
     $\Gamma \vdash \trm_1:\tau_1\to\tau_2$ 
     \quad $\Gamma \vdash \trm_2:\tau_1$  \\ \hline
     \quad $\Gamma \vdash \trm_1\trm_2 :\tau_2$
\end{tabular}
\vspace{.2cm}

\begin{tabular}{c}
      $\Gamma \vdash \trm_1:\tau_1$ \quad  $\Gamma \vdash \trm_1:\tau_2$  \\ \hline
      $\Gamma \vdash \tPair{\trm_1}{\trm_2}:\tau_1\times \tau_2$
\end{tabular}
\quad
\begin{tabular}{c}
      \\ \hline
     $\Gamma \vdash \star:1$ 
\end{tabular}
\vspace{.2cm}

\begin{tabular}{c}
     $\Gamma\vdash \trm: \tau_1\times\tau_2$ \quad $\Gamma,\var_1:\tau_1,\var_2:\tau_2 \vdash s: \tau$  \\ \hline
     $\Gamma\vdash {\tMatch{\trm}{\var_1,\var_2}{s}}:\tau $
\end{tabular}
\vspace{.2cm}

\begin{tabular}{c}
     $\Gamma \vdash \trm_1: \BB$ 
     \quad  $\Gamma \vdash \trm_2:\tau$
     \quad  $\Gamma \vdash \trm_3:\tau$ \\ \hline
     $\Gamma \vdash \iifthenelse{\trm_1}{\trm_2}{\trm_3}: \tau$
\end{tabular}
\vspace{.2cm}

\begin{tabular}{c}
   $\Gamma,f:\tau_1\to\tau_2 \vdash \trm:\tau_1\to\tau_2 $ \\ \hline
   $\Gamma \vdash \mu f:\tau_1\to\tau_2. \trm:\tau_1\to\tau_2$
\end{tabular}

\caption{Type system of the deterministic part of the language. $\mathcal{C}_\tau$ denotes the set of primitives of type $\tau$.}
\label{fig:type_system}
\end{figure}



\subsection{Operational semantics}
\label{sub:op-semantics}

Values are given by the following grammar
\[v::= \var\mid c\mid f \mid \lambda x.\trm \mid \mu f.\trm \mid \tPair{v_1}{v_2} \mid \star \]

A big step-semantics is given in Figure~\ref{fig:op_semantics}.

\begin{figure}[H]
\centering
\begin{tabular}{c}
      \\ \hline
     $\cnst \bigstep \cnst$
\end{tabular}
\quad
\begin{tabular}{c}
     $\trm_2 \bigstep \lambda \var.\trm$ \quad $\trm_1\bigstep v_1$ \quad $\trm{}[v_1/\var{}]\bigstep v_2$ \\ \hline
     $\trm_2\trm_1 \bigstep v_2$
\end{tabular}
\vspace{.2cm}

\begin{tabular}{c}
  $\forall i, \trm_i\bigstep v_i$ \quad $\trm \bigstep f$ \\ \hline
  $\trm(\trm_1,\ldots,\trm_n)\bigstep \underline{f(v_1,\ldots,v_n)}$
\end{tabular}
\vspace{.2cm}

\begin{tabular}{c}
     $\trm_1\bigstep \true$ \quad $\trm_2\bigstep v$ \\ \hline
     $\iifthenelse{\trm_1}{\trm_2}{\trm_3}\bigstep v$
\end{tabular}
\vspace{.2cm}

\begin{tabular}{c}
     $\trm_1\bigstep \false$ \quad $\trm_3\bigstep v$ \\ \hline
     $\iifthenelse{\trm_1}{\trm_2}{\trm_3}\bigstep v$
\end{tabular}
\vspace{.2cm}

\begin{tabular}{c}
   $\trm\bigstep \tPair{v_1}{v_2}$ \quad $\trm_2[v_1/\var_1,v_2/\var_2]\bigstep v$ \\ \hline
      $\pMatch{\trm}{\var_1}{\var_2}{\trm_2}\bigstep v$
\end{tabular}
\vspace{.2cm}


\begin{tabular}{c}
    $\trm{}[\mu f.\trm/ f] \bigstep v$ 
    \\ \hline
   $\mu f.\trm\bigstep v$
\end{tabular}
\caption{Big step operational semantics of the deterministic part of the language}
\label{fig:op_semantics}
\end{figure}
\section{Sound and adequate denotation}
\label{sec:sound-adequate}

\subsubsection{Concrete sites and sheaves}

Our proof of soundness and adequacy uses reusable machinery developed by~\citet{matache2022concrete}. Their framework requires establishing various properties; for convenience, in this section, we reproduce the relevant definitions, and recall the theorem that we will use.

\begin{definition}[Reproduced from \citep{matache2022concrete}]
A \textit{concrete category} is a category $\catC$ with a terminal object $ \term$ such that the functor $\catC(\term,-):\catC\to\Set$ is faithful. 
\end{definition}

\begin{example}
    Let $\OpenCont$ be the category whose objects are open subsets $U$ of $\RR^n$ (for all $n\in\NN$), and whose morphisms $U\to V$ are continuous functions. $\RR^0$ is a terminal object, and $\OpenCont(\RR^0,-)$ simply sends an open set to its underlying set, and sends a continuous function to its underlying function. Therefore, $\OpenCont(\RR^0,-)$ is faithful and $\OpenCont$ is concrete.
\end{example}
 
\begin{definition}[Reproduced from \citep{matache2022concrete}]
A \textit{concrete site} $(\catC,\topo)$ is a small concrete category $\catC$ with an initial object $ \ini$, together with a \textit{coverage}  $ \topo$, which specifies for each object $c$ a set $ \topo(c)$ of families  of maps with codomain $c$ . We call such a family $\{f_i: c_i\to c\}_{i\in I}\in\topo(c)$ a \textit{covering family} and say it covers $c$. The coverage must satisfy the following axioms.
\begin{enumerate}
    \item For every map $h:d\to c $ in $\catC$, if $\{f_i: c_i\to c\}_{i\in I}$ covers $c$, then there is a covering family $\{g_j: d_j\to d\}_{j\in I'}$ of $d$ such that every $h\circ g_j$ factors through some $f_i$.
    \item If  $\{f_i: c_i\to c\}_{i\in I}$ covers $c$, then $\bigcup_{i\in I}Im(\conc{f_i})=\conc{c}$.
    \item the initial object $\ini$ is covered by the empty set.
    \item The identity is always covering 
    \item If  $\{f_i: c_i\to c\}_{i\in I}\in\topo(c)$ and  $\{g_{ij}: c_{ij}\to c_i\}_{ j\in J_i}\in\topo(c_i)$ for each $i$, then $\{f_i\circ g_{ij}:c_{ij}\to c\}_{i\in I,j\in J_i}\in \topo(c)$.
\end{enumerate} 
\end{definition}

\begin{example}[Cartesian spaces and smooth maps]
    $\Cartsp$ is a concrete site, where $\{f_i:U_i\to V\}_i$ is a covering family if $\bigcup_i f_i(U_i)=V$, i.e. the images of the $f_i$ cover the codomain $V$ in the usual sense. 
\end{example}

\begin{example}
     Similarly to $\Cartsp$, $\OpenCont$ is a concrete site, where $\{f_i:U_i\to V\}_i$ is a covering family if $\bigcup_i f_i(U_i)=V$, i.e. the images of the $f_i$ cover the codomain $V$ in the usual sense. 
     Its initial object is the empty set.
\end{example}

\begin{example}[Standard Borel spaces]
    The category $\Sbs$ has objects the Borel subsets of $\RR$ and morphisms the measurable functions between these objects. It is a concrete site where the coverage contains the countable sets of inclusions functions $\{U_i\to U\}_i$ such that $U=\bigcup_i U_i$ and the $U_i$ are disjoint.
\end{example}

\begin{definition}[Reproduced from \citep{matache2022concrete}]
A \textit{concrete sheaf} $X$ on a concrete site $(\catC,\topo)$ is a set $\conc{X}$, together with, for each object $c\in\catC$, a set $\plots{X}^c$ of functions $\conc{c}\to\conc{X}$, such that
\begin{itemize}
    \item each $\plots{X}^c$  contains all the constant functions;
    \item for any map $h:d\to c\in\catC$, and any $g\in\plots{X}^c $, the composite function $g\circ \conc{h}:\conc{d}\to\conc{X}$ is in $\plots{X}^d$;
    \item for each function $g:\conc{c}\to\conc{X}$ and each covering family $\{f_i: c_i\to c\}_{i\in I}$, if each $g\circ \conc{f_i}\in\plots{X}^{c_i}$, then $g:\conc{c}\to\conc{X}\in\plots{X}^c$.
\end{itemize}

A morphism $\alpha:X\to Y$ between concrete sheaves is a function $\alpha:\conc{X}\to \conc{Y}$ that preserves the structure, namely if $g\in\plots{X}^c$, then $\alpha\circ g\in\plots{Y}^c$.
\end{definition}

\begin{example}
    Let $\Cont$ be the category defined as follows. Its objects are pairs $(X,\plots{X})$ where for each open $U\subseteq \RR^n$, $\plots{X}^U\subseteq X^U$ is closed under the following axioms:
    \begin{itemize}
        \item Constant functions are in $\plots{X}^U$
        \item If $f\in \plots{X}^U$ and $g:V\to U$ is continuous, then $f \circ g\in \plots{X}^V$
        \item If $\{U_i\}_{i\in\NN}$ is an open cover of an open $U$, and for all $i$, $f_i\in \plots{X}^{U_i}$, such that for all $i,j\in \NN^2$, $f_i$ and $f_j$ agree on $U_i\cap U_j$, then $f:U\to X$ defined by $x\in U_i\mapsto f_i(x)$, is an element of $\plots{X}^U$.
    \end{itemize}
    Its morphisms $X\to Y$ are functions $\alpha:X\to Y$ such that $\alpha \circ f\in \plots{Y}^U$ whenever $f\in\plots{X}^U$.
Then, $\Cont$ is a category of concrete sheaves on the concrete site $\OpenCont$.
\end{example}

\begin{example}
    Diffeological spaces and Quasi-Borel spaces are examples of categories of concrete sheaves and morphisms between concrete sheaves. Other examples are given in \citet{baez2011convenient}.
\end{example}

\subsubsection{\ensuremath{\omega}-Concrete sheaves}

To interpret recursion, the concrete-sheaf structure will not be sufficient in general. Instead of evoking what is to my knowledge the yet-to-be-formalised theory of enriched sheaves over the category of $\wcpo{}$'s, we continue with the setting from \citet{matache2022concrete}.

\begin{definition}[Reproduced from \citep{matache2022concrete}]
An $\omega$-concrete sheaf on a site $\site$ is a concrete sheaf $X$ together with an ordering $\leq_X$ on $\conc{X}$ that equips $X$ with the structure of\wcpo, such that each $\plots{X}^c$ is closed under pointwise suprema of chains with respect to the pointwise ordering.
A morphism $\alpha:X\to Y$ of $\omega$-concrete sheaves is a continuous function between \wcpo's, $\alpha:\conc{X}\to \conc{Y}$, that is also a morphism of concrete sheaves.
\end{definition}

$\omega$-concrete sheaves form a category $\wconcsh$, which is a Cartesian-closed category with binary coproducts \citep{matache2022concrete}.

\begin{example}
    The category of $\omega$-diffeological spaces \citep{vakar2020denotational} and of $\omega$-Quasi-Borel spaces \citep{vakar2019domain} are examples of categories of $\omega$-concrete sheaves.
\end{example}

\subsubsection{Partiality: admissible monos and a partiality monad}
 
To model recursion, we first need to define a (strong) partiality monad $\lift$ on $\wconcsh$.
There are many choices for the lifting of the plots, so we parametrize the definition of partiality monad by a class $\monos$ of monomorphisms from the site $\site$, which we call admissible monos.

Recall that, in any category, monos with the same codomain are preordered. 
If $m_1: d_1\mono c, m_2:d_2\to c$ then $m_1\leq m_2$ iff there exists $f:d_1\to d_2$ such that $m_2 \circ f= m_1$
We write $Sub(c)$ for the poset quotient of the set of monos with codomain $c$. For any class $\monos$ of monos, we write $\submono(c)$ for the full subposet of $Sub(c)$ whose elements have representatives in $\monos$.

\begin{definition}[Reproduced from \citep{matache2022concrete}]
A class $\monos$ of admissible monos for a concrete site $\site$ consists of, for each object $c\in \catC$, a set of monos $\monos(c)$ with codomain $c$ satisfying the following conditions
\begin{enumerate}
    \item For all $c\in\catC, 0\to c\in\monos(c)$
    \item $\monos$ contains all isomorphisms
    \item $\monos$ is closed under composition
    \item All pullbacks of $\monos$-maps exist and are again in $\monos$ 
    (This makes $\submono$ a functor $\catC^{op}\to\Set$).
    \item For each $c$, the function $\submono(c)\to\Set(\conc{c},\{0, 1\})$ is componentwise injective and order-reflecting, and the image of $\submono$ is closed under suprema of $\omega$-chains
    \item Given an increasing chain in $\submono(c)$, $(c_n\mono c)_{n\in\NN}$, denote its least upper bound by $\cinf \mono c$. Then the closure under precomposition (with any morphism) of the set $\{c_n\mono \cinf\}_{n\in\NN}$ contains a covering family of $\cinf$.
\end{enumerate}
\end{definition}

Now, given a class $\monos$ of admissible monos, we can define a  partiality monad $L_\monos$ on the category of $\omega$-concrete sheaves:

\begin{definition}[Reproduced from \citep{matache2022concrete}]
We define the strong partiality monad $\lift_\monos$ associated to the class of admissible monos $\monos$ as
\begin{itemize}
    \item $\conc{\lift_\monos X}=\conc{X}\sqcup \{\bot\}$
    \item $\forall x\in\conc{X}, \bot\leq_{\lift_\monos X}x$
    \item $\forall x,x'\in \conc{X}, x\leq_{\lift_\monos X}x'$ iff $x\leq_X x'$
    \item $\plots{\lift_\monos X}^c=\{g:\conc{c}\to\conc{X}\sqcup\{\bot\}~|~\exists c'\mono c\in\monos(c)~s.t.~g^{-1}(\conc{X})=Im(\conc{c'})~and~g|_{Im(\conc{c'})}\in\plots{X}^{c'}\}$
\end{itemize}
The strong monad structure is exactly the same as the maybe monad on $\Set$.
\end{definition}

\begin{proposition}[\citep{matache2022concrete}]
\label{prop:lift-is-monad}
    $\lift_\monos:\wconcsh\to\wconcsh$ is a strong monad.
\end{proposition}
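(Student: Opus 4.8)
The plan is to exploit the fact that, on underlying sets, $\lift_\monos$ is literally the maybe monad on $\Set$: once the relevant data are shown to be genuine morphisms of $\omega$-concrete sheaves, every equational law comes for free. So I would proceed in three stages. First, check that $\lift_\monos$ is a well-defined endofunctor on $\wconcsh$. Second, check that the unit, the multiplication (equivalently $\bind$), and the strength are morphisms in $\wconcsh$. Third, observe that all monad and strength coherence laws hold because the underlying-set functor $\conc{-}:\wconcsh\to\Set$ is faithful and preserves finite products, and these laws already hold for the maybe monad on $\Set$.

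The substantive work is the object part of the endofunctor. Given an $\omega$-concrete sheaf $X$, I would verify that $(\conc{X}\sqcup\{\bot\},\ \leq_{\lift_\monos X},\ \plots{\lift_\monos X})$ again satisfies all the axioms. The $\wcpo{}$ structure is immediate, with $\bot$ the new least element and suprema of $\omega$-chains computed pointwise as in $X$. The concrete-sheaf axioms for the plots reduce directly to the axioms on admissible monos: constant functions are plots because $\ini\mono c$ and the identity isomorphism both lie in $\monos(c)$ (axioms 1 and 2), covering the constant-$\bot$ and constant-defined cases respectively; closure under precomposition with $h:d\to c$ holds because the domain of definition of the precomposed function is the pullback of the original domain mono along $h$, which is again admissible (axiom 4), while $X$'s own plots are closed under precomposition; and the gluing/sheaf condition together with closure under pointwise suprema of $\omega$-chains both follow from axiom 5, since the image of $\submono(c)$ inside $\Set(\conc{c},\{0,1\})$ is order-reflecting and closed under suprema of $\omega$-chains, which is exactly what lets the locally defined domains glue (respectively, lets an ascending chain of domains have a least upper bound $\cinf\mono c$) to an admissible domain mono, with axiom 6 ensuring the resulting family still covers so that the glued/supremum function restricts to a plot of $X$. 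For the morphism part, $\lift_\monos f$ acts as $f$ on $\conc{X}$ and fixes $\bot$; it is Scott-continuous, and preserves plots because $f$ is total on $\conc{X}$, so the domain of $\lift_\monos f\circ g$ equals that of $g$ and the restriction is $f$ composed with a plot of $X$. Functoriality is then clear.

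It remains to treat the structure maps. The unit $\mathbf{return}_X$ is $\mathbf{inl}$, the multiplication collapses the two copies of $\bot$, and the strength $X\times\lift_\monos Y\to\lift_\monos(X\times Y)$ sends $(x,\mathbf{inl}\,y)\mapsto\mathbf{inl}(x,y)$ and $(x,\bot)\mapsto\bot$; each is Scott-continuous, and plot preservation is again the domain-tracking argument above (the defined region is an admissible mono, pulled back or intersected as needed). Since each structure map is a morphism and $\conc{-}$ is faithful and product-preserving, all triangle, associativity, and strength coherence diagrams commute in $\wconcsh$ as soon as they commute on underlying sets, which they do since they are exactly the maybe-monad laws on $\Set$. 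The main obstacle throughout is the single bookkeeping task that recurs in every clause: tracking how the domain-of-definedness mono $c'\mono c$ transforms under precomposition, gluing over covers, and suprema of chains, and confirming it stays in $\monos$. The admissible-mono axioms (especially 4, 5, and 6) are tailored precisely to close these operations, so the real difficulty is in matching each plot-closure requirement to the axiom that discharges it rather than in any genuinely hard argument; this is the content of the corresponding result in~\citet{matache2022concrete}.
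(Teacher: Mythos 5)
The paper offers no proof of this proposition---it is imported wholesale from \citet{matache2022concrete}---and your sketch is a faithful reconstruction of the argument given there: the equational and coherence laws come for free from the maybe monad on $\Set$ via faithfulness and product-preservation of $\conc{-}$, and the only real work is verifying that the lifted plots satisfy the concrete-sheaf and $\omega$-chain axioms by tracking the domain-of-definedness mono through precomposition, gluing, and suprema. The one loose joint is attributing the gluing-over-a-covering-family condition to axiom 5, which as reproduced only governs suprema of $\omega$-chains in $\submono(c)$, not unions of subobjects indexed by an arbitrary covering family; that closure is a further compatibility between $\monos$ and the coverage that \citet{matache2022concrete} arrange for (and which holds concretely for $\monopap$ on $\cpap$, where countable unions of c-analytic sets are c-analytic), so that step is deferred rather than discharged.
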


\subsubsection{Soundness and adequacy}

\citet{matache2022concrete} defined a language PCF$_v$, a call-by-value variant of PCF. It has product, sum types, and a recursion operator similar to ours, and is presented as a fine-grained call-by-value language.  
This language is essentially equivalent to our language, except that we have extra base types $\reals$ and $\BB$, and more basic primitives involving these types. Our conditional is definable using their sum types. They give their language a standard call-by-value operational semantics, which translates directly to an operational semantics for our language.
In summary, even though their theorem is stated for PCF$_v$, it directly applies to our language, without needing further changes or extra proofs, as long as we also interpret our base types and primitives in the model, as in standard \citep{moggi1991notions}.

\begin{theorem}[\citep{matache2022concrete}]
\label{thm:wconshsound}
A concrete site with a class of admissible monos,  $\sitewmono$, is a sound and adequate model of PCF$_v$  in  $\wconcsh$.
\end{theorem}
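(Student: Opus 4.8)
The plan is to prove soundness and adequacy by the standard two-pronged strategy for call-by-value PCF: an easy induction for soundness, and a logical-relations (formal approximation) argument for adequacy, both carried out entirely inside the Cartesian-closed category $\wconcsh$ equipped with binary coproducts and the partiality monad $\lift_\monos$ of Proposition~\ref{prop:lift-is-monad}. The interpretation sends a typing derivation $\context \vdash \trm : \ty$ to a morphism $\sem{\trm} : \sem{\context} \to \lift_\monos \sem{\ty}$, with substitution realized by Kleisli composition; the whole argument turns on showing that this interpretation is both invariant under and complete for the big-step operational semantics.

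For soundness I would first establish a semantic substitution lemma, $\sem{\trm[v/x]} = \sem{\trm} \circ \langle \mathrm{id}, \sem{v}\rangle$ (up to the monad unit), which follows from compositionality of $\sem{-}$ together with the fact that values denote total points, i.e. morphisms factoring through the unit $\return$. Soundness --- that $\trm \bigstep v$ implies $\sem{\trm} = \return \circ \sem{v}$ --- is then a routine induction on the derivation of $\trm \bigstep v$, using the substitution lemma in the application, pair-elimination, and recursion-unfolding rules, and using that $\bind$ with $\return$ on the left collapses as expected.

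Adequacy is the substantive half. I would define, by induction on types, a family of relations: at each type $\ty$ a relation $\triangleleft_\ty$ between semantic points in $\conc{\sem{\ty}}$ and closed values of type $\ty$, together with a lifted relation $\triangleleft_{\lift\ty}$ relating elements of $\conc{\lift_\monos\sem{\ty}}$ to closed terms, declaring $\bot \triangleleft_{\lift\ty} \trm$ for every $\trm$ and $\return d \triangleleft_{\lift\ty} \trm$ iff $\trm \bigstep v$ with $d \triangleleft_\ty v$. At ground types $\triangleleft_\reals$ and $\triangleleft_\BB$ are equality with the corresponding constant, and at function types one takes the usual clause: $f \triangleleft_{\ty_1 \to \ty_2} (\lambda x.\trm)$ iff $f(d) \triangleleft_{\lift\ty_2} \trm[v/x]$ whenever $d \triangleleft_{\ty_1} v$. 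The key property is \emph{admissibility}: each $\triangleleft_{\lift\ty}$ is closed under suprema of $\omega$-chains in its left argument, which holds because a defined supremum is already attained by some member of the chain. With admissibility in hand, the fundamental lemma --- for $\context \vdash \trm : \ty$ and related closing substitutions $d_j \triangleleft_{\ty_j} v_j$ one has $\sem{\trm}(\vec d) \triangleleft_{\lift\ty} \trm[\vec v/\vec x]$ --- is proved by induction on the typing derivation, with the recursion rule handled by writing the denotation as the supremum $\bigvee_i f_i$ of the Kleene chain from $\bot$ and appealing to admissibility. Instantiating at a closed ground term yields $\sem{\trm} \triangleleft_{\lift\gamma} \trm$, so $\sem{\trm} \neq \bot$ forces $\trm \bigstep v$, which is adequacy.

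The main obstacle is the recursion case together with the interaction between the partiality monad and the relation. Making the Kleene-chain argument go through requires that the plots of $\lift_\monos$ be well-behaved under the suprema used to build least fixed points, and this is exactly where the abstract axioms on the admissible monos $\monos$ enter --- in particular the sixth axiom, which guarantees that the chain of subobjects $(c_n \mono \cinf)$ underlying a fixed point generates a covering family of its limit $\cinf$. Verifying that these abstract conditions suffice, so that the argument is genuinely generic over the site $\sitewmono$ rather than special to $\wpap{}$, is the delicate point; once the monad and its plots are pinned down, the logical-relations skeleton above is entirely standard.
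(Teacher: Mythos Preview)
The paper does not prove this theorem at all: it is quoted verbatim from \citet{matache2022concrete} and used as a black box. The appendix merely recalls the relevant definitions from that work, states the theorem with the citation, and then \emph{applies} it (after checking that $\cpap$ is a concrete site and that $\monopap$ is an admissible class of monos) to obtain Theorem~\ref{thm:wpap-sound-adequate}. So there is no proof in the present paper to compare your proposal against.

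That said, your outline is the expected one and is essentially how \citet{matache2022concrete} proceed: substitution lemma plus induction on evaluation for soundness, and a formal-approximation logical relation for adequacy, with the recursion case handled by admissibility of the lifted relation under $\omega$-suprema. Your identification of where the admissible-mono axioms (especially axiom~6) are consumed is correct: they are exactly what makes the Kleene-chain argument for fixed points interact well with the plot structure of $\lift_\monos$, so that the argument is uniform over any $\sitewmono$. One small wrinkle: your justification of admissibility (``a defined supremum is already attained by some member of the chain'') is only literally true at flat ground types; at function types the relation is admissible for the usual pointwise reason, not because suprema are attained. But this is a standard refinement and does not affect the structure of the argument.
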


Having recalled the general setting of $\omega$-concrete sheaves, we now show how \wpap{} is an instance of this general construction. 
To do so, we will show define the category $\cpap$ and show it is a concrete-site (which is the one for \wpap{}) in Section~\ref{sub:cpap}. We will define a set of admissible monos for $\cpap$ in Section~\ref{sub:admin-monos-cpap}. In order to achieve this, we will need a key technical proposition (Proposition~\ref{prop:c-analytic-disjoint}) and a few other technical lemmas, proved in Section~\ref{sub:c-analytic-disjoint}.

\subsubsection{C-analytic sets} 
\label{sub:c-analytic-disjoint}

The definition of c-analytic sets does not require that the analytic sets that make up a c-analytic set be disjoint. Still, it turns out that it is always possible to partition a c-analytic set into countably many pairwise disjoint analytic sets. 

\begin{lemma}
    Definition~\ref{defn:analytic-set} of analytic sets is equivalent to the following definition from \citet{lee2018reparameterization}.
    
    $A \subseteq \mathbb{R}^n$ is an analytic set if there exists finite collections $\{g^-_i :U^-_i\to \RR\}_{i \in I}$ and $\{g^+_j:U^+_j\}_{j \in J}$ of analytic functions on open domains, such that 
    \begin{align*}
        A &= \big\{x \in \big( \cap_{i\in I} U_i^- \big)\cap \big(\cap_{j\in J} U_j^+\big) \mid \forall i \in I. g^-_i(x) \leq 0,\\
        &\qquad\qquad\qquad\forall j \in J. g^+_j(x)>0\big\}
    \end{align*}
\end{lemma}

\begin{proof}
\begin{itemize}
    \item  $U:= \big( \cap_{i\in I} U_i^- \big)\cap \big(\cap_{j\in J} U_j^+\big) $ is a finite intersection of open sets and is open, and we can restrict each analytic function to $U$.
    \item Given $g: U\to \RR$, the inequality $g(x)>0$ can be changed to the inequality $f(x)\leq 0$ where $f: V\to \RR$ defined by $f(x)=- g(x)$, and $V:= U \cap \{x\in U~\mid~g(x)\neq 0\}$ is open.
\end{itemize}  
\end{proof}

We use \citet{lee2018reparameterization}'s definition in the rest of the appendix.

\begin{proposition}
\label{prop:c-analytic-disjoint}
For any c-analytic set $A$, there exists a countable collection of pairwise disjoint analytic sets $\{B_i\}_{i \in \mathbb{N}}$ such that $A = \cup_{i \in \mathbb{N}} B_i$. 
\end{proposition}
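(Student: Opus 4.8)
The plan is to start from a representation $A=\bigcup_{n\in\NN}A_n$ with each $A_n$ analytic, and to try the usual disjointification $B_n := A_n\setminus\bigcup_{m<n}A_m$. The $B_n$ are automatically pairwise disjoint with $\bigcup_n B_n=A$, so the only thing to check is that each $B_n$ is a countable disjoint union of analytic sets. The naive hope that $B_n$ is itself analytic (or even c-analytic) is \emph{false}: a difference of two analytic sets need not be c-analytic. Indeed, $\RR$ and $\RR\setminus C$ (with $C$ the Cantor set) are both analytic — the latter because it is open, and every open set is analytic via the empty collection of inequalities — yet $\RR\setminus(\RR\setminus C)=C$ is not c-analytic. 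The source of the pathology is visible in the definition: an analytic set $A_m=\{x\in U_m\mid \forall i,\ g^m_i(x)\le 0\}$ lives over an arbitrary \emph{open} domain $U_m$, and its complement contains $\RR^n\setminus U_m$, an arbitrary closed set. So the first task is to eliminate these uncontrolled domain complements.

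The key move I would make is to \textbf{first reduce to closed analytic sets}. Given $A_n=\{x\in U_n\mid \forall i,\ g^n_i\le 0\}$, note that $A_n$ is relatively closed in $U_n$, so $A_n=\overline{A_n}\cap U_n$ (closure in $\RR^n$). Cover the open domain by countably many rational closed balls $\overline{B}\subseteq U_n$, so that $U_n=\bigcup_k\overline{B_k}$ and hence $A_n=\bigcup_k\bigl(A_n\cap\overline{B_k}\bigr)$. Each piece is analytic (adjoin the inequality $\lVert x-c_k\rVert^2-r_k^2\le 0$ to the defining family over the same domain $U_n$), and it is now \emph{closed in} $\RR^n$, because $A_n\cap\overline{B_k}=\overline{A_n}\cap U_n\cap\overline{B_k}=\overline{A_n}\cap\overline{B_k}$ is an intersection of two closed sets (using $\overline{B_k}\subseteq U_n$). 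Relabelling this countable-of-countable family, I obtain $A=\bigcup_{n\in\NN}F_n$ with every $F_n$ a closed analytic set.

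With closed pieces in hand, the disjointification goes through cleanly. Setting $B_n:=F_n\setminus\bigcup_{m<n}F_m$, the removed set $\bigcup_{m<n}F_m$ is a finite union of closed sets, hence closed, so $B_n=F_n\cap W_n$ where $W_n:=\RR^n\setminus\bigcup_{m<n}F_m$ is \emph{open}. Now I use that an open set is a countable \emph{disjoint} union of analytic sets: decompose $W_n$ into half-open dyadic cubes $\prod_i[a_i,b_i)$, and split each such cube into the $2^n$ ``face pieces'' indexed by $S\subseteq\{1,\dots,n\}$, namely $\{x\mid x_i=a_i\ (i\in S),\ a_i<x_i<b_i\ (i\notin S)\}$, which is analytic over the open domain $\{x\mid a_i<x_i<b_i,\ i\notin S\}$ with inequalities $(x_i-a_i)^2\le 0$ for $i\in S$. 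This writes $W_n=\bigsqcup_j Q_{n,j}$ with each $Q_{n,j}$ analytic, whence $B_n=\bigsqcup_j\bigl(F_n\cap Q_{n,j}\bigr)$ is a countable disjoint union of analytic sets. Since the $B_n$ are pairwise disjoint and cover $A$, concatenating the families $\{F_n\cap Q_{n,j}\}_{n,j}$ and reindexing by $\NN$ yields the required partition.

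The main obstacle is precisely the step that the naive argument hides: because c-analytic sets are \emph{not} closed under complement, one cannot disjointify directly. The substantive content is the reduction to closed analytic sets, after which the complement $W_n$ is open and therefore admits a disjoint analytic cube decomposition; the remaining verifications (analyticity is preserved under finite intersection, closed balls cover an open set, and the half-open dyadic decomposition of open sets) are routine.
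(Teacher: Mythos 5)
Your proof is correct, but it takes a genuinely different route from the paper's. Both arguments start from the observation that the naive disjointification $A_n \setminus \bigcup_{m<n} A_m$ is obstructed by the fact that complements of analytic sets need not be c-analytic (your Cantor-set example makes this explicit; the paper only handles the issue implicitly). The paper's fix is combinatorial: it first shrinks domains to open balls, introduces ``simple analytic sets'' cut out by finitely many strict and non-strict analytic inequalities over a ball, and shows that the complement of such a set is a \emph{finite disjoint} union of simple analytic sets by enumerating, for each excluded point, the first defining condition it violates; each difference $A_n \setminus \bigcup_{m<n}A_m$ is then a finite disjoint union of analytic sets and the standard disjointification goes through. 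Your fix is topological: you never complement an analytic set at all. Instead you first rewrite each piece as a countable union of analytic sets that are \emph{closed} in $\RR^n$ (by covering the open domain with closed balls contained in it and intersecting), so that the set removed at stage $n$ is closed and its complement $W_n$ is open; you then partition $W_n$ into countably many disjoint analytic pieces via half-open dyadic cubes split into their $2^n$ face pieces, and intersect with $F_n$. The paper's approach stays entirely at the level of finite Boolean combinations of analytic inequalities and needs no geometry beyond the ball reduction; yours trades that bookkeeping for two standard geometric facts (closed-ball exhaustion of an open set and the dyadic cube decomposition) and isolates more cleanly where the difficulty lives, namely in the non-closedness of the pieces rather than in the inequalities themselves. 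The steps you flag as routine (closedness of $A_n\cap\overline{B_k}$ given $\overline{B_k}\subseteq U_n$, analyticity of finite intersections, analyticity of the face pieces) all check out against the paper's definition of analytic set, which permits an empty family of inequalities and hence treats every open set as analytic.
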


This is an important property, and most of this section is devoted to proving the proposition.

Let $\{A_i\}_{i \in \mathbb{N}}$ be a countable set of analytic subsets of $\mathbb{R}^n$. 

Each analytic set $A_i$ is associated with an open domain $U_i$ and a finite number $J_i$ of analytic functions $\{f_{ij}\}_{j \in [1, \dots, J_i]}$, so that $A_i = \{x \in U_i \mid \forall j. f_{ij} \leq 0\}$. Because any open set is a countable union of open balls in $\mathbb{R}^n$, we can assume without loss of generality that the $U_i$ are open balls.

\begin{definition}
A \textit{simple analytic set} is a set $A = \{x \in \mathcal{B}_\epsilon(x_0) \mid \forall i \leq I. g_i^+(x) > 0 \wedge \forall j \leq J. g_i^-(x) \leq 0\}$, for natural numbers $I$ and $J$ and some $0 < \epsilon \leq \infty$ and $x_0 \in \mathbb{R}^n$.
\end{definition}

\begin{lemma}
The complement of any simple analytic set is a finite union of pairwise disjoint simple analytic sets.
\end{lemma}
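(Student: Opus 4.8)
The plan is to decompose $\RR^n \setminus A$ according to the \emph{first defining condition of $A$ that a point violates}: this ``first-failure'' partition is automatically pairwise disjoint, and the only work left is to check that each resulting piece is again a simple analytic set.

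First I would rewrite $A$ as the solution set of a finite \emph{ordered} list of primitive conditions. Since $\mathcal{B}_\epsilon(x_0) = \{x \mid \epsilon^2 - \|x - x_0\|^2 > 0\}$, membership in the open ball is itself a strict-positivity condition (vacuously true when $\epsilon = \infty$). Hence $A = \{x \in \RR^n \mid c_1(x) \wedge \dots \wedge c_m(x)\}$ with $m = I + J + 1$, where each $c_\ell$ has one of the two allowed forms ``$g(x) > 0$'' (the ball condition together with the $g_i^+$) or ``$g(x) \leq 0$'' (the $g_j^-$), each $g$ analytic, and I place the ball condition first as $c_1$.

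Next, for each $\ell \in \{1, \dots, m\}$ I set
\[ S_\ell := \{x \mid c_1(x) \wedge \dots \wedge c_{\ell-1}(x) \wedge \neg c_\ell(x)\}. \]
These are pairwise disjoint: if $\ell < \ell'$, every point of $S_{\ell'}$ satisfies $c_\ell$ while every point of $S_\ell$ violates it. Their union is exactly $A^c$, since a point lies in $A^c$ iff it violates some condition, and it then lies in $S_{\ell^\ast}$ for the least such index $\ell^\ast$. This already delivers the required finite partition of the complement (empty when $A = \RR^n$, which is harmless).

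It remains to check that each $S_\ell$ is a simple analytic set, and this is where the one genuine subtlety lies. The negation of ``$g > 0$'' is ``$g \leq 0$'' and the negation of ``$g \leq 0$'' is ``$g > 0$'', so $\neg c_\ell$ is again a primitive condition of one of the two permitted forms; thus $S_\ell$ is a finite conjunction of strict-positivity and non-strict conditions, the correct syntactic shape. The only care needed concerns the ambient ball and the domains of analyticity. The piece $S_1$ is the ball complement $\{x \mid \epsilon^2 - \|x-x_0\|^2 \leq 0\}$, a simple analytic set with ambient ball $\mathcal{B}_\infty(0) = \RR^n$ and a single polynomial (analytic everywhere) condition. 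Every other $S_\ell$ ($\ell \geq 2$) includes $c_1$, hence lies inside $\mathcal{B}_\epsilon(x_0)$ and can be presented with that ball as its ambient ball, on which all the functions $g_i^+, g_j^-$ are analytic by hypothesis. The main obstacle is therefore purely this bookkeeping: confirming that negations stay in the allowed class and that each piece can be written over a ball on which its defining functions are analytic. Once the ball condition is isolated as $c_1$, both points go through routinely, completing the proof.
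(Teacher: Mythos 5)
Your proof is correct and follows essentially the same route as the paper: a ``first violated condition'' decomposition of the complement, which is automatically pairwise disjoint, with the observation that negating a strict inequality gives a non-strict one and vice versa so each piece stays simple analytic. Your one refinement---encoding the ball condition as $\epsilon^2 - \|x-x_0\|^2 > 0$ rather than via $\epsilon - \|x-x_0\|_2$---is actually slightly cleaner than the paper's, since the squared norm is a polynomial and hence analytic at $x_0$, whereas the unsquared norm is not.
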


\begin{proof}
A point $x$ can fail to be in a simple analytic set $A$ for $I + J + 1$ mutually exclusive reasons, each of which applies to a simple analytic set of points:
\begin{enumerate}
    \item If $\epsilon < \infty$, it can fail to lie in $\mathcal{B}_\epsilon(x_0)$. The set of points outside $\mathcal{B}_\epsilon(x_0)$ is simple analytic: let $h_1^-(x) = \epsilon - ||x - x_0||_2$ and consider $\{x \in \mathbb{R}^n \mid h_1^-(x) \leq 0\}$.
    \item (For each $1 \leq i \leq I$.) It can lie within $\mathcal{B}_\epsilon(x_0)$, and satisfy $g_n^+(x) > 0$ for all $n < i$, but fail to satisfy $g_i^+(x) > 0$. Let $h_n^+(x) = g_n^+(x)$ for $n < i$, and $h_1^-(x) = g_i^+(x)$. Then the set of all such points is $\{x \in \mathcal{B}_\epsilon(x_0) \mid \forall n < i. h_n^+(x) > 0 \wedge h_1^-(x) \leq 0\}$.
    \item (For each $1 \leq j \leq J$.) It can lie within $\mathcal{B}_\epsilon(x_0)$, and satisfy $g_i^+(x) > 0$ for all $i \leq I$, and satisfy $g^-_n(x) \leq 0$ for all $n < j$, but fail to satisfy $g_j^-(x) \leq 0$. Let $h^+_i(x) = g^+_i(x)$ for all $i \leq I$, $h^-_n(x) = g^-_n(x)$ for all $n < j$, and $h_{I+1}^{+}(x) = g_j^-(x)$. Then the set of all points to which this reason applies is $\{x \in \mathcal{B}_\epsilon(x_0) \mid \forall i \leq I+1. h^+_i(x) > 0 \wedge \forall n < j. h^-_n(x) \leq 0\}$.
\end{enumerate}

The union of these simple analytic sets is the complement of $A$.
\end{proof}

\begin{corollary}
If $A_1, \dots, A_n$ are simple analytic sets, then $\overline{\cup_{i \leq n} A_i}$ is a finite union of disjoint analytic sets.
\end{corollary}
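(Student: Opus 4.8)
The plan is to reduce the statement to the preceding Lemma by De Morgan duality and then exploit distributivity of intersection over union. First I would rewrite the complement of the union as an intersection of complements, $\overline{\bigcup_{i \leq n} A_i} = \bigcap_{i \leq n} \overline{A_i}$. By the Lemma, each complement $\overline{A_i}$ is a finite union of pairwise disjoint simple analytic sets, say $\overline{A_i} = \bigsqcup_{k \in K_i} C_{i,k}$ with each index set $K_i$ finite. Distributing the intersection over these finite unions then yields
\[ \overline{\bigcup_{i \leq n} A_i} = \bigcup_{(k_1,\dots,k_n) \in K_1 \times \dots \times K_n} \; \bigcap_{i \leq n} C_{i,k_i}, \]
a union indexed by the finite product $K_1 \times \dots \times K_n$.

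It then remains to verify two things: that each piece $\bigcap_{i \leq n} C_{i,k_i}$ is an analytic set, and that distinct pieces are disjoint. For the first point I would observe that every simple analytic set is already an analytic set: the strict constraints $g_i^+(x) > 0$ and the ball condition $x \in \mathcal{B}_\epsilon(x_0)$ are all \emph{open} conditions, so they can be absorbed into the open domain $U$, leaving only the non-strict constraints $g_j^-(x) \leq 0$, which is exactly the defining form of an analytic set. The intersection of finitely many analytic sets is again analytic: one intersects the (open) domains and takes the union of the defining families of analytic inequalities, each of which remains analytic when restricted to the smaller open domain. For disjointness, if $(k_1,\dots,k_n) \neq (k'_1,\dots,k'_n)$ then they differ in some coordinate $i$, and since $\overline{A_i} = \bigsqcup_k C_{i,k}$ is a disjoint union, $C_{i,k_i}$ and $C_{i,k'_i}$ are disjoint; hence the two intersections, being contained in these respective sets, are disjoint as well.

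The substantive content is carried entirely by the Lemma, so the rest is essentially bookkeeping with finite index sets and the distributive law; the edge case where some $\overline{A_i}$ is empty simply makes the product index set empty and the whole complement the empty union. The only step demanding slight care is recognizing that simple analytic sets qualify as analytic sets and that their finite intersections stay analytic, since this is where the strict inequalities and the ball constraint must be folded into the open domain rather than kept in the inequality family. This is routine once one notes these are open conditions, so I do not anticipate a genuine obstacle.
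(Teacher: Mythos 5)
Your proof is correct and follows essentially the same route as the paper's: De Morgan, apply the preceding lemma to each complement, distribute the finite intersection over the finite unions, and check analyticity and pairwise disjointness of the resulting pieces. The only difference is that you spell out why simple analytic sets (and their finite intersections) are analytic by absorbing the open conditions into the domain, a point the paper leaves implicit.
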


\begin{proof}
We have $\overline{\cup A_i} = \cap\overline{A_i} = \cap(\cup_{m \leq M_i} B_{im})$, where $\{B_{im}\}_{m \leq M_i}$is a representation of $A_i$'s complement as a disjoint union of finitely many simple analytic sets. Distributing, this is in turn equal to $\cup_{m_1 \leq M_1} \dots \cup_{m_n \leq M_n}(\cap_{i \leq n} B_{im_i})$. Each element of this large union is analytic, because it is the intersection of $n$ analytic sets. Furthermore, these analytic sets are pairwise disjoint: any pair will disagree on $m_i$ for some $i$, and so the intersection will include disjoint sets $B_{im_i}$ and $B_{im_i'}$, for $m_i \neq m_i'$.  Because these two sets are disjoint, the overall intersections will be disjoint.
\end{proof}

\begin{corollary}
If $A_1, \dots, A_n$ are simple analytic sets, then $A_n \setminus \cup_{i<n}A_{i}$ is a finite union of \textit{disjoint} analytic sets, each of which is a finite intersection of \textit{simple} analytic sets. 
\end{corollary}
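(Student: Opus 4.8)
The plan is to reduce this statement to the previous corollary by rewriting the set difference as an intersection with a complement. First I would observe that
\[ A_n \setminus \bigcup_{i < n} A_i = A_n \cap \overline{\bigcup_{i < n} A_i}. \]
Applying the preceding corollary to the family $A_1, \dots, A_{n-1}$ of simple analytic sets, I can write $\overline{\bigcup_{i<n} A_i}$ as a finite union $\bigsqcup_k C_k$ of pairwise disjoint sets where, reading off the proof of that corollary rather than merely its statement, each $C_k$ has the explicit form $\bigcap_{i<n} B_{i m_i}$, i.e.\ a finite intersection of simple analytic sets.

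Next I would distribute the intersection with $A_n$ across this union:
\[ A_n \cap \Big(\bigsqcup_k C_k\Big) = \bigsqcup_k \big(A_n \cap C_k\big). \]
Since the $C_k$ are pairwise disjoint, so are the sets $A_n \cap C_k$. Each $A_n \cap C_k$ is again a finite intersection of simple analytic sets, obtained by adjoining the single simple analytic set $A_n$ to the finite list of sets defining $C_k$. This already exhibits $A_n \setminus \bigcup_{i<n} A_i$ as a finite union of disjoint sets, each of which is a finite intersection of simple analytic sets.

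It remains only to check that such a finite intersection of simple analytic sets is genuinely \emph{analytic}, so that the clause ``disjoint analytic sets'' in the statement is justified. For this I would note that any simple analytic set is itself analytic: its strict constraints $g_i^+ > 0$ can be folded into the open domain by intersecting the ball $\mathcal{B}_\epsilon(x_0)$ with the open set $\{x \mid \forall i.\, g_i^+(x) > 0\}$, leaving only the non-strict constraints $g_j^- \leq 0$. A finite intersection of analytic sets is then analytic, since one intersects the (still open) domains and concatenates the finitely many defining inequalities.

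I expect no serious obstacle here: the substantive combinatorial work was already done in the preceding corollary, and this statement is essentially a bookkeeping consequence of it. The only points requiring care are invoking the earlier corollary in the sharper form that exposes the $C_k$ as finite intersections of \emph{simple} analytic sets (not merely as analytic sets), and verifying the routine fact that simple analytic sets, and hence their finite intersections, are analytic.
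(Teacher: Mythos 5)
Your proof is correct and follows essentially the same route as the paper's: rewrite the difference as $A_n \cap \overline{\cup_{i<n} A_i}$, invoke the preceding corollary, distribute the intersection, and observe disjointness is inherited. If anything, you are slightly more careful than the paper, which justifies only analyticity of each piece (via closure under intersection) and does not explicitly track the ``finite intersection of simple analytic sets'' clause that the statement asserts and that your argument verifies.
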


\begin{proof}
$A_n \setminus \cup_{i<n} A_i = A_n \cap (\overline{\cup_{i < n} A_i})$, and by the previous corollary, the right-hand term can be rewritten as a finite union of \textit{disjoint} analytic sets $\cup_{i < m}B_i$. The intersection is then equal to $\cup_{i < m}A_n \cap B_i$. Each element of this finite union is analytic because analytic sets are closed under intersection. They are also disjoint since the $B_i$ are disjoint.
\end{proof}

\begin{lemma}
\label{lem:disjoint-simple}
Let $A = \cup_{n\in\mathbb{N}} A_n$ be a countable union of simple analytic sets $A_n$. Then there are countably many \textit{pairwise disjoint} analytic sets $\{B_m\}_{m \in \mathbb{N}}$ such that $A = \cup_{m\in\mathbb{N}} B_m$.
\end{lemma}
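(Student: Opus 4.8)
The plan is to use the standard disjointification (telescoping) construction and then invoke the third corollary above to see that each telescoping piece is itself a finite disjoint union of analytic sets.

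First I would set, for each $n \in \mathbb{N}$, the telescoped set $A_n' := A_n \setminus \cup_{i < n} A_i$ (so that $A_0' = A_0$). A routine check shows that the $A_n'$ are pairwise disjoint and that $\cup_n A_n' = \cup_n A_n = A$: any point $x \in A$ lies in some $A_n$, and taking the \emph{least} such $n$ places $x$ in $A_n'$ and in no $A_m'$ for $m \neq n$; conversely each $A_n' \subseteq A_n \subseteq A$.

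Next I would apply the third corollary above, which states precisely that for simple analytic sets $A_1, \dots, A_n$ the difference $A_n \setminus \cup_{i<n} A_i$ is a finite union of \emph{pairwise disjoint} analytic sets. Thus for each $n$ we obtain finitely many pairwise disjoint analytic sets $\{B_{n,k}\}_{k \leq K_n}$ with $A_n' = \bigcup_{k \leq K_n} B_{n,k}$. Finally I would collect $\{B_{n,k}\}_{n \in \mathbb{N},\, k \leq K_n}$ into a single countable family (a countable union of finite families is countable) and reindex it by $\mathbb{N}$ as $\{B_m\}_{m \in \mathbb{N}}$. These are pairwise disjoint: two pieces sharing the same $n$ are disjoint by the corollary, and two pieces with distinct indices $n \neq m$ are disjoint because $A_n' \cap A_m' = \emptyset$. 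Their union is $\bigcup_n A_n' = A$, which gives the claim.

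The main obstacle has effectively already been dispatched by the preceding corollaries, and in particular by the treatment of complements of simple analytic sets: the one genuinely subtle point is that the set-difference $A_n \setminus \cup_{i<n} A_i$ need not be a \emph{single} analytic set but only a finite disjoint union of them, so the bookkeeping must carry this finite inner indexing $k \leq K_n$ through to the final reindexing. Everything else is the standard disjointification argument, and no new analytic or topological input beyond the corollaries is required.
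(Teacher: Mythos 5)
Your proof is correct and follows essentially the same route as the paper: telescope the union into the disjoint differences $A_n \setminus \cup_{i<n} A_i$ and then apply the preceding corollary to split each difference into finitely many pairwise disjoint analytic sets. The paper's own proof is just a terser statement of exactly this argument, so no further comparison is needed.
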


\begin{proof}
Consider the countably many disjoint sets $A'_k = A_k \setminus (\cup_{i < k} A_i)$. Each $A'_k$ can itself be expressed as a countable union of pairwise disjoint sets, by the previous corollary.
\end{proof}

\begin{corollary}
\label{cor:disjointness}
Every c-analytic set $A$ can be written as a countable disjoint union of analytic sets.
\end{corollary}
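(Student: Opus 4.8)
The plan is to reduce the statement to Lemma~\ref{lem:disjoint-simple}, which already settles the case of a countable union of \emph{simple} analytic sets. Concretely, I would first show that every c-analytic set is in fact expressible as a countable union of simple analytic sets, and then invoke the lemma verbatim to produce the desired disjoint decomposition.

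For the reduction, let $A = \bigcup_{i \in \NN} A_i$ with each $A_i = \{x \in U_i \mid \forall j \leq J_i.\, f_{ij}(x) \leq 0\}$ an analytic set with open domain $U_i$ and analytic functions $f_{ij}$. Since every open subset of $\RR^n$ is a countable union of open balls, I would write $U_i = \bigcup_{k \in \NN} \mathcal{B}_{\epsilon_{ik}}(x_{ik})$. Because the defining inequalities are pointwise conditions independent of the domain, intersecting with each ball gives
\[
A_i = \bigcup_{k \in \NN} \{x \in \mathcal{B}_{\epsilon_{ik}}(x_{ik}) \mid \forall j \leq J_i.\, f_{ij}(x) \leq 0\},
\]
and each set in this union is a simple analytic set: it has an open-ball domain, no strict-inequality constraints (i.e.\ taking $I = 0$ in the definition), and non-strict constraints $f_{ij} \leq 0$ with each $f_{ij}$ analytic on $U_i \supseteq \mathcal{B}_{\epsilon_{ik}}(x_{ik})$. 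Collecting over all $i$ and $k$, and using that a countable union of countable families is countable, $A$ is exhibited as a countable union of simple analytic sets.

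With this reduction in hand, Lemma~\ref{lem:disjoint-simple} immediately yields countably many pairwise disjoint analytic sets $\{B_m\}_{m \in \NN}$ with $A = \bigcup_{m \in \NN} B_m$, which is exactly the claim.

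The genuine mathematical content lives in the earlier lemmas---in particular the computation that the complement of a simple analytic set is a finite disjoint union of simple analytic sets, which is what makes disjointification possible. For the corollary itself, the only point requiring care is definitional bookkeeping: verifying that an analytic set restricted to an open ball literally meets the definition of a simple analytic set (with the family of strict inequalities taken to be empty), and that the double countable indexing over $i$ and $k$ stays countable. I expect no further estimates or analytic arguments to be needed.
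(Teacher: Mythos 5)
Your proposal is correct and follows essentially the same route as the paper: decompose each analytic set's open domain into countably many open balls to obtain simple analytic sets (with no strict-inequality constraints), then invoke Lemma~\ref{lem:disjoint-simple} together with the countability of a countable union of countable families. Your write-up is merely a more explicit version of the paper's argument.
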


\begin{proof}
Any open domain U can be expressed as a countable union of open balls. Therefore, in a countable union of \textit{arbitrary} analytic sets, any analytic set $A$ with a complicated open domain U can be replaced by countably many analytic sets with open-ball domains (and the same defining inequalities as A). We conclude with Lemma~\ref{lem:disjoint-simple}, using the fact that a countable family of countable families is a countable family.
\end{proof}

Next, we show a corollary that will be useful in the next section.
\begin{corollary}
\label{cor:inclispap}
An inclusion $U\subseteq V$ of c-analytic sets $U,V$ is a PAP function $U\to V$.
\end{corollary}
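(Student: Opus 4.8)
The plan is to exhibit an explicit PAP witness for the inclusion map $\iota : U \to V$, which is simply $\iota(x) = x$. Since both $U$ and $V$ are c-analytic subsets of the same ambient space $\RR^n$ (the notation $U \subseteq V$ presupposes a common ambient Euclidean space), the inclusion is nothing but the restriction to $U$ of the identity function on $\RR^n$, and its image lands in $V$ because $U \subseteq V$. So the task reduces to producing a countable family $\{(A_i, f_i)\}_{i}$ satisfying the three clauses in the definition of a PAP function.

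First I would invoke Corollary~\ref{cor:disjointness} to write $U$ as a countable \emph{disjoint} union $U = \bigcup_{i \in \mathbb{N}} A_i$ of analytic sets $A_i$. Being pairwise disjoint and covering $U$, the family $\{A_i\}_{i \in \mathbb{N}}$ is a partition of $U$ into analytic sets, which is exactly the first clause of the definition.

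For the analytic pieces, I would take $f_i = \mathrm{id}_{\RR^n} : \RR^n \to \RR^n$ for every $i$, with open domain $U_i = \RR^n \supseteq A_i$. The identity map is analytic at every point (its Taylor series about any point is itself), so the second clause holds, and since $f_i(x) = x = \iota(x)$ for all $x$, in particular for $x \in A_i$, the third clause holds as well. The family $\{(A_i, \mathrm{id}_{\RR^n})\}_{i \in \mathbb{N}}$ thus witnesses that $\iota$ is PAP.

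I do not anticipate any real obstacle here: the only nontrivial ingredient is the requirement that the covering sets be \emph{disjoint} (so that they genuinely partition $U$), and this is precisely what Corollary~\ref{cor:disjointness} supplies; the analytic functions can be taken to be the single global identity map, making the remaining verifications immediate.
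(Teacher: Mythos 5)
Your proof is correct and follows essentially the same route as the paper: both invoke Corollary~\ref{cor:disjointness} to obtain a countable disjoint analytic partition of $U$ and use the identity as the analytic piece on each part. The only difference is that the paper additionally refines the partition by intersecting with a disjoint analytic decomposition of $V$ (using closure of analytic sets under intersection), a step that is not actually required by the stated definition of a PAP function, so your simpler witness $\{(A_i,\mathrm{id}_{\RR^n})\}_{i\in\NN}$ suffices.
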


\begin{proof}
By Corollary \ref{cor:disjointness}, we can write $U=\bigsqcup_{i\in\NN} A_i$ and $V=\bigsqcup_{j\in\NN} B_j$ where the $A_i$ and $B_j$ are analytic sets.
Analytic sets are closed under intersection, so $\{A_i\cap B_j\}_{i\in \NN, j\in\NN}$ is a countable partition of $U$ into analytic sets. Hence, $\{(A_i\cap B_j, id)\}_{i\in \NN, j\in\NN}$ is a piecewise representation of the inclusion $U\mono V$.
\end{proof}

Finally, we show a few closure properties of c-analytic sets. 

\begin{lemma}
\label{lem:preimage-analytic}
If $f:U\to V$ is analytic and $A\subseteq V$ is analytic, then $f^{-1}(A)$ is analytic.
\end{lemma}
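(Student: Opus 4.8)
The plan is to unfold the definition of analytic set for $A$ and pull it back through $f$, using two standard facts: that a composition of analytic functions is again analytic, and that the preimage of an open set under an analytic (hence continuous) map is open.

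First I would write $A$ in its defining form. Since $A \subseteq V \subseteq \RR^m$ is analytic, there is an open set $W \subseteq \RR^m$ with $A \subseteq W$, together with a finite family of analytic functions $\{h_j : W \to \RR\}_{j \in J}$, such that $A = \{y \in W \mid \forall j \in J.\ h_j(y) \leq 0\}$. Next I set $U' := f^{-1}(W)$. Because $f$ is analytic it is in particular continuous, so $U'$ is an open subset of the open domain $U$ of $f$, hence open in $\RR^n$. On $U'$ the composites $g_j := h_j \circ f$ are well-defined (as $f(U') \subseteq W$) and analytic, since analytic functions are closed under composition.

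Finally I would verify the set-theoretic identity. As $A \subseteq W$, a point $x \in U$ satisfies $f(x) \in A$ if and only if $x \in U'$ and $h_j(f(x)) \leq 0$ for every $j$; that is,
\[
f^{-1}(A) = \{x \in U' \mid \forall j \in J.\ g_j(x) \leq 0\}.
\]
This exhibits $f^{-1}(A)$ with an open domain $U'$ and a finite family $\{g_j\}_{j \in J}$ of analytic functions, so it is an analytic set by definition.

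The only point requiring care is domain bookkeeping: one must form the composites $h_j \circ f$ on the open set $f^{-1}(W)$, where $f$ genuinely lands in the domain $W$ of the $h_j$ and where the composites are genuinely analytic. Once the domains are arranged this way, the result is immediate, and this lemma serves as the base case for the closure properties of c-analytic sets under preimages that follow.
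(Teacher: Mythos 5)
Your proof is correct and takes essentially the same route as the paper's: unfold the defining inequalities of $A$, pull them back through $f$, use continuity of $f$ to see that the preimage of the open domain is open, and use closure of analytic functions under composition. The paper states the argument with a slightly different (but equivalent) presentation of the defining data of $A$; your domain bookkeeping with $U' = f^{-1}(W)$ is exactly the point that needs care, and you handle it correctly.
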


\begin{proof}
By definition, $A = \{x \in (\bigcap_{i} X^+_i) \cap (\bigcap_{j} X^-_j) \mid \forall i. g^+_i(x) > 0 \wedge \forall j. g^-_j(x) \leq 0\}$ for some analytic functions $g^+_i,  g^-_j $.
Thus, 
\begin{align*}
    & f^{-1}(A) \\
    &= \{x \in f^{-1}\Big((\bigcap_{i} X^+_i) \cap (\bigcap_{j} X^-_j)\Big) \mid \\
    &\qquad\quad\forall i. g^+_i(f(x)) > 0 \wedge \forall j. g^-_j(f(x)) \leq 0\} \\
    &= \{x \in (\bigcap_{i} f^{-1}(X^+_i) \cap (\bigcap_{j} f^{-1}(X^-_j)) \mid \\
    &\qquad\quad\forall i. (g^+_i \circ f)(x) > 0 \wedge \forall j. (g^-_j \circ f)(x) \leq 0\}
\end{align*}
As $f$ is analytic, it is continuous and so the sets $f^{-1}(X^+_i)$ and $f^{-1}(X^-_j)$ are open. And again, as $f$ is analytic, so are the functions $g^+_i\circ f$ and $g^-_j\circ f$.
\end{proof}

\begin{corollary}
\label{cor:preimage-c-analytic}
If $f:U\to V$ is \pap{} and $A\subseteq V$ is c-analytic, then $f^{-1}(A)$ is c-analytic.
\end{corollary}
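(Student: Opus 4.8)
The plan is to reduce the claim to the already-established analytic case (Lemma~\ref{lem:preimage-analytic}) by simultaneously exploiting the piecewise-analytic structure of $f$ and the countable-union structure of $A$. First I would unfold the definition of PAP for $f : U \to V$, obtaining a countable family $\{(A_i, f_i)\}_{i \in I}$ in which the analytic sets $A_i$ partition $U$, each $f_i : U_i \to \RR^m$ is analytic on an open domain $U_i \supseteq A_i$, and $f_i$ agrees with $f$ on $A_i$. Dually, since $A \subseteq V$ is c-analytic, I would write $A = \bigcup_{k \in \NN} C_k$ with each $C_k$ an analytic subset of $\RR^m$.

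The heart of the argument is a decomposition step. Because $\{A_i\}_{i \in I}$ partitions $U$, the preimage splits as $f^{-1}(A) = \bigcup_{i} \bigl(f^{-1}(A) \cap A_i\bigr)$, and on each block $A_i$ the global map $f$ coincides with the analytic map $f_i$, so $f^{-1}(A) \cap A_i = f_i^{-1}(A) \cap A_i = \bigcup_{k} \bigl(f_i^{-1}(C_k) \cap A_i\bigr)$. Now Lemma~\ref{lem:preimage-analytic} applies to each $f_i^{-1}(C_k)$, since $f_i$ is analytic and $C_k$ is an analytic subset of its codomain $\RR^m$, yielding that $f_i^{-1}(C_k)$ is analytic; intersecting with the analytic set $A_i$ and using that analytic sets are closed under (finite) intersection shows $f_i^{-1}(C_k) \cap A_i$ is analytic. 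Assembling, $f^{-1}(A) = \bigcup_{i}\bigcup_{k} \bigl(f_i^{-1}(C_k) \cap A_i\bigr)$ is a countable union of analytic sets (a countable union of countable unions, hence still countable), and therefore c-analytic by definition.

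The only place requiring genuine care is this gluing step: one must restrict to each piece $A_i$ \emph{before} passing to preimages, so that the preimage under the global PAP map $f$ is correctly rewritten in terms of preimages under its analytic components $f_i$ (which are only guaranteed to agree with $f$ on $A_i$, not on all of $U_i$). Once this is handled, the remainder is routine bookkeeping with countable unions together with the closure of analytic sets under intersection.
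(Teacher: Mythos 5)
Your proof is correct and follows essentially the same route as the paper's: decompose $A$ into countably many analytic pieces, use the PAP representation of $f$ to replace $f$ by its analytic components $f_i$ on each block of the partition of $U$, apply Lemma~\ref{lem:preimage-analytic} to each $f_i^{-1}(C_k)$, and conclude via closure of analytic sets under finite intersection and countable unions. The only cosmetic difference is that the paper first invokes Corollary~\ref{cor:disjointness} to make the decomposition of $A$ disjoint, which, as your argument shows, is not actually needed since c-analyticity only requires a countable (not disjoint) union of analytic sets.
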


\begin{proof}
By Corollary \ref{cor:disjointness}, we can write $A=\bigsqcup_{i\in\NN} A_i$ for analytic sets $A_i$. Thus, $f^{-1}(A)=f^{-1}(\bigsqcup_{i\in\NN} A_i)=\bigsqcup_{i\in\NN} f^{-1}(A_i)$ and it is sufficient to show that each $f^{-1}(A_i)$ is c-analytic.

$f$ is \pap{} so there exists a partition $U=\bigsqcup_{i\in\NN} B_i$ where $B_i$ are analytic sets and analytic functions $f_i:U_i\to R^n$ such that $f|_{B_i}=f_i$.
Therefore, $f^{-1}(A)=\bigsqcup_{i\in\NN} f_i^{-1}(A)\cap B_i$. 
Each $f_i^{-1}(A)$ is analytic by Lemma \ref{lem:preimage-analytic}, and so each $f_i^{-1}(A)\cap B_i$ is analytic. This means $f^{-1}(A)$ is indeed c-analytic. 
\end{proof}

\subsubsection{Category \cpap}
\label{sub:cpap}

\begin{definition}[\cpap]
We call \cpap{} the category whose objects are c-analytic sets and whose morphisms are PAP functions between them. Composition is given by the usual composition of functions.
\end{definition}

\begin{lemma}
\cpap{} is a concrete category.
\end{lemma}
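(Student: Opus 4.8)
The plan is to exhibit a terminal object and then show that the associated global-points functor is faithful. I would take the terminal object to be $\term := \RR^0$, the one-point space. It is a c-analytic set (take the open domain $\RR^0$ with an empty collection of defining inequalities), hence an object of $\cpap$. For any c-analytic set $A$, the unique set function $A \to \RR^0$ is the constant map; using Corollary~\ref{cor:disjointness} to write $A = \sqcup_i A_i$ as a countable disjoint union of analytic sets and taking the (analytic) constant map on each piece gives a piecewise representation, so this map is PAP. Since it is also the unique such function, $\RR^0$ is terminal in $\cpap$.

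Next I would identify the global points. A PAP morphism $p : \RR^0 \to A$ is determined by the single value $p(\ast) \in A$, and conversely every point $a \in A \subseteq \RR^m$ is picked out by a PAP morphism: take the trivial partition $\{\RR^0\}$ and the analytic constant map $\ast \mapsto a$ defined on the open domain $\RR^0$. Thus $\cpap(\term, A)$ is in bijection with the underlying point set $\conc{A}$, and under this identification the function $\cpap(\term, f)$ induced by a morphism $f : A \to B$ is exactly the underlying set function of $f$.

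Finally I would verify faithfulness directly via its contrapositive. Suppose $f, g : A \to B$ are distinct morphisms of $\cpap$; being functions carrying the PAP property, they must disagree as set functions, so there is $a \in A$ with $f(a) \neq g(a)$. Let $p_a : \term \to A$ be the global point $\ast \mapsto a$, which is PAP as above. Then $\cpap(\term, f)(p_a) = f \circ p_a$ and $\cpap(\term, g)(p_a) = g \circ p_a$ send $\ast$ to $f(a)$ and $g(a)$ respectively, and since these differ they are distinct morphisms $\term \to B$. Hence $\cpap(\term, f) \neq \cpap(\term, g)$, which establishes that $\cpap(\term, -)$ is faithful and therefore that $\cpap$ is concrete.

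I do not expect a genuine obstacle here: the only points requiring care are checking that the unique map into $\term$ and the point-picking maps out of $\term$ are PAP, both of which reduce to the observation that constant functions are analytic (together with Corollary~\ref{cor:disjointness} for the domain decomposition). Faithfulness is then immediate, essentially because morphisms of $\cpap$ are by definition underlying set functions equipped with extra structure, so two morphisms coincide as soon as they agree on all global points.
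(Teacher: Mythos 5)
Your proof is correct and follows essentially the same route as the paper's (which simply notes that $\RR^0$ is terminal and that $\cpap(\term,-)$ is the identity on morphisms, hence faithful); you have merely filled in the routine verifications that the unique map into $\RR^0$ and the point-picking maps out of it are PAP. No issues.
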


\begin{proof}
\begin{itemize}
\item{\bf \cpap{} has a terminal object.} The analytic set $\mathbb{R}^0$ is terminal.
\item {\bf The functor hom$(1, -) : \cpap{} \rightarrow \Set$ is faithful.} The functor is the identity on morphisms.
\end{itemize}
\end{proof}

\begin{proposition}
\label{prop:cpap-concrete-site}
\cpap{} is a concrete site, where the coverings for a c-analytic set $U$ are given by countable c-analytic partitions of $U$, i.e., $\{(U_i)_i \mid \cup_i U_i = U \, \wedge \forall i\neq j, U_i \cap U_j = \emptyset\}$.
\end{proposition}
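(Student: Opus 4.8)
The plan is to verify directly that $\cpap$, equipped with the proposed coverage, satisfies the definition of a concrete site reproduced above. Since we have already shown that $\cpap$ is concrete, what remains is to exhibit an initial object, to confirm that the proposed covering families really consist of morphisms, and to check the five coverage axioms. For smallness, I would observe that for each $n$ the c-analytic subsets of $\RR^n$ form a set, so the collection of objects of $\cpap$ is a set; and the empty set $\emptyset$, with its unique (vacuously PAP) map into any c-analytic set, is an initial object.

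First I would record that each proposed covering family is well-typed. A covering of $U$ is a countable c-analytic partition $\{U_i\}_i$, and the associated family of maps consists of the inclusions $U_i \hookrightarrow U$; by Corollary~\ref{cor:inclispap} each such inclusion is a PAP function, hence a genuine morphism of $\cpap$.

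Next I would dispatch the four easy axioms. Axiom (2) holds because the image of each inclusion is $U_i$, and $\bigcup_i U_i = U = \conc{U}$ by definition of a partition. Axiom (3) holds because the empty family is vacuously a partition of $\emptyset$. Axiom (4) holds because the one-block partition $\{U\}$ gives the identity inclusion. Axiom (5) holds because refining each block $U_i$ of a partition of $U$ by a partition $\{U_{ij}\}_j$ of $U_i$ yields a partition $\{U_{ij}\}_{i,j}$ of $U$ whose inclusions into $U$ are precisely the composites $U_{ij} \hookrightarrow U_i \hookrightarrow U$; countability is preserved since a countable union of countable families is countable.

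The hard part will be Axiom (1), the pullback-stability condition, and this is where the technical results of Section~\ref{sub:c-analytic-disjoint} pay off. Given a morphism $h : V \to U$ and a covering $\{U_i \hookrightarrow U\}_i$ of $U$, I would take the candidate covering of $V$ to be $\{V_i := h^{-1}(U_i)\}_i$. Since $h$ is PAP and each $U_i$ is c-analytic, Corollary~\ref{cor:preimage-c-analytic} guarantees that each $V_i$ is c-analytic; the $V_i$ are pairwise disjoint because the $U_i$ are, and $\bigcup_i V_i = h^{-1}(U) = V$, so $\{V_i\}_i$ is a countable c-analytic partition of $V$, hence a covering family. Finally, for each $i$ the composite $h \circ (V_i \hookrightarrow V)$ equals the restriction $h|_{V_i}$, which takes values in $U_i$ by definition of preimage; it therefore factors as $(U_i \hookrightarrow U) \circ (h|_{V_i} : V_i \to U_i)$, where the corestriction $h|_{V_i} : V_i \to U_i$ is again PAP (its piecewise-analytic representation is inherited from that of $h$). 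This supplies the required factorization through $U_i$ and completes the verification.
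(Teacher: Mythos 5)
Your proposal is correct and follows essentially the same route as the paper: both verify the five coverage axioms directly, with axioms (2)--(5) dispatched from the definition of a partition and the stability axiom handled by pulling the partition back along $h$. The only cosmetic difference is that you invoke Corollary~\ref{cor:preimage-c-analytic} to see that each $h^{-1}(U_i)$ is c-analytic, whereas the paper inlines the same decomposition (refining each $U_i$ into analytic pieces and intersecting with the pieces of a PAP representation of $h$); the underlying argument is identical.
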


\begin{proof} 
First, note that our definition for the coverings is a well-defined, as inclusions are \pap{} functions by Corollary \ref{cor:inclispap}.

We now show that the given coverings satisfy the 5 axioms of a concrete site.
\begin{enumerate}
    \item Suppose we have a \cpap{} morphism $g : C \rightarrow D$, and a c-analytic partition $\{D_i\}_{i \in I}$ of $D$. Then we wish to find a partition of $C$ so that $g$ can be represented as a piecewise gluing of functions with codomains $D_i$. 
    First, note that since they are objects of \cpap{}, each $D_i$ can be further partitioned into countably many distinct analytic subsets $\{D_{ij}\}_{j \in J_i}$, $D_{ij} \subseteq D_i$. Furthermore, because $g$ is a morphism, it has a PAP representation $\{(C_k, g_k)\}_{k \in K}$ for analytic subsets $C_k \subseteq C$ and analytic functions $g_k$. For each $i, j, k$ define $C_{ijk} = \{c \in C_k \mid g_k(c) \in D_{ij}\}$: since $g_k$ is an analytic function and $D_{ij}$ is an analytic set, $C_{ijk}$ is analytic. Define $g^*_{ijk}$ to be the restriction of $g_k$ to $C_{ijk}$. Then $\{(C_{ijk}, g^*_{ijk})\}_{i \in I, j \in J_i, k \in K}$ is a piecewise representation of $g$ in which each piece's codomain is $D_i$ for some $i$.
    \item Let  $\{f_i: c_i\to c\}_{i\in I}$ be a covering of $c$. Then $\bigcup_{i\in I}Im(\conc{f_i})=\conc{c}$. This follows from the definition of \textit{partition}.
    \item The initial object $\ini$ is covered by the empty set.
    \item The identity is always covering. 
    \item Let  $\{f_i: c_i\to c\}_{i\in I}$ be a covering of $c$ and $\{g_{ij}: c_{ij}\to c_i\}_{ j\in J_i}$ be a cover of $c_i$ for each $i$. Then $\{f_i\circ g_{ij}:c_{ij}\to c\}_{i\in I,j\in J_i}\in \topo(c)$. This also follows from the definition of \textit{partition}, and the fact that a countable union of countable sets is countable.
\end{enumerate}
\end{proof}

\begin{lemma}
\label{lem:cpap-subcanonical}
\cpap{} is a subcanonical site.
\end{lemma}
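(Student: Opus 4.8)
The plan is to unfold the definition of \emph{subcanonical}: a site is subcanonical exactly when every representable presheaf is a sheaf. Thus I would fix a c-analytic set $c$ and show that the presheaf $\cpap(-,c)$ (sending an object $U$ to the set of PAP maps $U \to c$) is a sheaf for the coverage introduced in Proposition~\ref{prop:cpap-concrete-site}, whose covering families of an object $U$ are the countable c-analytic partitions $\{U_i \to U\}_i$ by inclusions.

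The observation that makes the argument short is that the covering families are \emph{partitions}: the pieces $U_i$ are pairwise disjoint. Consequently, for $i \neq j$ the pullback of the inclusions $U_i \to U$ and $U_j \to U$ is the empty (initial) c-analytic set, while for $i = j$ it is $U_i$ itself. The matching-family (compatibility) condition in the sheaf axiom is therefore vacuous across distinct pieces and trivially satisfied on each single piece, so a matching family of elements of $\cpap(-,c)$ over such a covering is simply an arbitrary family of PAP maps $g_i : U_i \to c$, with no gluing constraints to verify.

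It then remains to check separatedness and the existence of an amalgamation. Separatedness is immediate: since $\bigcup_i U_i = U$, two PAP maps $U \to c$ that restrict to the same $g_i$ on every $U_i$ must coincide. For existence, I would define $g : U \to c$ by setting $g(x) = g_i(x)$ for the unique $i$ with $x \in U_i$; this is well-defined by disjointness, and the only real content is that $g$ is again PAP, i.e.\ a morphism of $\cpap$. Here I would combine the piecewise representations of the individual maps: each $g_i$ has a representation $\{(A_{ik}, g_{ik})\}_k$ by analytic sets $A_{ik}$ partitioning $U_i$ together with analytic functions $g_{ik}$ defined on open domains $U_{ik} \supseteq A_{ik}$ and agreeing with $g_i$ on $A_{ik}$. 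Since each $U_i$ is the countable disjoint union of the $A_{ik}$ and the $U_i$ partition $U$, the doubly-indexed family $\{(A_{ik}, g_{ik})\}_{i,k}$ is a countable partition of $U$ into analytic sets with matching analytic extensions; as $g_{ik}(x) = g_i(x) = g(x)$ on $A_{ik}$, this family witnesses $g$ as PAP (using that a countable union of countable families is countable).

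The main obstacle is conceptual rather than computational: one must recognize that, because the coverage consists of \emph{disjoint} partitions, the descent data carries no nontrivial compatibility conditions, so the sheaf axiom collapses to the gluing property of PAP functions along countable analytic partitions. That gluing property is essentially already built into the definition of PAP and was exploited to prove that inclusions are PAP in Corollary~\ref{cor:inclispap}; once this collapse is seen, the remaining verification is routine.
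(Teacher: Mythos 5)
Your proposal is correct and follows essentially the same route as the paper's proof: both reduce the sheaf condition to the observation that covering families are disjoint countable partitions, so a matching family is just an arbitrary family of PAP maps on the pieces, and the amalgamation is exhibited as PAP by concatenating the piecewise-analytic representations of the individual maps into a single countable analytic partition. Your version merely makes explicit the vacuity of the compatibility condition and the separatedness check, which the paper leaves implicit.
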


\begin{proof}
Let $X$ be a representable presheaf on \cpap{}. Then, up to natural isomorphism, $X$ maps a c-analytic set $A$ to ${\bf cPAP}(A, B)$ for some fixed c-analytic set $B$. Consider a covering family $\{(A_i)\}_{i \in I}$ for $A$, and a compatible collection of plots $\phi_i \in X(A_i)$, which can be identified (via the natural isomorphism) with \cpap{} morphisms $\phi_i : A_i \rightarrow B$.  Then we must show that there is a unique $\phi \in X(A)$ whose restriction to each $A_i$ is $\phi_i$. To see this, note that each $\phi_i$ has a PAP representation $\{(A_{ij}, \phi_{ij})\}_{j \in J_i}$. By the definition of a covering family on \cpap{}, the $A_i$ are disjoint, and so $\{(A_{ij}, \phi_{ij})\}_{i \in I, j \in J_i}$ is a PAP representation; the $\phi$ we are looking for is the function it represents.
\end{proof}

\subsubsection{Admissible monos for \cpap{}}
\label{sub:admin-monos-cpap}

As an admissible set of monos, we choose those given by the monos defined on an c-analytic subset of their domain of definition.
More formally, we define $\monopap$ for \cpap{} as follows:

\begin{align*}
    \monopap(B) &=\{m:A\mono B~|~A\iso A',A'\text{ is a}\\
&\qquad\text{c-analytic subset of }B\}
\end{align*}

\begin{proposition}
\label{prop:admmonospap}
$\monopap$ is an admissible class of monos.
\end{proposition}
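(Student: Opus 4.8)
The plan is to verify the six axioms in the definition of an admissible class of monos directly, leaning on the closure properties of c-analytic sets established in Section~\ref{sub:c-analytic-disjoint}. Throughout, I would identify an admissible mono $m : A \mono B$ with its image $A' = m(A)$, a c-analytic subset of $B$, together with the corestriction of $m$ to a $\cpap$-isomorphism $A \iso A'$ (whose inverse $A' \to A$ is then PAP). Since $\cpap$ is concrete, this identification is faithful, so $\submono(B)$ embeds into the poset of c-analytic subsets of $B$ ordered by inclusion. The first three axioms are then immediate: the empty set is initial and is a c-analytic subset of every $B$ (axiom 1); an isomorphism has image all of $B$, which is c-analytic (axiom 2); and for composition (axiom 3), if $m : A \mono B$ and $n : B \mono C$ are admissible with images $A' \subseteq B$ and $B' \subseteq C$, the image of $n \circ m$ is $n(A') = (n^{-1})^{-1}(A')$, which is c-analytic by Corollary~\ref{cor:preimage-c-analytic} applied to the PAP map $n^{-1} : B' \to B$.

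For the pullback axiom (axiom 4), I would compute the pullback of an admissible $m : A \mono C$ along an arbitrary morphism $f : B \to C$ on underlying sets, exactly as in $\Set$: since $m$ is injective with image the c-analytic set $A'$, the pullback is isomorphic to $f^{-1}(A') = \{b \in B \mid f(b) \in A'\}$. This set is c-analytic by Corollary~\ref{cor:preimage-c-analytic}, its inclusion into $B$ is PAP by Corollary~\ref{cor:inclispap}, and the second projection $b \mapsto m^{-1}(f(b))$ is PAP as a composite of PAP maps, so the candidate square lives in $\cpap$; its universal property then follows from concreteness, as mediating maps are forced on underlying sets and are PAP. The pulled-back mono has c-analytic image $f^{-1}(A')$ and is therefore admissible.

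Axiom 5 asks that the map $\submono(B) \to \Set(\conc{B}, \{0,1\})$ sending a mono to the characteristic function of its image be componentwise injective and order-reflecting, with image closed under suprema of $\omega$-chains. Injectivity and order-reflection follow from the identification above: two admissible monos with the same image are canonically isomorphic over $B$, and if $A_1' \subseteq A_2'$ then $m_1$ factors through $m_2$, the factoring map being PAP by Corollary~\ref{cor:inclispap}. For the suprema, an increasing $\omega$-chain of images is an increasing sequence of c-analytic subsets of $B$, whose union is a countable union of c-analytic sets and hence c-analytic, so it is again the image of an admissible mono.

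The main obstacle is axiom 6. Here we are given an increasing chain of c-analytic subsets $A_0' \subseteq A_1' \subseteq \cdots$ of $C$ with union $\cinf = \bigcup_n A_n'$, and we must produce a covering family of $\cinf$ (a countable c-analytic partition) each member of which factors through one of the inclusions $A_n' \mono \cinf$. My plan is to invoke the disjointification machinery of Corollary~\ref{cor:disjointness}: first write each $A_n'$ as a countable disjoint union of analytic sets, gather all these pieces into one countable family $\{D_i\}_i$ of analytic sets with $\bigcup_i D_i = \cinf$ and each $D_i$ contained in some $A_{n(i)}'$, and then disjointify this family. The point that needs care, and which I would check by tracking the construction in Lemma~\ref{lem:disjoint-simple}, is that disjointification only ever \emph{removes} points, so every output piece is contained in one of the input sets $D_i$; consequently the resulting disjoint analytic partition $\{C_m\}_m$ of $\cinf$ has each $C_m$ contained in some $A_{n_m}'$. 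Then $C_m \mono \cinf$ factors as $C_m \mono A_{n_m}' \mono \cinf$, the first map a PAP inclusion (Corollary~\ref{cor:inclispap}) and the second a member of the chain, so $\{C_m \mono \cinf\}_m$ lies in the precomposition-closure of $\{A_n' \mono \cinf\}_n$ and covers $\cinf$. The one piece of bookkeeping to handle carefully is making ``each output piece lies in some input set'' precise for general rather than simple analytic sets, which is exactly the reduction to open-ball domains used in the proof of Corollary~\ref{cor:disjointness}.
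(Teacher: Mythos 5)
Your proof is correct and follows essentially the same route as the paper's: axioms 1--3 are immediate, axiom 4 reduces to Corollary~\ref{cor:preimage-c-analytic}, axiom 5 to closure of c-analytic sets under countable unions, and axiom 6 to the disjointification of Corollary~\ref{cor:disjointness}. Your treatment of axiom 6 is in fact slightly more careful than the paper's, which offers $\{c_n\cap A_i\}_{n,i}$ as the covering family even though these sets are not pairwise disjoint in $n$; your disjointified family $\{C_m\}_m$ with each $C_m$ contained in some $A_{n_m}'$ is the honest repair of that step.
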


\begin{proof}
\begin{enumerate}
    \item For all $c\in\catC, 0\to c\in\monopap(c)$: the empty-set is c-analytic.
    \item $\monopap$ contains all isomorphisms: clear.
    \item $\monopap$ is closed under composition: clear.
    \item All pullbacks of $\monopap$-maps exist and are again in $\monopap$: this amounts to showing that the preimage $B:=f^{-1}(A)$ for a \pap{}  function $f$ and a c-analytic set $A$ is c-analytic. This is true by Corollary \ref{cor:preimage-c-analytic}.
    \item For each $c$, the function $\submono(c)\to\Set(\conc{c},\{0, 1\})$ is componentwise injective and order-reflecting, and the image of $\submono$ is closed under suprema of $\omega$-chains: the function is the identity on sets, and is thus injective and order-reflecting. Given an omega chain $\{A_i\subseteq B\}_{i\in\NN}$, we have $i<j \Rightarrow A_i\subseteq A_j$. Let $A:=\bigcup_{i\in\NN}A_i\subseteq B$. It is a countable union of c-analytic sets and is thus a c-analytic set.
    \item Given an increasing chain in $\submono(c)$, $(c_n\mono c)_{n\in\NN}$, denote its least upper bound by $\cinf \mono c$. Then the closure under precomposition (with any morphism) of the set $(c_n\mono \cinf)_{n\in\NN}$ contains a covering family of $\cinf$: as shown in the previous point, $\cinf$ is c-analytic and by Corollary \ref{cor:disjointness}, $\cinf = \bigsqcup_{i\in\NN} A_i$ where the $A_i$ are analytic sets. Note that the intersection of two analytic sets is an analytic set, and therefore the intersection of a c-analytic set with an analytic set is c-analytic.
    Thus, $(c_n\cap A_i\mono \cinf)_{n\in\NN, i\in\NN}$ is a covering family of $\cinf$ that is contained in the closure by precomposition of $(c_n\mono \cinf)_{n\in\NN}$.
\end{enumerate}
\end{proof}

\subsection{Soundness and adequacy}

We can now prove Theorem~\ref{thm:wpap-sound-adequate}:

\begin{theorem}
     \wpap{} has products, coproducts, exponentials and $\lift:\wpap{}\to\wpap{}$ is a strong lifting monad.
\end{theorem}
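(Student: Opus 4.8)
The plan is to avoid reproving any structural fact from scratch and instead to identify $\wpap{}$ with the category of $\omega$-concrete sheaves on the concrete site $\cpap{}$ equipped with the admissible monos $\monopap{}$, after which every claim in the statement follows from the general machinery of \citet{matache2022concrete} recalled above. All the hard site-theoretic work has already been done: $\cpap{}$ is a concrete site whose covering families are exactly the countable c-analytic partitions (Proposition~\ref{prop:cpap-concrete-site}), and $\monopap{}$ is an admissible class of monos for it (Proposition~\ref{prop:admmonospap}).

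The central step is to check that the data and axioms of a $\wpap{}$ space (Definition~\ref{def:wpap}) coincide exactly with those of an $\omega$-concrete sheaf on $\cpap{}$. Both consist of an $\wcpo{}$ $(\conc{X},\leq_X)$ together with, for each c-analytic set $A$, a set $\plots{X}^A\subseteq\conc{X}^A$ of plots closed under pointwise suprema of $\omega$-chains, which matches the $\omega$-cpo enrichment clause verbatim. I would then match the remaining conditions one by one against the concrete-sheaf axioms: closure under precomposition with PAP maps is precisely closure under precomposition with $\cpap{}$-morphisms, and the gluing clause for countable c-analytic partitions is exactly the sheaf-gluing axiom for the coverage of Proposition~\ref{prop:cpap-concrete-site}, using that the partition inclusions are PAP (Corollary~\ref{cor:inclispap}). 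The only clause needing a small argument is the constant-function condition: the concrete-sheaf axiom demands that all constant maps be plots, whereas Definition~\ref{def:wpap} only asks that maps out of $\RR^0$ be plots. These are equivalent, since a constant map $A\to X$ factors as $A\to\RR^0\to X$ with the terminal map $A\to\RR^0$ being PAP, so precomposition closure promotes the $\RR^0$-plots to all constants, and conversely constants restrict to $\RR^0$-plots. The identification of morphisms is immediate, as $\wpap{}$ maps are by definition the Scott-continuous plot-preserving functions, i.e.\ the morphisms of $\omega$-concrete sheaves.

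With $\wpap{}\simeq\omega\text{-}\mathbf{Conc}(\cpap{},\topopap{})$ established, products, coproducts, and exponentials follow at once, since the category of $\omega$-concrete sheaves on any concrete site is Cartesian closed with binary coproducts by \citet{matache2022concrete}; the explicit constructions in the Examples of the previous subsection may be checked to compute these same objects (including the countable coproducts of Example~\ref{example:coproducts}). For the monad, I would observe that the concrete definition of $\lift Y$ given in Section~\ref{sub:denot-semantics} is literally the instance $\lift_{\monopap}Y$ of the general partiality monad: the existential ``$\exists B\subseteq A$ c-analytic with $g^{-1}(\conc{Y})=\mathit{Im}(\conc{B})$'' is exactly the condition ``$\exists\, c'\mono c\in\monopap(c)$'' once one recalls that the $\monopap$-monos are, up to isomorphism, the c-analytic subsets. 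Since $\monopap$ is admissible (Proposition~\ref{prop:admmonospap}), Proposition~\ref{prop:lift-is-monad} yields that $\lift_{\monopap}=\lift$ is a strong monad.

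I expect the main obstacle to be bookkeeping rather than conceptual: faithfully lining up the four closure conditions of Definition~\ref{def:wpap} with the three concrete-sheaf axioms plus the $\omega$-cpo clause, and in particular the constant-versus-$\RR^0$ reconciliation above, while being careful that the coverage used throughout is the c-analytic-partition coverage $\topopap{}$ and not some finer or coarser notion. Once the identification is nailed down, nothing else requires real work, because the existence of products, coproducts, and exponentials, together with the strength of $\lift$, are all inherited wholesale from the cited framework.
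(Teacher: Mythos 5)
Your proposal is correct and follows essentially the same route as the paper: the paper's proof is exactly the observation that $\wpap{}$ is the category of $\omega$-concrete sheaves on the concrete site $\cpap{}$ with $\lift$ arising from the admissible class of monos $\monopap{}$, so that everything follows from the general result of \citet{matache2022concrete}. You simply spell out in more detail the identification that the paper leaves implicit (including the harmless reconciliation of the constant-function axiom with the $\RR^0$-plot clause, and of the concrete $\lift$ with $\lift_{\monopap}$), which is a faithful elaboration rather than a different argument.
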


\begin{proof}
   It follows directly from the fact that we have seen that $\wpap{}$ is a category of $\omega$-concrete sheaves for the concrete site $\cpap{}$ (Proposition~\ref{prop:cpap-concrete-site}), and $\lift$ is obtained from an admissible class of monos on $\cpap{}$ (Propositions~\ref{prop:admmonospap} and \ref{prop:lift-is-monad}), by Proposition~7.5 in \citet{matache2022concrete}.
\end{proof}

\begin{proof}[Proof of Theorem~\ref{thm:wpap-sound-adequate}]
We apply theorem \ref{thm:wconshsound} to the concrete site $(\pap, \topopap, \monopap)$ with the admissible class of monos $\monopap$.
\end{proof}

\section{Correctness of AD}
\label{sec:correctness-ad}
The main goal of this section is to show Theorem~\ref{thm:fwd-cor-basic-wpap}. 
To do so, we apply the categorical machinery of fibrations for logical relations developed in \cite{katsumata2013relating}.
They build on the well-known theory of fibrations \cite{jacobs1999categorical}. 
As we follow their recipe closely, we recall the minimal amount of fibration theory in Section~\ref{sub:fibrations-for-log-rel} needed to understand their machinery, which we also reproduce in Section~\ref{sub:fibr-log-rel} for convenience.
We finally use the machinery of fibrations for logical relations in Section~\ref{sub:ad-correctness-fib-log-rel} to prove Theorem~\ref{thm:fwd-cor-basic-wpap}, our correctness result on AD. 

\subsection{Fibrations for logical relations}
\label{sub:fibrations-for-log-rel}

Roughly, fibrations offer a useful and unifying point of view for reasoning about generalized predicates on a category.
Indeed, the functor $p:Gl\to\catC$ from a glueing/sconing category \cite{mitchell1992notes} to the base category is an example of nice fibration. 
This justifies to study a categorical version of logical relations using fibrations. We follow closely this point of view developed in \cite{katsumata2013relating, katsumata2005semantic}.

\subsubsection{Fibrations}

We recall some basic facts from the theory of fibrations \citep{jacobs1999categorical}.

\begin{definition}[Reproduced from \cite{jacobs1999categorical}]
Let $\fib{p}{E}{B}$ be a functor.
A morphism $f:X\to Y\in\mbe$ is \textbf{Cartesian over} $u:I\to J\in\mbb$ if $pf=u$ and every $g:Z\to Y\in\mbe$ for which $pg=u\circ w$ for some $w:pZ\to I$, there is a unique $h:Z\to X\in\mbe$ above $w$ with $f\circ h=g$.
\end{definition}

\begin{definition}[Reproduced from \cite{jacobs1999categorical}]
A \textbf{fibration} is a functor $\fib{p}{E}{B}$ such that for every $Y\in\mbe$ and $u:I\to pY\in\mbb$, there is a Cartesian morphism $f:X\to Y\in\mbe$ above $u$. It is also called \textbf{fibred category} or \textbf{category (fibred) over} $\mbb$.
\end{definition}

\begin{example}
Let $\Pred$ be the category whose objects are pairs $(X,S)$ of a set $X$ and a subset $S\subseteq X$, and morphisms $(X,S)\to(Y,T)$  are functions $f:X\to Y$ such that $f(S)\subseteq T$. Let $p:\Pred\to\Set$ be the functor that forgets the subsets $S$. It is a fibration.
\end{example}

\begin{proposition}(Change-of-base, \cite{jacobs1999categorical})
Let $\fib{p}{\mbb}{\mbb}$ be a fibration and $K:\mba\to\mbb$ be a functor. Form the pullback in \textbf{Cat}:
\[
\xymatrix{
\mba\times_{\mbb}\mbe \pullbackcorner \ar[r] \ar[d]_{K^*(p)} 
& \mbe \ar[d]^{p} \\
\mba \ar[r]_{K} & \mbb
}
\]
In this situation, the functor $K^*(p)$ is also a fibration. 
\end{proposition}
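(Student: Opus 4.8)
The plan is to show directly that $K^*(p)$ admits Cartesian liftings of all morphisms, building each one from a lifting supplied by $p$. First I would unfold the pullback category $\mba\times_{\mbb}\mbe$ concretely: for the fibration $p : \mbe \to \mbb$, its objects are pairs $(A, E)$ with $A \in \mba$, $E \in \mbe$ and $K(A) = p(E)$, and a morphism $(A,E)\to(A',E')$ is a pair $(a, e)$ consisting of $a : A \to A'$ in $\mba$ and $e : E \to E'$ in $\mbe$ with $K(a) = p(e)$. Under this description $K^*(p)$ is just the first projection, sending $(A,E) \mapsto A$ and $(a,e)\mapsto a$.

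To verify the fibration condition, fix an object $(A', E')$ of the pullback (so $K(A') = p(E')$) and a morphism $u : A \to A'$ in $\mba$. I would apply $K$ to obtain $K(u) : K(A) \to K(A') = p(E')$ in $\mbb$. Since $p$ is a fibration, there is a Cartesian morphism $f : X \to E'$ in $\mbe$ over $K(u)$, meaning $p(f) = K(u)$ and in particular $p(X) = K(A)$. Hence the pair $(A, X)$ is a legitimate object of $\mba\times_{\mbb}\mbe$, and $(u, f) : (A, X) \to (A', E')$ is a morphism there because $K(u) = p(f)$. My claim is that $(u, f)$ is the required Cartesian lift of $u$ along $K^*(p)$.

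The substance of the argument is checking this claim against the definition of Cartesian morphism for $K^*(p)$. Given any morphism $(w, g) : (A'', E'') \to (A', E')$ in the pullback together with a factorization $K^*(p)(w, g) = w = u \circ v$ through some $v : A'' \to A$ in $\mba$, I must produce a unique $(v, h) : (A'', E'') \to (A, X)$ over $v$ with $(u,f)\circ(v,h) = (w,g)$. Here the second component $g : E'' \to E'$ satisfies $p(g) = K(w) = K(u)\circ K(v)$, so the universal property of the Cartesian morphism $f$, applied to $g$ and this factorization, yields a unique $h : E'' \to X$ over $K(v)$ with $f \circ h = g$. Then $(v, h)$ is a morphism of the pullback (its components lie over a common base arrow $K(v) = p(h)$), it lies over $v$, and $(u,f)\circ(v,h) = (u\circ v, f\circ h) = (w, g)$; uniqueness of $h$ forces uniqueness of $(v,h)$.

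I expect the only real subtlety to be the bookkeeping: one must consistently track the two distinct ``lies over'' conditions — over $\mba$ for $K^*(p)$ and over $\mbb$ for $p$ — and confirm that the compatibility constraint $K(a) = p(e)$ defining pullback morphisms is preserved at each step. Once the explicit description of $\mba\times_{\mbb}\mbe$ is in place, the universal property of $f$ transfers essentially verbatim to $(u,f)$, so no further categorical machinery beyond the definitions of fibration and Cartesian morphism is needed.
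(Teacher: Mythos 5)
Your proof is correct. The paper does not actually prove this proposition---it is reproduced verbatim from \cite{jacobs1999categorical} and used as a black box---so there is no in-paper argument to compare against; your construction of the Cartesian lift $(u,f)$ over $u$ from a $p$-Cartesian lift $f$ of $K(u)$, together with the transfer of the universal property through the pullback's compatibility condition $K(a)=p(e)$, is the standard textbook argument and checks out against the paper's stated definition of Cartesian morphism.
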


\begin{example}
    Consider the functor $K:\Set\times \Set\to \Set$ that sends a pair of sets to their product. In this case, 
  \[
\xymatrix{
\BRel \pullbackcorner \ar[r] \ar[d]_{K^*(p)} 
& \Pred \ar[d]^{p} \\
\Set\times\Set \ar[r]_{K} & \Set
}
\]
$\BRel$ is a category of functions that preserve a binary relation. An object in $\BRel$ is a binary relation $S\subseteq X\times Y$, and a morphism $S\subseteq X\times Y\to R\subseteq X_2\times Y_2$ is  a function $f: X\times Y\to X_2\times Y_2$ such that for all $(x,y)\in S, f(x,y)\in R$.
\end{example}

\begin{proposition}(Composition, \cite{jacobs1999categorical})
Let $\fib{p}{\mbe}{\mbb}$ and $\fib{r}{\mbb}{\mba}$ be fibrations. Then $\fib{rp}{\mbe}{\mba}$ is also a fibration, in which $f\in\mbe$ is a Cartesian morphism iff $f$ is a Cartesian morphism for $p$ and $p(f)$ is  is a Cartesian morphism for $r$.


For each $I\in\mba$ one obtains a fibration $p_I:\mbe_I=(rp)^{-1}(I)\to \mbb_I=r^{-1}(I)$ by restriction. 
\end{proposition}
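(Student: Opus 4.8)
The plan is to prove the characterization of Cartesian morphisms first, and then read off both the fibration property of $rp$ and the fibre statement as consequences. I would begin with the ``if'' direction of the stated equivalence, since it is exactly what delivers the existence of Cartesian lifts for $rp$. So suppose $f:X\to Y$ in $\mbe$ is Cartesian for $p$ and that $pf$ is Cartesian for $r$; I claim $f$ is then Cartesian for $rp$ over $u:=rp(f)$. Given any $g:Z\to Y$ in $\mbe$ and any $w:rp(Z)\to rp(X)$ in $\mba$ with $rp(g)=u\circ w$, I would produce the required factorisation by pasting the two universal properties. First, since $pf$ is Cartesian for $r$ and $r(pg)=rp(g)=r(pf)\circ w$, there is a unique $\bar h:pZ\to pX$ in $\mbb$ with $r(\bar h)=w$ and $pf\circ\bar h=pg$. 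Then, since $f$ is Cartesian for $p$ and $p(g)=pf\circ\bar h$, there is a unique $h:Z\to X$ in $\mbe$ with $p(h)=\bar h$ and $f\circ h=g$. One checks $rp(h)=r(\bar h)=w$, and uniqueness of $h$ over $w$ follows by feeding any competitor $h'$ into the two uniqueness clauses in turn: $p(h')$ must equal $\bar h$, whence $h'=h$.

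With the ``if'' direction established, existence is immediate. Given $Y\in\mbe$ and $u:I\to rp(Y)$ in $\mba$, I first take a Cartesian lift $\bar u:J\to pY$ of $u$ for the fibration $r$, and then a Cartesian lift $f:X\to Y$ of $\bar u$ for the fibration $p$. Since $rp(f)=r(\bar u)=u$ and, by the ``if'' direction, $f$ is Cartesian for $rp$, this exhibits the desired Cartesian lift, so $rp$ is a fibration.

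For the ``only if'' direction I would lean on standard facts about fibrations from \citep{jacobs1999categorical}: isomorphisms are Cartesian, Cartesian morphisms are closed under composition, and two Cartesian lifts of the same base morphism onto the same object differ by a unique vertical isomorphism. Given $f$ Cartesian for $rp$, I build, exactly as in the existence argument, a second morphism $\tilde f$ that is Cartesian for $p$ and whose image $p\tilde f$ is Cartesian for $r$; by the ``if'' direction $\tilde f$ is Cartesian for $rp$ over the same $u$ and with the same codomain $Y$. Hence $f$ and $\tilde f$ are related by a vertical (for $rp$) isomorphism $\phi$ with $\tilde f\circ\phi=f$. Because $\phi$ is an isomorphism in $\mbe$, $p\phi$ is an isomorphism in $\mbb$, so $\phi$ is Cartesian for $p$ and $p\phi$ is Cartesian for $r$; closure under composition then gives that $f=\tilde f\circ\phi$ is Cartesian for $p$ and that $pf=p\tilde f\circ p\phi$ is Cartesian for $r$, as required. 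For the fibre statement, I fix $I\in\mba$ and restrict $p$ to $p_I:\mbe_I\to\mbb_I$, where $\mbe_I=(rp)^{-1}(I)$ and $\mbb_I=r^{-1}(I)$. Given $Y\in\mbe_I$ and a vertical map $v:B\to pY$ in $\mbb_I$ (so $r(v)=\mathrm{id}_I$), I take a Cartesian lift $f:X\to Y$ of $v$ for $p$; then $rp(f)=r(v)=\mathrm{id}_I$, so $f$ lies in $\mbe_I$, and the mediating maps produced by the Cartesian property of $f$ for $p$ are automatically vertical over $I$, since any $w\in\mbb_I$ forces $rp(h)=r(w)=\mathrm{id}_I$. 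Thus $f$ is Cartesian for $p_I$ and $p_I$ is a fibration.

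The main obstacle here is bookkeeping rather than depth: keeping the three base categories $\mba,\mbb,\mbe$ and the two successive layers of lifting straight, and being careful that the mediating morphism produced by each universal property really lies over the intended base map. The only genuinely external inputs are the standard fibration lemmas just cited (isomorphisms are Cartesian, Cartesian maps compose, and Cartesian lifts are unique up to vertical isomorphism), which I would invoke rather than reprove.
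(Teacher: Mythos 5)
Your proof is correct. The paper itself gives no argument for this proposition---it is reproduced verbatim from \citet{jacobs1999categorical} as background material---and what you have written is precisely the standard argument found there: paste the two universal properties to get the ``if'' direction and hence the existence of $rp$-Cartesian lifts, then recover the ``only if'' direction from uniqueness of Cartesian lifts up to vertical isomorphism together with closure of Cartesian maps under composition, and finally observe that $p$-Cartesian lifts of $r$-vertical maps stay inside the fibre $\mbe_I$. All steps check out against the (strong) definition of Cartesian morphism used in the paper.
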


\begin{definition}[Reproduced from \cite{jacobs1999categorical}]
Given a fibration $\fib{p}{\mbe}{\mbb}$, the fibre category over $I\in \mbb$ is the subcategory $\mbe_I$ of $\mbe$ whose objects are above $I$ and morphisms above $id_I$.
\end{definition}

\begin{definition}
    $\fib{p}{\mbe}{\mbb}$ is an opfibration if $\fib{p^{op}}{\mbe{}^{\text{op}}}{\mbb{}^{\text{op}}}$ is a fibration.
\end{definition}

\begin{example}
    $p:\Pred\to \Set$ is an opfibration.
\end{example}

\subsubsection{Fibrations for logical relations for effectful languages}
\label{sub:fibr-log-rel}

We now recall the setting of fibrations for logical relations from \cite{katsumata2013relating}.

\begin{definition}[Reproduced from \cite{katsumata2013relating}]
A partial order bifibration with fibrewise small products is a faithful functor $\fib{p}{\mbe}{\mbb}$ such that
\begin{itemize}
    \item $p$ is a fibration
    \item $p$ is an opfibration
    \item each fibre category is a partial order
    \item each fibre category has small products, and the inverse image functors (necessarily) preserve them
\end{itemize}
\end{definition}

\begin{definition}
A fibration for logical relations over a bi-CCC $\mbb$ is a partial order bifibration $\fib{p}{\mbe}{\mbb}$ with fibrewise small products such that $\mbe$ is a bi-CCC and $p$ strictly preserves the bicartesian-closed structure. 
\end{definition}

\begin{proposition}[\citep{katsumata2013relating}]
\label{propn:changeofbase}
The pullback of a fibration for logical relations along a finite product preserving functor is a fibration for logical relations.
\end{proposition}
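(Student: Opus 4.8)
The plan is to verify the three defining clauses of a fibration for logical relations directly for the pullback functor $K^*(p) : \mba \times_{\mbb} \mbe \to \mba$, transporting the structure of $p$ along $K$ wherever possible. Objects of $\mba \times_{\mbb} \mbe$ are pairs $(A, E)$ with $KA = pE$, and morphisms are pairs lying over a common $\mba$- and $\mbb$-arrow; consequently the fibre of $K^*(p)$ over $A$ is canonically isomorphic to the fibre $\mbe_{KA}$ of $p$ over $KA$. This identification is the workhorse of the argument: every fibrewise property of $K^*(p)$ can be read off from the corresponding property of $p$ at $KA$.

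First I would check that $K^*(p)$ is a partial order bifibration with fibrewise small products. The change-of-base proposition already gives that $K^*(p)$ is a fibration, with Cartesian lifts computed as in $p$. Dualising it — a pullback of an opfibration is an opfibration, since $(K^*(p))^{\mathrm{op}} = (K^{\mathrm{op}})^*(p^{\mathrm{op}})$ and $p^{\mathrm{op}}$ is a fibration — shows $K^*(p)$ is also an opfibration, with opcartesian lifts inherited from those of $p$. Each fibre of $K^*(p)$ is a fibre of $p$, hence a partial order with all small products, and reindexing along $a : A \to A'$ agrees with reindexing along $Ka$ in $p$, so it preserves these products. Thus $K^*(p)$ is a partial order bifibration with fibrewise small products.

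The substance of the proof is producing a bicartesian-closed structure on $\mba \times_\mbb \mbe$ that $K^*(p)$ strictly preserves, and this is exactly where finite-product preservation of $K$ is used. For the terminal object and binary products I would take $(1_\mba, 1_\mbe)$ and $(A_1 \times A_2,\, E_1 \times_\mbe E_2)$; these are legal objects precisely because $K$ preserves the terminal object and binary products, so $p(E_1 \times_\mbe E_2) = KA_1 \times KA_2 = K(A_1 \times A_2)$. Coproducts and exponentials need more care, because $K$ need not preserve them. For the coproduct I would push the $\mbe$-coproduct forward along the canonical comparison $c := [K\iota_1, K\iota_2] : KA_1 + KA_2 \to K(A_1 + A_2)$ using the opfibration, setting the fibre part to $\coprod_c (E_1 +_\mbe E_2)$ over $K(A_1 + A_2)$. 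For the exponential I would reindex the $\mbe$-exponential along the comparison $\theta : K(A_1 \Rightarrow A_2) \to (KA_1 \Rightarrow KA_2)$, namely the transpose of $K(A_1 \Rightarrow A_2) \times KA_1 \cong K((A_1 \Rightarrow A_2) \times A_1) \xrightarrow{K\mathrm{ev}} KA_2$, setting the fibre part to $\theta^*(E_1 \Rightarrow_\mbe E_2)$ over $K(A_1 \Rightarrow A_2)$. The product isomorphism used to form $\theta$ is available only because $K$ preserves finite products. In every case the $\mba$-component of the constructed object is the genuine $\mba$-(co)product, terminal object, or exponential, and $K^*(p)$ is the projection onto that component, so strict preservation of the bicartesian-closed structure holds by construction.

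The main obstacle, and the step I would spend the most care on, is verifying the universal property of the exponential $\theta^*(E_1 \Rightarrow_\mbe E_2)$. Given a morphism $(A, E) \times (A_1, E_1) \to (A_2, E_2)$, one transposes its $\mba$-component in $\mba$ and its $\mbe$-component in $\mbe$; the point to check is that the resulting pair again lies over a common arrow, i.e. that $p$ applied to the $\mbe$-transpose $\hat e$ equals $\theta \circ K(\hat a)$, where $\hat a$ is the $\mba$-transpose. This reduces to a transpose chase that works out precisely because $p$ strictly preserves exponentials and $K$ preserves products, so that $K(\hat a \times \mathrm{id}) = K\hat a \times \mathrm{id}$. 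The analogous, easier compatibility for coproducts uses the opfibration adjunction in place of the Cartesian-closed transpose. Once these coherences are in hand, naturality and the remaining bi-CCC equations follow formally, and combining the three clauses yields that $K^*(p)$ is a fibration for logical relations over $\mba$.
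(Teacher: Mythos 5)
The paper does not actually prove this proposition: it is reproduced from \citet{katsumata2013relating} and used as a black box, so there is no in-paper argument to compare yours against. Judged on its own terms, your proof is the standard change-of-base argument and is essentially correct. The identification of the fibre of $K^*(p)$ over $A$ with the fibre of $p$ over $KA$ disposes of the partial-order-bifibration clauses, and your constructions of the bicartesian-closed structure on $\mba\times_{\mbb}\mbe$ --- products taken componentwise, coproducts obtained by opreindexing $E_1+_{\mbe}E_2$ along the comparison $[K\iota_1,K\iota_2]$, exponentials obtained by reindexing $E_1\Rightarrow_{\mbe}E_2$ along the comparison $\theta$ --- are the right ones; you also correctly isolate where finite-product preservation of $K$ is actually used (legitimacy of the componentwise product, and the transpose coherence $\widehat{Kf}=\theta\circ K\hat{f}$ needed for the exponential's universal property). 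Two small points. First, you treat $K$'s preservation of finite products as strict (writing $p(E_1\times_{\mbe}E_2)=K(A_1\times A_2)$); in general this holds only up to canonical isomorphism, so the terminal object and binary products must also be reindexed along that isomorphism, exactly as you already do for exponentials along $\theta$ --- this changes nothing of substance but should be stated. Second, the paper's definition of a partial order bifibration requires the functor to be faithful; faithfulness of $K^*(p)$ is immediate (if $(f,g)$ and $(f,g')$ have the same image then $pg=Kf=pg'$, whence $g=g'$ by faithfulness of $p$), but it is a clause of the definition and deserves a sentence.
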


In particular, the usual subscone fibration is recovered as a change of base along the functor $\mbb(1,-):\mbb\to\Set$.

\begin{definition}[Reproduced from \cite{katsumata2013relating}]
Let $\fib{p}{\mbe}{\mbb}$ be a functor.
Given $X,Y\in\mbe$, and $f:pX\to pY$, we write $f: X\lifted Y$ to denote the following proposition: $\exists \stackrel{\cdot}{f}:X\to Y. p(\stackrel{\cdot}{f})=f$.
We say that $f$ has a lift (in $\mbe$).
\end{definition}

The setting of \citet{katsumata2013relating} works for fairly general languages with effects. One way to describe them is in terms of algebraic operations, or equivalently in terms of generic effects.

\begin{definition}[Reproduced from \cite{katsumata2013relating}]
Let $\catC$ be a category and $T$ a strong monad on it. Given objects $C,D$ f $\catC$, a $(D,C)$ algebraic operation or generic effect for is a morphism $C\to TD$.
\end{definition}

A lambda-calculus with effect is parametrized by a set of base types and (effectful) primitives, each having an arity and coarity, describing the number and types of arguments and return values. This is encapsulated as a signature: 

\begin{definition}[Reproduced from \cite{katsumata2013relating}]
A $\lambda_c$-signature $\Sigma$ is a tuple $(B,K,O,ar,car: K\bigcup O\to GType(B))$ where $B$ is a set of base types, $K$ a set of effect-free constants, $O$ a set of algebraic operations, and $GType(B)$ the set of ground types on $B$. $ar,car$ are the arity and co-arity functions.
\end{definition}

\begin{definition}[Reproduced from \cite{katsumata2013relating}]
Let $\Sigma$ be a signature. A $\lcstruct$-structure is a tuple $A=(\mbb,T,A,a)$ where $\mbb$ is a bi-CCC, $T$ a strong monad on $\mbb$, $A$ a functor $B\to\mbb$.
\end{definition}

In other words, a $\lcstruct$-structure gives an interpretation of the base types and primitives of the language.
Note that we can  always extend $A$ to a structure preserving functor $A\sem{-}:Gtype(B)\to \mbb$.

We now recall the main theorems from \cite{katsumata2013relating}:

\begin{definition}[Reproduced from \cite{katsumata2013relating}]
\label{defn:property}
    A property over a  $\lcstruct$-structure $A$ is a pair $(V,C)$ of functors $V,C:B\to \mbe$ such that $p\circ V=A$ and $p\circ C=T\circ A$
\end{definition}

That is, a property is the choice for any base type $b\in B$ of a predicate $Vb$ in the category of predicates $\mbe$ over values of type $b$, and of a predicate $Cb$ over computations $T(Ab)$ of type $b$. 
The question of interest is then whether all programs preserve the property, which is answered with the following theorem.

\begin{theorem}[Logical relations, \citep{katsumata2013relating}]
\label{thm:logrel}
Let $\Sigma$ be a signature, $A=(\mbb,T,A,a)$ be an $\lcstruct$-structure, $\fib{p}{\mbe}{\mbb}$ be a fibration for logical relations and $(V,C)$ be a property over $A$. If the property satisfies the following conditions
\begin{itemize}
    \item for all $b\in B$, $\return$ has a lift
    \item for all $k\in K$, $A\sem{k}$ has a lift
    \item for all $o\in O$, $\sem{o}$ has a lift
\end{itemize}
then for all well-typed terms $\var_1:b_1,\ldots,\var_n:b_n\vdash \ter :b$, $A\sem{\ter}$ lifts to the total category. 
\end{theorem}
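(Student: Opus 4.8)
The plan is to reduce the statement to the construction of a single \emph{lifted $\lambda_c$-structure} in the total category, and then to invoke the universal property of the syntax. Concretely, I would assemble a $\lambda_c$-structure $(\mbe, \dot{T}, V, \dot{a})$ sitting strictly over the given structure $A = (\mbb, T, A, a)$ along $p : \mbe \to \mbb$, where $V : B \to \mbe$ is the value part of the property. Once such a structure exists, the object language is interpreted in it by exactly the same clauses used in $\mbb$, producing a functor $\sem{-}_{\mbe}$ on terms; and because $p$ strictly preserves the bicartesian-closed structure and the monad, $p$ is a strict morphism of $\lambda_c$-structures, so $p \circ \sem{-}_{\mbe} = A\sem{-}$. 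Hence for every term $\var_1 : b_1, \dots, \var_n : b_n \vdash \ter : b$ the morphism $\sem{\ter}_{\mbe}$ is a lift of $A\sem{\ter}$, which is precisely the conclusion.

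A first observation trivialises most of the equational content. Since $p$ is faithful and each fibre category is a partial order, any two parallel morphisms of $\mbe$ lying over a common morphism of $\mbb$ are equal. Consequently every equation we would otherwise have to verify between lifts --- the monad laws for $\dot{T}$, naturality of its unit, multiplication and strength, and the coherences of the bicartesian-closed structure --- holds automatically, because the corresponding equation already holds downstairs in $\mbb$ and $p$ reflects equalities of parallel maps. The only genuine content, throughout, is the \emph{existence} of lifts, never their equations.

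The key step is therefore to construct a strong monad lifting $\dot{T}$ of $T$ along $p$: a strong monad on $\mbe$ with $p\dot{T} = Tp$ whose unit, multiplication and strength lie over those of $T$, and which realises the chosen computation predicates in the sense that $\dot{T}(Vb) = Cb$ at each base type $b$. The bifibration structure --- both the Cartesian and the opCartesian liftings --- together with fibrewise small products is exactly what is needed to define such a lifting, and the hypothesis $\return : Vb \lifted Cb$ for every base type $b$ is what guarantees the lifted unit lands in the computation predicates. Granting $\dot{T}$, I extend the property $(V,C)$ from base types to all types by induction on type structure: products, coproducts and exponentials are interpreted using the bi-CCC structure of $\mbe$ (which $p$ preserves on the nose), and computation types by $\dot{T}$. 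The lifts of the effect-free constants $k$ and of the algebraic operations $o$ required at the leaves of this induction are precisely the two remaining hypotheses, and together with the lifted monad they assemble into the component $\dot{a}$ of the full lifted structure.

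The main obstacle I expect is the construction of $\dot{T}$ together with the proof that the computation predicates $C$ are closed under its lifted sequencing, so that Kleisli composition --- and hence the interpretation of let-binding in the call-by-value calculus --- stays inside the total category. This is where the strength is indispensable: sequencing a computation inside a context uses the monad strength, so the lifting must respect strength, which forces the precise interaction between the opfibration structure and fibrewise products that the definition of a fibration for logical relations is designed to supply. By contrast, once $\dot{T}$ is in place, no coherence equation ever has to be checked by hand: by the faithfulness-plus-poset observation above, every monad law, naturality square and bicartesian-closed coherence in $\mbe$ follows from its counterpart in $\mbb$, so the inductive extension to all types and the concluding initiality argument are routine.
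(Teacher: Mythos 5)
The paper does not prove this theorem: it is reproduced verbatim from \citet{katsumata2013relating} as imported machinery (the surrounding text says ``We now recall the main theorems from \citep{katsumata2013relating}''), so there is no in-paper proof to compare yours against. Your sketch does follow the strategy of the cited work: build a lifted $\lcstruct$-structure in the total category $\mbe$, interpret the calculus there by the same clauses, and use strict preservation of the bicartesian-closed and monad structure by $p$ to conclude that the upstairs interpretation is a lift of $A\sem{-}$. Your observation that faithfulness plus posetal fibres makes every coherence equation automatic, so that only \emph{existence} of lifts carries content, is also exactly the right leverage point and is how Katsumata discharges the equational obligations.

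That said, as a proof the proposal stops short of the one genuinely hard step, which you yourself identify: the construction of the strong monad lifting $\dot{T}$ and the verification that the computation predicates are closed under Kleisli extension (so that $\llet$-sequencing stays in $\mbe$). In the cited development this is not a consequence of the bifibration axioms alone but an explicit construction (a $\top\top$-lifting style definition of $\dot{T}$ from the family $C$, using inverse images and fibrewise products), and the hypothesis that $\return$ lifts at each base type yields only an inequality of the form $\dot{T}(Vb)\leq Cb$ in the posetal fibre over $TAb$ --- your stated requirement $\dot{T}(Vb)=Cb$ is stronger than what is available or needed. Closure of $C$ under bind, which you flag as the main obstacle, is precisely what must be \emph{derived} from that construction rather than assumed; until $\dot{T}$ is written down, the claim that the hypotheses on $\return$, the constants $K$, and the operations $O$ suffice remains an assertion. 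So the proposal is a faithful high-level reconstruction of the intended argument, but with its central lemma left as a black box.
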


\subsubsection{Fibrations for logical relations for effectful languages with recursion}

So far, we have reviewed the way to use fibrations for logical relations for higher-order effectful languages, but not for languages with recursion. We review this next.


\begin{definition}[Reproduced from \cite{katsumata2013relating}]
An \wcpo{}-enriched bi-CCC is a bi-CCC $\catC$ such that each homset is an \wcpo{}, and the composition, tupling $(-,-)$, cotupling $[-, -]$ and currying $\lambda(-)$ of the bi-CC structure on $\catC$ are all monotone and continuous. 
\end{definition}

\begin{definition}[Reproduced from \cite{katsumata2013relating}]
A pseudo-lifting monad on an \wcpo{}-enriched bi-CC category $\mbb$ is a monad on the underlying non-enriched category such that it has an algebraic operation $bt$ such that its component at $I$, $bt_I:1\to TI$, is the least morphism.
\end{definition}

\begin{definition}[Reproduced from \cite{katsumata2013relating}]
An \wcpo{}-enriched $\lcstruct$-structure is a tuple $(\mbb,T,A,a)$ such that $\mbb$ is an \wcpo{}-enriched bi-CCC, T is a pseudo-lifting monad over $\mbb$ and $(\mbb,T,A,a)$ is a $\lcstruct$-structure.
\end{definition}

\begin{definition}[Reproduced from \cite{katsumata2013relating}]
We call $X\in\mbe$ above $TI\in\mbb$ admissible if
\begin{itemize}
    \item $\bot_{pX}:1 \lifted X$
    \item for all $Y\in\mbe$ and $\omega$-chain $f_i\in\mbb(pY,pX)$ such that $f_i:Y \lifted X$, we have $\bigsqcup_{i=0}^\infty f_i: Y\lifted X$. 
\end{itemize}
\end{definition}

\begin{theorem}[Logical relations for language with recursion, \citep{katsumata2013relating}]
\label{thm:logrelwiter}
Let $\Sigma$ be a signature, $A=(\mbb,T,A,a)$ an \wcpo{}-enriched $\lcstruct$-structure, $\fib{p}{\mbe}{\mbb}$ a fibration for logical relations and $(V,C)$ be a property over $A$. If the property satisfies 
\begin{itemize}
    \item the conditions of Theorem~\ref{thm:logrel}
    \item for all $b\in\mbb, Cb$ is admissible
\end{itemize}
Then for all well-typed terms of the language with iteration,  $\var_1:b_1,\ldots,\var_n:b_n\vdash \ter :b$ we have $A\sem{\ter}:\prod_i Vb_i\lifted Cb$.
\end{theorem}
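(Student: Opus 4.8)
The plan is to prove the statement as the fundamental (``basic'') lemma of a logical-relations argument: by induction on the typing derivation of $\ter$, following the effect-free argument of Theorem~\ref{thm:logrel} verbatim and adding a single new case for the recursion operator, which is discharged using the $\wcpo{}$-enrichment together with the admissibility hypothesis. First I would extend the property $(V,C)$, which is given only at base types, to a type-indexed logical relation. Concretely, for each type $\tau$ I would define by induction on $\tau$ an object $\sem{\tau}^V\in\mbe$ above $A\sem{\tau}$ and an object $\sem{\tau}^C\in\mbe$ above $T\,A\sem{\tau}$: at a base type $b$ these are $Vb$ and $Cb$; at product, sum and function types they are the corresponding products, coproducts and exponentials taken in the total category $\mbe$; and at a computation type they are obtained by applying a lifting $\dot T$ of the monad $T$, determined by the bifibration structure. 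This is available precisely because $p:\mbe\to\mbb$ is a fibration for logical relations: $\mbe$ is itself a bi-CCC and $p$ \emph{strictly} preserves the bicartesian-closed structure, so every semantic combinator (tupling, projection, currying, evaluation, copairing) automatically has a lift once its components do.

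With the relation in place, the lemma follows by induction on the derivation of $\var_1:b_1,\dots,\var_n:b_n\vdash\ter:b$. The cases for variables, pairing, projection, abstraction, application and case analysis are all discharged uniformly by the strict preservation of bi-CCC structure; the effect-free constants $k\in K$ and the algebraic operations $o\in O$ lift by the two corresponding hypotheses carried over from Theorem~\ref{thm:logrel}, and $\return$ lifts by the first of those hypotheses, which also makes Kleisli extension (hence $\bind$) lift once $\dot T$ is a monad.

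The only genuinely new case is $\mu f.\lambda \var.\ter$, whose denotation is the least fixed point $\bigvee_i f_i$, where $f_0=\bot$ and $f_{i+1}$ is the functional of the body applied to $f_i$. Here I would run an inner induction on $i$ to show that every approximant lifts: the base case $f_0=\bot$ lifts by the first admissibility clause (the least morphism $1\lifted Cb$), and the inductive step lifts by the outer induction hypothesis applied to the body $\ter$ with $f$ adjoined to the context, the $\wcpo{}$-enrichment guaranteeing that the functional is monotone and continuous so that the chain is well-defined and increasing. Finally, since each $f_i$ lifts and the $f_i$ form an $\omega$-chain, the second admissibility clause yields that $\bigvee_i f_i$ lifts, completing the case.

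The hard part will be exactly this recursion case, and specifically checking that the admissibility hypothesis, stated for base computation types, in fact suffices at \emph{every} computation type at which a fixed point is formed: one must verify that admissibility of the objects $Cb$ propagates through the lifted monad $\dot T$ and the bi-CCC type formers, and that the bottom element supplied by the pseudo-lifting structure of $T$ (its algebraic operation $bt$, whose component $bt_I:1\to TI$ is the least morphism) is the same least morphism appearing in the admissibility clause. Once these compatibility facts are established, the two admissibility clauses slot precisely into the base case and the limit step of the inner induction, and everything else reduces to the routine structural induction of a logical-relations proof.
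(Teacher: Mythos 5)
The paper does not prove this statement at all: it is reproduced verbatim from \citet{katsumata2013relating} and used as imported machinery, so there is no in-paper proof to compare yours against. That said, your proposal is a faithful reconstruction of the standard argument in the cited source: extend $(V,C)$ to a type-indexed relation using the bi-CCC structure of $\mbe$ and the strict preservation by $p$, run the structural induction of Theorem~\ref{thm:logrel}, and discharge the recursion case by an inner induction on the Kleene approximants, with the two admissibility clauses handling the bottom element and the lub of the chain. You also correctly identify the one point that needs genuine care\textemdash that admissibility, assumed only for the $Cb$ at base types, must be shown to propagate through the type formers to every object at which a fixed point is actually taken; one further small strengthening you will need is to prove the fundamental lemma for arbitrary (not just base-type) contexts, since the recursion variable $f$ adjoined to the context has function type.
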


\subsection{Correctness of AD}
\label{sub:ad-correctness-fib-log-rel}

We will show correctness of AD (Theorem~\ref{thm:fwd-cor-full-wpap}) via the theory of fibrations for logical relations.
The goal is to construct a category of predicates on $\wpap{}$ and to encode correctness of AD as a predicate preservation property. 
To construct such a fibrations for logical relations, we obtain one by pulling back another one along a product preserving functor. 
There are several possibilities that give us different sort of predicates on \wpap{} spaces. 
One constraint is that we need to be able to interpret our choice of logical predicate for AD as such a predicate, and make sure all of our constants are predicate preserving. This is in particular non-trivial when dealing with constants involving coproducts, such as $<:\reals\times\reals\to\BB$ as the booleans are interpreted as the coproduct $1+1$. 
Lastly, we need to make sure that our choice of predicate for base types is closed under lubs of $\omega$-chains. 

We organize this section as follows. 
First, we construct a suitable category of predicates on $\wpap{}\times\wpap{}$ (i.e. a fibration for logical relations). 
Second, we construct a property (Definition~\ref{defn:property}) on this category of predicates.
Note that the fact that derivatives are not unique any more is what will make this logical relation more complex. 
Lastly, we will show the correctness of AD by combining these elements and Theorem~\ref{thm:logrelwiter}.

\subsubsection{A fibration for logical relations on \ensuremath{\wpap{}\times\wpap{}}}
\label{subsub:log-rel}

\begin{definition}
    We define the category $\Injpap$, a subcategory of $\cpap{}$, whose objects are c-analytic sets and morphisms are PAP injections.  
\end{definition}

\begin{lemma}
     $\Injpap$ is a site, with coverage $\mathcal{J}$ where coverings are given as in \cpap{}.
\end{lemma}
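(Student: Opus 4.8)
The plan is to exploit the fact that $\Injpap$ is a (non-full) subcategory of $\cpap$ sharing the same objects, so that most of the work of Proposition~\ref{prop:cpap-concrete-site} can be reused almost verbatim; the only genuinely new obligation is to check that every morphism produced while verifying the coverage axioms is an \emph{injection}. First I would confirm that the proposed covering families actually live in $\Injpap$: a countable c-analytic partition $\{U_i\}_i$ of $U$ is presented by the inclusions $U_i \hookrightarrow U$, and these are PAP by Corollary~\ref{cor:inclispap} and injective by construction, hence bona fide morphisms of $\Injpap$. Smallness of $\Injpap$ is inherited from $\cpap$, since the two categories have the same (small) collection of objects.

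The heart of the argument is the stability axiom of a coverage. Given a PAP injection $g : C \to D$ and a covering $\{D_i\}_{i \in I}$ of $D$, I would take the pullback partition $C_i := g^{-1}(D_i)$. Each $C_i$ is c-analytic by Corollary~\ref{cor:preimage-c-analytic}, and since $\{D_i\}_i$ partitions $D$ and $g$ lands in $D$, the family $\{C_i\}_i$ is a countable c-analytic partition of $C$; thus $\{C_i \hookrightarrow C\}_i$ is a covering of $C$ in $\Injpap$. For the factorization condition, $g$ restricted to $C_i$ maps into $D_i$, so $g|_{C_i} : C_i \to D_i$ is a PAP injection (a restriction of the injection $g$), and $g \circ (C_i \hookrightarrow C) = (D_i \hookrightarrow D) \circ g|_{C_i}$ exhibits the required factorization through $D_i \hookrightarrow D$. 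This step is strictly simpler than the corresponding one for $\cpap$, because the pullback of a partition along an injection is again a partition, so no further refinement into analytic pieces is needed.

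The remaining coverage axioms concern only the combinatorics of partitions and transfer directly from the proof of Proposition~\ref{prop:cpap-concrete-site}: the identity is the trivial one-block partition; the initial object $\emptyset$ is covered by the empty family; a partition of $U$ refined by partitions of each block is again a countable c-analytic partition (a countable union of countable families is countable); and the images of a covering family exhaust $\conc{U}$ by the definition of a partition. In each case the witnessing maps are inclusions or composites of inclusions of c-analytic subsets, hence PAP injections, so they stay inside $\Injpap$.

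I expect no serious obstacle: the only subtlety is ensuring that the factorizing map $g|_{C_i}$ and all refinement maps remain injective, which is immediate since restrictions and composites of injections are injective, while c-analyticity of the preimages is exactly Corollary~\ref{cor:preimage-c-analytic}. In effect, the lemma holds because the class of PAP injections is closed under the two operations the coverage axioms require — restriction to a c-analytic subset and composition — and because preimages of c-analytic sets under PAP maps remain c-analytic.
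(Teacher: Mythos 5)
Your proposal is correct and matches the paper's approach: the paper's own proof is the one-line remark that the argument for $\cpap$ carries through to this restricted setting, and your write-up simply fills in the details, correctly identifying that the only new obligation is injectivity of the factoring maps $g|_{C_i}$. (Minor quibble: the stability step is not really simpler because $g$ is injective — the preimage of a partition under \emph{any} map is a partition of the domain — but this does not affect the argument.)
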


\begin{proof}
    The proof for the case of \cpap{} carries through to this restricted setting.
\end{proof}

\begin{definition}
We define $\Shinj$ to be the category of sheaves on the site $\Injpap$. 
\end{definition}

\begin{lemma}
    $\Shinj$ is a Grothendieck topos, and in particular it is Cartesian-closed, complete and co-complete.
\end{lemma}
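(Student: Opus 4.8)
The plan is to recognize this statement as a direct instance of the foundational theorem of topos theory: the category of sheaves of sets on any small site is a Grothendieck topos. Most of the work has already been done by the preceding lemma, which establishes that $\Injpap$ equipped with the coverage $\mathcal{J}$ (inherited from $\cpap{}$) is a site. So the proof should reduce to assembling this fact with standard results, and no substantial new mathematics is required.

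Concretely, I would proceed in the following steps. First, I would confirm that $\Injpap$ is genuinely \emph{small}: its objects are c-analytic sets, and the collection of all such sets is bounded in cardinality (each is a countable union of analytic sets, each cut out by finitely many analytic functions on an open domain, of which there are at most continuum-many), so they form a set rather than a proper class. This smallness was already implicitly needed in order to call $\cpap{}$ a concrete site in the sense of \citet{matache2022concrete}, and $\Injpap$ inherits it as a wide subcategory. Second, I would recall that any coverage generates a unique Grothendieck topology having exactly the same category of sheaves; thus $\Shinj$ is the category of sheaves for a genuine Grothendieck topology on a small category. Third, I would invoke the standard theorem (e.g.\ Mac~Lane and Moerdijk) that the category of sheaves of sets on a small site is, by definition, a Grothendieck topos. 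Finally, I would cite the classical structural facts that every Grothendieck topos is an elementary topos which is moreover complete and cocomplete; Cartesian closedness and the existence of all small limits and colimits then follow immediately.

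The one point deserving care — rather than a genuine obstacle — is the distinction between the category of \emph{all} sheaves considered here and the $\omega$-concrete sheaves used elsewhere in the paper: unlike the concrete-sheaf constructions, $\Shinj$ is the full sheaf topos, which is precisely what makes the clean topos-theoretic conclusion available. The only other thing to keep honest is the size bookkeeping above; once smallness is granted, every remaining assertion (topos, Cartesian closed, complete, cocomplete) is a black-box appeal to classical topos theory, with no computation to grind through.
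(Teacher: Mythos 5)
Your proposal is correct and matches the paper's approach: the paper's proof is a one-line appeal to standard sheaf theory (citing Mac Lane--Moerdijk), exactly the black-box invocation you describe. Your additional remarks on smallness and on the coverage generating a Grothendieck topology are reasonable bookkeeping that the paper leaves implicit.
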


\begin{proof}
    This is standard sheaf theory, see e.g. \citet{maclane2012sheaves}.
\end{proof}

By a subsheaf $P$ of $F$, we mean a sheaf $F$ such that for all c-analytic set $A$, $P(A)\subseteq F(A)$, and $P(f)$ is the restriction of $F(f)$ for all morphisms $f$.

\begin{definition}
    We denote by $\Subshinj$ the category whose objects are pairs of sheaves $(P,F)$ of $\Shinj$, where $P$ is a subsheaf of $F$, and morphisms $(P,F)\to (P',F')$ are natural transformations  $F\to F'$ that restrict to the subsheaves.
\end{definition}

\begin{lemma}
The second projection $(P,F)\mapsto F$ is a fibrations for logical relations $p:\Subshinj\to\Shinj$.
\end{lemma}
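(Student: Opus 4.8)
The plan is to verify directly that $p:\Subshinj\to\Shinj$ satisfies each clause in the definition of a fibration for logical relations, using the fact (just established) that $\Shinj$ is a Grothendieck topos and is therefore bicartesian closed, complete and cocomplete, with epi--mono factorizations and with every subobject poset $\mathrm{Sub}(F)$ a complete lattice. Faithfulness of $p$ is immediate: a morphism of $\Subshinj$ is by definition a single natural transformation of $\Shinj$ subject to a side condition, so two parallel morphisms with equal image under $p$ must coincide.

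For the bifibration structure I would use the two canonical constructions on subobjects available in any topos. Given $(P',F')$ and $u:F\to F'$ in $\Shinj$, the Cartesian lift is $(u^*P',F)$, where $u^*P'\mono F$ is the pullback of $P'\mono F'$ along $u$; pullbacks of monos are monos and the sheaf condition is preserved, so $u^*P'$ is again a subsheaf, and the required universal property is exactly that of the pullback. Dually, the opcartesian lift of $(P,F)$ along $u$ is $(\exists_u P,F')$, where $\exists_u P\mono F'$ is the image of the composite $P\mono F\to F'$ under the epi--mono factorization in $\Shinj$. This simultaneously shows $p$ is a fibration and an opfibration. Each fibre $p^{-1}(F)=\mathrm{Sub}(F)$ is a poset, and it has small products given by intersections (arbitrary meets), which exist because subobject lattices in a Grothendieck topos are complete; reindexing $u^*$ preserves these meets because it is a right adjoint, having $\exists_u$ as a left adjoint.

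It remains to equip the total category $\Subshinj$ with bicartesian closed structure that $p$ preserves on the nose. Products and coproducts are computed componentwise: set $(P,F)\times(Q,G):=(P\times Q,\,F\times G)$ and $(P,F)+(Q,G):=(P+Q,\,F+G)$, using that in a topos coproduct injections are monic and disjoint, so $P+Q\mono F+G$; the terminal and initial objects are $(1,1)$ and $(0,0)$. In each case the underlying $\Shinj$-object is the corresponding (co)product, so $p$ strictly preserves them. For the exponential I would take the underlying object to be the exponential $F\Rightarrow G$ of $\Shinj$, and the predicate to be the subsheaf $(P,F)\Rightarrow(Q,G)$ cut out, in the internal logic of the topos, by the formula $\forall x.\,x\in P\to f(x)\in Q$; equivalently, objectwise via Kripke--Joyal forcing, it assigns to each c-analytic set $A$ those morphisms in $(F\Rightarrow G)(A)$ that carry the restriction of $P$ into that of $Q$ at every stage.

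The routine clauses above are genuinely routine once the topos structure of $\Shinj$ is in hand; the one step that requires care is the construction of the exponential and the verification that the adjunction bijection $\Subshinj\big(C\times(P,F),\,(Q,G)\big)\iso\Subshinj\big(C,\,(P,F)\Rightarrow(Q,G)\big)$, for an arbitrary object $C\in\Subshinj$, restricts correctly to the predicates, since this is where the higher-order structure of the subobject fibration is used. I expect this to be the main obstacle. Once it is settled, $p$ strictly preserves the bicartesian closed structure by construction, and all hypotheses of the definition are met, so $p$ is a fibration for logical relations. (Alternatively one can observe that this is an instance of the general fact that the subobject fibration of any elementary topos is a fibration for logical relations, but the direct verification above is self-contained.)
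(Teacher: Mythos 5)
Your proposal is correct, but it takes a genuinely different route from the paper's proof. The paper handles the cartesian-closed structure of $\Subshinj$ and its strict preservation by $p$ by appealing to the general theory of glueing along a mono-preserving functor (citing Carboni--Johnstone, Johnstone, and Mitchell--Scedrov), and it handles the opfibration property and preservation of coproducts by invoking Lemma~4.6 of \citet{kammar2018factorisation}, which reduces everything to the single fact that monomorphisms in $\Shinj$ are closed under coproducts; that fact is then proved abstractly by observing that image factorization makes the category of monos a reflective, hence cocomplete, subcategory of the arrow category. You instead verify the clauses by hand: cartesian lifts as pullbacks of monos, opcartesian lifts as images under the epi--mono factorization, fibrewise products as intersections in the complete subobject lattices, componentwise (co)products, and the exponential cut out by the internal-logic formula $\forall x.\,x\in P\to f(x)\in Q$. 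Both arguments rest on the same underlying fact (that $\Shinj$ is a Grothendieck topos), and your explicit construction of opcartesian lifts is arguably more transparent than the paper's detour through reflective subcategories. The one place where you stop short --- checking that the currying bijection restricts to the predicates for your candidate exponential --- is precisely the step the paper discharges by citation to the glueing literature, and it is a standard verification for subobject glueing in a topos, so it is a deferral rather than a gap; if you want the argument to be fully self-contained you should write out that adjunction check (or explicitly cite the glueing results as the paper does).
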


\begin{proof}
    As $\Shinj$ has pullbacks, it is a fibration. It is clearly faithful. 
     By the theory of glueing (\citep{carboni1995connected,johnstone2007quasitoposes,johnstone2002sketches,mitchell1992notes}), as $\Subshinj$ is a CCC and has an epi-mono factorisation system, 
      $\Subshinj$ is a CCC, has finite colimits, and $p$ preserves the CCC-structure. In addition, each fibre category is a partial order and has small products.
    It remains to show that $cod$ is an opfibration and that $p$ preserves coproducts. By Lemma 4.6 in \citet{kammar2018factorisation}, it suffices to show that monos are closed under coproducts, as we already have the other conditions for having a factorization system for logical relations. 
    The arrow category on  $\Shinj$  is cocomplete, as $\Shinj$ is (it is a topos). As $\Shinj$ is a topos, it has a epi-mono factorization system, and therefore it has image factorization. Therefore, the category of monos is  a reflective subcategory of the arrow category. As such, it is cocomplete, and in particular it has coproducts. So monos are closed under coproducts, and we conclude by Lemma 4.6 in \citet{kammar2018factorisation}, as said above.
\end{proof}

We now define the functor $F:\wpap{}\times\wpap{}\to \Subshinj$ given by $F(X,Y)= X\times \prod_{i\in \NN}Y$, where $X\times  \prod_{i\in \NN}Y$ forgets that it is an $\omega$-concrete sheaf on the site $\cpap{}$, and only remembers that it is in particular a sheaf on the site $\Injpap$. It is a product preserving functor, and therefore, by Proposition~\ref{propn:changeofbase}, the pullback of $p:\Subshinj\to\Subshinj$ along $F$ is a fibration for logical relations.
We denote this pullback by $\pi:\Gl:\to \wpap{}\times\wpap{}$.

\begin{remark}
    Before moving on, let us say a few words on this category of predicates $\Gl$. 
One reason why we defined the category of sheaves $\Subshinj$ is that the property we will define for the correctness of AD will \emph{not} be a concrete subsheaf, and this excludes directly using subobjects on $\wpap{}$ as the fibration for logical relations of interest. 
Secondly, the base type $\BB$ is interpreted as a coproduct, and if we simply choose the category of presheaves on $\Injpap$, primitives like $>$ will not lift, for a similar reason to the one presented in \citet{vakar2020denotational}. We could have chosen sheaves on $\cpap{}$ instead. This would not change much, but this is a bit overkill for our purpose and simply adds some unnecessary complexity. 
Thirdly, we do seem to need to have the property indexed by all c-analytic sets $A$.
One reason is that the fact that we have partial functions forces the definition of a lift for $\lift \RR$, as well as for $\RR$. A second reason is that the non-uniqueness of intentional derivatives creates a technical complication: being a correct intentional derivative at a point is essentially non-informative, so we cannot use the same trick as we did in the smooth case. 
That is, the problem is that the input space $\RR$ is smaller than $\RR^n$, and the quantification only shows that $h$ is `correct' on a set of at most the size of $\RR$ inside $\RR^n$, which is Lebesgue-negligible. This problem disappears when we can show that $h$ is correct on each open of an open cover of its domain.
\end{remark}

\subsubsection{A property for the correctness of AD}
\label{subsubsec:property-ad}

\begin{definition}
    Let $f:A\subseteq \RR^n\to\RR$  a PAP function. 
    Given any intentional derivative $g:A\to \RR^n$ of $f$, we call $\pi_i \circ g:A\to\RR$ an $i$-th partial intentional derivative of $f$.

    We write $hot_i$ for the vector in $\RR^n$ that consist only of zeros, except it has a single 1 at the $i$-th position.
    Given any dual-number intentional representation $g:A\times \RR^n\to \RR^2$ of $f$, define the PAP function $g_i:A\to \RR^2$ given by
    \[g_i(x)=g(x,hot_i)\]
    $g_i$ is called an $i$-th partial dual-number intentional representation of $f$, and its second component is an $i$-th partial intentional derivative of $f$. 
    We write $\partial^i_{DNR}f$ for the set of $i$-th partial dual-number intentional representations of $f$. 

    We extend the definition of $\partial^i_{DNR}f$ to all natural numbers $i$ by setting $\partial^i_{DNR}f=\{\lambda x.0\}$ for any $i> n$.
\end{definition} 

\begin{definition}
    Let $V(\RR):= (\RR,\prod_{i\in\NN}\RR\times\RR,V_\RR)$ where for each c-analytic set $A$, we have 
    \begin{align*}
        V_\RR(A)&=\{(f:A\to\RR,(g_i:A\to\RR^2)_{i\in \NN})~\mid\\
        &\qquad~f,g\text{ PAP, and }g_i\in \partial^i_{DNR}f \}
    \end{align*}

    Likewise, let $C(\RR):=(\lift \RR, \lift(\prod_{i\in\NN}\RR\times\RR),C_\RR)$ where for each c-analytic set $A$, we have 
    \begin{align*}
        C_\RR(A) &=\{(f:A\to\lift \RR,g:A\to\lift(\prod_{i\in\NN}\RR\times\RR))~\mid~\\
        &\qquad \dom(f)=\dom(g) \text{ and}\\
        &\qquad \forall i.(\widetilde{f},\widetilde{L\pi_i\circ g})\in V_\RR(\dom(f))\}
    \end{align*}

     Let $V(\BB)=(\BB,\prod_{i\in\NN} \BB,V_\BB)$ where for each c-analytic set $A$, we have 
    \[V_\BB(A)= \{f:A\to \BB, g:A\to \prod_{i\in\NN}\BB~\mid~ \forall i, \pi_i\circ g=f\}\]

    Let $C(\BB)= (\lift \BB, \lift(\prod_{i\in\NN}\BB),C_\BB)$ where for each c-analytic set $A$, we have 
    \begin{align*}
        C_\BB(A) &=\{(f:A\to\lift \BB,g:A\to\lift(\prod_{i\in\NN}\BB))~\mid~ \\
        &\qquad\dom(f)=\dom(g) \text{ and}\\
        &\qquad \forall i.(\widetilde{f},\widetilde{L\pi_i\circ g})\in V_\BB(\dom(f))\}
    \end{align*}
\end{definition}

\begin{lemma}
   We have defined a property for our language.
\end{lemma}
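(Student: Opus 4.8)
The plan is to unwind what the Definition of a property demands in our concrete situation and then reduce everything to two closure properties of intensional derivatives that have, in effect, already been isolated in the proof sketch of Lemma~\ref{lem:fundamental}. By that definition, a property over the $\lcstruct$-structure is a pair of functors $V, C : B \to \Gl$ with $\pi \circ V = A$ and $\pi \circ C = T \circ A$, where $B = \{\reals, \BB\}$ is the (discrete) set of base types, $A$ is their interpretation in $\wpap{}\times\wpap{}$, and $T$ is the pseudo-lifting monad induced by $\lift$. Since $B$ carries no non-identity morphisms, there is no functoriality to check for $V$ and $C$ themselves; the whole content is that the four assignments $V(\reals), V(\BB), C(\reals), C(\BB)$ are genuine objects of $\Gl$ sitting over the prescribed base objects. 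Unwinding the pullback $\pi : \Gl \to \wpap{}\times\wpap{}$ along $F$, an object over $(X,Y)$ is precisely a subsheaf (on the site $\Injpap$) of the forgetful sheaf $F(X,Y) = X \times \prod_{i\in\NN} Y$. So the task is to check that each of the families of sections $V_\reals, V_\BB, C_\reals, C_\BB$ assembles into a subsheaf of the appropriate forgetful sheaf, and that its underlying sheaf is the image under $F$ of the prescribed base object ($A(b)$ for $V$, $T(A(b))$ for $C$); this last agreement is immediate from the way the four objects are written.

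For the value predicates I would verify the subsheaf conditions directly. A subsheaf must be (i) a sub\emph{presheaf}, i.e. stable under the restriction maps of $F(\cdot)$ (precomposition with morphisms of $\Injpap$), and (ii) closed under gluing of compatible families over the coverings of $\Injpap$, which by the coverage inherited from $\cpap{}$ (Proposition~\ref{prop:cpap-concrete-site}) are countable c-analytic partitions. For $V_\BB$ the defining condition ($\pi_i \circ g = f$ for all $i$) is manifestly preserved by precomposition and by gluing, so both are trivial. For $V_\reals$, condition (ii) is exactly the statement that $V_\reals(A)$ is closed under \emph{piecewise gluing}, and condition (i) is the \emph{restriction} closure property; both were recorded in the proof sketch of Lemma~\ref{lem:fundamental} and follow from the definition of intensional derivative via analytic partitions. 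The extension $\partial^i_{DNR} f = \{\lambda x.0\}$ for $i$ exceeding the ambient dimension, together with the countable product $\prod_{i\in\NN}$, makes these checks uniform in the arity.

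For the computation predicates $C_\reals$ and $C_\BB$, the additional ingredient is the partiality monad $\lift$: a section now consists of a pair of partial maps into the lifted spaces, constrained to share a common c-analytic domain and to restrict, on that domain, to a member of the corresponding value predicate via the domain-restriction $\widetilde{(\cdot)}$. I would again check the two subsheaf conditions, reducing them to the value-level facts. Restriction along an $\Injpap$-morphism preserves domain-equality and, on the restricted domain, falls back to the $V_\reals$/$V_\BB$ restriction property. For gluing, the key point is that the plots of $\lift Y$ are, by construction, defined exactly on c-analytic subsets, so the domains of a compatible family glue to a c-analytic set; on that set one then invokes the value-level gluing. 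This argument mirrors, and reuses, the reasoning that establishes $\lift\RR$ as an $\wpap{}$ space in the first place.

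\textbf{Main obstacle.} I expect the delicate step to be the restriction/functoriality condition (i) for $V_\reals$ (and hence $C_\reals$), rather than the gluing or the $\BB$ cases. Because intensional derivatives are \emph{not unique} and interact with reparametrizations through a chain-rule factor, one must be careful that the specific sections produced by the contravariant action of $\Injpap$ land back inside $V_\reals$; this is precisely the technical complication flagged in the Remark following the construction of $\Gl$, and is the reason the predicate is indexed by \emph{all} c-analytic sets $A$ and carries the full sequence $(g_i)_{i\in\NN}$ of partial dual-number representations rather than a single derivative. A secondary subtlety, on the computation side, is ensuring the glued domain remains c-analytic, which is where Corollary~\ref{cor:disjointness} and the closure properties of c-analytic sets do the work.
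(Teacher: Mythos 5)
Your proposal follows essentially the same route as the paper's own proof: reduce the lemma to showing that $V_\RR, V_\BB, C_\RR, C_\BB$ are subsheaves of the corresponding forgetful sheaves on $\Injpap$, verify the $C$'s by reduction to the $V$'s, and discharge $V_\RR$ via the restriction and piecewise-gluing closure properties of (partial) intensional derivatives; the only cosmetic difference is that the paper handles $V_\BB$ by observing it is an equalizer in a topos, whereas you check the two subsheaf conditions directly. One remark: the ``main obstacle'' you flag --- that the contravariant action of a general PAP \emph{injection} (as opposed to a subset inclusion) introduces a chain-rule factor, so it is not obvious that $(f\circ h,(g_i\circ h)_i)$ lands back in $V_\RR$ --- is a real concern, but note that the paper's proof does not discharge it either: it only verifies that partial intensional derivatives ``restrict to a subset of the domain,'' i.e.\ closure under inclusions, so on this point your write-up is, if anything, more candid than the paper about where the remaining work lies.
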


\begin{proof}
    The only thing to show is that $V_\RR,C_\RR,V_\BB,C_\BB$ are subsheaves of $\RR\times \prod_{i\in \NN}\RR^2, \lift \RR \times \lift\prod_{i\in \NN}\RR^2, \BB\times  \prod_{i\in \NN}\BB,  \lift \BB\times  \lift\prod_{i\in \NN}\BB$ respectively. 
    It is immediate for the $C$'s once we have shown the result for the $V$'s. For $V_\RR$, it amounts to showing two things. First, whether partial intentional derivatives restrict to a subset of the domain of a function, which is evidently true. Second, whether partial intentional derivatives glue if they agree on their intersection, which as for usual derivatives, is true as well. 
    For $V_\BB$, it is immediate as it is an equalizer of $\BB\times  \prod_{i\in \NN}\BB$, which exists in any topos and the canonical map $V_\BB\mono \BB\times  \prod_{i\in \NN}\BB$ is a mono. 
\end{proof}

Let us now look a bit more into what morphisms preserving the property look like.

\begin{definition}
    Let $V(\RR^k):=(\RR^k,(\prod_{i\in\NN}\RR\times\RR)^k, V_{\RR^k})$ where for each c-analytic set $A$, we have 
    \begin{align*}
        V_{\RR^k}(A)&=\{(f:A\to\RR^k,g:A\to(\prod_{i\in\NN}\RR\times\RR)^k)~\mid\\
        &\qquad~\forall 1\leq j\leq k, (\pi_j f,\pi_j g)\in V_\RR(A) \}
    \end{align*}

    Let $f:\RR^n\to \lift \RR$ represent a PAP function. 
     We define $V(f)$ to be pairs of functions $(f,h)$, where $h:(\prod_{i\in\NN}\RR\times\RR)^k\to \lift (\prod_{i\in\NN}\RR\times\RR)$
     is $\wpap{}$, and such that for any c-analytic set $A$, and any $(g_1,g_2)\in V_{\RR^k}(A)$, $(f\circ g_1, h\circ g_2)\in C_\RR(A)$.
\end{definition}

    This means that given a PAP functions $g_1,\ldots, g_n:A\to \RR$, $h$ will send $i$-th partial intentional representations of each of the $g_i$ to  $i$-th partial intentional representations of 
    $f \circ \langle g_1, \ldots, g_n \rangle$. 
    Note that if $A\subseteq \RR^k$, this will be trivially satisfied for any $i> k $.

    Let us now see how we will interpret and a lift primitive $\ter:\reals^2\to \reals$ of our language. This interpretation will only be required for the correctness proof.
    We set $\sem{\ter}_{\wpap{}\times\wpap{}}= (\sem{\ter}, \prod_{i\in\NN}\sem{\ad(\ter)})$. 
    In other words, we make countably many copies of the semantics of the AD-translation of $\ter$. 
    We can now more generally show that every primitive of the language, whose semantics is interpreted in $\wpap{}\times \wpap{}$ following the example above, will lift to $\Gl$.

    \begin{lemma}
\label{lem:primitive-lift}
    Every primitive of the language has a lift in $\Gl$.
\end{lemma}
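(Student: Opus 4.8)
The plan is to unfold what it means for the interpretation $\sem{\ter}_{\wpap{}\times\wpap{}} = (\sem{\ter}, \prod_{i\in\NN}\sem{\ad(\ter)})$ of a primitive $\ter$ to lift along $\pi : \Gl \to \wpap{}\times\wpap{}$, and then to reduce this fibrational statement, copy by copy, to the single-copy correctness of the built-in dual-number translation $\ter_{\ad} = \sem{\ad(\ter)}$ assumed before Definition~\ref{def:correct-translation}. Concretely, for a primitive $\ter : \reals^n \to \reals$, a lift exists exactly when $(\sem{\ter},\prod_{i}\sem{\ad(\ter)}) \in V(\sem{\ter})$, i.e.\ when for every c-analytic set $A$ and every pair $(g_1,g_2)\in V_{\RR^n}(A)$ we have $(\sem{\ter}\circ g_1,\ (\prod_i\sem{\ad(\ter)})\circ g_2)\in C_\RR(A)$. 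Unfolding $C_\RR$, this requires (i) that $\sem{\ter}\circ g_1$ and $(\prod_i\sem{\ad(\ter)})\circ g_2$ are defined on the same domain, and (ii) that for each copy $i$ the pair $(\widetilde{\sem{\ter}\circ g_1},\ \widetilde{\lift\pi_i\circ(\prod_i\sem{\ad(\ter)})\circ g_2})$ lies in $V_\RR(\dom)$.

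First I would exploit the product structure to localise the problem to a single copy. Because $\prod_i\sem{\ad(\ter)}$ acts coordinatewise, $\lift\pi_i\circ(\prod_i\sem{\ad(\ter)}) = \sem{\ad(\ter)}\circ\pi_i$, so the $i$-th component of the output depends only on the $i$-th copy of the dual-number data carried by $g_2$. By the definition of $V_{\RR^n}(A)$, that $i$-th copy is, coordinate by coordinate, an $i$-th partial dual-number intensional representation (an element of $\partial^i_{DNR}$) of the corresponding coordinate of $g_1$; equivalently, it is the single-copy logical-relation datum obtained by propagating the tangent direction $hot_i$. Thus condition (ii) for copy $i$ is precisely the assertion that $\sem{\ad(\ter)}$ sends $i$-th partial representations of $\pi_1 g_1,\dots,\pi_n g_1$ to an $i$-th partial representation of $\sem{\ter}\circ\langle\pi_1 g_1,\dots,\pi_n g_1\rangle$.

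This last assertion is exactly the content of the built-in correct dual-number translation assumption (Definition~\ref{def:correct-translation}) for $\ter$, specialised to the single direction $hot_i$: since $\ter_{\ad}$ is by hypothesis a correct dual-number translation of $\sem{\ter}$, feeding it $V_\reals$-related dual numbers yields a correct lifted dual-number intensional derivative of the composite on the common domain. This simultaneously settles the domain-equality condition (i), since a correct translation is by definition undefined exactly where the primal is, and composing with the jointly-defined $g_1,g_2$ preserves this. The edge case $i>N$ (where $A\subseteq\RR^N$) is immediate, as $\partial^i_{DNR}$ is $\{\lambda x.0\}$ there and the AD macro emits zero tangents. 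For primitives with Boolean codomain, such as $<:\reals\times\reals\to\BB$, I would instead check $V_\BB$: here $\ad(\BB)=\BB$ and $\sem{\ad(\ter)}$ agrees with $\sem{\ter}$ on its value component, so the defining condition $\pi_i\circ g = f$ of $V_\BB$ holds verbatim in every copy.

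I expect the main obstacle to be the bookkeeping that reconciles the product-indexed fibrational predicate $V_{\RR^n}$ with the single-copy logical relation of Definition~\ref{def:correct-translation}: one must verify carefully that extracting the $i$-th copy of the dual-number data genuinely yields a bona fide single-copy $V_\reals$-element for the direction $hot_i$ (rather than mixing directions), and that the domain and least-upper-bound structure of $\lift$ are respected under this extraction. Once this correspondence is pinned down, the built-in correctness of each $f_{\ad}$ discharges every case uniformly, so no primitive-specific computation is needed.
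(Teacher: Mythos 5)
The paper states Lemma~\ref{lem:primitive-lift} without any proof, so there is nothing to compare your argument against line by line; I can only judge it against the surrounding machinery. Your overall route is the natural one and is consistent with how the paper uses the lemma: unfold the lift condition into membership of $(\sem{\ter}\circ g_1,\ (\prod_i\sem{\ad(\ter)})\circ g_2)$ in $C_\RR(A)$, observe that the countable product acts coordinatewise so that the $i$-th copy only ever sees $i$-th partial dual-number data, and discharge each copy using the assumed correctness of the built-in translation $f_\ad$. The Boolean case and the $i>N$ edge case are handled correctly in spirit (though note the paper's convention $\partial^i_{DNR}f=\{\lambda x.0\}$ for $i>n$ is literally about the whole pair, not just the tangent; read charitably it must mean zero tangent with correct value, or the output copy for $i>N$ could not land back in $\partial^i_{DNR}$).

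The one place where your proposal stops short of a proof is exactly the step you yourself flag as ``the main obstacle'': showing that if each coordinate of the $i$-th input copy is of the form $h_k(\cdot,hot_i)$ for some dual-number intensional representation $h_k$ of $\pi_k g_1$, then $\sem{\ad(\ter)}(h_1(\cdot,hot_i),\dots,h_n(\cdot,hot_i))$ is again of the form $H(\cdot,hot_i)$ for a dual-number intensional representation $H$ of $\sem{\ter}\circ g_1$. This is the actual mathematical content of the lemma: one must exhibit $H(x,v):=\sem{\ad(\ter)}(h_1(x,v),\dots,h_n(x,v))$ and invoke the closure of intensional derivatives under composition (the PAP chain rule of \citet{lee2020correctness}, packaged in the paper as Lemma~\ref{lem:partial-deriv-to-full-ok}) to see that $H$ is a bona fide representation; only then does the $hot_i$ slice land in $\partial^i_{DNR}(\sem{\ter}\circ g_1)$. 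Relatedly, you appeal to Definition~\ref{def:correct-translation}, but that definition is stated only for parametrizations by c-analytic $A\subseteq\RR$, whereas here $A$ ranges over c-analytic subsets of arbitrary $\RR^N$; you need (and should state that you are assuming) the multivariate form of the built-in hypothesis on primitives, which is what the appendix's $V_\RR$ is designed to express. Finally, for Boolean-valued primitives such as $<$, the condition $\pi_i\circ g=f$ holds pointwise as you say, but verifying that the resulting pair is an element of the subsheaf $V_\BB(A)$ uses the sheaf (gluing) structure of $\Shinj$ over the c-analytic partition induced by the branch of $<$; this is precisely why the paper works with sheaves on $\Injpap$ rather than presheaves, and is worth a sentence in a complete write-up.
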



Finally, to satisfy the conditions of Theorem~\ref{thm:logrelwiter}, we need to show that for each base $B$, $C(B)$ is admissible.

\begin{lemma}
  \label{lem:cb-admissible}
    \label{lem:cr-admissible}
    $C(\BB)$ and  $C(\RR)$  are admissible.
\end{lemma}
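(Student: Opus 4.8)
The plan is to check the two clauses of the definition of admissibility for each of $C(\BB)$ and $C(\RR)$, and to reduce both to a closure property of the fibre predicates. For any $Y$ and any $\omega$-chain $f_i : Y \lifted C(\RR)$, each $f_i$ is a base morphism $pY \to pC(\RR)$ whose action on every section of the $Y$-predicate lands in the fibre $C_\RR(A)$; since $\wpap{}\times\wpap{}$ is $\wcpo{}$-enriched and plots are closed under pointwise suprema of chains (Definition~\ref{def:wpap}), for any section $\phi$ we have $(\bigsqcup_i f_i)\circ\phi = \bigsqcup_i(f_i\circ\phi)$. Hence $\bigsqcup_i f_i$ lifts as soon as each $C_\RR(A)$ (resp.\ $C_\BB(A)$) is closed under least upper bounds of $\omega$-chains taken in the ambient $\wcpo{}$. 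The first (bottom) clause similarly amounts to checking that the constantly-$\bot$ section lies in every fibre; this is immediate, since the bottom element of $\lift\RR$ and of $\lift\BB$ is the everywhere-undefined map, whose domain is empty, so the domain-equality and derivative conditions defining $C_\RR$ and $C_\BB$ hold vacuously.

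The Boolean case of the closure property is easy and I would dispatch it first. Because $\sem{\BB}$ carries the discrete (equality) order, $\lift\BB$ and $\lift(\prod_i \BB)$ are flat domains, so any increasing chain $(f_n, g_n)$ in $C_\BB(A)$ stabilises pointwise: $f_n(x)\neq\bot$ forces $f_m(x)=f_n(x)$ for all $m\geq n$, and likewise for $g_n$. Therefore $\dom(\bigvee_n f_n)=\bigcup_n\dom(f_n)$ and $\dom(\bigvee_n g_n)=\bigcup_n\dom(g_n)$; from the level-wise equalities $\dom(f_n)=\dom(g_n)$ and $\pi_i\circ g_n = f_n$ we obtain the same relations for the suprema, so the lub again lies in $C_\BB(A)$.

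The real case is the main obstacle, and its proof mirrors the verification that $\lift\RR$ is an $\wpap{}$ space (the fourth clause of Definition~\ref{def:wpap}) and the third bullet of the proof of Lemma~\ref{lem:fundamental}. Given an increasing chain $(f_n, g_n)$ in $C_\RR(A)$, flatness of $\lift\RR$ and of $\lift(\prod_i \RR\times\RR)$ again yields nested domains $\dom(f_n)=\dom(g_n)\subseteq\dom(f_{n+1})$ and pointwise stabilisation, so on all of $\dom(f_n)$ we have $\bigvee_m f_m = f_n$ and $\bigvee_m g_m = g_n$. For each $n$ I would fix a piecewise-analytic family $\{(A^n_j, f^n_j)\}_j$ witnessing $(f_n, g_n)\in C_\RR$. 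Collecting all the analytic pieces $\{A^n_j\}_{n,j}$ --- a countable family whose union is $\dom(\bigvee_m f_m)$ --- and applying the disjointification of Corollary~\ref{cor:disjointness} (together with closure of analytic sets under intersection) produces a countable \emph{analytic partition} $\{B_k\}_k$ of $\dom(\bigvee_m f_m)$ in which every $B_k$ is contained in a single original piece $A^n_j$, hence inside $\dom(f_n)$. On such a $B_k$ the chain has stabilised, so $\bigvee_m f_m = f^n_j$ and the derivative part of $\bigvee_m g_m$ equals $\frac{d}{dx}f^n_j$; taking $f^n_j$ on its open neighbourhood $U^n_j \supseteq B_k$ as the analytic representative shows that $\{(B_k, f^n_j)\}_k$ witnesses $(\bigvee_m f_m, \bigvee_m g_m)\in V_\RR(\dom(\bigvee_m f_m))$ at each index, i.e.\ that the lub lies in $C_\RR(A)$.

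The delicate point throughout is the passage from the (merely c-analytic) domains $\dom(f_n)$ to an honest \emph{analytic} partition of their union whose blocks respect the level structure. Naively setting $D_n = \dom(f_n)\setminus\dom(f_{n-1})$ fails, since complements of c-analytic sets need not be c-analytic (the Cantor set is a counterexample). The correct route, and the crux of the argument, is to disjointify the countable family of \emph{analytic pieces} coming from the representations of the $f_n$ rather than the domains themselves, which is exactly what the closure results of Appendix~\ref{sub:c-analytic-disjoint} guarantee.
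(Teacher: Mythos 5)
Your proof is correct; the paper in fact prints no proof of this lemma, but the strategy it indicates elsewhere (the third bullet of the proof sketch of Lemma~\ref{lem:fundamental}: explicitly construct piecewise-analytic representations of the least upper bounds, mirroring the verification that $\lift\RR$ is an $\wpap{}$ space) is exactly what you carry out. Your reduction to fibrewise closure of $C_\RR(A)$ and $C_\BB(A)$ under chain suprema, the vacuously satisfied bottom clause, the flat-domain stabilisation argument, and the key step of disjointifying the countable family of analytic \emph{pieces} via Corollary~\ref{cor:disjointness} rather than taking set differences of the (merely c-analytic) domains all match the intended argument.
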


\subsubsection{Correctness of AD}

\begin{lemma}
\label{lem:wpapisbiccc}
\wpap{} is an \wcpo{}-enriched bi-CCC.
\end{lemma}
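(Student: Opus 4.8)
The plan is to exploit the identification, already established in Proposition~\ref{prop:cpap-concrete-site}, of $\wpap{}$ with the category of $\omega$-concrete sheaves on the concrete site $\cpap{}$. The general theory of \citet{matache2022concrete} shows that any such category of $\omega$-concrete sheaves is a bi-CCC: finite products and exponentials come from Cartesian closure, and finite coproducts from the binary (and empty) coproducts, whose concrete descriptions for $\wpap{}$ are exactly the Products, Coproducts (Example~\ref{example:coproducts}, with the initial object recovered from the empty index set) and Exponentials spaces we already constructed. So the first task is merely to record that the bicartesian-closed structure is in place; nothing new needs to be proved here.

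The substantive content is the $\wcpo{}$-enrichment. First I would note that each homset $\wpap{}(X,Y)$ is precisely the carrier $\conc{X \Rightarrow Y}$ of the exponential, equipped with the pointwise order $f \leq g \iff \forall x.\, f(x) \leq_Y g(x)$. I would verify that this poset is an $\wcpo{}$ by showing that the pointwise supremum $f := \bigvee_n f_n$ of an $\omega$-chain $(f_n)_n$ of $\wpap{}$ maps is again a $\wpap{}$ map. Plot preservation is immediate from the fourth closure axiom of Definition~\ref{def:wpap}: for any plot $\phi \in \plots{X}^A$ we have $(f \circ \phi)(a) = \bigvee_n (f_n \circ \phi)(a)$, a pointwise supremum of an $\omega$-chain of plots in $\plots{Y}^A$, hence a plot. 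Scott-continuity of $f$ follows from the standard interchange of suprema, using Scott-continuity of each $f_n$ together with the fact that $\conc{Y}$ is an $\wcpo{}$.

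It then remains to check that composition, tupling $\langle -,-\rangle$, cotupling $[-,-]$ and currying $\lambda(-)$ are monotone and continuous. Here the key observation is that each of these operations is computed pointwise on underlying functions, and that suprema of $\omega$-chains in every hom-$\wcpo{}$, and in every product, coproduct and exponential carrier, are likewise computed pointwise (componentwise for products; within a single summand for coproducts, since elements of distinct summands are incomparable). Monotonicity is then immediate, and continuity reduces to interchanging suprema. For tupling, cotupling and currying this is a one-line calculation evaluating both sides at a point. For composition it is the only step requiring slight care: to show $(\bigvee_n g_n)\circ(\bigvee_m f_m) = \bigvee_k (g_k \circ f_k)$, I would evaluate at a point, use Scott-continuity of each $g_n$ to pull the supremum over $m$ outside, and then collapse the resulting double $\omega$-chain to its diagonal. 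This diagonal/double-supremum argument for composition is the main (though routine) obstacle; once it is in place, all the axioms of an $\wcpo{}$-enriched bi-CCC follow.
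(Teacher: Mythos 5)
Your proof is correct and follows essentially the same route as the paper's: the bicartesian-closed structure is inherited from the $\omega$-concrete-sheaf machinery, and the enrichment is verified by observing that lubs of $\omega$-chains are computed pointwise, so that composition, tupling, cotupling and currying commute with them. You are somewhat more careful than the paper in checking that the pointwise supremum of a chain of $\wpap{}$ maps is again a $\wpap{}$ map (via the fourth plot-closure axiom of Definition~\ref{def:wpap}), a step the paper compresses into the remark that the forgetful functor to $\wcpocat$ preserves and reflects the relevant structure.
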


\begin{proof}
Each homset is an \wcpo{} by forgetting the \pap{}-structure.
For composition and coproducts, it is immediate as lubs are taken point-wise.
For products, if $f = \bigvee_i f_i$ and $g = \bigvee_i g_i$, then $(f,g) = ( \bigvee_i f_i, \bigvee_i g_i)=  \bigvee_i (f_i,g_i)$. 
Finally, if $f_i: X\times Y\to Z$, for currying we have $(\bigvee f_i)(x,y)=\bigvee f_i(x,y)= \bigvee \lambda(f_i)(x)(y)$. 
As lubs are taken pointwise, we have $\bigvee \lambda(f_i)(x)(y) = (\bigvee \lambda(f_i)(x))(y)$.
This means that
$\lambda (\bigvee f_i) = \bigvee \lambda(f_i)$.
The monotonicity part is similar.

In summary, it follows from the fact that the forgetful functor $\wpap{}\to \wcpocat{}$ preserves and reflects limits, coproducts, exponentials, and the \wcpo{}-structure.
\end{proof}

\begin{lemma}
\label{lem:partial-deriv-to-full-ok}
    Let $f:A\subseteq \RR^n\to \RR$ be PAP. Then $h:A\times \RR^{n}\to \RR^2$ is a dual-number representation of $f$ iff for all natural number $i$, 
    $h(x,hot_i)$ is an $i$-th partial dual number representation of $f$.
\end{lemma}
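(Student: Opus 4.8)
The statement is a biconditional whose two directions have quite different characters, so the plan is to unfold both notions into their witnessing piecewise-analytic data and to exploit that the tangent slot of a dual number enters \emph{linearly}. Throughout I will use that a dual-number (intentional) representation of $f$ comes with a countable analytic partition $\{A_k\}_k$ of $A$ and analytic functions $f_k$ (agreeing with $f$ on $A_k$) so that, on $A_k\times\RR^n$, the map has the form $h(x,v)=(f_k(x),\nabla f_k(x)\cdot v)$, while an $i$-th partial representation is by definition the restriction of such a map to $v=hot_i$.

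For the ($\Rightarrow$) direction I would argue essentially by definition. If $h$ is a dual-number representation of $f$, fix $i$ and specialise the piecewise formula to $v=hot_i$, obtaining $h(x,hot_i)=(f_k(x),\partial_i f_k(x))$ on each $A_k$. The very same partition $\{A_k\}_k$ and analytic functions $f_k$ then witness that $h(\cdot,hot_i)$ is an $i$-th partial dual-number representation of $f$. (For $i>n$ the vector $hot_i$ is zero, so $h(x,hot_i)=(f(x),0)$ has vanishing derivative component, matching the extended definition of $\partial^i_{DNR}f$.)

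The ($\Leftarrow$) direction is the substance. For each $i\in\{1,\dots,n\}$ the hypothesis supplies a dual-number representation $g^{(i)}$ with $h(\cdot,hot_i)=g^{(i)}(\cdot,hot_i)$; unfolding it yields a countable analytic partition $\{A^i_k\}_k$ and analytic $f^i_k$ with $f^i_k=f$ on $A^i_k$ and $\pi_2 h(x,hot_i)=\partial_i f^i_k(x)$ there. I would then pass to the common refinement $\{A_j\}_j$ of these finitely many partitions; since analytic sets are closed under finite intersection and a finite product of countable index sets is countable (cf. Proposition~\ref{prop:c-analytic-disjoint}), this is again a countable analytic partition of $A$, and on each $A_j$ every coordinate direction carries a representing analytic function. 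Using that $v\mapsto\pi_2 h(x,v)$ is linear reduces the goal to matching the $n$ basis directions, so it remains to produce, on each piece, a single analytic function whose value is $f$ and whose gradient reproduces the prescribed directional data for all $v$.

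The main obstacle is exactly this per-piece reconstruction. On pieces $A_j$ of full dimension it is immediate: a function agreeing with $f$ on a set of nonempty interior is pinned down by the identity theorem for analytic functions, so the directional witnesses $f^i_k$ necessarily coincide and their common gradient does the job. The delicate case is the lower-dimensional strata, where two analytic functions can agree with $f$ on the (thin) piece yet carry different gradients, so that naive gluing of the per-direction data can be tangentially inconsistent. Here I would exploit the flexibility built into the notion of intentional derivative—that the analytic witnesses need only agree with $f$ on the thin piece, leaving the extension off it free—to realise the required directional derivatives, and invoke linearity in the tangent variable to assemble them into a single dual-number representation. Ensuring this choice is made consistently across all coordinate directions on the thin strata is the crux of the argument, and is where the structure of PAP functions (analytic partitions, closure under intersection, and Proposition~\ref{prop:c-analytic-disjoint}) does the real work.
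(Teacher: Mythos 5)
The paper states Lemma~\ref{lem:partial-deriv-to-full-ok} without any proof, so there is no official argument to compare yours against; judged on its own terms, your proposal gets the easy direction right but leaves the hard direction genuinely open. The ($\Rightarrow$) direction is indeed immediate: an $i$-th partial dual-number representation is \emph{defined} as $g(\cdot,hot_i)$ for \emph{some} dual-number representation $g$ of $f$, so taking $g=h$ settles it. For ($\Leftarrow$) there are two concrete gaps. First, you invoke "that $v\mapsto\pi_2 h(x,v)$ is linear" as if it were given, but the hypothesis only constrains $h$ at the finitely many arguments $v=hot_i$ (and says nothing about $\pi_1 h(x,v)$ for other $v$). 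Without some such structural assumption the statement is simply false: for $n\ge 2$ and $f$ analytic, $h(x,v)=(f(x),\nabla f(x)\cdot v+v_1v_2)$ satisfies the right-hand side of the biconditional yet is not a dual-number representation, since no map whose derivative slot is nonlinear in $v$ can agree with $(f_k(x),\nabla f_k(x)\cdot v)$ on any piece. In the intended application $h=\sem{\ad(\trm)}$ this shape is available by construction of the macro, but a proof of the lemma as stated must either assume it explicitly or derive it, and you do neither.

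Second, and more seriously, the per-piece reconstruction that you yourself identify as "the crux" is not carried out, and the common refinement of the $n$ per-direction partitions is provably not fine enough to make it work. Take $n=2$, $f\equiv 0$ on $\RR^2$, and the analytic piece $C=\{x:x_1x_2=0\}$ with direction-$1$ witness $f^1\equiv 0$ and direction-$2$ witness $f^2=x_1x_2$ (both vanish on $C$, so both are legitimate). A single analytic $F$ with $F=0$, $\partial_1F=0$ and $\partial_2F=x_1$ on all of $C$ would have to satisfy $x_1^2x_2\mid F$ (vanishing on both axes forces $x_1x_2\mid F$, and then $\partial_1F=0$ on the $x_2$-axis forces another factor of $x_1$), whence $\partial_2F(x_1,0)=x_1^2H(x_1,0)$, which cannot equal $x_1$ near the origin. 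So no analytic function realizes the combined data on $C$ as a single piece; the data only becomes realizable after further stratifying $C$ into the two axes (where $F=x_1x_2$ and $F=0$ respectively work). The missing idea is thus an additional decomposition of each refined piece into smooth strata, on which the tangential components of the prescribed gradient are forced by $f$ (hence automatically consistent across the $n$ witnesses) and the normal components are free to prescribe; your appeal to "the flexibility built into the notion of intensional derivative" gestures at this but does not supply the stratification step or the tangential/normal gluing argument, which is where essentially all of the work in this lemma lies.
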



\begin{theorem}
\label{thm:log-rel-wpap}
For any $\var_1:\reals,\ldots,\var_n:\reals\vdash \ter:\reals$, 
$\sem{\ter}_{\wpap{}\times \wpap{}}$ has a lift.
\end{theorem}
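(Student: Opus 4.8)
The plan is to obtain this statement as a direct instance of the abstract fundamental lemma for logical relations with recursion, Theorem~\ref{thm:logrelwiter}, applied to the structure we have assembled over $\wpap{} \times \wpap{}$. Concretely, I would instantiate that theorem with the $\wcpo{}$-enriched $\lambda_c(\Sigma)$-structure whose underlying bi-CCC is $\wpap{} \times \wpap{}$, whose monad is the pointwise lifting monad $\lift \times \lift$, and whose base-type interpretation sends $\reals$ to $(\RR, \prod_{i \in \NN} \RR \times \RR)$ and $\BB$ to $(\BB, \prod_{i \in \NN} \BB)$, together with the fibration for logical relations $\pi : \Gl \to \wpap{} \times \wpap{}$ built in Section~\ref{subsub:log-rel} and the property $(V, C)$ defined in Section~\ref{subsubsec:property-ad}. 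The conclusion of Theorem~\ref{thm:logrelwiter}, specialised to the context $\var_1 : \reals, \ldots, \var_n : \reals$ and the output type $\reals$, is then exactly that $\sem{\ter}_{\wpap{} \times \wpap{}}$ lifts to $\Gl$.

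First I would discharge the structural hypotheses. That $\wpap{}$ is an $\wcpo{}$-enriched bi-CCC is Lemma~\ref{lem:wpapisbiccc}, and this property is stable under forming the product category, so $\wpap{} \times \wpap{}$ is one as well. The lifting monad $\lift$ is a pseudo-lifting monad: its bottom element yields an algebraic operation $bt$ whose component $bt_I : 1 \to \lift I$ is the least morphism, and a product of pseudo-lifting monads is again pseudo-lifting; this makes the whole tuple an $\wcpo{}$-enriched $\lambda_c(\Sigma)$-structure. That $\pi$ is a fibration for logical relations, and that $(V, C)$ really is a property over this structure (so that the typing constraints $p \circ V = A$ and $p \circ C = T \circ A$ hold), were both already established above.

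Next I would verify the conditions of Theorem~\ref{thm:logrel}. The liftings of all primitive operations — including the comparison operators whose codomain is the coproduct $\BB = 1 + 1$, the delicate case flagged in the Remark — are exactly the content of Lemma~\ref{lem:primitive-lift}; the numeric and Boolean constants $\underline{c}, \true, \false$ are handled the same way, using that $\ad(\underline{c}) = \tPair{\underline{c}}{0}$ supplies a correct zero derivative. For the unit $\return$, given $(f, g) \in V_B(A)$ the pair $(\return \circ f, \return \circ g)$ is everywhere defined with common domain $A$ and restricts to $(f, g)$, hence lands in $C_B(A)$ by definition, so $\return$ lifts. The only algebraic operation attached to the partiality monad is $\bot$ itself, which lifts because the first clause of admissibility asserts $\bot : 1 \lifted C(B)$. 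Finally, the admissibility of $C(\RR)$ and $C(\BB)$ — which simultaneously supplies both the $\bot$-lifting and the closure under least upper bounds of $\omega$-chains needed to interpret $\mu f.\,t$ — is Lemma~\ref{lem:cb-admissible}.

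Since every hypothesis is met, Theorem~\ref{thm:logrelwiter} applies and yields the claim. The only genuine work beyond citing the packaged lemmas is the bookkeeping of the two easy checks ($\return$ and $\bot$); the conceptually hard points, namely coping with the non-uniqueness of intensional derivatives by indexing the property over \emph{all} c-analytic sets $A$ and by passing to sheaves on $\Injpap$, have already been absorbed into the construction of $\Gl$ and $(V, C)$, so no new difficulty arises here. I expect the one place needing care is confirming that the base-type interpretation $A$ on $\wpap{} \times \wpap{}$ matches the second-component convention $\prod_{i \in \NN} \sem{\ad(-)}$ used when lifting a primitive, so that the fibres over base types line up with $V(\RR), V(\BB)$ on the nose.
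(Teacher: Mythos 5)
Your proposal is correct and follows essentially the same route as the paper: both instantiate Theorem~\ref{thm:logrelwiter} with the fibration $\pi:\Gl\to\wpap{}\times\wpap{}$ and the property $(V,C)$, and discharge the hypotheses by citing Lemmas~\ref{lem:wpapisbiccc}, \ref{lem:primitive-lift} and \ref{lem:cb-admissible} together with the immediate checks for $\return$ and the pseudo-lifting structure of $\lift$. The extra detail you supply (the explicit argument that $\return$ lands in $C_B(A)$, and the remark about matching the base-type interpretation with the $\prod_{i\in\NN}\sem{\ad(-)}$ convention) is consistent with, and slightly more explicit than, what the paper records.
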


\begin{proof}
We have done all the work in the previous subsections.
We simply check the condition to apply Theorem~\ref{thm:logrelwiter}.

First, note that we indeed have a semantics in $\wpap{}\times\wpap{}$, given by $\sem{\ter}_{\wpap{}\times\wpap{}}:= (\sem{\ter}, \prod_{i\in\NN} \sem{\ad(\ter)})$, extending what we have seen in Section~\ref{subsubsec:property-ad}. Then,

\begin{itemize}
    \item We have defined a fibrations for logical-relations $\pi:\Gl\to \wpap{}\times\wpap{}$ in Section~\ref{subsub:log-rel}.
    \item We have defined a property  $(V,C)$ in $\Gl$ in Section~\ref{subsubsec:property-ad}.
    \item We check that $x\mapsto\return(x)$ lifts, but this is immediate.
    \item Each primitive has a lift, by Lemma~\ref{lem:primitive-lift}.
    \item \wpap{} is an \wcpo{}-enriched bi-CCC by Lemma~\ref{lem:wpapisbiccc}.
    \item $\lift$ is a pseudo-lifting monad. It means that we have an algebraic element $\bot_A:1\to \lift A$ that is interpreted in $\lift A$ as the bottom element. This is true by construction.
    \item $C(\RR)$ and $C(\BB)$ are admissible, by Lemma~\ref{lem:cr-admissible}.
\end{itemize}
\end{proof}

\begin{corollary}[Correctness of AD (limited)]
 For any term\\ $\var_1\colon\reals,\dots,\var_n\colon\reals\vdash \trm : \reals$, the PAP function
  $\sem {\ad(\trm)}:A\times \RR^n \to \RR\times\RR$ is a dual-number intentional representation of
  $\sem{\trm}:A\subseteq \RR^n \to \RR$.
\end{corollary}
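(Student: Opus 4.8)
The plan is to read the corollary off of Theorem~\ref{thm:log-rel-wpap}, which has already done the substantive work: it establishes that the interpretation $\sem{\trm}_{\wpap{}\times\wpap{}} = (\sem{\trm},\prod_{i\in\NN}\sem{\ad(\trm)})$ admits a lift along the logical-relations fibration $\pi:\Gl\to\wpap{}\times\wpap{}$. By the conclusion of Theorem~\ref{thm:logrelwiter}, this means that for our term $\var_1:\reals,\ldots,\var_n:\reals\vdash\trm:\reals$ the morphism $\sem{\trm}_{\wpap{}\times\wpap{}}$ preserves the property $(V,C)$, i.e.\ it maps $\prod_{j=1}^n V(\reals)$ into $C(\reals)$. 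The remaining task is purely one of instantiation: I choose a canonical input tuple in $\prod_{j=1}^n V_\RR(A)$, push it through, and recognize the resulting element of $C_\RR(A)$ as the assertion we want.

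First I would fix $A\subseteq\RR^n$ to be the c-analytic domain of $\sem{\trm}$ and build the input corresponding to the identity environment. For the variable $\var_j$ the primal component is the $j$-th coordinate function $\pi_j:A\to\RR$, and the dual component is the family $(\lambda x.(x_j,\delta_{ij}))_{i\in\NN}$, where $\delta_{ij}=1$ when $i=j$ and $0$ otherwise. I would check that each such pair lies in $V_\RR(A)$: the coordinate function $\pi_j$ is analytic, and its honest $i$-th partial derivative is exactly the constant $\delta_{ij}$, so the $i$-th slice is genuinely an element of $\partial^i_{DNR}\pi_j$. Hence the whole tuple lies in $\prod_{j=1}^n V_\RR(A)$.

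Applying the preservation property to this input lands in $C_\RR(A)$. Unfolding its definition yields two facts: $\sem{\trm}$ and $\prod_i\sem{\ad(\trm)}$ are defined on the same domain $A$, and for every $i\in\NN$ the $i$-th copy of $\sem{\ad(\trm)}$ --- evaluated on the environment whose $j$-th entry carries tangent $\delta_{ij}$, i.e.\ on the dual-number encoding of $x$ with tangent vector $hot_i$ --- is an $i$-th partial dual-number intentional representation of $\sem{\trm}$. Writing $h(x,v):=\sem{\ad(\trm)}((x_1,v_1),\ldots,(x_n,v_n))$ for the reshaped map $A\times\RR^n\to\RR\times\RR$, this says precisely that $h(x,hot_i)\in\partial^i_{DNR}\sem{\trm}$ for all $i$. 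Invoking Lemma~\ref{lem:partial-deriv-to-full-ok} then promotes this family of partial representations to the single statement that $h=\sem{\ad(\trm)}$ is a dual-number intentional representation of $\sem{\trm}$, which is the corollary.

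The main obstacle is not mathematical depth but the bookkeeping that matches the product-over-$i$ structure of the property $(V,C)$ to the tangent-slice formulation of dual-number representations. One must be careful that the $i$-th factor of $\prod_{i\in\NN}\sem{\ad(\trm)}$, when fed the environment carrying the basis tangent $hot_i$, computes exactly the $i$-th partial representation, and that the reshaping identifying $(\RR\times\RR)^n$ with $A\times\RR^n$ (primal inputs forming a point of $A$, tangent inputs forming a vector of $\RR^n$) stays consistent across the whole argument; Lemma~\ref{lem:partial-deriv-to-full-ok} is exactly the bridge that assembles these per-direction facts into the single full dual-number representation claimed.
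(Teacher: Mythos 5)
Your proposal matches the paper's proof essentially step for step: invoke Theorem~\ref{thm:log-rel-wpap} to get the lift, instantiate the logical relation at the identity environment paired with the basis-tangent duals (your Kronecker delta $\delta_{ij}$ is the paper's Iverson bracket $[i=k]$), unfold the resulting membership in $C_\RR$, and assemble the per-direction partial representations via Lemma~\ref{lem:partial-deriv-to-full-ok}. The only cosmetic difference is that you instantiate at the domain $A$ of $\sem{\trm}$ while the paper instantiates at $\RR^n$ and lets the definition of $C_\RR$ handle partiality; both are fine.
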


\begin{proof}
Let $\var_1\colon\reals,\dots,\var_n\colon\reals\vdash \trm : \reals$. 
By Theorem~\ref{thm:log-rel-wpap}, $\sem{\ter}_{\wpap{}\times \wpap{}}$ has a lift. Consider the c-analytic set $\RR^n$ and the pair
\[(id_{\RR^n}, \lambda x. \prod_{i\leq k\leq n} \prod_{i\in \NN}(\pi_k(x),[i=k]\pi_i'(x)))\] 
where, by convention, $\pi_i(x)=\pi_i'(x)=0$ whenever $i> n$. We use Iverson brackets, $[i=k]$, which equals to $1$ when the condition is satisfied, and $0$ otherwise.
This pair is an element of $V_{\RR^k}(\RR^n)$. This means that 
\begin{align*}
    &\Big(\sem{\ter}, \prod_{i\leq k\leq n} \prod_{i\in\NN}\sem{\ad(\ter)}\circ \big(\lambda x. \prod_{1\leq k\leq n}(\pi_k(x),\\
    &\qquad\qquad[i=k]\pi_i'(x))\big)\Big)\in C_\RR(\RR^n)
\end{align*}

This simplifies a bit more, and by the definition of $ C_\RR(\RR^n)$, this means that for all $i$, $\sem{\ad(\ter)}(x,hot_i)$ is an $i$-th intentional partial representation of $\sem{\ter}$. By Lemma~\ref{lem:partial-deriv-to-full-ok}, this means that  $\sem{\ad(\ter)}$ is a dual-number intentional representation for $\sem{\ter}$, as desired.
\end{proof}

With the same proof, Theorem~\ref{thm:log-rel-wpap} gives us directly the stronger result at all ground types:

\begin{corollary}[Correctness of AD (full)]
      For any term well-typed term $\Gamma\vdash\trm:\ty$ of ground type $\ty$ in a ground context $\Gamma$, the PAP function
  $\sem {\ad(\trm)}:A\times \RR^n \to \RR\times\RR$ is a dual-number intentional representation of
  $\sem{\trm}:A\subseteq \RR^n \to \RR$.
\end{corollary}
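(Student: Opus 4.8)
The plan is to re-run the argument used for the limited corollary, observing that the logical-relations theorem (Theorem~\ref{thm:logrelwiter}) underlying Theorem~\ref{thm:log-rel-wpap} in fact produces a lift for \emph{every} well-typed term, not only those whose result is a single copy of $\reals$; the restriction to $\reals$ in the statement of Theorem~\ref{thm:log-rel-wpap} was merely for concreteness. The structural fact I would lean on is that the glueing fibration $\pi : \Gl \to \wpap{}\times\wpap{}$ is a bi-CCC over a bi-CCC and $\pi$ strictly preserves the bicartesian-closed structure, so the property defined on the base types $\reals$ and $\BB$ extends canonically to every ground type, with the relation at a product $\ty_1\times\ty_2$ being exactly the componentwise conjunction of the relations at $\ty_1$ and $\ty_2$. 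This is the same compositional clause already recorded in the definition of $V_{\tau_1\times\tau_2}(A)$.

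First I would fix a ground context $\context = (x_1:b_1,\dots,x_m:b_m)$ and a term $\context\vdash\ter:\ty$ with $\ty$ ground, and let $\RR^n$ denote the Euclidean space obtained by flattening all real components appearing in $\context$. By the fundamental lemma (Theorem~\ref{thm:logrelwiter}), applied exactly as in the proof of Theorem~\ref{thm:log-rel-wpap}, the map $\sem{\ter}_{\wpap{}\times\wpap{}} := (\sem{\ter},\prod_{i\in\NN}\sem{\ad(\ter)})$ has a lift to $\Gl$. Crucially, the hypotheses of Theorem~\ref{thm:logrelwiter} were already discharged in the proof of Theorem~\ref{thm:log-rel-wpap} and none of them depend on the output type: every primitive lifts (Lemma~\ref{lem:primitive-lift}), $\wpap{}$ is an \wcpo{}-enriched bi-CCC (Lemma~\ref{lem:wpapisbiccc}), $\lift$ is pseudo-lifting, and $C(\RR)$, $C(\BB)$ are admissible (Lemma~\ref{lem:cr-admissible}). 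Hence the same verification applies verbatim for arbitrary ground $\ty$.

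Next I would feed this lift the correct seed pair at the context. For each real variable I supply the identity paired with the hot-vector dual numbers, exactly as in the limited corollary; for each boolean variable I supply the identity paired with its equal copies, matching the clause $V_\BB$; and for product-typed variables I assemble these componentwise. This seed lies in $V_\context(\RR^n)$ by the product and base-type clauses of the property, so the lift forces the output pair into $C_\ty(\RR^n)$. Unfolding $C_\ty$ down the product structure to its base-type leaves, and applying Lemma~\ref{lem:partial-deriv-to-full-ok} at each $\reals$-leaf (and the triviality of the boolean clause at each $\BB$-leaf), I conclude that $\sem{\ad(\ter)}$ is, componentwise, a dual-number intentional representation of $\sem{\ter}$, which is the assertion of the corollary.

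I expect the only genuine obstacle to be bookkeeping rather than mathematics: making precise the flattening of a ground context into $\RR^n$ and checking that the seed pair really lands in $V_\context(\RR^n)$ when the context mixes reals, booleans, and products. Because the property is defined so that its value at a product type is the conjunction of its values at the factors, and because $\pi$ preserves products on the nose, this reduces to the base-type cases already treated, so no new analytic input beyond Lemma~\ref{lem:partial-deriv-to-full-ok} is required.
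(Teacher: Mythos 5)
Your proposal is correct and follows essentially the same route as the paper: the paper's own proof of the full corollary is literally the remark that ``the same proof'' as the limited corollary applies at all ground types, and what you have written is a careful elaboration of exactly that---reusing the already-discharged hypotheses of Theorem~\ref{thm:logrelwiter}, seeding the lift with the identity/hot-vector pair extended componentwise over the ground context, and unfolding the property at the ground output type down to the $\reals$- and $\BB$-leaves via Lemma~\ref{lem:partial-deriv-to-full-ok}. The only difference is that you spell out the bookkeeping (flattening mixed contexts, the product clause of the relation) that the paper leaves implicit.
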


\begin{proposition}
\label{propn:failure-set-quasivariety}
    
\end{proposition}

\section{Failure and convergence of Gradient Descent}
\subsection{Failure of gradient descent for PAP functions}
\label{sub:failure-ad}

\begin{proof}
   Let $\epsilon>0$. 
   Let $P$ be the closed program defined by the following code, where $lr$ is our chosen learning rate $\epsilon$.
    \begin{minted}[fontsize=\small]{python}  
    # Recursive helper
    def g(x, n):
        if x > 0:
            return ((x - n) * (x - n)) / (lr * 2)
        if x == 0:
            return x / lr + n*n / (2*lr)
        else:
            return g(x+1, n+1)
    
    def P(x) = g(x,0)        
    \end{minted}
    This can easily be written in our language with recursion and conditionals, but it would slightly impact readability.  

   $\sem{P}$ is PAP, and by inspection, we see that $\sem{P}=x \mapsto \frac{x^2}{2 \epsilon}$. This is proved by simple induction on $\NN$ for the recursive function $g$ defining $P$. 
    Let us now show that $f:=\sem{P}$ satisfies the hypothesis of Theorem~\ref{thm:gd-conv} and is strictly convex:
    \begin{itemize}
        \item $f$ is strictly convex: this is well-known for the $x^2$ function
        \item $f$ is bounded below: $f\geq 0$
        \item $f$ is $L$-smooth for some $L$:
        \begin{align*}
            \grad f(x)-\grad f(y) =\frac{2x}{2\epsilon}-\frac{2y}{2\epsilon} = \frac{1}{\epsilon}(x-y) 
        \end{align*}
        And therefore $f$ is $\frac{1}{\epsilon}$-smooth.
    \end{itemize}
    This means that for all $0<\epsilon'< \frac{2}{\frac{1}{\epsilon}}=2\epsilon$, gradient descent on $f$ with rate $\epsilon'$ converges  to the global minimum of $f$, which is at $0$ at $0$. 
    Now, let us see that this is not the case when we use the intentional derivatives for $f$ computed by AD, in the case $\epsilon':=\epsilon$.
    
   Applying our standard AD macro to $P$, we obtain a PAP function $P'$ that returns $\frac{x}{ \epsilon}$ when $x\in\posreal$, and $\frac{(x+\forget{|x|}+1)-(\forget{|x|}+1)}{ \epsilon }$ when $x\in\negreal- \{-n~|~n\in\NN\}$, and $\frac{1 }{\epsilon} $ for  $x\in \{-n~|~n\in\NN\}$. Note that $\grad \sem{P}(x)\neq \sem{P'}(x)$ exactly when $x\in \{-n~|~n\in\NN\}$. 
   
   Let $x^0\in \RR$ be any initialization for the gradient descent algorithm.
   Then, after one step of gradient descent, we obtain $x^1 := x^0 - \epsilon \times P'(x^0)$. If $x^0>0$, then  $\sem{P'(x^0)}= \frac{x^0}{\epsilon}$ and thus $x^1=0$. When $x^0<0$ is not an integer, $\sem{P'(x^0}=\frac{x^0}{\epsilon}$ as thus $x^1=0$ as well.
   Finally, when $x^0<0$ is an integer, $\sem{P'(x^0)}=\frac{1}{\epsilon} $ and therefore $x^1=x^0-1$. In any case, $x^1$ is a non-positive integer, and for $x^2$ we'll always have $x^2= x^1-1$. Therefore, $x^t \to -\infty$ as $t\to \infty$.
   
    A similar program $P$ can be constructed when the learning rate is non-fixed. The recursive program inside $P$'s definition simply calls $\epsilon(n)$ instead of using $\epsilon$, and the analysis will be very similar, except that $x^1$ will not necessarily be an integer when $x^0$ is a non-positive integer, but this happens for almost no $x^0$.
\end{proof}

\subsection{Almost-sure convergence of Gradient Descent}
\label{sub:a.s-conv-gd}

\begin{proof}
    Given a fixed intentional derivative $\grad f$ of $f$, let $F_{int}(\epsilon,x)=(\epsilon,x-\epsilon\grad_xf)$.
    Let $F_{true}(\epsilon,x)=(\epsilon,x-\epsilon\grad^*_xf)$ where $\grad^*_xf$ is the true gradient of the differentiable function $f$. We restrict the domain of $F_{true}$ and $F_{int}$ to $(0,\frac{2}{L})\times \RR^d$.
    For $\delta>0$ and $t\in\NN$, we define
    \[X_{t,\delta}^*:=\{(\epsilon,x^0)\in (0,\frac{2}{L})\times \RR^d~\mid~\conc{\grad^*f(\pi_2 F^t_{true}(\epsilon,x^0))}\leq \delta\}\]
    where $g^t$ is the $t$-fold composition of a function $g$ with itself.
    As $f$ is $L$-smooth, $\grad^*f$ is locally $L$-Lipschitz and thus continuous. 
    Therefore, by the first conclusion of Theorem~\ref{thm:gd-conv}, for any $\delta>0$,
    \[\bigcup_{t=0}^\infty X_{t,\delta}^*=(0,\frac{2}{L})\times \RR^d \]
    We define similarly, for any $\delta>0$, 
     \[X_{t,\delta}:=\{(\epsilon,x^0)\in (0,\frac{2}{L})\times \RR^d~\mid~\conc{\grad^* f(\pi_2 F^t_{int}(\epsilon,x^0))}\leq \delta\}\]
    We will show by induction on $t\in\NN$ that $X_{t,\delta}$ and $X_{t,\delta}^*$ only differ by a null-set, i.e. that there exists null-sets $N_{t,\delta}$ and $M_{t,\delta}$ such that $X_{t,\delta}\cap N_{t,\delta}^c=X_{t,\delta}^*\cap M_{t,\delta}^c$.
    This implies that 
    \begin{align*}
        \bigcup_{t=0}^\infty X_{t,\delta} 
        &\supseteq \bigcup_{t=0}^\infty X_{t,\delta}\cap N_{t,\delta}\\  
        &= \bigcup_{t=0}^\infty X_{t,\delta}^* \cap M_{t,\delta} \\
        &\supseteq (\bigcup_{t=0}^\infty X_{t,\delta}^*) \cap ((0,\frac{2}{L})\times \RR^d -\bigcup_{t=0}^\infty  M_{t,\delta}) \\
        &= (0,\frac{2}{L})\times \RR^d \cap ((0,\frac{2}{L})\times \RR^d -\bigcup_{t=0}^\infty  M_{t,\delta})\\
        &= (0,\frac{2}{L})\times \RR^d -\bigcup_{t=0}^\infty  M_{t,\delta}
    \end{align*}
    As a countable union of negligible sets is negligible, this means that $\bigcup_{t=0}^\infty X_{t,\delta}$ equals $(0,\frac{2}{L})\times \RR^d$ up to a negligible set, as desired. 

    By induction on $t\in\NN$, we show the following property:
    \begin{align*}
        \forall A\subseteq (0,\frac{2}{L})\times \RR^d\text{ measurable of measure }0,\\
        \{(\epsilon,x)~\mid~ F_{true}^t(\epsilon,x)\in A\} \text{ is measurable with measure }0
    \end{align*}
    If $t=0$, $F_{true}^t$ is the identity and it's obvious.
    Let $t\in\NN$ such that the property holds. Let $A\subseteq (0,\frac{2}{L})\times \RR^d$ be of measure 0.
    To use the induction hypothesis, it suffices to show that $F_{true}^{-1}(A)$ has measure 0. 
    The Jacobian matrix of $F_{true}$ is an $(n+1)\times (n+1)$ matrix, given by
    \[J(F_{true})_{\epsilon,x} = \begin{pmatrix} 1 & 0_n^T \\ J_xf^T & I_n-\epsilon H_xf&  \end{pmatrix}\] 
    where $I_n$ is the $n\times n$ identity matrix, $0_n$ the zero vector in $\RR^n$, $H_xf$ is the Hessian matrix of $f$, and $(-)^T$ is the transpose operation.
    
    The rank of $J(F_{true})_{\epsilon,x}$ is lower bounded by one plus the rank of $I_n-\epsilon Hf_x$. The latter is strictly less than $n$ exactly when $\frac{1}{\epsilon}$ is an eigenvalue of $H_xf$. For a fixed $x$, this can only occur for finitely many $\epsilon$ (at most $n$). 
    Thus, $J(F_{true})_{\epsilon,x}$ has max rank $n+1$ almost everywhere. Denote by $R$ this set.
    $f$ being PAP, $F$ is PAP as well. Therefore, $F_{true}$ is almost everywhere analytic, and denote this set $C$.
    Let $N=R\cap C$. Note that $N^c$ is a null-set.
    On $N$, $F$ is $F\in\mathcal{C}^1$ and its Jacobian is invertible.
    Therefore, by the implicit function theorem, the result is locally true around any point $x\in N$. This means, for all $x\in N$, there exists a neighborhood $U_x$ of $x$ such that $F_{true}^{-1}(A)\cap U_x$ is a null-set. We can use a countable cover $\{U_i\}_{i\in\NN}$ of $N$ of such neighborhoods, and thus 
    \begin{align*}
        F_{true}^{-1}(A)
        &= F_{true}^{-1}(A)\cap (N\cup N^c) \\
        &\subseteq F_{true}^{-1}(A)\cap \big[(\bigcup_{i\in\NN} U_i) \cup N^c\big] \\
        &\subseteq \big[\bigcup_i F_{true}^{-1}(A)\cap U_i\big] \cup Z
    \end{align*}
    $F_{true}^{-1}(A)$ is measurable as $A$ is, and as null-sets are closed under countable unions, this shows that $F_{true}^{-1}(A)$ has measure 0, which concludes the induction.

    Now, we show by induction on $t\in\NN$ that for almost all $(\epsilon,x^0)\in (0,\frac{2}{L})\times \RR^d$,
    \[F_{true}^t(\epsilon,x^0)=F_{int}^t(\epsilon,x^0)\]
    Once again, the base case is immediate.
    Let $t\in\NN$ such that the property holds.
    As $\grad f$ is an intentional derivative of $f$, for almost all $x\in \RR$, we have that $\grad_xf=\grad_x^*f$.
    Therefore, for almost all $(\epsilon,x)$, we have that $F_{true}(\epsilon,x)=F_{int}(\epsilon,x)$. Let $A$ be the set of points on which they differ. By the previous proposition, $F_{true}^{-1}(A)$ has measure 0.
    Therefore, $F^{t+1}_{true}$ and $F^{t+1}_{int}$ may differ only on the set $F_{true}^{-1}(A)$, which has measure 0. This concludes the induction.
    
    
    As a corollary of what we have just shown, we have that for almost all $(\epsilon,x^0)$, $x^t$ is the same, whether it was obtained from true or AD-computed gradients. Indeed, $x^t$ is either $\pi_2F^t_{true}(\epsilon,x^0)$ or $\pi_2F^t_{int}(\epsilon,x^0)$, and we have shown that $F^t_{true}(\epsilon,x^0)$ and $F^t_{true}(\epsilon,x^0)$ agree almost everywhere.
        Therefore $f(x^{t=1})\leq f(x^t)$ remains true almost-everywhere for all $t$, as a countable union of negligible sets is negligible.  
        Finally, if $f$ is strongly convex, then $\{x^t\}_t$ will reach the global minimum. As $x^t$ is the same, whether it was obtained from true or AD-computed gradients, for almost all $(\epsilon,x^0)$, this means that gradient-descent with AD-computed gradients will also almost surely converge to the global minimum.
\end{proof}
\section{s-Hausdorff measures}
\subsection{Monad of measure on \wpap{}}
\label{sub:monad-of-measures}

The goal of this section is to give the key steps in the proof of Theorem~\ref{thm:trace-semantics-is-pap}. 
These are the following.
We adapt the proof technique used in \citet{vakar2019domain}.
We define an expectation operator $\mathbf{Int}_X:\mathbf{S}X\to (X\to \mbw) \to \mbw)$. 
We show that it is a well-defined $\wpap{}$-morphism, and natural in $X$.
Next, we show that $\wpap{}$ admits a proper orthogonal-factorization system, extending the known result on categories of concrete sheaves to this category of $\omega$-concrete sheaves. 
Using a theorem from \citet{kammar2018factorisation}, we show that the image of $\mathbf{Int}$ induces a strong monad $\mathbf{M}$ on \wpap{}.
This monad is easily shown to be commutative. 

Using this, we can show Lemma~\ref{thm:trace-semantics-is-pap}.

\begin{proof}
    $\sem{\ter}_\mathbf{M}$ in the new measure semantics is given by $\mathbf{Int}_{\RR^n} \circ \sem{\ter}_\mathbf{S}$. By definition, this means it is the integrator $\lambda f.\int f(\alpha_{\sem{\ter}_\mathbf{S}}(x))\Lambda_\Omega (dx)$, which can be seen as the pushforward of $\Lambda_\Omega$ by the function $\alpha_{\sem{\ter}_\mathbf{S}}$, defined by the second point in Definition~\ref{defn:integrator_sampler}. This function is an $\wpap{}$ morphism $\Omega \to \mathbf{L}(\RR^n)$, i.e. a partial \wpap{} function $f_\ter:\Omega \rightharpoonup \RR^n$ such that $\sem{\ter} = f_{\ter *}\Lambda_\Omega$.
\end{proof}

\subsection{Pushforward of PAP functions}
\label{sub:pushforward-pap}

In this section, we prove the following Theorem:
\begin{theorem}
\label{thm:shausdorff}
    Let $f : U\to \RR^n$ be the total restriction of a partial \wpap{} morphism to its domain $U \subseteq \RR^m$. Let $\lambda$ be the restriction of the Lebesgue measure to $U$. Then $f_*\lambda$ has a density w.r.t. some s-Hausdorff measure $\mu$ on $\RR^n$.
\end{theorem}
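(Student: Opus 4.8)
The plan is to reduce the statement to a purely analytic fact about constant-rank maps and then invoke the coarea formula. First I would unfold the hypothesis: since $f$ is the total restriction of a partial $\wpap{}$ morphism $\RR^m \to \lift \RR^n$ to its domain, its domain $U$ is c-analytic and $f : U \to \RR^n$ is a $\pap{}$ function. Hence there is a countable partition $U = \bigsqcup_{i} A_i$ into analytic sets together with analytic functions $f_i : U_i \to \RR^n$ on open domains $U_i \supseteq A_i$ with $f|_{A_i} = f_i$. Because pushforward is countably additive over this disjoint decomposition, $f_* \lambda = \sum_i (f_i)_*(\lambda^m|_{A_i})$, so it suffices to produce, for each $i$, a countable family of smooth submanifolds of $\RR^n$ carrying $(f_i)_*(\lambda^m|_{A_i})$ absolutely continuously with respect to the corresponding Hausdorff measures; collecting all these manifolds (grouped by dimension) into sets $M^d$ and setting $\mu(A) = \sum_{d=0}^n H^d(A \cap M^d)$ then gives the desired s-Hausdorff measure.

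The analytic core is a constant-rank decomposition. On each connected component $C$ of the open set $U_i$, the rank of $Df_i$ attains a maximum $r_C$ on the open, dense subset $O_C = \{x \in C : \rank Df_i(x) = r_C\}$; its complement is the common zero locus of the maximal minors of $Df_i$, which are analytic and not all identically zero on $C$, hence this complement is a proper analytic subset of measure zero. Consequently $\lambda^m$-almost every point of $U_i$ has locally maximal rank, and $(f_i)_*(\lambda^m|_{A_i}) = (f_i)_*(\lambda^m|_{A_i \cap O})$ where $O = \bigcup_C O_C$. I emphasise that this is exactly where analyticity (rather than mere smoothness or a.e.-differentiability) is indispensable, and I expect it to be the main obstacle: for a general smooth map the non-maximal-rank locus can have positive measure, and the whole argument collapses.

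On $O$ the map $f_i$ has locally constant rank, so I would cover $O$ by countably many open boxes $B_l$ on each of which the constant-rank theorem furnishes analytic charts realizing $f_i|_{B_l}$ as a projection followed by an embedding onto a $d_l$-dimensional analytic submanifold $W_l = f_i(B_l)$, where $d_l$ is the local rank. Composing with a chart $\psi_l : W_l \to \RR^{d_l}$, the map $\psi_l \circ f_i : B_l \to \RR^{d_l}$ is a submersion, so the coarea formula expresses $(f_i)_*(\lambda^m|_{A_i \cap B_l})$, transported through $\psi_l$, as a measure with an explicit density against $\lambda^{d_l}$; since $H^{d_l}|_{W_l}$ and $(\psi_l^{-1})_* \lambda^{d_l}$ are mutually absolutely continuous with a smooth positive Jacobian, we obtain $(f_i)_*(\lambda^m|_{A_i \cap B_l}) \ll H^{d_l}|_{W_l}$. (When $d_l = m \le n$ the map is an immersion and one uses the area formula instead; the conclusion is identical.)

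Finally I would assemble the pieces. The collection $\{W_l\}$, indexed over all $i$, all components $C$, and all boxes $B_l$, is countable; grouping by dimension yields countable unions $M^d$ of $d$-dimensional smooth manifolds, and hence a legitimate s-Hausdorff measure $\mu$. Each summand of $f_* \lambda = \sum_{i,l} (f_i)_*(\lambda^m|_{A_i \cap B_l})$ is absolutely continuous with respect to $\mu$, and absolute continuity is preserved under the countable sum, so $f_* \lambda \ll \mu$. As $f_* \lambda$ is $\sigma$-finite (being the pushforward of the $\sigma$-finite measure $\lambda$), the Radon--Nikodym theorem provides a density of $f_* \lambda$ with respect to $\mu$, completing the argument.
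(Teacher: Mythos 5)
Your proposal is correct and follows essentially the same route as the paper: reduce via the PAP representation to analytic maps on open connected pieces, discard the non-maximal-rank locus using the fact that zero sets of non-trivial analytic functions are Lebesgue-null, apply the constant-rank theorem and the coarea formula on a countable cover, and regroup the resulting submanifolds by dimension into an s-Hausdorff measure. The only cosmetic differences are that the paper first replaces each analytic piece by an open subset of full measure before analyzing rank, and constructs the density explicitly rather than invoking Radon--Nikodym, but these do not change the substance of the argument.
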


It is the core technical theorem that is required to prove Theorem~\ref{cor:main}.
To prove Theorem~\ref{thm:shausdorff}, we will first prove a restricted form of it:

\begin{theorem}
\label{thm:manifold_analytic_open}
Let $f:U\subseteq \RR^n\to \RR^m$ be real analytic on a connected open $U$. 
Then there is an integer $k$ such that $f(U)$ is contained is a countable union $M:=\bigcup_{i\in\NN}M_i$ of smooth submanifolds $M_i$ of $\RR^m$ of dimension $k$, up to a set of $k$-Hausdorff measure 0. 
Furthermore, $f_*\lambda$ has a density w.r.t. the $k$-Hausdorff measure on $M$.
\end{theorem}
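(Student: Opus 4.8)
The plan is to analyze $f$ through the \emph{generic rank} of its Jacobian. Let $k := \max_{x \in U} \rank(Jf(x))$, and set $Z := \{x \in U \mid \rank(Jf(x)) < k\}$ and $U_k := U \setminus Z$. Because the entries of $Jf$ are analytic, every $k\times k$ minor of $Jf$ is analytic, and $Z$ is precisely the common zero set of these finitely many minors; equivalently $Z = g^{-1}(0)$ for the analytic function $g = \sum(\text{minors})^2$. Since the rank $k$ is attained somewhere, $g$ is not identically zero, and as $U$ is connected the zero set of a nonzero analytic function has Lebesgue measure $0$. Hence $\lambda(Z) = 0$, while $U_k$ is open and carries constant rank $k$.

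First I would build $M$ from the good part $U_k$. On $U_k$ the rank is constant, so by the constant-rank theorem each point has an open neighborhood $V$ on which, after analytic changes of coordinates in source and target, $f$ is the projection $(x_1,\dots,x_n)\mapsto(x_1,\dots,x_k)$; in particular $f(V)$ is an embedded $k$-dimensional submanifold $M_V$ of $\RR^m$. As $U_k\subseteq\RR^n$ is second countable I extract a countable subcover $\{V_j\}_{j\in\NN}$ and set $M := \bigcup_j M_{V_j}$, a countable union of $k$-dimensional smooth submanifolds with $f(U_k)\subseteq M$. For the density claim, note $f_*\lambda = f_*(\lambda|_{U_k})$ because $\lambda(Z)=0$. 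The local projection picture shows that the $f$-preimage of an $H^k$-null subset of $M$ meets $U_k$ in a Lebesgue-null set, so $f_*(\lambda|_{U_k})\ll H^k|_M$; since $H^k|_M$ is $\sigma$-finite (each $M_{V_j}$ is), writing $U=\bigcup_\ell (U\cap B_\ell)$ to reduce to finite pieces and applying Radon--Nikodym yields the desired (possibly infinite) density, whose value over each $M_{V_j}$ is the area-formula Jacobian factor weighted by the measure of the collapsed fibers.

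The main obstacle is the \emph{bad set}: I must show that $f(U)\setminus M \subseteq f(Z)$ has $H^k$-measure $0$. Here the rank drop is essential, since on $Z$ we only know $\rank Jf \le k-1$. The cleanest route is a Sard-type bound: for a $C^1$ map the image of $\{\rank Jf \le k-1\}$ has Hausdorff dimension at most $k-1$, hence $H^k$-measure $0$ (a standard refinement of Sard's theorem, e.g.\ via Federer's geometric measure theory). For a fully self-contained argument in the analytic category, I would instead use a Whitney stratification of the analytic variety $Z$ into countably many smooth analytic submanifolds $S_\alpha$, each of dimension $<n$; restricting $f$ to a chart of $S_\alpha$ gives an analytic map on an open subset of $\RR^{\dim S_\alpha}$ whose Jacobian has rank $\le k-1$, so by induction on the source dimension its image lies, up to an $H^{k-1}$-null (hence $H^k$-null) set, inside a countable union of manifolds of dimension $\le k-1$, which are themselves $H^k$-null. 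Summing over the countably many strata gives $H^k(f(Z)) = 0$. The induction is well-founded because each stratum has dimension strictly below $n$, and the base case (dimension $0$, or $k=0$ where connectedness forces $f$ to be constant) is immediate.
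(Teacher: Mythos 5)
Your proposal is correct and follows the same skeleton as the paper's proof: take $k$ to be the generic rank, observe that the rank-deficient locus $Z$ is the zero set of a non-identically-zero analytic function (so $\lambda(Z)=0$), apply the constant-rank theorem plus second countability on the full-rank locus to produce the countable union $M$ of $k$-dimensional submanifolds, and dispose of $f(Z)$ via the quantitative (Federer) refinement of Sard's theorem. The one place you genuinely diverge is the density claim. The paper reduces to pieces on which $f$ is Lipschitz and invokes the co-area formula, which \emph{constructs} the density explicitly as $h(y)=\int_{f^{-1}(y)}\frac{1}{J_Wf(x)}\,dH^{n-k}(x)$; you instead prove absolute continuity $f_*(\lambda|_{U_k})\ll H^k|_M$ from the local projection picture and then invoke Radon--Nikodym after a $\sigma$-finiteness reduction. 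Both are sound; the co-area route buys an explicit formula (which the paper does not subsequently need), while your route is shorter and avoids the Lipschitz/compactness bookkeeping. You are also more careful than the paper on the bad set: the paper dismisses $f(U\setminus V)$ with a one-line appeal to ``Sard's theorem,'' which strictly speaking is the Federer--Sard dimension bound on the image of the rank-$\le k-1$ locus rather than the classical statement about critical values; your explicit citation of that refinement, and your alternative self-contained induction via stratification of $Z$, both fill a gap the paper leaves implicit.
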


Before proving the theorem, we need a few other technical lemmas.

\begin{lemma}
\label{lem:abs-cont-wrt-base-measure}
Let $\mu = \sum_{0\leq k\leq m}\sum_{i\in\NN}f_i^k.\mu^k_i\in \mathcal{M}\RR^m$ for some non-negative measurable functions $f_i^k:\RR^m\to \RR^+\cup\{\infty\}$, and such that $\mu^k_i$ is the $k$-Hausdorff measure on some $k$-dimensional submanifold $M_i^k$ of $\RR^m$.
Let $B$ be the s-Hausdorff measure given by the $(\mu^k_i)$. Then $\mu$ has a density w.r.t. $B$.
\end{lemma}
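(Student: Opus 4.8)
The plan is to reorganize $\mu$ so that it becomes a sum, indexed by dimension $k$, of densities against a single $k$-Hausdorff piece, and then to show that these pieces are mutually singular so that their densities can be glued into one global density against $B$.

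First I would fix, for each dimension $k$, the union $M^k := \bigcup_i M_i^k$, so that by Definition~\ref{def:shausdorff} the measure $B$ decomposes as $B = \sum_{k=0}^m B_k$ with $B_k := H^k|_{M^k}$. Collecting the per-manifold densities into $g^k := \sum_i f_i^k \mathbf{1}_{M_i^k}$ (a non-negative measurable function into $[0,\infty]$, being a countable sum of measurable functions on Borel sets), Tonelli's theorem for non-negative integrands gives, for every measurable $A$,
\[
\sum_i \int_{A\cap M_i^k} f_i^k \, dH^k = \int_{A} g^k \, dB_k ,
\]
since $g^k$ is supported on $M^k$ and each $M_i^k \subseteq M^k$. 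Hence $\mu = \sum_{k=0}^m g^k \cdot B_k$, reducing the problem to exhibiting a density of $\sum_k g^k \cdot B_k$ with respect to $\sum_k B_k$.

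The key step, and the main obstacle, is that the supports $M^k$ for different $k$ may overlap, so the $B_k$ are not manifestly comparable. Here I would use the one genuinely geometric input: a $k$-dimensional smooth submanifold has Hausdorff dimension $k$, so $H^{k'}$ vanishes on any subset of a $k$-dimensional manifold whenever $k' > k$; by countable subadditivity this gives $H^{k'}(M^k) = 0$ for all $k' > k$. Consequently the carriers $S_k := M^k \setminus \bigcup_{j<k} M^j$ are pairwise disjoint, each $B_k$ is concentrated on $S_k$ (the removed pieces $M^j$ with $j<k$ satisfy $B_k(M^j) = H^k(M^k \cap M^j) = 0$ because $M^k \cap M^j \subseteq M^j$ has dimension $j < k$), and $B_{k'}(S_k) = 0$ for every $k' \neq k$ (for $k' > k$ because $S_k \subseteq M^k$ and $H^{k'}(M^k)=0$; for $k' < k$ because $S_k \cap M^{k'} = \emptyset$ while $B_{k'}$ lives on $M^{k'}$). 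In other words the $B_k$ are mutually singular, separated by the $S_k$.

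Finally I would set $h := \sum_{k=0}^m g^k \mathbf{1}_{S_k}$ and verify $\int_A h \, dB = \mu(A)$ for all measurable $A$: expanding $B = \sum_{k'} B_{k'}$ and using $B_{k'}(S_k) = 0$ for $k'\neq k$ kills all cross terms, while $B_k(S_k^c)=0$ lets one drop the indicator, leaving $\sum_k \int_A g^k \, dB_k = \mu(A)$. All interchanges of sums and integrals are justified by non-negativity (monotone convergence, equivalently Tonelli), and $h$ is permitted to take the value $\infty$, matching the ``possibly infinite'' density in Theorem~\ref{cor:main}. The only routine points to double-check are the measurability of the manifolds and of $g^k$ and $h$, which are immediate since submanifolds are Borel and countable sums of measurable functions are measurable.
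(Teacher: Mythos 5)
Your proof is correct and follows essentially the same route as the paper's: both collapse the per-manifold densities into $g^k=\sum_i f_i^k\mathbf{1}_{M_i^k}$ against the single measure $H^k|_{M^k}$, and both use the fact that $H^{k'}$ vanishes on lower-dimensional manifolds to make the dimension strata mutually singular before gluing the densities (the paper's nested indicator $g_0+\mathbf{1}_{x\notin X_0}(g_1+\cdots)$ is exactly your $\sum_k g^k\mathbf{1}_{S_k}$). Your write-up is if anything slightly more explicit about why the cross terms $B_{k'}(S_k)$ for $k'\neq k$ vanish, which the paper leaves terse.
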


\begin{proof}
Let $X_k = \bigcup_i M_i^k$ and $\mu_k$ be the $k$-Hausdorff measure on $X_k$.
Let $g_k =\sum_{i\in\NN} 1_{x\in M^k_i}.f_i^k$. As the $f_i^k$ are non-negative and can take the value $\infty$, the limit exists and therefore the $g_k$ are measurable.
By construction, we have the equality $\sum_{i\in\NN}f_i^k.\mu^k_i= g_k.\mu^k$ for all $0\leq k\leq m$.
This follows from the simpler equality $f_1.\nu|_A+f_2.\nu|_B=(f_1+f_2).\nu|_{A\cup B}$ which is also valid for countable unions, given that the $\mu^k_i$ are all restrictions of the $k$-Hausdorff measure.
Therefore $\mu = \sum_{1\leq k\leq m}g_k.\mu_k$.

Let $g=g_0+1_{x\not\in X_0}(g_1+1_{x\not\in X_1}(g_2+1_{x\not\in X_2}(\ldots + 1_{x\not\in X_{m-1}}g_m)))$. $g$ is clearly measurable and non-negative (with the convention that $0.\infty = 0$ here).
For all $i<k$, $X_i$ is $H^k$ negligible. Thus $g_k.\mu_k=g_k.\mu_k|_{X_k-X_i}= (1_{x\not\in X_i}g_k).\mu_k$. 
From this fact, we deduce that $\sum_{1\leq k\leq m}g_k.\mu_k = g.B$, and therefore conclude that $\mu = g.B$, i.e. that $\mu$ has a density w.r.t. $B$.
\end{proof}

\begin{lemma}
\label{lem:analytic-max-rank-dense}
Let $f:U\subseteq \RR^n\to \RR^m$ be real analytic on a connected open $U$. 
Let $k$ be the maximal rank of $f$ on $U$. 
Then $V:=\{x\in U~|~rank(f(x))=k\}$ is 
\begin{itemize}
    \item open,
    \item dense in $U$,
    \item and $\lambda(U-V)=0$.
\end{itemize}
\end{lemma}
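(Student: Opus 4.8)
The plan is to identify the rank of $f$ at a point $x$ with the rank of its Jacobian matrix $Jf(x)$, whose entries are the partial derivatives $\partial f_a/\partial x_b$. Since $f$ is real analytic, each such entry is analytic on $U$, and hence so is every $k\times k$ minor $M$ of $Jf$ (a polynomial expression in analytic functions). First I would dispatch \textbf{openness}. Because $k$ is the maximal rank attained on $U$, we have $\rank Jf(x)\le k$ for all $x$, so $\rank Jf(x)=k$ holds precisely when $\rank Jf(x)\ge k$, which in turn holds iff at least one $k\times k$ minor is nonzero at $x$. Thus
\[ V=\bigcup_{M}\{x\in U\mid M(x)\neq 0\}, \]
a union (over the finitely many $k\times k$ minors $M$) of open sets, and therefore open.

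The crux of the argument is a single classical fact about real analytic functions: \emph{if $g$ is real analytic and not identically zero on a connected open set, then its zero set $Z(g)=g^{-1}(0)$ is both Lebesgue-negligible and nowhere dense.} I would either cite this as standard or prove it by induction on the dimension, the base case $n=1$ using that a nonzero one-variable analytic function has only isolated zeros, and the inductive step using Fubini together with the identity theorem (on almost every slice the restriction of $g$ is not identically zero, by connectedness, and hence has negligible zero set). Establishing this lemma cleanly—and in particular using connectedness of $U$ to rule out $g$ vanishing on a whole positive-measure component—is the main obstacle; once it is in hand the rest is bookkeeping.

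To apply it, note that since the maximal rank $k$ is attained at some $x_0\in U$, there is a $k\times k$ minor $M_0$ with $M_0(x_0)\neq 0$, so $M_0\not\equiv 0$ on the connected set $U$. For the \textbf{measure} claim, observe that $U\setminus V=\{x\mid \rank Jf(x)<k\}$ is exactly the set on which \emph{every} $k\times k$ minor vanishes, so in particular $U\setminus V\subseteq Z(M_0)$; by the classical fact $\lambda(Z(M_0))=0$, whence $\lambda(U\setminus V)=0$. For \textbf{density}, the same fact gives that $Z(M_0)$ is nowhere dense, so its complement $\{x\mid M_0(x)\neq 0\}$ is dense in $U$; since this set is contained in $V$, the larger set $V$ is dense as well. (Density in fact also follows immediately from $V$ being open with a negligible complement.)
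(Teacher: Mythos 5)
Your proposal is correct, and its core is the same as the paper's: identify the rank with the rank of the Jacobian, express $V$ via nonvanishing of $k\times k$ minors (which gives openness), and invoke the classical fact that a real analytic function not identically zero on a connected open set has a Lebesgue-null zero set (the paper cites Mityagin for this) to get $\lambda(U\setminus V)=0$ from the inclusion $U\setminus V\subseteq Z(M_0)$. Where you genuinely diverge is the density claim. The paper proves density by a separate, fairly laborious topological argument: it takes a hypothetical point of $U\setminus\overline{V}$, connects it to $\overline{V}$ by a path, and derives a contradiction using the fact that a nonzero analytic function is nonzero on a dense subset of any neighborhood where it is not identically zero. You instead observe that $Z(M_0)$ is closed with empty interior (by the identity theorem), hence nowhere dense, so its complement---and a fortiori $V$---is dense; and you correctly note the even shorter route that any set with Lebesgue-negligible complement in an open set is automatically dense, so density is free once the measure statement is in hand. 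Either of your routes is sound and considerably cleaner than the paper's path-connectedness argument; the only thing you leave to be filled in is the classical zero-set lemma itself, for which your proposed induction via Fubini and the one-variable isolated-zeros fact is the standard proof (and the paper likewise outsources this step to a citation).
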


\begin{proof}
Let's first show $V$ is open. 
Given a coordinate system for $U$ and $V$, for any $x\in V$, denote by $G(x)$ the matrix for $T_xf$ in these bases. By hypothesis, $G(x)$ has at least an invertible $k\times k$ sub-matrix, and this is a characterisation of rank. Then denote by $H:M_{n\times m}(\RR)\to\RR:A\mapsto \sum_{B~k\times k~submatrix~of~A}|det(B)|$. $H$ is continuous as the determinant is, and $V= (HG)^{-1}(\RR-\{0\})$ is thus open.
Let's show $V$ is dense. 
Note that this is not true in general for smooth functions. Pick a base for $U$ and $V$ again, and let $G(x)$ the matrix for $T_xf$ in these bases. Thus $x\mapsto G(x)$ is also real analytic.

Let $p\in \bar{V}-V$. Pick a $k\times k$ submatrix $B$ of matrices in $M_{n\times m}(\RR)$, then let $H_B:M_{n\times m}(\RR)\to\RR$ be the determinant of that submatrix. 
Assume $H_BG$ is $0$ on a open neighbourhood $W_B$ of $p$ for all $B$. Then $W = \cap W_B$ is a finite intersection of opens and thus is open, and $W\cap V\neq \emptyset$ as by assumption $p$ is in the boundary of $V$. But that's a contradiction as we know that for some $B$ and any $x$ in the intersection, $H_BG(x)\neq 0$. 
Therefore, there is a $B$ such that $H_BG$ is not 0 at some point $x$ in a neighbourhood $W_B$ on $p$. 
As $f$ is analytic, so is $df$, and so is $G$.
The determinant is a polynomial function of its arguments, it is thus real analytic, and thus so is $H_B$.
As $H_BG$ is analytic and non-identically 0 on $W_B$, it must be non-zero on a dense subset of $W_B$. 
Let's assume that the closure $\bar{V}$ is not all of $U$. We obtain a contradiction by picking a point close to $\bar{V}$ that would end up in one of the $W_B$. 
In more detail, assume by ways of contradiction that $U-\bar{V}\neq \emptyset$. Let $p\in U-\bar{V}$. Let $g:U\to\RR^+$ be defined as $g(x)=d(x,\bar{V})$. By assumption, as $\bar{V}$ is closed, we have $g(p)>0$. Also, note that $g$ is continuous. As $k$ is the maximal ranked reached by $f$, $\bar{V}\neq\emptyset$. Thus, there exists $y\in\bar{V}$ such that $g(y)=0$. As $U$ is connected (and thus path connected, as an open of a Euclidean space), there is a
continuous path $c:[0,1]\to U$ such that $c(0)=y$ and $c(1)=p$.
The composition $g\circ c$ is continuous and so there a point $x\in\bar{V}-V$ such that $c(t)=x$ and $g(x)=0$ for some $t_0<1$. By compactness of $[0,1]$, let's choose the biggest such $t_0$, which exists as $c(1)=p\not\in \bar{V}$. For all $t>t_0$, we therefore have $g(c(t))>0$. However, by the result above, there is a dense open subset $W$ of a $W_B$, a neighbourhood of $x$, such that $W\subseteq V$. By density of $W$, this means that $c(t)=0$ on $c^{-1}(W_B)\cap[t_0,1]\neq\emptyset$, a contradiction.

We will now prove that $X:=U-V$ has Lebesgue measure 0.
For a fixed $k\times k$ matrix $B$ of any $n\times m$ matrix, consider the same analytic function $H_B$ as above.
Then $g_B:= G;H_B:U\to \RR$ is analytic. Using the theorem from \citet{mityagin2015zero}, it is either the 0 function or the preimage of $0$ has Lebesgue measure 0.
We can write $X = \bigcap_{B~k\times k~submatrix}g_B^{-1}(\{0\})$.
As $f$ has rank $k$ somewhere, not every $g_B$ is the 0 analytic function, and thus $X$ is contained in a Lebesgue measure 0 set, and therefore has Lebesgue measure 0.
\end{proof}

\begin{lemma}
\label{lem:pushforward-negligible}
Let $f:A\to C$ where $A\subseteq \RR^n$ is $\lambda$-measurable, and let $U\subseteq A$ be $\lambda$-measurable.
Assume that $\lambda(A-U)=0$. Then $f_*\lambda = (f|_U)_*\lambda$.
\end{lemma}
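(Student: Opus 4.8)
The statement is the standard measure-theoretic fact that shrinking the domain of a pushforward by a Lebesgue-null set does not change the resulting measure; accordingly the plan is to evaluate both measures on an arbitrary measurable test set and reduce to additivity of $\lambda$. Here I read $f_*\lambda$ as the pushforward of $\lambda$ restricted to the domain $A$, and $(f|_U)_*\lambda$ as the pushforward of $\lambda$ restricted to $U$. I also assume, as holds in the intended application where $f$ is the total restriction of a partial $\wpap{}$ morphism, that $f$ is measurable, so that $f^{-1}(B)$ is measurable whenever $B$ is.

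First I would fix an arbitrary measurable set $B \subseteq C$ and unfold the two definitions. Since $f$ has domain $A$, its preimage satisfies $f^{-1}(B) \subseteq A$, and so
\[(f_*\lambda)(B) = \lambda\big(f^{-1}(B)\big).\]
On the other side, $(f|_U)^{-1}(B) = f^{-1}(B) \cap U$, whence
\[\big((f|_U)_*\lambda\big)(B) = \lambda\big(f^{-1}(B) \cap U\big).\]

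It then remains to compare $\lambda\big(f^{-1}(B)\big)$ with $\lambda\big(f^{-1}(B) \cap U\big)$. I would decompose $f^{-1}(B) = \big(f^{-1}(B) \cap U\big) \sqcup \big(f^{-1}(B) \setminus U\big)$ and use that $f^{-1}(B) \setminus U \subseteq A \setminus U$, which has $\lambda$-measure $0$ by hypothesis. Monotonicity and (finite) additivity of $\lambda$ then give $\lambda\big(f^{-1}(B) \setminus U\big) = 0$ and hence $\lambda\big(f^{-1}(B)\big) = \lambda\big(f^{-1}(B) \cap U\big)$, so the two measures agree on every measurable $B \subseteq C$ and are therefore equal. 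There is no real obstacle here: the only care needed is the bookkeeping of which restriction of $\lambda$ is being transported, and the measurability of preimages, both of which are immediate in our setting.
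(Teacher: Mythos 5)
Your proposal is correct and follows essentially the same route as the paper's proof: fix a measurable test set, split its preimage into the part inside $U$ and the part in $A\setminus U$, and kill the latter using $\lambda(A-U)=0$. The only cosmetic difference is that you get equality in one step from additivity of the disjoint decomposition, whereas the paper derives one inequality by subadditivity and notes the reverse is obvious.
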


\begin{proof}
\begin{align*}
    f_*\lambda(S)
    &= \lambda(f^{-1}(S))\\
    &= \lambda(f^{-1}(S)\cap A)\\
    &= \lambda(f^{-1}(S)\cap (A-U+U))\\
    &= \lambda((f^{-1}(S)\cap (A-U))\cup (f^{-1}(S)\cap U)) \\
    &\leq   \lambda(f^{-1}(S)\cap (A-U)) + \lambda(f^{-1}(S)\cap U)\\
    & \leq  \lambda (A-U) + \lambda|_U(f^{-1}(S) \\
    &= 0+ \lambda|_U(f^{-1}(S)) \\
    &= (f|_U)_*\lambda(S)
\end{align*}
As the other inequality is obvious, we showed that $f_*\lambda(S) = (f|_U)_*\lambda(S)$. 
\end{proof}

\begin{lemma}
\label{lem:image-is-union-manifold}
Let $f:U\subseteq \RR^n\to \RR^m$ be real analytic on a connected open $U$.
Further, assume $f$ has constant rank. Then $f(U)$ is a countable union of manifolds.
\end{lemma}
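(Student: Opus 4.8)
The plan is to invoke the \emph{constant rank theorem} to obtain a local normal form for $f$, deduce that the image of each sufficiently small neighbourhood is a $k$-dimensional submanifold of $\RR^m$, and then use second-countability of $U$ to reduce the (uncountable) cover of $U$ by such neighbourhoods to a countable one.

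First I would fix $k$ to be the constant rank of $f$ on $U$. For each point $p \in U$, the constant rank theorem supplies an open neighbourhood $V_p \subseteq U$ of $p$, an open neighbourhood $W_p \subseteq \RR^m$ of $f(p)$ with $f(V_p) \subseteq W_p$, and analytic diffeomorphisms $\phi_p : V_p \to \widetilde{V}_p \subseteq \RR^n$ and $\psi_p : W_p \to \widetilde{W}_p \subseteq \RR^m$ such that
\[ (\psi_p \circ f \circ \phi_p^{-1})(x_1,\dots,x_n) = (x_1,\dots,x_k,0,\dots,0). \]
Here it matters that $f$ is \emph{analytic}: the usual proof of the rank theorem only uses the inverse function theorem, which preserves analyticity, so the charts $\phi_p, \psi_p$ may be taken analytic. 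After possibly shrinking $V_p$ (and $W_p$) to a coordinate box, the image $f(V_p)$ corresponds under $\psi_p$ to a relatively open piece of the $k$-plane $\{x_{k+1}=\dots=x_m = 0\}$, which is an embedded $k$-dimensional analytic submanifold of $\widetilde{W}_p$. Since $\psi_p$ is a diffeomorphism onto its image, $f(V_p)$ is an embedded $k$-dimensional submanifold of $\RR^m$.

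Finally, $\{V_p\}_{p \in U}$ is an open cover of $U$. As $U \subseteq \RR^n$ is second countable (hence Lindel\"of), there is a countable subcover $\{V_{p_i}\}_{i \in \NN}$, and then $f(U) = \bigcup_{i \in \NN} f(V_{p_i})$ exhibits $f(U)$ as a countable union of $k$-dimensional submanifolds, as required.

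I expect the main obstacle to be the careful justification that $f(V_p)$ is genuinely an \emph{embedded} submanifold rather than merely an immersed image: this is exactly what forces the shrinking step in the rank theorem, and one must ensure the coordinate box is chosen small enough that the slice $\{x_{k+1}=\dots=x_m=0\}$ maps homeomorphically onto $f(V_p)$ with the subspace topology. The remaining ingredients---the analyticity of the charts and the extraction of a countable subcover---are routine.
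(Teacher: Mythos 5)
Your proof is correct and follows essentially the same route as the paper's: the constant rank theorem gives the local normal form, the image of each chart neighbourhood is a $k$-dimensional embedded submanifold, and a Lindel\"of argument extracts a countable subcover. If anything, your justification that $f(V_p)$ is an embedded submanifold (as the image under the diffeomorphism $\psi_p^{-1}$ of a relatively open slice of the $k$-plane $\{x_{k+1}=\dots=x_m=0\}$) is cleaner than the paper's, which argues via openness of the projection and needs a parenthetical patch for the case $k<m$.
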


\begin{proof}
As $f$ has constant rank, by the constant rank theorem, $f$ is locally of the form $(x_1,\ldots, x_n) \mapsto (x_1,\ldots,x_k,0,\ldots,0)$. 
More precisely, around any $x\in U$,  there is a neighbourhood $W$ of $x$ and smooth diffeomorphisms $\phi:\RR^m\to \RR^m,\psi:\RR^n\to\RR^n$ such that $g(x_1,\ldots, x_n) := \phi \circ f|_W\circ \psi(x_1,\ldots, x_n) = (x_1,\ldots,x_k,0,\ldots,0)$.

If $k=m$, $\RR^k=\phi \circ f|_W\circ \psi(\RR^n)$ is a $k$-dimensional smooth manifold. 
If $k<m$, let $\pi: \RR^m \to \RR^{m-k}$ given by $\pi(x_1,\ldots,x_m)=(x_{k+1},\ldots,x_m)$. $\pi$ is an orthogonal projection and therefore a submersion. By the submersion theorem, $\pi^{-1}(0)=\phi \circ f|_W\circ \psi(\RR^n)$ is a $k$-dimensional smooth manifold.
Therefore, as $\phi$ is a diffeomorphism, $f|_W(W)=f|_W\circ \psi (\RR^n)=\phi^{-1}(\pi^{-1}(0))$ is also a $k$-dimensional smooth manifold.

Now consider the open cover of $U$ given by $\{W_x~|~x\in U\}$ where $W_x$ is a neighbourhood of $x$ as above.
As opens of Euclidean spaces are Lindelöf spaces, we can extract a countable open sub-cover $\{W_{x_n}~\mid~x_n\in U\}_{n\in\NN}$ of $U$. From what we just proved, for each $n$, $f(W_{x_n})$ is a $k$-dimensional manifold.
Thus, $F(V)=\bigcup_{n\in\NN} f(W_{x_n})$ is a countable union of $k$-dimensional manifolds. 
\end{proof}

\begin{proof}[Proof of Theorem~\ref{thm:manifold_analytic_open}]
Let $k$ be the maximal rank of $f$ on $U$. 
Let $V:=\{x\in U~|~rank(f(x))=k\}$.
Let's first show that it is sufficient to show the result for $f|_V$.

By Lemma~\ref{lem:analytic-max-rank-dense}, $V$ is open and dense in $U$, and $U-V$ has Lebesgue measure 0. 
By Lemma~\ref{lem:pushforward-negligible}, $f_*\lambda = (f|_V)_*\lambda$, so we reduced the second statement to showing it for $(f|_V)$.
In addition, by Sard's theorem, $f(U-V)$ has $k$-Hausdorff measure 0. 
Therefore, it is sufficient to prove the theorem for $f|_V$, which we call $f$ again. 

Now, write $V=\bigsqcup_i V_i$ where the $V_i$ are the connected components of $V$. 
On each connected component $V_i$,  $f|_{V_i}$ verifies the hypothesis of Lemma~\ref{lem:image-is-union-manifold}.
Thus $f(V_i)$ is a countable union of $k$-dimensional manifolds for each $V_i$. As Euclidean spaces are locally connected spaces with a countable basis, the same holds for $V$. Therefore, $V$ has at most countably many connected components $V_i$. A countable union of countable sets is countable, and we finished proving the first part of the theorem.

Finally, it remains to show that $f_*\lambda$ has a density w.r.t. the $k$-Hausdorff measure on $f(V)$. 
By Lemma~\ref{lem:abs-cont-wrt-base-measure}, it is sufficient to show it for each $(f|_W)_*\lambda, f(W)$, where $W$ is as constructed above. 
First, let's show that it is also sufficient to restrict to the case where $W$ is contained in a compact set. There exists a countable cover $K_i$ of compact sets of $\RR^n$ that cover $W$ (e.g. by taking closed spheres of radius 1 at points with rational coordinates). Each $W_i:= K_i\cap W$ is thus compact in $W$. Assume for now that we have proved that $(f|_{W_i})_*\lambda$ has a density w.r.t. $f(W)$, i.e. that $(f|_{W_i})_*\lambda(A) = \int_A h_i(x) d\hausd^k(x)$ for some measurable function $h_i:f(W) \to \RR^+\cup\{\infty\}$. Then, 
\begin{align*}
    f_*\lambda(A) &=\sum_{i\in \NN} (f|_{W_i})_*\lambda(A) \\
    &=\sum_{i\in \NN}\int_A h_i(x) d\hausd^k(x) \\
    &= \int_A (\sum_{i\in \NN} h_i(x)) d\hausd^k(x)
\end{align*}
where the last equality holds by the Fubini-Tonelli theorem as $h_i\geq 0$. $h:=\sum_i h$ is measurable, and we are done.

So we can now assume that $W$ is contained in a compact set. As $f$ is analytic, it is locally Lipschitz and is thus Lipschitz on $W$, as $W$ is contained in a compact set.
What's more, As $f$ has rank $k$, its Jacobian matrix $J_x$ at every point $x$ has rank $k$. A standard result in linear algebra asserts that $J_xJ_x^T$ also has rank $k$ and therefore the correction term $J_Wf(x)$ is non-zero for all $x$. Let $g(x):=\frac{1}{J_Wf(x)}$ is therefore measurable and non-negative. By the co-area formula, we have
\begin{align*}
    f_*\lambda(A) &= \lambda (f^{-1}(A)) \\
    &= \int_{f^{-1}(A)}1.d\lambda \\
    &= \int_{f^{-1}(A)}g(x)J_Wf(x).d\lambda \\
    &= \int_{f(W)} \Big(\int_{f^{-1}(y)}g(x)d\hausd^{n-k}(x)\Big) d\hausd^k(y) \\
    &= \int_{f(W)} h(y) d\hausd^k(y)
\end{align*}
where $h(y):= \int_{f^{-1}(y)}g(x)d\hausd^{n-k}(x)$ is measurable and non-negative, yet possibly infinite.

\end{proof}

\begin{lemma}
\label{lem:big_open}
Let $A$ be an analytic set. There is an open $U\subseteq A$ such that $\lambda(A-U)=0$.
\end{lemma}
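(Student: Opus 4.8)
The plan is to exhibit the open set $U$ explicitly as the region carved out by the \emph{strict} versions of the analytic inequalities defining $A$, and then to argue that the discarded boundary is Lebesgue-null by appealing to the fact that a nontrivial analytic function has a null zero set. By the definition of analytic set I would write $A = \{x \in W \mid g_1(x) \leq 0, \dots, g_k(x) \leq 0\}$ where $W \subseteq \RR^n$ is open and each $g_i : W \to \RR$ is analytic. Since every open subset of $\RR^n$ is second countable, $W$ has at most countably many connected components $\{W_\alpha\}_\alpha$, so it suffices to do the construction on each component and then take a countable union.

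On a fixed component $W_\alpha$, I would partition the indices into $S_\alpha = \{i \mid g_i \equiv 0 \text{ on } W_\alpha\}$ and its complement. The constraints indexed by $S_\alpha$ hold with equality everywhere on $W_\alpha$ and so impose no restriction on membership in $A$; hence $A \cap W_\alpha = \{x \in W_\alpha \mid g_i(x) \leq 0 \text{ for all } i \notin S_\alpha\}$. I then set $U_\alpha := \{x \in W_\alpha \mid g_i(x) < 0 \text{ for all } i \notin S_\alpha\}$, which is open because the $g_i$ are continuous, and which satisfies $U_\alpha \subseteq A \cap W_\alpha$. A point of $(A \cap W_\alpha) \setminus U_\alpha$ must then satisfy $g_i(x) = 0$ for some $i \notin S_\alpha$, so $(A \cap W_\alpha) \setminus U_\alpha \subseteq \bigcup_{i \notin S_\alpha} g_i^{-1}(0)$.

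The key analytic input is that for each $i \notin S_\alpha$ the function $g_i$ is analytic and \emph{not} identically zero on the connected open set $W_\alpha$, so by the theorem of \citet{mityagin2015zero} (already invoked in Lemma~\ref{lem:analytic-max-rank-dense}) its zero set $g_i^{-1}(0)$ has Lebesgue measure zero. Setting $U := \bigcup_\alpha U_\alpha$, which is open and contained in $A$, I would conclude that $A \setminus U \subseteq \bigcup_\alpha \bigcup_{i \notin S_\alpha} g_i^{-1}(0)$ is a countable union of null sets, whence $\lambda(A - U) = 0$.

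The only delicate point, and the main (minor) obstacle, is the possible presence of constraints $g_i$ that vanish identically on a component: these cannot be converted into strict inequalities without emptying out part of $A$, so they must be isolated and ignored before the zero-set theorem is applied. Isolating them is precisely what the partition into $S_\alpha$ achieves, and once they are set aside the remaining constraints are genuinely nontrivial analytic functions to which the null-zero-set result applies.
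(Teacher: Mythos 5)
Your proof is correct and follows essentially the same strategy as the paper's: replace the non-strict analytic inequalities by their strict versions to obtain the open set $U$, and observe that the discarded points lie in the zero sets of analytic functions, which are Lebesgue-null by the result of \citet{mityagin2015zero}. The one place you diverge is that you decompose the open domain into its (countably many) connected components and isolate, on each component, the constraints that vanish identically there; the paper instead disposes of identically-zero constraints with a global ``without loss of generality.'' Your version is in fact the more careful one: an analytic function can vanish identically on one component of a disconnected open domain while being nontrivial on another, in which case it is not ``the constant $0$ function'' globally, yet its zero set contains a whole component of positive measure and the naive strict-inequality set $U$ would wrongly exclude that component from $A$. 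Working component by component, as you do, is exactly what is needed to make the null-zero-set theorem applicable, so your argument closes a small gap that the paper's phrasing leaves open.
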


\begin{proof}
Let $A= \cap_i X_i^+ \cap \cap_j X_j^-$ as in the definition of analytic sets.
Write $X_j^-=X_j \sqcup Z_j$ where $Z_j = (g_j^-)^{-1}(\{0\})$.
Without loss of generality, assume that none of the $g_j^-$ are the constant 0 function.
Then, $U:= \cap_i X_i^+ \cap \cap_j X_j$ is clearly open and $U\subseteq A$.
Let's now show that $A-U$ has Lebesgue measure 0. 
It suffices to show that this is the case for each $Z_j$, as they cover $A-U$.
However, by \citet{mityagin2015zero}'s result,  $g_j^-$ is either the 0 function or the preimage of $0$ has Lebesgue measure 0.
This shows that the $Z_j$ have Lebesgue measure $0$, as desired.
\end{proof}

\begin{theorem}
\label{thm:pap_pushforward}
Let $f:A\to \RR^m$ be \pap{}. Then $f_*\lambda$ has a density w.r.t. an s-Hausdorff measure $\mu$ on $\RR^m$.
\end{theorem}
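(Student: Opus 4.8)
The plan is to peel $f$ apart into analytic pieces using its \pap{} structure, reduce each piece to an analytic function on a connected open set so that Theorem~\ref{thm:manifold_analytic_open} applies, and then reassemble the resulting densities using Lemma~\ref{lem:abs-cont-wrt-base-measure}. Throughout, $\lambda$ denotes the Lebesgue measure restricted to the domain $A$.

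Since $f$ is \pap{}, there is a countable partition $\{A_i\}_{i\in\NN}$ of $A$ into analytic sets together with analytic functions $f_i : U_i \to \RR^m$ on open domains $U_i \supseteq A_i$ that agree with $f$ on $A_i$. Because the $A_i$ are disjoint and $\lambda$ is countably additive, the pushforward splits as $f_*\lambda = \sum_{i\in\NN} (f|_{A_i})_*\lambda$, so it suffices to handle each summand separately. For a fixed $i$, Lemma~\ref{lem:big_open} supplies an open set $O_i \subseteq A_i$ with $\lambda(A_i \setminus O_i)=0$, and Lemma~\ref{lem:pushforward-negligible} then gives $(f|_{A_i})_*\lambda = (f|_{O_i})_*\lambda$. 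On $O_i$ we have $f|_{O_i} = f_i|_{O_i}$, which is real analytic on the open set $O_i$.

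Next I would decompose each $O_i$ into its connected components: as an open subset of $\RR^n$ it has at most countably many components $\{O_{ij}\}_j$, each open and connected, and $(f|_{O_i})_*\lambda = \sum_j (f_i|_{O_{ij}})_*\lambda$. Each $f_i|_{O_{ij}}$ is real analytic on a connected open set, so Theorem~\ref{thm:manifold_analytic_open} yields an integer $k_{ij}$ and a countable union $M_{ij} = \bigcup_\ell M_{ij\ell}$ of smooth $k_{ij}$-dimensional submanifolds of $\RR^m$ such that $(f_i|_{O_{ij}})_*\lambda$ has a density with respect to the $k_{ij}$-Hausdorff measure on $M_{ij}$. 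Collecting over all $i,j$ — a countable union of countable families is countable — the full pushforward $f_*\lambda$ becomes a countable sum of terms, each of the form (a nonnegative measurable density) times (the $k$-Hausdorff measure restricted to a $k$-dimensional submanifold). Grouping these terms by dimension $k$ puts $f_*\lambda$ in exactly the shape required by Lemma~\ref{lem:abs-cont-wrt-base-measure}, whose conclusion is that $f_*\lambda$ has a density with respect to the s-Hausdorff measure $\mu$ assembled from all the $M_{ij\ell}$; this $\mu$ is an s-Hausdorff measure in the sense of Definition~\ref{def:shausdorff} by construction.

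The routine part is the pushforward bookkeeping. The step needing the most care is verifying that the exceptional sets from Theorem~\ref{thm:manifold_analytic_open} — where $f(O_{ij})$ escapes $M_{ij}$ only up to a set of $k_{ij}$-Hausdorff measure zero — do not obstruct the density argument. Since each such set is $H^{k_{ij}}$-null it carries no mass of the corresponding Hausdorff measure, and since Lemma~\ref{lem:abs-cont-wrt-base-measure} already handles measurability of the summed density (via monotone convergence, with the convention $0\cdot\infty = 0$), I expect the assembly to go through without further difficulty.
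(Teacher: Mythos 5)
Your proof is correct and follows essentially the same route as the paper's: partition $A$ via the \pap{} representation, shrink each analytic piece to an open set using Lemma~\ref{lem:big_open} and Lemma~\ref{lem:pushforward-negligible}, split into connected components, apply Theorem~\ref{thm:manifold_analytic_open} to each, and reassemble by dimension via Lemma~\ref{lem:abs-cont-wrt-base-measure}. Your explicit remark about the $H^{k}$-null exceptional sets is a small point of added care that the paper's own proof leaves implicit.
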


\begin{proof}
Let $f:A\to \RR^m$ be \pap{}. $A$ is c-analytic, so we can write it as $A=\bigsqcup_{i\in \NN} A_i$ where each $A_i$ is analytic and $(f_i,A_i)$ is a PAP representation for $f$.
By Lemma~\ref{lem:big_open}, for each $i\in\NN$, there is an open $U_i\subseteq A_i$ such that $\lambda(A_i-U_i)=0$. 
By Lemma~\ref{lem:pushforward-negligible}, it is thus sufficient to consider the case where $f_i$ is defined on an open set $U_i$.
Any open $V\subseteq\RR^n$ can always be written as $V=\bigsqcup_{i\in \NN} V_i$ for a countable family of connected opens $V_i$.
Applying this to the $U_i$, we have a countable family of analytic functions $f_{i,j}:U_{i,j}\to \RR^m$ defined on connected opens $U_{i,j}$, where $f_{i,j}$ is the restriction of $f_i$ to $U_{i,j}$, and $j\in\NN$.
By theorem~\ref{thm:manifold_analytic_open}, for each $(i,j)\in \NN^2$, there exists an integer $0\leq k_{i,j}\leq m$ such that $(f_{i,j})_*\lambda = \sum_{l\in \NN} f_l^{k_{i,j}}m^{k_{i,j}}_{l}$, where each $f_l^{k_{i,j}}:\RR^m\to \RR^+\cup\{\infty\}$ is measurable, and each $m^{k_{i,j}}_{l}$ is the $k_{i,j}$-Hausdorff measure on some $k_{i,j}$-submanifold $M^{i,j}_l$ of $\RR^m$. 
Therefore, 
\begin{align*}
    f_*\lambda &= \sum_{i,j} (f_{i,j})_*\lambda \\
    &= \sum_{i,j}\sum_{l\in \NN} f_l^{k_{i,j}}m^{k_{i,j}}_{l} \\
    &=  \sum_{0\leq k\leq m}\sum_{\{(i,j)~|~k_{i,j}=k\}}\sum_lf_l^{k_{i,j}}m^{k_{i,j}}_{l} \\
    &= \sum_{0\leq k\leq m}\sum_nf_n^{k}m^{k}_n
\end{align*}
where in the last equality we simply re-index the countable sum.
We can now conclude by Lemma~\ref{lem:abs-cont-wrt-base-measure} that $f_*\lambda$ has a density w.r.t. some s-Hausdorff measure.
\end{proof}


%
Finally, we can prove Theorem~\ref{cor:main}.

\begin{proof}
Let $\vdash \ter : \reals^n$. 
By Lemma~\ref{thm:trace-semantics-is-pap}, there exists an \wpap{} morphism $f:\Omega\to \mathbf{L}\RR^n$ such that $f_*\Lambda_\Omega= \sem{\ter}$. As $\Omega$ is a coproduct, by the universal property of the coproduct, $f$ decomposes as a family of $\wpap{}$ morphisms $f_i: \RR^i \to \mathbf{L}\RR^n$. By definition of the lifting monads, the $f_i$ represent PAP functions $f_i:A_i\to \RR^n$ where each $A_i$ is c-analytic. By Theorem~\ref{thm:pap_pushforward}, each $f_{i*}\lambda$ has a density w.r.t. an s-Hausdorff measure $\mu_i$ on $\RR^n$, i.e. $f_{i*}\lambda=g_i.\mu_i$ for some non-negative measurable function $g_i:\RR^n\to \RR^+\cup \{\infty\}$. Therefore
\[ \sem{\ter} = f_*\Lambda_\Omega 
    = \sum_{i\in\NN} f_{i*}\lambda 
    = \sum_{i\in\NN} g_i.\mu_i \]
And we conclude that this last formula has a density w.r.t. some base measure on $\RR^n$ by Lemma~\ref{lem:abs-cont-wrt-base-measure}. We proved the first point in the theorem.

For the second point, let $g$ be the density of $\llbracket t \rrbracket$ w.r.t.  $\mu$, and $f$ the density of $\llbracket t \rrbracket$ w.r.t. $B(t)$.
Write $B(t) = \sum_d H^d(- \cap M^d)$.
    
    By induction on $d$ in $[0,n]$, we show that $g(x) = f(x)$ for all $x \in M^d$, except for a $H^d$-null set.

    \begin{itemize}
        \item Base  case $d=0$: Let  $x \in M^0$.
        $\llbracket t \rrbracket(x) = (g.\mu)({x})= g(x)$$\llbracket t \rrbracket({x}) = (f.B(t))({x}) = f(x)$
        \item Inductive case $n \geq d>0$.
        \begin{enumerate}
            \item For all  $i<d$, $M^i$ is a $H^d$-null set, and therefore it suffices to show the result for $S^d := M^d-(\bigcup_{i<d} M^i)$.
            \item For any $H^d$ measurable $A \subseteq M^d$, we have:
    $\llbracket t \rrbracket(A) = g.\mu(A) = \int_{A\cap S^d}g(x) H^d(dx)$
        
        $\llbracket t \rrbracket(A) = f.B(t)(A) = \int_{A\cap S^d}f(x) H^d(dx)$
        Therefore the integrators/measures $\int_{-}g(x) H^d(dx)$ are equal as measures on $S^d$, and therefore $f(x) = g(x)$ for all $x$ on $S^d$ except on a $H^d$ null set.
        \end{enumerate}
        
        Let $Z = (\bigcup_d M^d)^c$. Then$\llbracket t \rrbracket(Z) = f.B(t) (Z) = 0.$
        Let $C = \bigcup_d C_d$ where $C_d$ is the $H^d$-null set in $S^d$ on which $f$ and $g$ disagree. Then $\llbracket t \rrbracket(C) = 0$ and thus $\llbracket t \rrbracket(Z\cup C )= 0$. As the set of points on which $f$ and $g$ disagree is contained in $Z\cup C$, we are done.
    \end{itemize}

For the third point, by definition of base measures, $B(t) = \sum_d H^d(- \cap M^d)$. By the way we constructed $B(t)$, the density $\rho_t$ satisfies $\rho_t(x)>0$ for every x in $M^d$, except at an $H^d$-null set. Let $A$ be such that $B(t_2)(A)=0$. Then, by the positive density $\rho_{t_2}$ of $\llbracket t_2 \rrbracket$ w.r.t. the base measure $B(t_2)$, we also have $\llbracket t_2 \rrbracket(A)=0$. Therefore, $\llbracket t_1 \rrbracket(A)=0$ using the assumption of absolute continuity. Finally, using the first point in Theorem~\ref{cor:main}, we conclude that $B(t_1)(A)=0$.

For the fourth point, this follows from general properties of s-Hausdorff measures. Let $\mu_1 = \sum_d H^d(- \cap M^d)$ and $\mu_2 = \sum_d H^d(- \cap M’^d)$ be two s-Hausdorff measure such that $\mu_1 \ll \mu_2$. As they are $\sigma$-finite measures, by the Radon-Nikodym theorem we can write $\mu_1 = g. \mu_2$. Let $S^d := M^d\setminus \bigcup_{i<d}(M'^i \cup M^i)$ and $A:= \bigcup_d S^d$. Let $B\subseteq A\cap M^d$. We have $\mu_1(B) = H^d(B\cap M^d)$, as for each $i<d, H^i(B\cap M^i)\leq H^i(A\cap M^d \cap M^i)=H^i(\emptyset)=0$, and for each $i>d, H^i(B\cap M^i) \leq H^i(M^d) =0$. Likewise, $\mu_2(B) = H^d(B\cap M'^d)$. As $\mu_1(B)=g.\mu_2(B)$, we have proved that the measures $H^d(-\cap M^d)$ and $g.H^d(-\cap M'^d\cap A)$ are the same measures when restricted to $A\cap M^d$, and therefore $g(x) =1$ for  $H^d$ almost all $x\in M^d\cap A.$  As $A=\bigcup_d A\cap M^d$  is measurable and $\mu_1(A^c)=0$, we conclude that the \wpap{} morphism $1_A: A\to W$ is a density for $\mu_1$ w.r.t. $\mu_2$.
\end{proof}

To conclude, we prove Theorem~\ref{thm:definability-s-hausdorff}.

\begin{proof}
By a theorem of Whitney (refined by Morrey and Grauert), all (finite-dimensional, second-countable, Hausdorff, without boundary) smooth manifolds admit a (unique) analytic structure, and two smooth manifolds are diffeomorphic iff they are analytic-equivalent. 
Our case is even simpler, as we consider embedded submanifolds of $\RR^n$. Let $M$ be a smooth submanifold of $\RR^n$ of dimension $d$. There exists a tubular envelope $U\subseteq \RR^n$ of $M$, and a smooth projection $p:U\to \RR^n$ such that $M=p(U)$. This means that  $p_*\lambda|_{U}$ and $H^d$ on $M$ are mutually absolutely continuous. 
By reasoning in local coordinates, using analytic charts, we see that $p$ is locally a linear projection, and is thus analytic, and therefore PAP.
Now let $\mu$ be an arbitrary s-Hausdorff measure on $\RR^n$, and $\{M^k_i\}_{i\in \NN,0\leq k \leq n}$ be a support for $\mu$. 
Let $p_{i,k}:U_{i,k}\to \RR^n$ be analytic projections such that $p(U_{i,k})=M^k_i$, where $i\in\NN,0\leq k\leq n$. 
Every open set $U\subseteq \RR^n$ is diffeomorphic to a bounded open $V$ of radius at most $1$, and the diffeomorphism and its inverse can be chosen to be analytic functions. Let $\phi_{i,k}:V_{i,k}\to U_{i,k}$ be such diffeomorphisms.
Let $(W_{i,k})_{i\in\NN,0\leq k\leq n}$ be the same opens as the $(V_{i,k})_{i\in\NN,0\leq k\leq n}$, but translated to be disjoint, which is always possible as the $(V_{i,k})_{i\in\NN,0\leq k\leq n}$ have radius at most $1$ and there are countably many of them. Let $t_{i,k}:W_{i,k}\to V_{i,k}$ be the associated translations, which are linear and therefore analytic.
Let $W=\bigsqcup_{i\in \NN,0\leq k\leq n} W_{i,k}$. Then the function $f:W\to \RR^n$ defined by $x\in W_{i,j} \mapsto t_{i,k};\phi_{i,k};p_{i,k}$ is analytic, and $f_*\lambda$ and $\mu$ are mutually absolutely continuous.
\end{proof}

\end{document}